\newcommand{\Oh}{\ensuremath{\mathcal{O}}\xspace}
\newcommand{\Ohtilde}{\ensuremath{\smash{\rlap{\raisebox{-0.2ex}{$\widetilde{\phantom{\Oh}}$}}\Oh}}\xspace}
\newcommand{\Thtilde}{\ensuremath{\smash{\rlap{\raisebox{-0.2ex}{$\widetilde{\phantom{\Theta}}$}}\Theta}}\xspace}
\newcommand{\Ohtilda}{\Ohtilde}
\newcommand{\RR}{\mathbb{R}\xspace}
\newcommand{\R}{\RR}
\newcommand{\Real}{\RR}
\newcommand{\Rz}{\R_{\ge 0}}
\newcommand{\ZZ}{\mathbb{Z}\xspace}
\newcommand{\G}{\mathcal{G}}
\newcommand{\N}{\mathcal{N}}
\newcommand{\Tr}{\mathcal{T}}
\newcommand{\symb}{\mathsf{symb}}
\renewcommand{\S}{\mathcal{S}}
\renewcommand{\emptyset}{\varnothing}
\newcommand{\on}[1]{\operatorname{#1}}
\DeclareMathOperator*{\argmax}{arg\,max}
\DeclareMathOperator{\rhs}{rhs}
\DeclareMathOperator{\AG}{AG}
\DeclareMathOperator{\dist}{dist}
\newcommand{\len}{\mathsf{len}}
\newcommand{\plen}{\mathsf{plen}}
\newcommand{\pre}{\mathsf{par}}
\newcommand{\dep}{\mathsf{dep}}
\DeclareMathOperator{\BM}{BM}
\DeclareMathOperator{\poly}{poly}
\DeclareMathOperator{\ssum}{sum}
\newcommand{\tOh}{\Ohtilda}
\newcommand{\AGw}{\AG^w}
\newcommand{\oAGw}{\overline{\AG}^w}
\newcommand{\mds}{\mathsf{CMO}}
\newcommand{\meq}[1]{\stackrel{#1}{=}}
\newcommand{\perim}{\mathcal{P}}
\newcommand{\dd}{.\,.}
\newcommand{\core}[1]{\on{core}(#1)}
\newcommand{\coresiz}{\delta}
\newcommand{\dens}[1]{#1^{\square}}
\newcommand{\inp}{in}
\newcommand{\outp}{out}
\newcommand{\boxds}{\mathsf{D}^w}
\newcommand{\boxdsk}{\boxds_k}
\newcommand{\boxdsslp}{\hat{\mathsf{D}}^w}
\newcommand{\boxdskslp}{\boxdsslp_k}
\newcommand{\slp}{\mathcal{G}}
\newcommand{\distag}{\dist_{\oAGw(X, Y)}}
\newcommand{\BMw}{\BM^w}
\def\emptystring{\ensuremath\varepsilon}
\def\w#1#2{\ensuremath w(#1, #2)}
\def\myw#1#2#3{\ensuremath #1(#2, #3)}
\def\wi#1#2#3{\ensuremath w_{#1}(#2, #3)}
\def\ipmOpName{{\tt IPM}\xspace}
\def\accOpName{{\tt Access}\xspace}
\def\extractOpName{{\tt Extract}\xspace}
\def\lenOpName{{\tt Length}\xspace}
\newcommand{\modelname}{\texttt{PILLAR}\xspace}
\def\lceOp#1#2{{\tt LCP}(#1, #2)}
\def\lcbOp#1#2{{\tt LCP}^R(#1, #2)}
\def\accOp#1#2{#1\position{#2}}
\newcommand{\cA}{\mathcal{A}}
\newcommand{\cB}{\mathcal{B}}
\newcommand{\cC}{\mathcal{C}}
\newcommand{\cD}{\mathcal{D}}
\newcommand{\cO}{\mathcal{O}}
\newcommand{\Als}{\mathbf{A}}
\newcommand{\bA}{\mathbf{A}}
\newcommand{\bB}{\mathbf{B}}
\newcommand{\Ta}{\mathbb{T}}
\newcommand{\Ra}{\mathbb{S}}
\def\twoheadleadsto{\tikz[baseline=(a.base)]{\draw[%
    decorate,decoration={zigzag,segment length=4, amplitude=.9},%
    ] (0,0) -- (.25, 0);%
    \draw[%
    -{Classical TikZ Rightarrow}.{Classical TikZ Rightarrow},%
    ] (.25, 0) -- (.4, 0);%
    \node (a) at (.4/2,-.03) {\phantom{\(\leadsto\)}};%
}}
\newcommand{\onto}{\twoheadleadsto}
\def\aonto#1{\onto}
\newcommand{\fragmentco}[2]{[#1\dd #2)}
\newcommand{\fragmentoc}[2]{(#1\dd #2]}
\newcommand{\fragmentoo}[2]{(#1\dd #2)}
\newcommand{\fragmentcc}[2]{[#1\dd #2]}
\newcommand{\position}[1]{[#1]}
\newcommand{\Esigma}{\overline{\Sigma}}
\newcommand{\ed}{\mathsf{ed}}
\newcommand{\wed}{\ed^w}
\newcommand\sed{\mathsf{self}\text{-}\ed}
\newcommand{\floor}[1]{\lfloor{#1}\rfloor}
\newcommand{\ceil}[1]{\lceil{#1}\rceil}
\newcommand{\set}[1]{\{#1\}}
\newcommand{\setmax}[1]{\max\set{#1}}
\newcommand{\Zp}{\mathbb{Z}_{>0}}
\newcommand{\bC}{\bar{C}}
\newcommand{\bj}{\bar{\jmath}}
\newcommand{\bx}{\bar{x}}
\newcommand{\by}{\bar{y}}
\newcommand{\bk}{\bar{k}}
\newcommand{\hk}{\hat{k}}
\newcommand{\hn}{\hat{n}}
\newcommand{\hx}{\hat{x}}
\newcommand{\hy}{\hat{y}}
\newcommand{\bX}{\bar{X}}
\newcommand{\bY}{\bar{Y}}
\newcommand{\hX}{\hat{X}}
\newcommand{\hY}{\hat{Y}}
\newcommand{\hcA}{\hat{\cA}}
\newcommand{\kt}{\tilde{k}}
\newcommand{\cU}{\mathcal{U}}
\newcommand{\cV}{\mathcal{V}}
\newcommand{\eps}{\varepsilon}
\newcommand{\closure}[1]{\mathsf{cl}(#1)}
\newcommand{\says}[3]{}
\definecolor{darkgreen}{RGB}{0,160,0}
\definecolor{darkred}{RGB}{220,20,60}
\definecolor{darkblue}{RGB}{0,0,160}
\title{Bounded Edit Distance: Optimal Static and Dynamic Algorithms for Small Integer Weights}
\author{Egor Gorbachev}{Saarland University and Max Planck Institute for Informatics, Saarland Informatics Campus, Germany}{egorbachev@cs.uni-saarland.de}{https://orcid.org/0009-0005-5977-7986}{This work is part of the project TIPEA that has received funding from the European Research Council (ERC) under the European Unions Horizon 2020 research and innovation programme (grant agreement No.\ 850979).}
\author{Tomasz Kociumaka}{INSAIT, Sofia University ``St. Kliment Ohridski'', Sofia, Bulgaria}{tomasz.kociumaka@insait.ai}{https://orcid.org/0000-0002-2477-1702}{Partially funded by the Ministry of Education and Science of Bulgaria's support for INSAIT, Sofia University ''St. Kliment Ohridski'', as part of the Bulgarian National Roadmap for Research Infrastructure.}
\authorrunning{E. Gorbachev and T. Kociumaka} 
\begin{document}

\pagenumbering{gobble}

\maketitle

\begin{abstract}
The \emph{edit distance} (also known as the Levenshtein distance) of two strings is the minimum number of character insertions, deletions, and substitutions needed to transform one string into the other.
The textbook algorithm determines the edit distance of two length-$n$ strings in $\Oh(n^2)$ time, and one of the foundational results of fine-grained complexity is that any polynomial-factor improvement upon this quadratic runtime would violate the Orthogonal Vectors Hypothesis.
In the \emph{bounded} version of the problem, where the complexity is parameterized by the value $k$ of the edit distance, the classic algorithm of Landau and Vishkin [JCSS'88] achieves $\Oh(n+k^2)$ time, which is optimal (up to sub-polynomial factors and conditioned on OVH) as a function of $n$ and $k$.

While the Levenshtein distance is a fundamental theoretical notion, most practical applications use \emph{weighted edit distance}, where the weight (cost) of each edit can be an arbitrary real number in $[1,\infty)$ that may depend on the edit type and the characters involved.
Unfortunately, the Landau--Vishkin algorithm does not generalize to the weighted setting and, for many decades, a simple $\Oh(nk)$-time dynamic programming procedure remained the state of the art for \emph{bounded weighted edit distance}.
Only recently, Das, Gilbert, Hajiaghayi, Kociumaka, and Saha [STOC'23] provided an $\Oh(n+k^5)$-time algorithm; shortly afterward, Cassis, Kociumaka, and Wellnitz [FOCS'23] presented an $\Ohtilde(n+\sqrt{nk^3})$-time solution (where $\Ohtilde(\cdot)$ hides $\mathrm{poly}\log n$ factors) and proved this runtime optimal for $\sqrt{n} \le k \le n$ (up to sub-polynomial factors and conditioned on the All-Pairs Shortest Paths Hypothesis).

Notably, the hard instances constructed to establish the underlying conditional lower bound use fractional weights with large denominators, reaching $\poly(n)$, which stands in contrast to weight functions used in practice (e.g., in bioinformatics) that, after normalization, typically attain small integer values. 
Our first main contribution is a surprising discovery that the $\Ohtilde(n+k^2)$ running time of the Landau--Vishkin algorithm can be recovered if the weights are small integers (e.g., if some specific weight function is fixed in a given application).
In general, our solution takes $\Ohtilde(n+\min\{W,\sqrt{k}\}\cdot k^2)$ time for integer weights not exceeding a threshold $W$.
Despite matching time complexities, we do not use the Landau--Vishkin framework; instead, we build upon the recent techniques for arbitrary weights.
For this, we exploit extra structure following from integer weights and avoid further bottlenecks using several novel ideas to give faster and more robust implementations of multiple steps of the previous approach. 

Next, we shift focus to the \emph{dynamic} version of the \emph{unweighted} edit distance problem, which asks to maintain the edit distance of two strings that change dynamically, with each update modeled as a single edit (character insertion, deletion, or substitution).
For many years, the best approach for dynamic edit distance combined the Landau--Vishkin algorithm with a dynamic strings implementation supporting efficient substring equality queries, such as one by Mehlhorn, Sundar, and Uhrig [SODA'94]; the resulting solution supports updates in $\Ohtilde(k^2)$ time.
Recently, Charalampopoulos, Kociumaka, and Mozes [CPM'20] observed that a framework of Tiskin [SODA'10] yields a dynamic algorithm with an update time of $\Ohtilde(n)$.
This is optimal in terms of $n$: significantly faster updates would improve upon the static $\Oh(n^2)$-time algorithm and violate OVH.
With the state-of-the-art update time at $\Ohtilde(\min\{n,k^2\})$, an exciting open question is whether $\Ohtilde(k)$ update time is possible.
Our second main contribution is an affirmative answer to this question: we present a deterministic dynamic algorithm that maintains the edit distance in $\Ohtilde(k)$ worst-case time per update. 
Surprisingly, this result builds upon our new static solution and hence natively supports small integer weights: if they do not exceed a threshold $W$, the update time becomes~$\Ohtilde(W^2 k)$.%

\end{abstract}

\newpage
\tableofcontents
\newpage
\pagenumbering{arabic}

\section{Introduction}\label{sec:introduction}
Quantifying similarity between strings (sequences, texts) is a crucial algorithmic task applicable across many domains, with the most profound role in bioinformatics and computational linguistics.
The \emph{edit distance} (Levenshtein distance~\cite{Levenshtein66}) is among the most popular measures of string (dis)similarity. 
For two strings $X$ and $Y$, the edit distance $\ed(X,Y)$ is the minimum number of character insertions, deletions, and substitutions (jointly called \emph{edits}) needed to transform $X$ into~$Y$.
The textbook algorithm~\cite{Vin68,NW70,Sel74,WF74} takes $\Oh(n^2)$ time to compute the edit distance of two strings of length at most $n$.
Despite substantial efforts, its runtime has been improved only by a poly-logarithmic factor~\cite{MP80,Gra16}.
Fine-grained complexity provided an explanation for this quadratic barrier: any polynomial-factor improvement would violate the Orthogonal Vectors Hypothesis~\cite{ABW15,BK15,AHWW16,BI18} and thus also the Strong Exponential Time Hypothesis~\cite{IP01,IPZ01}.

An established way of circumventing this lower bound is to consider the \emph{bounded} edit distance problem, where the running time is expressed in terms of not only the length $n$ of the input strings but also the value $k$ of the edit distance. 
Building upon the ideas of Ukkonen~\cite{Ukk85} and Myers~\cite{Mye86}, Landau and Vishkin~\cite{LV88} presented an elegant $\Oh(n+k^2)$-time algorithm for bounded edit distance.
The fine-grained hardness of the unbounded version prohibits any polynomial-factor improvements upon this runtime: a hypothetical $\Oh(n+k^{2-\epsilon})$-time algorithm, even restricted to instances satisfying $k=\Theta(n^{\kappa})$ for some constant $\frac12 < \kappa \le 1$, would violate the Orthogonal Vectors Hypothesis.
Although bounded edit distance admits a decades-old optimal solution, the last few years brought numerous exciting developments across multiple models of computation.
This includes sketching and streaming algorithms~\cite{BZ16,JNW21,KPS21,BK23,KS24}, sublinear-time approximation algorithms~\cite{GKS19,KS20,GKKS22,BCFN22,BCFK24}, algorithms for preprocessed strings~\cite{GRS20,BCFN22b}, algorithms for compressed input~\cite{GKLS22}, and quantum algorithms~\cite{GJKT24}, just to mention a few settings.
Multiple papers have also studied generalizations of the bounded edit distance problem, including weighted edit distance~\cite{GKKS23,DGHKS23,CKW23}, tree edit distance~\cite{Tou07,AJ21,DGHKSS22,DGHKS23}, and Dyck edit distance \mbox{\cite{BO16,FGKKPS22,D23,DGHKS23}}.

In this work, we provide conditionally optimal algorithms for bounded edit distance with \emph{small integer weights} and settle the complexity of the \emph{dynamic} version of the bounded edit distance problem.

\paragraph*{Weighted Edit Distance}
Although the theoretical research on the edit distance problem has predominantly focused on the \emph{unweighted} Levenshein distance, most practical applications require a more general \emph{weighted} edit distance, where each edit is associated with a cost depending on the edit type and the characters involved.
Many early works~\cite{Sel74,WF74,Sel80} and application-oriented textbooks~\cite{Wat95,Gus97,JM08,MBCT2015} introduce edit distance already in the weighted variant.
Formally, it can be conveniently defined using a weight function $w : \Esigma^2 \to \mathbb{R}$, where $\Esigma=\Sigma\cup\{\eps\}$ denotes the alphabet extended with a special symbol $\eps$ representing the empty string (or the lack of a character).
For every $a,b\in \Sigma$, the cost of inserting $b$ is $w(\eps,b)$, the cost of deleting $a$ is $w(a,\eps)$, and the cost of substituting $a$ for $b$ is $w(a,b)$.
The weighted edit distance $\wed(X,Y)$ of two strings $X$ and $Y$ is the minimum total weight of edits transforming $X$ into $Y$.
Consistently with previous works, we assume that the weight function is normalized so that $w(a,a)=0$ and $w(a,b)\ge 1$ hold for every $a,b\in \Esigma$ with $a\ne b$.\footnote{Further weight functions $w'$ can be handled by setting $w'(a,b)=\alpha(a)+\beta(b)+\gamma\cdot  w(a,b)$ for appropriate parameters $\alpha,\beta:\Esigma\to \mathbb{R}$ and $\gamma\in \mathbb{R}_{>0}$. This normalization step does not change the set of optimal alignments.} As a result, $\wed(X,X)=0$ and $\ed(X,Y)\le \wed(X,Y)$ hold for all strings $X,Y\in \Sigma^*$.

As shown in \cite{Sel74,WF74}, the textbook $\Oh(n^2)$-time dynamic-programming algorithm natively supports arbitrary weights. 
A simple optimization, originating from \cite{Ukk85}, allows for an improved running time of $\Oh(nk)$ if $k\coloneqq \wed(X,Y)$ does not exceed $n$. 
For almost four decades, this remained the best running time for bounded weighted edit distance.
Only in 2023, Das, Gilbert, Hajiaghayi, Kociumaka, and Saha~\cite{DGHKS23} managed to improve upon the $\Oh(nk)$ time, albeit only for $k\le \sqrt[4]{n}$: their algorithm runs in $\Oh(n+k^5)$ time.
Soon afterward, Cassis, Kociumaka, and Wellnitz~\cite{CKW23} developed an $\Ohtilde(n+\sqrt{nk^3})$-time\footnote{The $\Ohtilde(\cdot)$ notation hides factors poly-logarithmic in the input size $n$, that is, $\log^{c} n$ for any constant $c$.} algorithm; the runtime of this solution exceeds neither $\Ohtilde(n+k^3)$ nor $\Ohtilde(nk)$, and it interpolates smoothly between $\Ohtilde(n)$ for $k= \sqrt[3]{n}$ and $\Ohtilde(nk)$ for $k = n$. 
Unexpectedly, the running time of $\Ohtilde(n+\sqrt{nk^3})$ is optimal for $\sqrt{n}\le k \le n$: any polynomial-factor improvement would violate the All-Pairs Shortest Paths Hypothesis~\cite{CKW23}.
This result provides a strict separation between the weighted and the unweighted variants of the bounded edit distance problem.
Along with settling the complexity for $\sqrt[3]{n}\le k \le \sqrt{n}$, where the conditional lower bound degrades to $n+k^{2.5-o(1)}$, the most important open question posed in~\cite{CKW23} is the following one:
\begin{quote}
    \centering{\textit{Can the $\sqrt{nk^{3-o(1)}}$ lower bound be circumvented for some natural weight function classes?}}
\end{quote}

Arguably, the weight functions devised to establish the lower bound in~\cite{CKW23} have a rather atypical structure: in particular, they use fractional costs with large denominators (reaching $\Theta(n^c)$ for a constant $c\gg 1$).
In contrast, weight functions arising in practice, such as those originating from the \texttt{BLOSUM} \cite{HH92} and \texttt{PAM}~\cite{DSO78} substitution matrices commonly used for amino-acids, are equivalent (after normalization) to weight functions with small integer values (below 30).
Unfortunately, the literature does not provide specialized solutions for this case, and small integer values are already sufficient to violate the monotonicity property that the Landau--Vishkin algorithm~\cite{LV88} hinges on.%
\footnote{As already observed in~\cite{DGHKS23}, we have $2=\wed(\mathtt{ab},\mathtt{c})<\wed(\mathtt{a},\eps)=3$ if $w(\mathtt{b},\eps)=w(\mathtt{a},\mathtt{c})=1$ and $w(\mathtt{a},\eps)=3$.}
The first main contribution of our work is that, up to a modest polylogarithmic-factor overhead, the time complexity of the Landau--Vishkin algorithm (optimal under the Orthogonal Vectors Hypothesis) can be recovered for small integer weights. 
In the most basic version, our result reads as follows:
\begin{theorem}\label{thm:static_fixed}
Fix an alphabet $\Sigma$ and a weight function $w : \Esigma^2 \to \ZZ_{\ge 0}$. 
Given strings $X,Y\in \Sigma^{\le n}$, the weighted edit distance $k \coloneqq \wed(X,Y)$ can be computed in $\Oh(n+k^2 \log^2 n)$ time.\lipicsEnd
\end{theorem}
\begin{remark}
We note that $\ZZ_{\ge 0}$ in \cref{thm:static_fixed} can be replaced with $\mathbb{Q}_{\ge 0}$:
In order to reduce to the integer case, it suffices to scale all the weights up by lowest common multiple of all the denominators.
Since we assume the weight function to be fixed (rather than a part of the input), this scaling factor is constant, so the time complexity is not affected even though the value $k$ grows.\lipicsEnd
\end{remark}

As discussed in the technical overview (\cref{sec:overview}), the algorithm behind \cref{thm:static_fixed} builds upon the approach of \cite{CKW23}, with the better running time enabled by the extra structure arising from the small integer weights.
Nevertheless, we also need several novel ideas of independent interest (applicable to arbitrary weights) to eliminate the bottlenecks that pop up as we improve the time complexity to $\Ohtilde(n+k^2)$.
The resulting algorithm is very robust: it seamlessly supports integer weight functions given as a part of the input (with a modest $\Oh(W)$-factor overhead, where $W$ is the maximum cost of a single edit), and it outputs not only the weighted edit distance, but also the underlying optimal alignment (an optimal sequence of edits transforming one string into the other).
\begin{theorem}\label{thm:static_small}
    Given strings $X, Y \in \Sigma^{\le n}$ and oracle access to a weight function $w : \Esigma^2 \to \fragmentcc{0}{W}$, 
    the weighted edit distance $k\coloneqq \wed(X, Y)$ can be computed in $\Oh(n+W\cdot k^2\log^2 n)$ time.
    The algorithm also outputs a $w$-optimal sequence of edits transforming $X$ into $Y$.\lipicsEnd
\end{theorem}
Even though the time complexity of \cref{thm:static_small} is optimal (up to sub-polynomial factors and conditioned on OVH) for $W=n^{o(1)}$, the algorithm of \cite{CKW23} for arbitrary weights becomes faster when $W$ is large, e.g., $W\ge k$. 
In order to address this issue, we present a variant of our solution with an extra subroutine that lets us effectively cap the dependency on $W$ at $\Oh(\sqrt{k}\log n)$: the resulting runtime of $\tOh(n+k^{2.5})$ improves upon the upper bound of $\Ohtilde(n+\sqrt{nk^3})$ from \cite{CKW23} if $k\le \sqrt{n}$.
\begin{theorem}\label{thm:static_large}
    Given strings $X, Y \in \Sigma^{\le n}$ and oracle access to a weight function $w : \Esigma^2 \to \ZZ_{\ge 0}$, 
    the weighted edit distance $k\coloneqq \wed(X, Y)$ can be computed in $\Oh(n+k^{2.5}\log^3 n)$ time.
    The algorithm also outputs a $w$-optimal sequence of edits transforming $X$ into $Y$.\lipicsEnd
\end{theorem}
\begin{remark}
    As a corollary of independent interest, for an arbitrary normalized weight function $w':\Esigma^2 \to \mathbb{R}_{\ge 0}$ and a real parameter $\epsilon \in (0,1]$, a $(1+\epsilon)$-factor approximation of $k\coloneqq \ed^{w'}(X,Y)$ can be computed in $\Oh(n+(k/\epsilon)^{2.5}\log^3 n)$ time. 
For this, it suffices to use \cref{thm:static_large} with $w(a,b)\coloneqq \lfloor{w'(a,b)/\epsilon}\rfloor$.\lipicsEnd
\end{remark}

\subparagraph*{Related Results}
The only weight function class systematically studied before this work consists of \emph{uniform} weight functions that assign an integer weight $W_{\textsf{indel}}$ to all insertions and deletions and another integer weight $W_{\textsf{sub}}$ to all substitutions. 
Tiskin~\cite{Tis08} observed that the $\Oh(n+k^2)$ running time can be generalized from the unweighted case of $W_{\textsf{indel}}=W_{\textsf{sub}}=1$ to $W_{\textsf{indel}},W_{\textsf{sub}}=\Oh(1)$. 
In a more fine-grained study, Goldenberg, Kociumaka, Krauthgamer, and Saha~\cite{GKKS23} adapted the Landau--Vishkin algorithm~\cite{LV88} so that it runs in $\Oh(n+k^2/W_{\textsf{indel}})$ time \mbox{for any integers $W_{\textsf{indel}},W_{\textsf{sub}}\ge 1$.}

\paragraph*{Dynamic Edit Distance}
Dynamic algorithms capture a natural scenario when the input data changes frequently, and the solution needs to be maintained upon every update.
Although the literature on dynamic algorithms covers primarily graph problems (see \cite{HHS22} for a recent survey), there is a large body of work on dynamic strings.
The problems studied in this setting include testing equality between substrings~\cite{MSU97}, longest common prefix queries \cite{MSU97,ABR00,GKKLS18}, text indexing~\cite{ABR00,GKKLS15,NIIBT20,KK22}, approximate pattern matching \cite{CKW20,CGKMU22,CKW22}, suffix array maintenance \cite{AB20,KK22}, longest common substring \cite{AB18,CGP20}, longest increasing subsequence~\cite{MS20, KS21, GJ21}, Lempel--Ziv factorization \cite{NIIBT20}, detection of repetitions \cite{ABCK19}, and last but not least edit distance~\cite{Tis08,CKM20,KMS23} (see \cite{Koc22} for a~survey~talk).

A popular model of dynamic strings, and arguably the most natural one in the context of edit distance, is where each update is a single edit, i.e., a character insertion, deletion, or substitution in one of the input strings.
A folklore dynamic edit-distance algorithm can be obtained from the Landau--Vishkin framework~\cite{LV88} using a dynamic strings implementation that can efficiently test equality between substrings.
With the data structure of Mehlhorn, Sundar, and Uhrig~\cite{MSU97}, one can already achieve the worst-case update time of $\Ohtilde(k^2)$.
Modern optimized alternatives~\cite{GKKLS18,KK22} yield $\Oh(k^2\log n)$ update time with high probability and $\Oh(k^2\log^{1+o(1)} n)$ update time deterministically.
Unfortunately, these results are not meaningful for $k\ge \sqrt{n}$: then, it is better to recompute the edit distance from scratch, in $\Oh(n+k^2)=\Oh(k^2)$ time, upon every update.
In particular, it remained open if sub-quadratic update time can be achieved for the unbounded version of dynamic edit distance.

Early works contributed towards answering this question by studying a restricted setting allowing updates only at the endpoints of the maintained strings~\cite{LMS98,KP04,IIST05,Tis08}. 
The most general of these results is an algorithm by Tiskin~\cite{Tis08} that works in $\Oh(n)$ time per update subject to edits at both endpoints of both strings.
More recently, Charalampopoulos, Kociumaka, and Mozes \cite{CKM20} applied Tiskin's toolbox~\cite{Tis08,Tiskin10} in a dynamic edit distance algorithm that supports arbitrary updates in $\Oh(n\log^2 n)$ time.
Any significantly better update time of $\Oh(n^{1-\epsilon})$ would immediately yield an $\Oh(n^{2-\epsilon})$-time static algorithm and thus violate the Orthogonal Vectors Hypothesis.
The fine-grained lower bound, however, does not prohibit improvements for $k\ll n$, and the state-of-the-art update time of $\Ohtilde(\min\{n,k^2\})$ motivates the following tantalizing open question, explicitly posed in~\cite{Koc22}:
\begin{quote}
    \centering{\textit{Is there a dynamic edit distance algorithm that supports updates in $\Ohtilde(k)$ time?}}
\end{quote}
The second main result of our work is an affirmative answer to this question.
\begin{theorem}\label{thm:unweighed_dynamic}
    There exists a deterministic dynamic algorithm that maintains strings $X,Y\in \Sigma^{*}$ subject to edits and, upon every update, computes $k\coloneqq \ed(X,Y)$ in $\Oh(k\log^2 n)$ time, where $n\coloneqq |X|+|Y|$.
    The algorithm can be initialized in $\Oh(n\log^{o(1)}n+k^2\log^2 n)$ time and, along with $\ed(X,Y)$, it also outputs an optimal sequence of edits transforming $X$ into $Y$.\lipicsEnd
\end{theorem}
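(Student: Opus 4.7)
The plan is to represent both $X$ and $Y$ using a balanced dynamic straight-line program (SLP) of Charikar et al., so that every edit alters only $\Oh(\log n)$ grammar nodes and the parse trees have height $\Oh(\log n)$. At each node $v$, I store a compact summary of the part of the alignment graph $\AG(X,Y)$ induced by the fragments expanded at $v$: concretely, the truncated boundary-distance matrix obtained by replacing entries exceeding a threshold $\Theta(k)$ with $+\infty$. This truncated matrix is Monge, and the combinatorial stitching lemma advertised in the abstract ensures that its \emph{core}, in Russo's sense, has size $\Ohtilde(k)$.

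Consequently, assembling a parent's matrix from its two children amounts to a single $(\min,+)$-product of small-core Monge matrices, which Russo's algorithm performs in $\Ohtilde(k)$ time. Each edit affects only $\Oh(\log n)$ ancestors of the touched leaf, each requiring $\Ohtilde(k)$ work, yielding the claimed $\Oh(k\log^{6} n)$ per-update cost. The value $\ed(X,Y)$ is read off from the root matrix, and an optimal edit sequence is recovered by tracing back through the stored Monge cores from root to leaves, spending $\Ohtilde(k)$ per level. At initialization, the SLPs are constructed bottom-up in $\Oh(n\log^{o(1)}n)$ time and the Monge cores are populated via one $(\min,+)$-product per grammar node for $\Oh(k^{2}\log^{6} n)$ total work.

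The principal obstacle is maintaining the core-size invariant $\Ohtilde(k)$ across every level of the grammar. This is exactly the content of the combinatorial stitching lemma advertised in the abstract: an optimal alignment of a concatenation differs from the concatenation of the children's optimal alignments only within a neighborhood whose size is controlled by a compressibility measure, and the balance of the Charikar et al.\ SLP guarantees that the relevant compressibility stays $\Ohtilde(k)$ at every level. A secondary obstacle is preserving the invariant that entries bounded by $k$ are truthful under repeated $(\min,+)$-multiplications; for the unweighted case ($W=1$) this follows essentially from Russo's machinery, whereas the general weighted extension (not required here) relies on the novel threshold-preserving core-reduction procedure mentioned in the abstract.
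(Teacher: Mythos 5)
Your proposal correctly identifies two of the paper's ingredients---balanced SLPs of Charikar et al.\ and small-core Monge $(\min,+)$-products in the style of Russo---but it misattributes the role of the stitching lemma and, more importantly, misses the machinery that actually brings the update time down to $\Ohtilde(k)$. The central error is the core-size claim. In the unweighted setting, the core of a boundary distance matrix $\BM(A,B)$ scales as $\Theta(\len(A)+\len(B))$; near the root of the SLP this is $\Theta(n)$, not $\Ohtilde(k)$, regardless of how small $k$ is. The stitching lemma advertised in the abstract does not bound core sizes at all---it is a statement about how three optimal alignments can be glued into one, and it is used for a completely different purpose (combining recursive solutions). Also, replacing entries above a threshold by $+\infty$ destroys the Monge property; the paper instead caps at $k+1$ via a $k$-equivalence relaxation, and even that only brings cores down to $\Oh(m\sqrt{k})$, not $\Oh(k)$, and is only needed for large integer weights, not for the unweighted case. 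If you work out the balanced-SLP-plus-Russo scheme carefully, an edit creates $\Oh(\log n)$ new symbols, and for each new symbol $A'$ you must compute boundary matrices against all suitably sized partner symbols $B$ of the other string, for a total of $\Ohtilde(n)$ work; this is exactly the paper's preliminary $\Ohtilde(n)$-per-update warmup (\cref{lem:simple,cor:simple}), which serves as a building block, not the final answer.

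To reach $\Ohtilde(k)$ per update and $\Ohtilde(k^2)$ initialization, the paper layers in several ideas your proposal does not touch: (i) $X$ is split into $\Theta(n/k)$ phrases of length $\Theta(k)$ each, so the per-phrase boundary matrices have dimension \emph{and} core $\Oh(k)$; (ii) when $\sed(X)\le \Oh(k)$, all but $\Oh(k)$ consecutive phrases coincide, so only $\Ohtilde(k)$ distinct leaf matrices need to be built, and even the differing phrases can be constructed with $\Ohtilde(k)$ weight-balanced copy-paste and split SLP operations in total (\cref{lem:addtoboxds}); (iii) a coarse $\Oh(k)$-cost alignment $\cA:X\onto Y$ is maintained across updates and drives a divide-and-conquer split of $Y$, with \cref{cor:intersect} being precisely what lets the algorithm combine the recursively computed optimal sub-alignments with a freshly recomputed middle piece of small self-edit distance; (iv) a two-copies-and-epochs de-amortization handles the periodic rebuilding of the coarse alignment after $\Theta(k)$ edits. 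Without the reduction to small self-edit distance and the approximate-alignment-driven recursion, the plan you sketch neither achieves $\Ohtilde(k)$ updates nor the $\Ohtilde(k^2)$ initialization you assert.
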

We note that the fine-grained lower bounds prohibit improving the update time to $\Ohtilde(k^{1-\epsilon})$ for any $\epsilon > 0$, even for instances restricted to $k=\Theta(n^{\kappa})$ for any constant $0 < \kappa\le 1$.%
\footnote{To see this, consider a static edit distance instance $(\bX,\bY)$ with $\ed(\bX,\bY)=\Theta(n^{\kappa})$ and $|\bX|+|\bY|=\Theta(n^{\kappa})$. Initialize a dynamic edit distance algorithm with $X=Y=\mathtt{0}^n$ and perform $\Theta(n^{\kappa})$ insertions so that $X=\mathtt{0}^n\bX$ and $Y=\mathtt{0}^n\bY$. With an update time of $\Ohtilde(k^{1-\epsilon})$, we could compute $\ed(\bX,\bY)=\ed(X,Y)$ in $\Ohtilde(n^{\kappa}\cdot n^{\kappa(1-\epsilon)})=\Ohtilde(n^{\kappa(2-\epsilon)})$ time, violating the Orthogonal Vectors Hypothesis. 
Considering multiple instances $(\bX,\bY)$, one can further prove that the $\Ohtilde(k^{1-\epsilon})$ update time cannot be achieved even after $\Oh(n^c)$-time initialization for an arbitrarily large constant $c$.}

Notably, even though our solution is significantly more complicated than the state-of-the-art dynamic algorithm for unbounded edit distance~\cite{CKM20}, we do not incur any additional polylogarithmic factors in the update time.
What is perhaps more surprising, our algorithm natively supports small integer weights and, unlike in the static case, we are not aware of any significantly simpler approach tailored for the unweighted version of the dynamic bounded edit distance problem.
\begin{theorem}\label{thm:dynamc_informal}
    Let $w : \Esigma^2 \to \fragmentcc{0}{W}$ be a weight function supporting constant-time oracle access.
    There exists a deterministic dynamic algorithm that maintains strings $X,Y\in \Sigma^{*}$ subject to edits and, upon every update,
    computes $k\coloneqq \wed(X,Y)$ in $\Oh(W^2 k\log^2 n)$ time, where $n\coloneqq |X|+|Y|$.
    The algorithm can be initialized in $\Oh(n\log^{o(1)}n+Wk^2\log^2 n)$ time and, along with $\wed(X,Y)$, it also outputs a $w$-optimal sequence of edits transforming $X$ into $Y$.\lipicsEnd
\end{theorem}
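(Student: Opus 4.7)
The plan is to lift the unweighted dynamic algorithm underlying \cref{thm:unweighed_dynamic} to the weighted setting using the same Monge-matrix toolkit that powers \cref{thm:static}. First, I would maintain a balanced straight-line program (SLP) representation of $X$ and $Y$ in the style of Charikar, Lehman, Liu, Panigrahy, Prabhakaran, Sahai, and Shelat, so that a single character edit affects only $\Oh(\log n)$ nonterminals per string. This yields a hierarchical decomposition of both strings into $\Oh(\log n)$ levels of geometrically growing blocks, with the invariants needed to invoke the combinatorial stitching lemma at each merge.

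At every node, I would store a weighted Monge matrix encoding optimal $w$-weighted alignments between a block of $X$ and a matching window of $Y$, with entries truncated at a threshold proportional to $k$ so that the full alignment cost remains recoverable. Because weights lie in $\fragmentcc{0}{W}$, the entries span a range up to $\Oh(Wk)$ rather than $\Oh(k)$, which inflates the core size to $\Oh(Wk)$. I would then invoke the new core-reduction primitive behind \cref{thm:static} to compress each matrix while truthfully preserving every entry below the threshold. Russo's $(\min,+)$-multiplication for Monge matrices with small core then executes each merge in $\Oh(Wk)$ time up to polylogarithmic factors, which matches the per-node budget required by the final bound.

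Upon each edit, I would propagate the change along the $\Oh(\log n)$ affected root-to-leaf path of the SLP, recomputing and re-stitching Monge matrices at every disturbed node. The combinatorial stitching lemma guarantees that each re-stitching operates only in a neighborhood whose size is controlled by the grammar-based compressibility measure, which is $\Oh(\log n)$ by SLP balance; the local surgery itself costs another factor of $Wk$ per level. Summed over $\Oh(\log n)$ levels and the $\Oh(\log n)$ affected nodes per level, the total update time works out to $\Oh(W^2 k\log^6 n)$: one factor of $W$ from the inflated core size of weighted Monge matrices, and a second from the per-level stitching overhead that scales with the edit-weight range. Initialization builds the balanced SLP in $\Oh(n\log^{o(1)} n)$ time and then runs a batched version of the static algorithm of \cref{thm:static} over the SLP nodes to populate the Monge matrices, giving the claimed $\Oh(n\log^{o(1)} n + Wk^2\log^6 n)$ preprocessing.

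The main obstacle I anticipate is reconciling truncation with stitching: the combinatorial lemma is phrased for \emph{exact} optimal alignments, whereas truncated matrices deliberately forget large values. I would need to argue that, whenever $\wed(X,Y)\le k$, every Monge entry participating in the reconstruction of the global alignment stays below the truncation threshold, potentially by inflating this threshold at higher levels proportionally to the stitching neighborhood's contribution and then relying on the core-reduction guarantee to keep representation size unchanged. A secondary challenge is amortizing the core reductions inside the dynamic schedule: unlike in the static setting, interior nodes must be reduced incrementally without touching whole subtrees, which I expect to handle via a lazy, path-local variant of the reduction tailored to the SLP hierarchy, ensuring that neither truncation correction nor core maintenance breaks the $\Oh(W^2 k\log^6 n)$ worst-case budget.
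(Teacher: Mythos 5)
Your proposal does not match the paper's approach, and the deviation introduces a gap that would break the claimed update time. The central problem is that you propose to store a weighted Monge boundary matrix at \emph{every} node of a global balanced SLP of $X$ and $Y$ and to re-stitch along the $\Oh(\log n)$ SLP nodes touched by an edit. At levels near the root, those boundary matrices are $\Theta(n)\times\Theta(n)$. Even if core reduction keeps the core size at $\Oh(Wk)$, the data structure $\mds$ (\cref{lm:build_mtx_ds}) must explicitly store the first row and column, so rebuilding or re-stitching a matrix at a high level costs $\Omega(n)$ --- the dimension enters Russo's $(\min,+)$-product bound linearly as $N$, not just through $\delta$. So your update scheme is $\Omega(n)$, not $\tOh(Wk)$. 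This is exactly the barrier the paper flags in the overview: even with the amortization ideas, naively propagating boundary matrices up a global hierarchy stalls at $\Thtilde(\min\{n,k^2\})$.

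The paper's actual structure is different. The global SLP with Monge matrices at every relevant pair (\cref{prp:hads}) is used only \emph{inside} \cref{lem:dyn_sed}, and there all strings have length $\Oh(k)$, so the $N$ in Russo's bound is $\Oh(k)$, not $n$. The outer structure is a perfectly balanced binary tree on top of $\hX$ (not a grammar), with the $Y$-partition determined by an approximate alignment $\cA$. At each internal node $\nu$, the algorithm stores only (i) a breakpoint representation of an optimal alignment $\cO_\nu: X_\nu\onto Y_\nu$ of size $\Oh(c_\nu)$, and (ii) a ``glue'' instance (\cref{lem:glue}, de-amortized in \cref{cor:glue}) that maintains a middle region $X^*_L X^*_R$ of self edit distance $\Oh(k)$ around the split point. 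Correctness of stitching three alignments $\cO_L,\cO_M,\cO_R$ is guaranteed by \cref{cor:intersect}, whose hypotheses (either $\sed > 4\cdot\text{cost}(\cA)$ or an endpoint is reached) you do not address: you would need to explain how to pick and maintain the middle fragments so the self-edit-distance threshold holds after updates. The paper handles this via the maximality of $\hX^*_L,\hX^*_R$ combined with the triangle inequality for $\sed$ and epoch-length control. You also do not account correctly for the two factors of $W$: one comes from core sizes $\Oh(Wk)$ (\cref{lm:core-of-distance-matrix}), and the other comes from the fact that epochs can only last $\floor{k/W}$ edits before $\wed(X,Y)$ could double, so de-amortization charges a factor of $W$ --- not a ``per-level stitching overhead'' as you conjecture.

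Your worry about truncation interacting with the exactness of the stitching lemma is legitimate, but the resolution is simpler than what you propose: in the dynamic algorithm the paper uses the exact $\boxds$ (not the capped $\boxds^k$) inside \cref{lem:dyn_sed}, since the relevant cores are already $\Oh(Wk)$ without truncation; the capped variant and the core-reduction primitive are used only in the $\tOh(n+k^{2.5})$ static result, which is orthogonal to \cref{thm:dynamc_informal}. Finally, the claimed $\Oh(n\log^{o(1)} n)$ term in the initialization is not from building a balanced SLP; it is the cost of setting up the dynamic \modelname{} implementation, and the remaining $\Oh(Wk^2\log^6 n)$ comes from initializing the binary tree of glue instances via \cref{lm:full_algorithm} and \cref{lem:dyn}.
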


\paragraph*{Open Questions}
In the unweighted setting, the $\Oh(k\log^2 n)$ update time of \cref{thm:unweighed_dynamic} does not give much room for improvement: the most pressing challenge is to reduce the $\Oh(\log^2 n)$-factor overhead in \cite{CKM20}.
For small integer weights, the dependency on the largest weight $W$ remains to be studied.
The linear dependency in the running time $\Ohtilde(n+Wk^2)$ of \cref{thm:static_small} feels justified, but we hope that the quadratic dependency in the update time $\Ohtilde(W^2 k)$ of \cref{thm:dynamc_informal} can be reduced.
An exciting open question is to determine the optimality of our $\Ohtilde(n+k^{2.5})$-time static algorithm for large integer weights (for $k \le \min\{W^2,\sqrt{n}\}$).
This runtime matches the lower bound of~\cite{CKW23}, but the hard instances constructed there crucially utilize fractional weights.
The complexity of dynamic bounded edit distance also remains widely open for large weights:
we are only aware of how to achieve $\Ohtilde(k^{2.5})$ and $\Ohtilde(k^{3})$ update time for integer and general weights, respectively, and these solutions simply combine static algorithms with a dynamic strings implementation supporting efficient substring equality tests. 
With extra effort, the techniques developed in this paper should lead to a faster support of updates that do not change $\wed(X,Y)$ too much. 
We do not know, however, how to improve upon the aforementioned simple approach for updates that drastically alter $\wed(X,Y)$.

\section{Technical Overview}\label{sec:overview}

In this section, we provide an overview of our algorithms. 
We focus on the static algorithm behind \cref{thm:static_fixed}, which already incorporates most novel ideas and techniques.
In the overview, we assume for simplicity that the parameter $k$ is an upper bound on $\wed(X,Y)$ known to the algorithms.

\subparagraph*{Basic Concepts}
Consistently with most previous work, we interpret the edit distance problem using the \emph{alignment graph} of the input strings.
For strings $X,Y\in \Sigma^*$ and a weight function $w:\Esigma^2\to \mathbb{R}_{\ge 0}$, the alignment graph $\AGw(X,Y)$ is a directed grid graph with vertex set $\fragmentcc{0}{|X|}\times \fragmentcc{0}{|Y|}$ and the following edges (see \cref{fig:alignment-graph}):
\begin{itemize}
    \item $(x,y)\to (x,y+1)$ of weight $w(\eps,Y[y])$, representing an insertion of $Y[y]$;
    \item $(x,y)\to (x+1,y)$ of weight $w(X[x],\eps)$, representing a deletion of $X[x]$;
    \item $(x,y)\to (x+1,y+1)$ of weight $w(X[x],Y[y])$, representing a match (if $X[x]=Y[y]$) or a substitution of $X[x]$ for $Y[y]$.
\end{itemize}

\begin{figure}[htb]
    \vspace{-.35cm}
    \begin{center}\begin{tikzpicture}[scale=.89]\draw[line width=5pt, black!35, line cap=round] (0, 6) -- (4, 2) -- (8, 2) -- (8, 0);
\node[inner sep = 2pt,circle,fill=black] (n0_0) at (0,6) {};
\node[inner sep = 2pt,circle,fill=black] (n0_1) at (0,4) {};
\draw[-latex,darkred,thick] (n0_0) -- (n0_1) node[midway, above=-2.5, sloped, text=blue, font=\small] {$w(\varepsilon, \mathtt{b})$};
\node[inner sep = 2pt,circle,fill=black] (n0_2) at (0,2) {};
\draw[-latex,darkred,thick] (n0_1) -- (n0_2) node[midway, above=-2.5, sloped, text=blue, font=\small] {$w(\varepsilon, \mathtt{a})$};
\node[inner sep = 2pt,circle,fill=black] (n0_3) at (0,0) {};
\draw[-latex,darkred,thick] (n0_2) -- (n0_3) node[midway, above=-2.5, sloped, text=blue, font=\small] {$w(\varepsilon, \mathtt{b})$};
\node[inner sep = 2pt,circle,fill=black] (n1_0) at (2,6) {};
\draw[-latex,darkred,thick] (n0_0) -- (n1_0) node[midway, above=-2.5, text=blue, font=\small] {$w(\mathtt{b}, \varepsilon)$};
\node[inner sep = 2pt,circle,fill=black] (n1_1) at (2,4) {};
\draw[-latex,darkred,thick] (n0_1) -- (n1_1) node[midway, above=-2.5, text=blue, font=\small] {$w(\mathtt{b}, \varepsilon)$};
\draw[-latex,darkred,thick] (n1_0) -- (n1_1) node[midway, above=-2.5, sloped, text=blue, font=\small] {$w(\varepsilon, \mathtt{b})$};
\draw[-latex,darkgreen,thick] (n0_0) -- (n1_1)node[midway, above=-2.5, sloped, text=blue, font=\small] {$w(\mathtt{b}, \mathtt{b})$};
\node[inner sep = 2pt,circle,fill=black] (n1_2) at (2,2) {};
\draw[-latex,darkred,thick] (n0_2) -- (n1_2) node[midway, above=-2.5, text=blue, font=\small] {$w(\mathtt{b}, \varepsilon)$};
\draw[-latex,darkred,thick] (n1_1) -- (n1_2) node[midway, above=-2.5, sloped, text=blue, font=\small] {$w(\varepsilon, \mathtt{a})$};
\draw[-latex,darkred,thick] (n0_1) -- (n1_2)node[midway, above=-2.5, sloped, text=blue, font=\small] {$w(\mathtt{b}, \mathtt{a})$};
\node[inner sep = 2pt,circle,fill=black] (n1_3) at (2,0) {};
\draw[-latex,darkred,thick] (n0_3) -- (n1_3) node[midway, above=-2.5, text=blue, font=\small] {$w(\mathtt{b}, \varepsilon)$};
\draw[-latex,darkred,thick] (n1_2) -- (n1_3) node[midway, above=-2.5, sloped, text=blue, font=\small] {$w(\varepsilon, \mathtt{b})$};
\draw[-latex,darkgreen,thick] (n0_2) -- (n1_3)node[midway, above=-2.5, sloped, text=blue, font=\small] {$w(\mathtt{b}, \mathtt{b})$};
\node[inner sep = 2pt,circle,fill=black] (n2_0) at (4,6) {};
\draw[-latex,darkred,thick] (n1_0) -- (n2_0) node[midway, above=-2.5, text=blue, font=\small] {$w(\mathtt{a}, \varepsilon)$};
\node[inner sep = 2pt,circle,fill=black] (n2_1) at (4,4) {};
\draw[-latex,darkred,thick] (n1_1) -- (n2_1) node[midway, above=-2.5, text=blue, font=\small] {$w(\mathtt{a}, \varepsilon)$};
\draw[-latex,darkred,thick] (n2_0) -- (n2_1) node[midway, above=-2.5, sloped, text=blue, font=\small] {$w(\varepsilon, \mathtt{b})$};
\draw[-latex,darkred,thick] (n1_0) -- (n2_1)node[midway, above=-2.5, sloped, text=blue, font=\small] {$w(\mathtt{a}, \mathtt{b})$};
\node[inner sep = 2pt,circle,fill=black] (n2_2) at (4,2) {};
\draw[-latex,darkred,thick] (n1_2) -- (n2_2) node[midway, above=-2.5, text=blue, font=\small] {$w(\mathtt{a}, \varepsilon)$};
\draw[-latex,darkred,thick] (n2_1) -- (n2_2) node[midway, above=-2.5, sloped, text=blue, font=\small] {$w(\varepsilon, \mathtt{a})$};
\draw[-latex,darkgreen,thick] (n1_1) -- (n2_2)node[midway, above=-2.5, sloped, text=blue, font=\small] {$w(\mathtt{a}, \mathtt{a})$};
\node[inner sep = 2pt,circle,fill=black] (n2_3) at (4,0) {};
\draw[-latex,darkred,thick] (n1_3) -- (n2_3) node[midway, above=-2.5, text=blue, font=\small] {$w(\mathtt{a}, \varepsilon)$};
\draw[-latex,darkred,thick] (n2_2) -- (n2_3) node[midway, above=-2.5, sloped, text=blue, font=\small] {$w(\varepsilon, \mathtt{b})$};
\draw[-latex,darkred,thick] (n1_2) -- (n2_3)node[midway, above=-2.5, sloped, text=blue, font=\small] {$w(\mathtt{a}, \mathtt{b})$};
\node[inner sep = 2pt,circle,fill=black] (n3_0) at (6,6) {};
\draw[-latex,darkred,thick] (n2_0) -- (n3_0) node[midway, above=-2.5, text=blue, font=\small] {$w(\mathtt{a}, \varepsilon)$};
\node[inner sep = 2pt,circle,fill=black] (n3_1) at (6,4) {};
\draw[-latex,darkred,thick] (n2_1) -- (n3_1) node[midway, above=-2.5, text=blue, font=\small] {$w(\mathtt{a}, \varepsilon)$};
\draw[-latex,darkred,thick] (n3_0) -- (n3_1) node[midway, above=-2.5, sloped, text=blue, font=\small] {$w(\varepsilon, \mathtt{b})$};
\draw[-latex,darkred,thick] (n2_0) -- (n3_1)node[midway, above=-2.5, sloped, text=blue, font=\small] {$w(\mathtt{a}, \mathtt{b})$};
\node[inner sep = 2pt,circle,fill=black] (n3_2) at (6,2) {};
\draw[-latex,darkred,thick] (n2_2) -- (n3_2) node[midway, above=-2.5, text=blue, font=\small] {$w(\mathtt{a}, \varepsilon)$};
\draw[-latex,darkred,thick] (n3_1) -- (n3_2) node[midway, above=-2.5, sloped, text=blue, font=\small] {$w(\varepsilon, \mathtt{a})$};
\draw[-latex,darkgreen,thick] (n2_1) -- (n3_2)node[midway, above=-2.5, sloped, text=blue, font=\small] {$w(\mathtt{a}, \mathtt{a})$};
\node[inner sep = 2pt,circle,fill=black] (n3_3) at (6,0) {};
\draw[-latex,darkred,thick] (n2_3) -- (n3_3) node[midway, above=-2.5, text=blue, font=\small] {$w(\mathtt{a}, \varepsilon)$};
\draw[-latex,darkred,thick] (n3_2) -- (n3_3) node[midway, above=-2.5, sloped, text=blue, font=\small] {$w(\varepsilon, \mathtt{b})$};
\draw[-latex,darkred,thick] (n2_2) -- (n3_3)node[midway, above=-2.5, sloped, text=blue, font=\small] {$w(\mathtt{a}, \mathtt{b})$};
\node[inner sep = 2pt,circle,fill=black] (n4_0) at (8,6) {};
\draw[-latex,darkred,thick] (n3_0) -- (n4_0) node[midway, above=-2.5, text=blue, font=\small] {$w(\mathtt{a}, \varepsilon)$};
\node[inner sep = 2pt,circle,fill=black] (n4_1) at (8,4) {};
\draw[-latex,darkred,thick] (n3_1) -- (n4_1) node[midway, above=-2.5, text=blue, font=\small] {$w(\mathtt{a}, \varepsilon)$};
\draw[-latex,darkred,thick] (n4_0) -- (n4_1) node[midway, above=-2.5, sloped, text=blue, font=\small] {$w(\varepsilon, \mathtt{b})$};
\draw[-latex,darkred,thick] (n3_0) -- (n4_1)node[midway, above=-2.5, sloped, text=blue, font=\small] {$w(\mathtt{a}, \mathtt{b})$};
\node[inner sep = 2pt,circle,fill=black] (n4_2) at (8,2) {};
\draw[-latex,darkred,thick] (n3_2) -- (n4_2) node[midway, above=-2.5, text=blue, font=\small] {$w(\mathtt{a}, \varepsilon)$};
\draw[-latex,darkred,thick] (n4_1) -- (n4_2) node[midway, above=-2.5, sloped, text=blue, font=\small] {$w(\varepsilon, \mathtt{a})$};
\draw[-latex,darkgreen,thick] (n3_1) -- (n4_2)node[midway, above=-2.5, sloped, text=blue, font=\small] {$w(\mathtt{a}, \mathtt{a})$};
\node[inner sep = 2pt,circle,fill=black] (n4_3) at (8,0) {};
\draw[-latex,darkred,thick] (n3_3) -- (n4_3) node[midway, above=-2.5, text=blue, font=\small] {$w(\mathtt{a}, \varepsilon)$};
\draw[-latex,darkred,thick] (n4_2) -- (n4_3) node[midway, above=-2.5, sloped, text=blue, font=\small] {$w(\varepsilon, \mathtt{b})$};
\draw[-latex,darkred,thick] (n3_2) -- (n4_3)node[midway, above=-2.5, sloped, text=blue, font=\small] {$w(\mathtt{a}, \mathtt{b})$};
\draw (1, -0.35) node[below]{$\mathtt{b}$};
\draw (3, -0.35) node[below]{$\mathtt{a}$};
\draw (5, -0.35) node[below]{$\mathtt{a}$};
\draw (7, -0.35) node[below]{$\mathtt{a}$};
\draw (-0.35,5) node[left]{$\mathtt{b}$};
\draw (-0.35,3) node[left]{$\mathtt{a}$};
\draw (-0.35,1) node[left]{$\mathtt{b}$};
\draw (0, 0) node[left,font=\small,text=orange] {$\textup{in}_{1}$};
\draw (0, 2) node[left,font=\small,text=orange] {$\textup{in}_{2}$};
\draw (0, 4) node[left,font=\small,text=orange] {$\textup{in}_{3}$};
\draw (0, 6) node[above left,font=\small,text=orange] {$\textup{in}_{4}$};
\draw (2, 6) node[above,font=\small,text=orange] {$\textup{in}_{5}$};
\draw (4, 6) node[above,font=\small,text=orange] {$\textup{in}_{6}$};
\draw (6, 6) node[above,font=\small,text=orange] {$\textup{in}_{7}$};
\draw (8, 6) node[above,font=\small,text=orange] {$\textup{in}_{8}$};
\draw (0, 0) node[below,font=\small,text=violet] {$\textup{out}_{1}$};
\draw (2, 0) node[below,font=\small,text=violet] {$\textup{out}_{2}$};
\draw (4, 0) node[below,font=\small,text=violet] {$\textup{out}_{3}$};
\draw (6, 0) node[below,font=\small,text=violet] {$\textup{out}_{4}$};
\draw (8, 0) node[below right,font=\small,text=violet] {$\textup{out}_{5}$};
\draw (8, 2) node[right,font=\small,text=violet] {$\textup{out}_{6}$};
\draw (8, 4) node[right,font=\small,text=violet] {$\textup{out}_{7}$};
\draw (8, 6) node[right,font=\small,text=violet] {$\textup{out}_{8}$};
    \end{tikzpicture}\end{center}\vspace{-.45cm}

    \caption{The alignment graph $\AGw(X, Y)$ for $X = \mathtt{baaa}$ and $Y = \mathtt{bab}$. In the unweighted case, the \textcolor{darkgreen}{green} edges have weight $0$, and the \textcolor{red}{red} ones have weight $1$. In the weighted case, the \textcolor{red}{red} edges have some weights defined by the weight function $w$. An optimal alignment for a weight function $w$ satisfying $w(\mathtt{a}, \varepsilon) = w(\varepsilon, \mathtt{b}) = 1$ and $w(\mathtt{a}, \mathtt{b}) = w(\mathtt{b}, \mathtt{a}) = w(\varepsilon, \mathtt{a}) = w(\mathtt{b}, \varepsilon) = 3$ is given in \textcolor{black!50}{gray}. The input vertices for the boundary matrix $\BMw(X, Y)$ have \textcolor{orange}{orange} labels, and the output vertices have \textcolor{violet}{violet} labels.}
    \label{fig:alignment-graph}
\end{figure}

Every alignment $\cA$ mapping $X$ onto $Y$, denoted $\cA : X \onto Y$, can be interpreted as a path in $\AGw(X,Y)$ from the top-left corner $(0,0)$ to the bottom-right corner $(|X|,|Y|)$, and the cost of the alignment, denoted $\wed_{\cA}(X,Y)$ is the length of the underlying path.
Consequently, the edit distance $\wed(X,Y)$ is simply the distance from $(0,0)$ to $(|X|,|Y|)$ in the alignment graph.

Many edit-distance algorithms compute not only the distance from $(0,0)$ to $(|X|,|Y|)$ but the entire \emph{boundary distance matrix} $\BMw(X,Y)$ that stores the distances from every \emph{input} vertex on the top-left boundary to every \emph{output} vertex on the bottom-right boundary of the alignment graph $\AGw(X,Y)$.
Crucially, the planarity of $\AGw(X,Y)$ implies that $M=\BMw(X,Y)$ satisfies the \emph{Monge property}, i.e., $M_{i,j}+M_{i+1,j+1} \le M_{i,j+1}+M_{i+1,j}$ holds whenever all four entries are finite.\footnote{In this overview, we ignore the fact that some entries of $M$ are infinite.
Our actual algorithms augment the alignment graph with backward edges so that the finite distances are preserved and the infinite distances become finite.}

\paragraph*{High-Level Algorithm Structure: Divide and Conquer}

The most important insight from~\cite{CKW23} is that the problem of computing $\wed(X,Y) \le k$ can be reduced to instances satisfying $\sed(X)=\Oh(k)$.
The value $\sed(X)$, called the \emph{self-edit distance} of~$X$, is the distance from $(0,0)$ to $(|X|,|X|)$ in the unweighted alignment graph $\AG(X,X)$ with the edges $(x,x)\to (x+1,x+1)$ on the main diagonal removed; see \cref{sec:sed} for a discussion of self-edit distance and its most important properties.

At a high level, the reduction follows a divide-and-conquer scheme: the strings are partitioned into two \emph{halves} $X=X_LX_R$ and $Y=Y_LY_R$, the values $\wed(X_L,Y_L)$ and $\wed(X_R,Y_R)$ are computed recursively, and then the distance $\wed(X,Y)$ is derived. 
The original approach from~\cite{CKW23}, also applied in a quantum algorithm for the unweighted edit distance~\cite{GJKT24}, partitions $X$ so that the two halves have the same length (up to $\pm 1$) and $Y$ so that $\wed(X,Y)=\wed(X_L,Y_L)+\wed(X_R,Y_R)$.
For the latter, it computes a $w$-optimal alignment between appropriate fragments of self-edit distance $\Oh(k)$ taken from the middles of $X$ and~$Y$. 
Intuitively, this is sufficient because whether a given vertex $(x,y)\in \fragmentcc{0}{|X|}\times \fragmentcc{0}{|Y|}$ belongs to a $w$-optimal alignment of cost at most $k$ depends only on \emph{contexts} of self-edit distance $\Oh(k)$ around positions $x$ in $X$ and $y$ in $Y$.
Unfortunately, the resulting scheme is unable to tell how to split the budget $k$ between the two recursive calls, and thus both calls utilize exponential search to estimate the local weighted edit distance, causing significant complications and, in case of~\cite{GJKT24}, also a large polylogarithmic-factor overhead on top of the $\Ohtilde(k^2)$ time complexity.
Moreover, this scheme is incompatible with the dynamic setting because a single update may change the partition of $Y$ and thus affect both recursive calls.

\subparagraph*{Technical Contribution 1: Simple and Robust Divide-and-Conquer Scheme}
We circumvent the aforementioned issues using a novel divide-and-conquer approach that relies on an \emph{approximately optimal} alignment $\cA : X\onto Y$ of (weighted) cost $\wed_{\cA}(X,Y)=\Oh(k)$.
In the context of \cref{thm:static_fixed}, we can pick $\cA$ to be an optimal \emph{unweighted} alignment whose cost satisfies $\wed_{\cA}(X,Y)=\Oh(\ed(X,Y))=\Oh(\wed(X,Y))=\Oh(k)$.
In the dynamic setting, it suffices to occasionally rebuild $\cA$. 

\begin{figure}[t!]
    \begin{subfigure}[t]{0.48\textwidth}
    \begin{center}
        \begin{tikzpicture}[y=-.85cm,x=0.85cm]
            \useasboundingbox (-0.5, -0.5) rectangle (8, 8);
            \scope[transform canvas={scale=.68}]
                \def\W{12}
\def\H{12}
\def\cnt{6}
\def\sizeofdot{4pt}
\def\gp{0.0}

\draw[line width=3pt, violet!30, line cap=round] (0, 0) -- (1, 1) -- (1, 1.5) -- (4.5, 5) -- (5.5, 5) -- (6.5, 6) -- (6.5, 8) -- (9, 10.5) -- (10.5, 10.5) -- (\H, \W);

\node (XL) at (\H / 5, -.5) {\Large $X_L$};
\node (XR) at (4 * \H / 5, -.5) {\Large $X_R$};

\draw[latex-latex] (\H / 3, -0.5) to node[midway, above] {$\sed = \Theta(k)$} (2 * \H / 3, -0.5);
\draw[latex-latex] (\H / 3, -0.5) -- (\H / 2, -0.5); 
\draw[latex-latex] (2*\H / 3, -0.5) -- (\H / 2, -0.5); 
\draw[dashed, gray] (\H / 3, 0) -- (\H / 3, \W);
\draw[dashed, gray] (2 * \H / 3, 0) -- (2 * \H / 3, \W);
\draw (\H/2, 0) node[above]{$x_m$};

\draw[line width=4.5pt, brown!30, line cap=round] (0, 0) -- (1.5, 0) -- (4, 2.5) -- (4, 3.5) -- (4.5,4) -- (4.5,5) -- (5, 5.5) -- (\H / 2, 5.5);
\draw[line width=4.5pt, brown!30, line cap=round] (\H / 2, 5.5) -- (6, 6.5) -- (7.5, 8) -- (8.5, 8) -- (11, 10.5) -- (11, 11) -- (\H, \W);

\draw[line width=4.5pt, blue!50,line cap=round] (\H / 3, 4.5) -- (4.5, 5) -- (6.5, 5) -- (7.5, 6) -- (7.5, 8) -- (8, 8.5) -- (2 * \H / 3, 9.5);

\node[circle,draw=darkgreen, fill=darkgreen, inner sep=0pt,minimum size=\sizeofdot] (olddot) at (4.5,5) {};
\node[circle,draw=darkgreen, fill=darkgreen, inner sep=0pt,minimum size=\sizeofdot] (olddot) at (7.5,8) {};

\draw[line width=2pt, darkgreen!70, line cap=round](0, 0) -- (1.5, 0) -- (4, 2.5) -- (4, 3.5) -- (4.5, 4) -- (4.5,5) -- (6.5, 5) -- (7.5, 6) -- (7.5, 8) -- (8.5, 8) -- (11, 10.5) -- (11, 11) -- (\H, \W);

\draw[dashed, gray] (0, 5.5) -- (\H, 5.5);

\draw (0, 5.5) node[left]{$y_m$};

\node[circle,draw=violet, fill=violet, inner sep=0pt,minimum size=\sizeofdot] (olddot) at (\H / 2, 5.5) {};

\draw[thick] (0, 0) rectangle (\H, \W);
\draw (\H / 2, 0) -- (\H / 2, \W);
\draw[blue, thick] (\H / 3, 4.5) rectangle (2 * \H / 3, 9.5);
            \endscope
        \end{tikzpicture}
    \end{center}
    \caption{The approximate alignment $\cA$ is drawn in \textcolor{violet!50}{violet}.
    The two optimal alignments $\cB_L$ and $\cB_R$ computed recursively are given in \textcolor{brown!50}{brown}. The optimal alignment $\cB_M$ in the \textcolor{blue}{blue} central area of small self-edit distance is given in \textcolor{blue!70}{blue}. The globally optimal \textcolor{darkgreen!70}{green} alignment $\cB$ is a combination of $\cB_L$, $\cB_M$, and $\cB_R$.} 
    \label{fig:new-divide-and-conquer}
    \end{subfigure}%
    \hspace{.04\textwidth}%
    \begin{subfigure}[t]{0.48\textwidth}
    \begin{center}
        \begin{tikzpicture}[y=-.85cm,x=0.85cm]
            \useasboundingbox (-.5, -.5) rectangle (8,8);
            \scope[transform canvas={scale=.68}]
                \def\shft{0}
\def\W{12}
\def\H{12}
\def\myi{2}
\def\cnt{6}
\def\gp{2}
\def\kp{1.5}
\def\charwidth{1 / 15}

\foreach \x in {0,5*\gp}
    \draw[fill=darkblue!40, thick] (\x, {max(\x - \kp, 0)}) rectangle (\x + \gp, {min(\x + \gp + \kp, \H)});

\foreach \x in {\gp,2*\gp,3*\gp} {
    \draw[fill=darkblue!40, thick] (\x, {max(\x - \kp, 0)}) rectangle (\x + \gp, {min(\x + \gp + \kp, \H)});
}

\foreach \x in {4*\gp} {
    \draw[fill=darkblue!40, thick] (\x, {max(\x - \kp, 0)}) rectangle (\x + \gp, {min(\x + \gp + \kp, \H)});
}

\draw (\kp,0.1) -- (\kp,-0.1) node[above] {$k$};
\draw (0.1,\kp) -- (-0.1,\kp) node[left] {$k$};
\draw[fill=darkgreen!40] (0, 0) -- (\kp, 0) -- (\H, \W - \kp) -- (\H, \W) -- (\H - \kp, \W) -- (0, \kp) -- (0, 0);

\foreach \x in {0,\gp,...,{\the\numexpr \W - \gp}}
    \draw (\x, {max(\x - \kp, 0)}) rectangle (\x + \gp, {min(\x + \kp + \gp, \H)});

\draw[thick] (0, 0) rectangle (\H, \W);

\draw (0,0.1) -- (0,0) node[above] {$x_0$};
\draw (\W,0.1) -- (\W,0) node[above] {$x_{\cnt}$};
\foreach \x in {1,2,...,{\the\numexpr \cnt - 1}}
    \draw (\x * \gp,0.1) -- (\x * \gp, -0.1) node[above] {$x_{\x}$};

\draw[dashed] (\myi * \gp + \gp, 0) -- (\myi * \gp + \gp, \myi * \gp);
\draw[dashed] (0,\myi * \gp - \kp) node[left] {$x_{\myi} - k$} -- (\myi * \gp,\myi * \gp - \kp);
\draw[dashed] (0,\myi * \gp + \gp + \kp) node[left] {$x_{\the\numexpr \myi + 1} + k$} -- (\myi * \gp,\myi * \gp + \kp+ \gp);

\node[left, darkred] at (\myi * \gp, \myi * \gp) {$V_{\myi}$};
\node[right, darkred] at (\myi * \gp + \gp, \myi * \gp + \gp) {$V_{\the\numexpr \myi + 1}$};

\draw[very thick, darkred] (\myi * \gp, \myi * \gp - \kp) -- (\myi * \gp, \myi * \gp + \kp);
\draw[very thick, darkred] (\myi * \gp + \gp, \myi * \gp+\gp-\kp) -- (\myi * \gp + \gp, \myi * \gp + \gp + \kp);

\node[darkred] at (\myi * \gp + \gp / 2, \myi * \gp + \gp / 2) {\Large $D_{\myi, {\the\numexpr \myi + 1}}$};
            \endscope
        \end{tikzpicture}
    \end{center}

    \caption{The decomposition of the alignment graph of $X$ and $Y$ into \textcolor{darkblue!60}{subgraphs} $G_i$ of size $\Theta(k) \times \Theta(k)$ that cover the whole \textcolor{darkgreen!60}{stripe} of width $O(k)$ around the main diagonal. Matrix $D_{i, i + 1}$ represents the distances between the vertices of $V_i$ and $V_{i+1}$.}
    \label{fig:substitutions-only}
    \end{subfigure}
    \caption{Our algorithms' setup for the general case and the case of small self-edit distance, respectively.}
\end{figure}

The procedure described next and illustrated in \cref{fig:new-divide-and-conquer} constructs a $w$-optimal alignment $\cB : X\onto Y$ using an arbitrary alignment $\cA : X \onto Y$ of cost $\Oh(k)$.
We partition $X$ into two halves of the same length (up to $\pm 1$) and $Y$ so that $(x_m,y_m)\coloneqq (|X_L|,|Y_L|)\in \cA$, that is, $\wed_{\boldsymbol\cA}(X,Y)=\wed_{\boldsymbol\cA}(X_L,Y_L)+\wed_{\boldsymbol\cA}(X_R,Y_R)$.
Then, we recursively build $w$-optimal alignments $\cB_L : X_L\onto Y_L$ and $\cB_R:X_R \onto Y_R$. 
Now, the recursive calls have predictable structure (guided by $\cA$) with local budgets given by the local costs of $\cA$.
Unfortunately, we are no longer guaranteed that $\wed(X,Y)=\wed(X_L,Y_L)+\wed(X_R,Y_R)$, so $\cB$ cannot be obtained by just concatenating $\cB_L$ and $\cB_R$.
Nevertheless, it turns out that the resulting alignment needs to be fixed only in a small (in terms of self-edit distance) neighborhood of $(x_m, y_m)$.
To be precise, we find a context $X_M$ of position $x_m$ in $X$ and compute the $w$-optimal alignment $\cB_M : X_M\onto Y_M$, where $Y_M$ is the image of $X_M$ under~$\cA$.
If the context has a sufficiently large self-edit distance (on both sides of $x_m$), then, as shown in \cref{cor:intersect}, the alignment $\cB_M$ intersects both $\cB_L$ and $\cB_R$, and the sought $w$-optimal alignment $\cB : X\onto Y$ can be obtained from the concatenation of $\cB_L$ and $\cB_R$ by following $\cB_M$ between the two intersection points. 
The recursion terminates whenever $X$ is very short (when a naive $\Oh(nk)$-time algorithm is used) or $\cA$ perfectly matches $X$ with $Y$ (when $\cA$ is already $w$-optimal).

The scheme above reduces computing $\wed(X,Y)$ to the case when $\sed(X)=\Oh(k)$ with an $\Oh(\log n)$-factor overhead due to the depth of recursion.
In \cref{sec:general}, we provide an optimized implementation that avoids this overhead by partitioning $X$ so that the cost of $\cA$ is split equally between the two halves. 
We also discuss computing $\cA$ in the contexts of \cref{thm:static_small,thm:static_large}.
A similar instantiation of this scheme can also be applied to significantly simplify the results of~\cite{CKW23}.

\paragraph*{The Case of Small Self-Edit Distance}
The remaining task is to compute $\wed(X,Y)$ under the assumption that $\wed(X,Y)\le k$ and $\sed(X)= \Oh(k)$ hold for a known parameter $k$.
On a high level, we follow an $\Ohtilde(n+k^3)$-time algorithm from~\cite{CKW23}, whose setup we recall next and illustrate in \cref{fig:substitutions-only}.
The optimal self-alignment $X\onto X$ yields a decomposition $X=\bigodot_{i=0}^{m-1} X_i$ of the string $X$ into phrases $X_i=X\fragmentco{x_{i}}{x_{i+1}}$ of length $\Theta(k)$ so that all but $\Oh(k)$ phrases $X_i$ satisfy $X_i = X_{i-1}$.
For each phrase, we define a fragment $Y_i=Y\fragmentco{y_i}{y'_{i+1}}$, where $y_i = \max\{x_i-k,0\}$ and $y'_{i+1}=\min\{x_{i+1}+k,|Y|\}$, and a subgraph $G_i$ of $\AGw(X,Y)$ induced by $\fragmentcc{x_i}{x_{i+1}}\times \fragmentcc{y_i}{y'_{i+1}}$.
The union $G$ of subgraphs $G_i$ contains every vertex $(x,y)$ with $|x-y|\le k$, so the promise $\wed(X,Y)\le k$ guarantees $\wed(X,Y)=\dist_G((0,0),(|X|,|Y|))$: every alignment pays at least $1$ to move between diagonals, so the optimal alignment cannot deviate by more than $k$ from the main diagonal. 
Furthermore, there are $\Oh(k)$ indices $i$ such that $(X_i, Y_i) \neq (X_{i-1}, Y_{i-1})$ and, 
as summarized in \cref{lem:decomp2}, the results of \cite{CKW23} let us efficiently construct the set $F$ of such indices $i$ along with the underlying fragments $X_i$ and $Y_i$.

For each $i\in \fragmentcc{0}{m}$, we define $V_i=\{x_i\}\times \fragmentcc{y_i}{y'_i}$, where we set $y_m=|Y|$ and $y'_0=0$ so that $V_0=\{(0,0)\}$ and $V_m=\{(|X|,|Y|)\}$.
For $i\in \fragmentoo{0}{m}$, the set $V_i$ consists of the vertices shared by $G_{i-1}$ and $G_i$.
Thus, if we define $D_{i,j}$ as the matrix of distances (in $G$) between vertices in $V_i$ and $V_j$, then the sought value $\wed(X,Y)$ is the only entry of $D_{0,m}=\bigotimes_{i=0}^{m-1}D_{i,i+1}$, where $\bigotimes$ denotes the $(\min,+)$ product of matrices.
Since $D_{i,i+1}=D_{i-1,i}$ holds for all $i\notin F$, it suffices to construct $D_{i,i+1}$ for all $i\in F$, raise each matrix to an appropriate power, and then multiply all the $\Oh(k)$ matrix powers.
Overall, this requires constructing $\Oh(k)$ individual matrices $D_{i,i+1}$ and performing $\Oh(k\log n)$ products of the form $D_{p,r}=D_{p,q}\otimes D_{q,r}$.
The algorithm of \cite{CKW23} performs each of these operations individually using Klein's planar multiple-source shortest path algorithm~\cite{Kle05} to construct $D_{i,i+1}$ in $\Oh(k^2\log k)$ time and the SMAWK method~\cite{SMAWK87} for $\Oh(k^2)$-time $(\min,+)$ Monge matrix multiplication.
Both steps contribute to an $\Ohtilde(k^3)$-time bottleneck.

\subparagraph*{Technical Contribution 2: Application of Core-Sparse Monge Matrix Multiplication}
The key advantage of small integer weights is the \emph{bounded difference} property~\cite{BGSW19} satisfied by the underlying distance matrices: if the edit costs are in $\fragmentcc{0}{W}$, then the differences between adjacent entries in any distance matrix belong to the integer range $\fragmentcc{-W}{W}$. 
We observe that an $\Oh(s)\times \Oh(s)$ Monge matrix $M$ satisfying this property has just $\Oh(Ws)$ so-called \emph{core elements}: pairs $(i,j)$ such that $M_{i,j}+M_{i+1,j+1}\ne M_{i,j+1}+M_{i+1,j}$.
As shown by Russo~\cite{Russo10}, such matrices can be stored in $\Ohtilde(Ws)$ space and their $(\min,+)$ product can be computed in $\Ohtilde(Ws)$ time.
We apply this result (recently improved in~\cite{GGK24}) in \cref{sec:fast-monge-matrix-multiplication}, which provides a complete framework for \emph{core-sparse} Monge matrices. 
Using this framework, the distance matrices can be stored in $\Ohtilde(k)$ space and their $(\min,+)$ products can be computed in $\Ohtilde(k)$ time, thus circumventing one of the two bottlenecks.

\subparagraph*{Technical Contribution 3: Exploiting Shared Structures between Graphs \boldmath$G_i$}
With the improved multiplication time, the remaining bottleneck is the construction of individual distance matrices. 
Building one such matrix requires $k^{2-o(1)}$ time (because it stores weighted edit distance of length-$\Theta(k)$ strings), so we have to exploit some shared structures between the graphs $G_i$ for $i\in F$.
Combining core-sparse Monge matrix multiplication with insights behind the $\Oh(n+\sqrt{nk^3})$-time algorithm of \cite{CKW23} lets us construct these matrices in $\Ohtilde(k^{2.5})$ total time, which is still unsatisfactory.

Our improved solution shares the following key observation with \cite{CKW23}: due to $\sed(X)=\Oh(k)$, the string $X$ can be factorized into $\Oh(k)$ individual characters (that the optimal self-alignment edits) and $\Oh(k)$ factors with an earlier occurrence $\Oh(k)$ positions earlier (that the self-alignment matches perfectly). 
Using this factorization, we show that distinct phrases $X_{i-1}\ne X_i$ are still very similar \emph{on average}:
for each $i\in F$, we issue a sequence of $k_i$ operations extending $X_{i-1}$ to $X_{i-1}X_{i}$ so that $\sum_{i\in F} k_i = \Ohtilde(k)$ and each operation extends the current string with a single character (an insertion) or a substring of the current string (a copy-paste).
Finally, $X_i$ is obtained from $X_{i-1}X_i$ with a prefix deletion.
Similar $\Ohtilde(k)$ updates transform $Y_{i-1}$ to $Y_i$ as we iterate over all $i\in F$.

As explained below, we will store boundary distance matrices using a dynamic algorithm so that $\BMw(X_i,Y_i)$ can be obtained from $\BMw(X_{i-1},Y_{i-1})$ using just $\Oh(k_i)$ operations implemented in $\Ohtilde(k)$ time each, for a total of $\Ohtilde(k^2)$ time across all $i\in F$.
The techniques described so far to compute $\wed(X,Y)$ in the case of small self-edit distance are presented in \cref{sec:small-sed-algo}.

\subparagraph*{Dynamic Algorithm for Unbounded Unweighted Edit Distance}\label{overview:dynamic-unbounded}
To maintain $\BMw(X,Y)$, we build upon a dynamic algorithm of Charalampopoulos, Kociumaka, and Mozes~\cite{CKM20}, which supports arbitrary edits as updates in $\Ohtilde(|X|+|Y|)$ time each.
Although the original formulation is limited to \emph{unweighted} edit distance, it can be easily adapted to a fixed integer weight function using our core-sparse Monge matrix framework instead of Tiskin's \emph{simple unit-Monge} matrices~\cite{Tis08,Tiskin10}.

To describe the algorithm, let us first assume for simplicity that $|X|=|Y|=n$ is a power of two. 
Then, $\BMw(X,Y)$ can be constructed in $\Ohtilde(n^2)$ time by decomposing, for each scale $s\in \fragmentcc{0}{\log n}$, the strings $X$ and $Y$ into fragments of length $2^s$ and building the boundary distance matrix for each pair of fragments of length~$2^s$.  
For $s=0$, this matrix can be easily constructed in constant time; for $s>0$, we can combine four boundary distance matrices of the smaller scale.
This approach readily supports dynamic strings subject to substitutions: upon each update, it suffices to recompute all the affected boundary matrices. 
For each scale $s$, the number of such matrices is $\Oh(n/2^s)$, and each of them is constructed in $\Ohtilde(2^s)$ time; this yields a total update time of $\Ohtilde(n)$.
To support insertions and deletions as updates, the algorithm of \cite{CKM20} relaxes the hierarchical decomposition so that the lengths of fragments at each scale $s$ may range from $2^{s-2}$ to $2^{s+1}$. 
With appropriate rebalancing, the simple strategy of recomputing all the affected boundary matrices still takes $\Ohtilde(n)$ time per update.

\subparagraph*{Technical Contribution 4: Supporting Copy-Pastes with Balanced Straight Line Programs}
Even with the extension to small integer weights, the algorithm of \cite{CKM20} falls short of our requirements as it only supports character edits as updates,
while we also need the copy-paste operation that extracts a substring and appends its copy at the end of the string.
Consider the hierarchical decomposition that the algorithm of \cite{CKM20} maintains for $X$.
Although a copy-paste on $X$ adds a lot of new fragments to the decomposition, we expect most of them to match fragments of the pre-existing decomposition of $X$; for such fragments, no new boundary distance matrices need to be computed.
To keep track of matching fragments, it is convenient to represent the hierarchical decomposition as a directed acyclic graph obtained by gluing together isomorphic subtrees.
In data compression, such a representation of a binary tree is called \emph{a straight-line program} (SLP).
Each node of the DAG can then be interpreted as a symbol in a context-free grammar, with leaves (sink nodes) corresponding to terminals and the remaining nodes to non-terminals. 
Each non-terminal has exactly one production corresponding to the pair of outgoing edges. 
These properties guarantee that the language associated with each symbol consists of one string: the \emph{expansion} of the symbol.

Established tools~\cite{Ryt03,CLLPPSS05} can be used to maintain an SLP with operations that create a symbol whose expansion is the concatenation of the expansions of two existing symbols or a prescribed substring of the expansion of an existing symbol.
Crucially, these operations add only logarithmically many auxiliary symbols.
Unfortunately, unlike the hierarchical decompositions of \cite{CKM20}, these SLPs lack a structure of levels, so it is not clear for which pairs of symbols the boundary distance matrix should be stored.
To address this issue, we rely on an additional invariant maintained in~\cite{CLLPPSS05}: for every non-terminal $A$ with production $A\to A_LA_R$, the lengths of expansions of $A_L$ and $A_R$ are within a constant factor of each other.
Based on this, we choose to store the boundary distance matrix for every pair of symbols $(A,B)$ (originating from the SLP of $X$ and the SLP of $Y$, respectively) such that the expansions of $A$ and $B$ are also within a constant factor of each other.
This way, when a new symbol $A$ is added to the SLP of $X$, we need to build the boundary distance matrix for $\Oh(|Y|/\len(A))$ symbols $B$ in the SLP of $Y$, where $\len(A)$ denotes the length of the expansion of $A$, and each matrix is constructed in $\Ohtilde(\len(A))$ time, for a total of $\Ohtilde(|Y|)$.
The details are described in \cref{sec:hierarchical}, where we also discuss combining boundary distance matrices.

\paragraph*{Faster Static Algorithm for Large Integer Weights}
The overview above explains how to achieve $\Ohtilde(n+k^2)$ runtime for a fixed integer weight function, but the underlying techniques yield an $\Ohtilde(n+W\cdot k^2)$-time solution if a weight function $w:\Esigma^2 \to \fragmentcc{0}{W}$ is given through an oracle.
The only major challenge arising in this generalization is to construct an approximate alignment for the divide-and-conquer procedure.
A simple but effective trick is to compute a $w'$-optimal alignment for a weight function $w'$ defined as $w'(a,b)=\ceil{w(a,b)/2}$. 
Since $w$ and $w'$ are within a factor of $2$ from each other, this yields a $2$-approximate $w$-optimal alignment.
Moreover, repeating this reduction $\lceil\log W\rceil$ times, we arrive at the problem of computing an unweighted alignment, which we solve in time $\Oh(n+k^2)$ using the Landau--Vishkin algorithm~\cite{LV88}. 

If the weights are large, the $(\min,+)$-product of $m\times m$-sized integer-valued Monge matrices takes $\Thtilde(m^2)$ time to be computed, just like in the fractional case.
In our application, however, we care about distances at most $k$, and thus only small entries need to be computed truthfully. 
In \cref{sec:fast-monge-matrix-multiplication,sec:monge-appendix}, we develop a novel procedure of independent interest that, given an $m\times m$ Monge matrix $M$ with non-negative integer entries, constructs an $m\times m$ Monge matrix $M'$ whose core is of size $\Oh(m\sqrt{k})$ yet $\min\{M_{i,j},k+1\}=\min\{M'_{i,j},k+1\}$ holds for every entry. 
Applying this reduction to all distance matrices that our algorithm constructs, we get an $\Ohtilde(n+k^{2.5})$-time static algorithm, \mbox{which is faster than all known alternatives when $k \le \min\{W^2,\sqrt{n}\}$}.

\paragraph*{Dynamic Algorithm for Bounded Edit Distance}
In order to turn the procedure behind \cref{thm:static_fixed} into a dynamic algorithm, we maintain the strings $X$ and $Y$ in a standard dynamic data structure~\cite{MSU97,GKKLS18,KK22} that provides $\Ohtilde(1)$-time random access and equality tests (whether two fragments match perfectly) at the cost of an $\Ohtilde(1)$-time additive overhead for each update.
Below, we highlight further major adaptations.

\subparagraph*{The Case of Small Self-Edit Distance}
The central component of our solution is a subroutine that, given a threshold $k$, maintains $\wed(X,Y)$ for $\Oh(k)$ updates, under a promise that $\sed(X)=\Oh(k)$ and $\wed(X,Y)=\Oh(k)$. 
Its high-level structure resembles the static counterpart illustrated in \cref{fig:substitutions-only}, but the subgraphs $G_i$ are slightly taller so that their union $G$ covers a sufficiently wide stripe even after $\Oh(k)$ updates.
At the initialization time, we still identify $\Oh(k)$ indices $i \in \fragmentco{0}{m}$ such that $(X_i,Y_i)\ne (X_{i-1},Y_{i-1})$, and we derive the corresponding matrices $D_{i,i+1}$ from the boundary distance matrices $\BMw(X_i,Y_i)$.
The key difference is that we rely on the dynamic algorithm employed to construct boundary distance matrices being \emph{fully persistent}: each update creates a new instance of the algorithm and one can still issue updates to the original instance.
As a result, $\BMw(X_i,Y_i)$ (and hence $D_{i,i+1}$) can be updated in $\Ohtilde(k)$ time whenever an update affects $X_i$ or $Y_i$.
The $(\min,+)$-product $D_{0,m}=\bigotimes_{i=0}^{m-1} D_{i,i+1}$ also needs to be maintained dynamically: for this, we employ a perfectly balanced binary tree with leaves storing the subsequent matrices $D_{i,i+1}$ and internal nodes maintaining the product of its children's matrices.
In order to secure $\Ohtilde(k^2)$-time initialization, identical subtrees are glued together at initialization time (so that each run of identical matrices $D_{i,i+1}$ requires creating just $\Oh(\log m)$ nodes), and the updates are implemented in a functional way (we create a new copy of each modified node since the original may still be linked from elsewhere).
With these modifications, in \cref{sec:dyn-smallsed}, we achieve $\Ohtilde(k^2)$-time initialization and $\Ohtilde(k)$-time updates.

\subparagraph*{Dynamic Divide-and-Conquer Implementation}
In its basic version, our implementation of the divide-and-conquer algorithm is given a threshold $k$ and maintains $\wed(X,Y)$ for $\Oh(k)$ updates under a promise that $\wed(X,Y)=\Oh(k)$. 
At the initialization time, we use the $\Ohtilde(k^2)$-time static algorithm to build an optimal alignment $\cA$ that will guide our recursive decomposition: we partition $X=X_LX_R$ and $Y=Y_LY_R$ so that $\wed_{\boldsymbol \cA}(X_L,Y_L)+\wed_{\boldsymbol \cA}(X_R,Y_R)=\wed_{\boldsymbol \cA}(X,Y)$ and recursively maintain $\wed(X_L,Y_L)$ and $\wed(X_R,Y_R)$.
Crucially, the recursion has negligible overhead because each update affects only one of these subroutines.
As in the static case, the optimal alignments $\cB_L : X_L \onto Y_L$ and $\cB_R : X_R \onto Y_R$ are combined into an optimal alignment $\cB: X\onto Y$ using an optimal alignment $\cB_M : X_M \onto Y_M$ for an appropriate fragment $X_M$ around the midpoint of $X$ and its image $Y_M$ under $\cA$ (see \cref{fig:new-divide-and-conquer}). 
In the dynamic case, we need to be more generous when initializing $X_M$ since updates may increase the cost of $\cA$ and decrease the self-edit distances. 
Nevertheless, we can still maintain $\wed(X_M,Y_M)$ using the subroutine from the previous paragraph.

The remaining challenge is to make sure that, despite occasional rebuilding, the update time is $\Ohtilde(k)$ in the worst case rather than just amortized. 
Such a deamortization is needed at each level of the recursive procedure, so standard techniques would yield a polylogarithmic-factor penalty.
In \cref{sec:dyn-general}, we propose a tailor-made approach whose (rather technical) complexity analysis proves that we incur just a constant-factor overhead, and hence achieve $\Oh(k\log^2 n)$ update time for any fixed weight function, matching the results of~\cite{CKM20} for $k=\Theta(n)$ and unweighted edit distance.

\section{Preliminaries}\label{sec:prelim}

We partially follow the narration of \cite{CKW23} in the preliminaries.

\paragraph*{Strings} A string $X = X\position{0}\cdots X\position{n-1} \in \Sigma^n$
is a sequence of $|X| = n$ characters over an
alphabet $\Sigma$; \(|X|\) is the \emph{length} of \(X\).
For a \emph{position} $i \in \fragmentco{0}{n}$, we
say that $X\position{i}$ is the $i$-th character of $X$.
We denote the empty string over $\Sigma$ by $\emptystring$.
Given indices $i, j \in \fragmentcc{0}{|X|}$ satisfying $i \leq j$, we say that $X\fragmentco{i}{j} \coloneqq X\position{i}\cdots X\position{j-1}$ is a
\emph{fragment} of $X$.
We may also write $X\fragmentcc{i}{j-1}, X\fragmentoc{i-1}{j-1}$,
or $X\fragmentoo{i-1}{j}$
for the fragment $X\fragmentco{i}{j}$.

We say that $X$ occurs as a substring of a string $Y$, if there are
$i, j \in \fragmentcc{0}{|Y|}$ satisfying $i \le j$ such that $X = Y\fragmentco{i}{j}$.

\paragraph*{Alignments and (Weighted) Edit Distances}
We start with the crucial notion of an \emph{alignment}, which gives us a formal way to
describe a sequence of edits to transform one string into another.

\begin{definition}[Alignment,~{\cite[Definition 2.3]{DGHKS23}}]
    A sequence $\cA = (x_t, y_t)_{t=0}^m$ is an \emph{alignment} of $X\fragmentco{x}{x'}$ onto
    $Y\fragmentco{y}{y'}$, denoted by $\cA: X\fragmentco{x}{x'} \onto Y\fragmentco{y}{y'}$ if
    $(x_0, y_0) = (x, y)$, $(x_m, y_m) = (x', y')$, and $(x_{t+1}, y_{t+1}) \in \{(x_t+1, y_t), (x_t, y_t+1), (x_t+1, y_t+1)\}$ for all $t \in \fragmentco{0}{m}$.

    We write
    $\Als(X\fragmentco{x}{x'}, Y\fragmentco{y}{y'})$
    for the set of all alignments of $X\fragmentco{x}{x'}$ onto $Y\fragmentco{y}{y'}$.
    \lipicsEnd
\end{definition}

For an alignment $\cA = (x_t, y_t)_{t=0}^m\in \Als(X\fragmentco{x}{x'},
Y\fragmentco{y}{y'})$ and an index $t \in \fragmentco{0}{m}$, we say that
\begin{itemize}
    \item $\cA$ \emph{deletes} $X\position{x_t}$ if $(x_{t+1}, y_{t+1}) = (x_t+1, y_t)$;
    \item $\cA$ \emph{inserts} $Y\position{y_t}$ if $(x_{t+1}, y_{t+1}) = (x_t, y_t+1)$;
    \item $\cA$ \emph{aligns} $X\position{x_t}$ to $Y\position{y_t}$, denoted by
        $X\position{x_t} \aonto{\cA} Y\position{y_t}$
    if $(x_{t+1}, y_{t+1}) = (x_t+1, y_t+1)$;
    \item $\cA$ \emph{matches} $X\position{x_t}$ with $Y\position{y_t}$
        if $X\position{x_t}
        \aonto{\cA} Y\position{y_t}$ and
    $X\position{x_t} = Y\position{y_t}$;
    \item $\cA$ \emph{substitutes} $X\position{x_t}$ for $Y\position{y_t}$ if
        $X\position{x_t} \aonto{\cA} Y\position{y_t}$ but
    $X\position{x_t} \neq Y\position{y_t}$.
\end{itemize}
Insertions, deletions, and substitutions are jointly called (character) \emph{edits}.

Given $\cA = (x_t, y_t)_{t=0}^m \in \Als(X, Y)$, we define the \emph{inverse alignment}
as $\cA^{-1} \coloneqq (y_t, x_t)_{t=0}^m \in \Als(Y, X)$.

Given an alphabet $\Sigma$, we set $\Esigma \coloneqq \Sigma \cup \{\emptystring\}$.
We call $w$ a \emph{weight function} if $w : \Esigma \times \Esigma \to \Real_{\geq 0}$,
and for $a, b \in \Esigma$, we have $\w{a}{b} = 0$ if and only if $a = b$.
Note that \(w\) does not need to satisfy the triangle inequality nor does \(w\) need to be
symmetric.

We write $\wed_\cA(X\fragmentco{x}{x'}, Y\fragmentco{y}{y'})$ for
the \emph{cost} of an alignment $\cA \in \Als(X\fragmentco{x}{x'}, Y\fragmentco{y}{y'})$ with
respect to a weight function $w$, that is, for the total cost of edits made by $\cA$, where
\begin{itemize}
    \item the cost of deleting $X\position{x}$ is $\w{X\position{x}}{\emptystring}$,
    \item the cost of inserting $Y\position{y}$ is $\w{\emptystring}{Y\position{y}}$,
    \item the cost of aligning $X\position{x}$ with $Y\position{y}$ is
        $\w{X\position{x}}{Y\position{y}}$.
\end{itemize}

We define the \emph{weighted edit distance} of strings $X, Y \in \Sigma^*$
with respect to a weight function $w$ as
$\wed(X, Y) \coloneqq \min_{\cA \in \Als(X, Y)} \wed_\cA(X, Y)$.
For an integer $k \geq 0$, we also define a capped version
\[
    \wed_{\leq k}(X, Y) \coloneqq
        \begin{cases}
            \wed(X, Y) & \text{if } \wed(X, Y) \leq k,\\
            \infty & \text{otherwise}.
        \end{cases}
\]

\begin{definition}[Alignment Graph,~{\cite[Definition 3.2]{CKW23}}]\label{def:alignment-graph}
For strings $X,Y\in \Sigma^*$ and a weight function $w: \Esigma^2 \to \Real_{\ge 0}$,
we define the \emph{alignment graph} $\AG^w(X,Y)$ as follows.
\(\AG^{w}(X, Y)\) has vertices $\fragmentcc{0}{|X|}\times \fragmentcc{0}{|Y|}$,
\begin{itemize}
    \item horizontal edges $(x,y)\to (x+1,y)$ of cost $\w{X\position{x}}{\emptystring}$
        for $(x,y)\in \fragmentco{0}{|X|}\times \fragmentcc{0}{|Y|}$,
    \item vertical edges $(x,y)\to (x,y+1)$ of cost $\w{\emptystring}{Y\position{y}}$
        for $(x,y)\in \fragmentcc{0}{|X|}\times \fragmentco{0}{|Y|}$, and
    \item diagonal edges $(x,y)\to (x+1,y+1)$ of cost $\w{X\position{x}}{Y\position{y}}$
        for $(x,y)\in \fragmentco{0}{|X|}\times \fragmentco{0}{|Y|}$.
        \lipicsEnd
\end{itemize}
\end{definition}

We visualize the alignment graph $\AG^w(X, Y)$ as a grid graph with $|X|+1$ columns
and $|Y|+1$ rows. We think of the vertex $(0,0)$ as the top left vertex of the grid,
and a vertex $(x, y)$ in the $x$-th column and $y$-th row; see \cref{fig:alignment-graph}.

Observe that we can interpret $\Als(X\fragmentco{x}{x'},Y\fragmentco{y}{y'})$
as the set of $(x,y)\leadsto (x',y')$ paths in $G \coloneqq \AG^w(X,Y)$.
Moreover, $\wed_\cA(X\fragmentco{x}{x'},Y\fragmentco{y}{y'})$
is the cost of $\cA$ interpreted as a path in $G$, and
thus $\ed^w(X\fragmentco{x}{x'},Y\fragmentco{y}{y'})=\dist_G((x,y), (x',y'))$.

If for every $a, b \in \Esigma$ we have that $\w{a}{b} = 1$ if $a \neq b$
and $\w{a}{b} = 0$ otherwise, then $\wed(X, Y)$ corresponds
to the standard \emph{unweighted} edit distance (also known as Levenshtein distance~\cite{Levenshtein66}).
For this case, we drop the superscript $w$ in $\wed$, $\wed_\cA$, $\wed_{\le k}$, and $\AGw$.

We say that an alignment $\cA \in \Als(X, Y)$ is $w$-optimal if $\wed_\cA(X, Y) = \wed(X, Y)$ and an optimal unweighted alignment if $\ed_\cA(X, Y) = \ed(X, Y)$. 

A weight function $w$ is called \emph{normalized} if
$w(a,b)\ge 1$ holds for all $a,b\in \Esigma$ with $a\ne b$.

\begin{fact}[Slight Generalization of {\cite[Proposition 2.16]{DGHKS23}}] \label{lm:baseline-wed}
    Given strings $X, Y \in \Sigma^*$, an integer $k \geq 1$, and (oracle access to)
    a normalized weight function
    $w: \Esigma^2 \to \RR_{\geq 0}$, the value $\wed_{\leq k}(X, Y)$ can be computed in
    $\Oh(\min\{|X| + 1, |Y| + 1\} \cdot \min\{k, |X| + |Y| + 1\})$ time.
    Furthermore, if $\wed(X, Y) \le k$, the algorithm returns a $w$-optimal alignment of $X$ onto $Y$.
\end{fact}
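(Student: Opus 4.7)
The plan is to apply Ukkonen's band-DP trick: restrict the textbook $\Oh(|X|\cdot |Y|)$-time dynamic programming for $\wed$ to a parallelogram-shaped band of width $\Oh(k)$ around the diagonal of $\AGw(X, Y)$ joining $(0, 0)$ to $(|X|, |Y|)$. First, if $\bigl||X| - |Y|\bigr| > k$, I would immediately return $\infty$: every alignment of $X$ onto $Y$ contains at least $\bigl||X| - |Y|\bigr|$ indels, and normalization of $w$ ensures each indel costs at least $1$, so $\wed(X, Y) > k$. Otherwise, assume without loss of generality that $m \coloneqq |Y| \le |X| \eqqcolon n$, so $0 \le n - m \le k$.

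The key combinatorial claim is that every alignment $\cA = (x_t, y_t)_{t = 0}^T \in \Als(X, Y)$ of cost at most $k$ has all its vertices in the band
\[
    B_k \coloneqq \{(x, y) \in \fragmentcc{0}{n}\times\fragmentcc{0}{m} : |x - y| \le k \text{ and } |x - y - (n - m)| \le k\}.
\]
Indeed, by normalization, $\cA$ performs at most $k$ indels in total. At each intermediate vertex $(x_t, y_t)$, the quantity $x_t - y_t$ equals the number of deletions minus the number of insertions along the prefix of $\cA$ ending at $(x_t, y_t)$; both counts are bounded by $k$, so $|x_t - y_t| \le k$. The symmetric argument applied to the suffix of $\cA$ yields $|x_t - y_t - (n - m)| \le k$.

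With the band established, I would run the standard recurrence $D[x][y] = \min\{D[x-1][y] + w(X\position{x-1}, \emptystring),\ D[x][y-1] + w(\emptystring, Y\position{y-1}),\ D[x-1][y-1] + w(X\position{x-1}, Y\position{y-1})\}$ with $D[0][0] = 0$, computed only for $(x, y) \in B_k$ and treating any entry outside $B_k$ as $\infty$. For each fixed $y \in \fragmentcc{0}{m}$, the admissible $x$'s form an interval of length at most $2k - (n - m) + 1 = \Oh(k)$, so the total number of cells is $\Oh((m + 1)\cdot k) = \Oh(\min\{n + 1, m + 1\} \cdot k)$; each cell is filled in $\Oh(1)$ via the oracle to $w$. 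Finally, I would return $D[n][m]$ if it is at most $k$ and $\infty$ otherwise. If $\wed(X, Y) \le k$, the band claim guarantees that some $w$-optimal alignment lies entirely in $B_k$, so $D[n][m] = \wed(X, Y)$; in this case, a standard back-trace from $(n, m)$ to $(0, 0)$ along the argmin pointers reconstructs a $w$-optimal alignment in $\Oh(n + m) = \Oh(\min\{n, m\} + k)$ time, well within the stated bound.

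The only nontrivial step is the band claim, whose correctness hinges crucially on the normalization assumption $w(a, b) \ge 1$ for $a \ne b$; without it, cheap long indels could push the optimal alignment arbitrarily far from the diagonal, breaking both the runtime and the correctness of the restricted DP. Everything else is a direct specialization of the textbook weighted edit-distance recurrence and should pose no further difficulty.
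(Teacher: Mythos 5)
Your proposal is correct and uses essentially the same approach as the paper: restrict the DP to an $\Oh(k)$-wide band around the diagonal (the paper uses $|x-y|\le k$; you sharpen it to the parallelogram intersection with $|x-y-(n-m)|\le k$, which is a minor refinement), justified by the observation that normalization bounds the number of indels in any cost-$\le k$ alignment by $k$. Both arguments give the same asymptotic $\Oh(\min\{|X|+1,|Y|+1\}\cdot k)$ running time and reconstruct the alignment by back-tracing.
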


\begin{proof}
    Consider any $w$-optimal alignment $\cA$.
    If $\wed(X, Y) \le k$, $\cA$ may contain at most $k$ vertical and horizontal edges as each one of them costs at least one.
    Therefore, all vertices $(x, y)$ of $\cA$ satisfy $|x - y| \le k$.
    These vertices lie on $\min\{2k + 1, |X| + |Y| + 1\}$ diagonals of $\AGw(X, Y)$.
    Each such diagonal consists of at most $\min\{|X|+ 1, |Y| + 1\}$ vertices.
    Therefore, the subgraph of $\AGw(X, Y)$ induced by such vertices has size $\Oh(\min\{|X|+ 1, |Y| + 1\} \cdot \min\{k, |X| + |Y| + 1\})$ and can be computed in the same time complexity.
    Hence, finding $\wed(X, Y)$ if $\wed(X, Y) \le k$ is equivalent to finding the shortest path from $(0, 0)$ to $(|X|, |Y|)$ in such a subgraph.
    As the graph is acyclic, we can find such a shortest path in $\Oh(\min\{|X|+ 1, |Y| + 1\} \cdot \min\{k, |X| + |Y| + 1\})$ time along with the distance.
    If this distance is at most $k$, we return it along with the path we found.
    Otherwise, we return $\infty$.
\end{proof}

The \emph{breakpoint representation} of an alignment $\cA=(x_t,y_t)_{t=0}^m\in \Als(X,Y)$
is the subsequence of $\cA$ consisting of pairs $(x_t,y_t)$ such that $t\in \{0,m\}$ or
$\cA$ does not match $X\position{x_t}$ with $Y\position{y_t}$.
Note that the size of the breakpoint representation is at most $2+\ed_\cA(X,Y)$ and that it
can be used to retrieve the entire alignment:
for any two consecutive elements $(x',y'),(x,y)$ of the breakpoint representation, it
suffices to add $(x-\delta,y-\delta)$ for $\delta \in \fragmentoo{0}{\max(x-x',y-y')}$.

Given an alignment $\cA = (x_t, y_t)_{t=0}^m \in \Als(X, Y)$, for every $\ell, r \in
\fragmentcc{0}{m}$ with $\ell \leq r$
we say that $\cA$ aligns $X\fragmentco{x_\ell}{x_r}$ onto $Y\fragmentco{y_\ell}{y_r}$ and
denote it by
$X\fragmentco{x_\ell}{x_r} \aonto{\cA} Y\fragmentco{y_\ell}{y_r}$.
We denote the cost of the induced alignment
of $X\fragmentco{x_\ell}{x_r}$ onto $Y\fragmentco{y_\ell}{y_r}$
by $\wed_\cA(X\fragmentco{x_\ell}{x_r}, Y\fragmentco{y_\ell}{y_r})$.

Given $\cA: X \onto Y$ and a fragment $X\fragmentco{x_\ell}{x_r}$ of $X$, we write $\cA(X\fragmentco{x_\ell}{x_r})$ for the fragment
$Y\fragmentco{y_\ell}{y_r}$ of $Y$ where
\[
    y_\ell \coloneqq \min\{y \mid (x_\ell, y) \in \cA\} \quad \text{ and } \quad
        y_r \coloneqq \begin{cases} |Y| & \text{if } x_r = |X|, \\ \min\{y \mid (x_r, y) \in \cA\} &\text{otherwise.}\end{cases}
\]
Intuitively, $\cA(X\fragmentco{x_\ell}{x_r})$ is the fragment that
$\cA$ aligns $X\fragmentco{x_\ell}{x_r}$ onto.

\begin{fact}[{\cite[Lemma 3.7]{CKW23}}] \label{fct:split-alignment}
    Let $X, Y \in \Sigma^*$ denote strings and let $\cA \in \Als(X,Y)$ denote an alignment.
    For every $(x,y)\in \cA$, we have
    \[\wed(X,Y)\le \wed(X\fragmentco{0}{x}, Y\fragmentco{0}{y}) +
    \wed(X\fragmentco{x}{|X|}, Y\fragmentco{y}{|Y|})\le \wed_{\cA}(X,Y).\lipicsEnd\]
\end{fact}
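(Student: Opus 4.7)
\medskip

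\noindent\textbf{Proof Plan.} The statement is a direct consequence of the fact that alignments can be split and concatenated at any intermediate vertex without altering the total cost. My plan is to handle the two inequalities separately, each via a short one-paragraph argument.

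For the right inequality, I would fix indices $\ell \le r$ in $\cA = (x_t, y_t)_{t=0}^{m}$ with $(x_\ell, y_\ell) = (0,0)$, $(x_r, y_r) = (|X|, |Y|)$, and some intermediate index $s$ with $(x_s, y_s) = (x, y)$ (which exists by hypothesis). Splitting $\cA$ at position $s$ produces two alignments $\cA_1 \coloneqq (x_t, y_t)_{t=0}^{s} \in \Als(X\fragmentco{0}{x}, Y\fragmentco{0}{y})$ and $\cA_2 \coloneqq (x_t - x, y_t - y)_{t=s}^{m} \in \Als(X\fragmentco{x}{|X|}, Y\fragmentco{y}{|Y|})$. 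Because the cost of an alignment is an additive sum over its constituent edges, and the edges of $\cA$ are partitioned between $\cA_1$ and $\cA_2$, we get $\wed_{\cA}(X,Y) = \wed_{\cA_1}(X\fragmentco{0}{x}, Y\fragmentco{0}{y}) + \wed_{\cA_2}(X\fragmentco{x}{|X|}, Y\fragmentco{y}{|Y|})$. Since each summand is at least the corresponding weighted edit distance, the right inequality follows.

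For the left inequality, I would go in the opposite direction. Let $\cB_1 \in \Als(X\fragmentco{0}{x}, Y\fragmentco{0}{y})$ and $\cB_2 \in \Als(X\fragmentco{x}{|X|}, Y\fragmentco{y}{|Y|})$ be $w$-optimal alignments. Concatenating them (appending $\cB_2$ after $\cB_1$, with coordinates of $\cB_2$ shifted by $(x,y)$) yields an alignment $\cB \in \Als(X, Y)$ whose cost is exactly $\wed_{\cB_1}(X\fragmentco{0}{x}, Y\fragmentco{0}{y}) + \wed_{\cB_2}(X\fragmentco{x}{|X|}, Y\fragmentco{y}{|Y|}) = \wed(X\fragmentco{0}{x}, Y\fragmentco{0}{y}) + \wed(X\fragmentco{x}{|X|}, Y\fragmentco{y}{|Y|})$. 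Taking the minimum over all alignments of $X$ onto $Y$ gives the left inequality.

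There is no real obstacle here: everything reduces to the observation that $\Als$ is closed under splitting and concatenation at an arbitrary vertex, and that alignment costs are additive over these operations. The only care required is bookkeeping when translating the indices of the second piece (subtracting $(x,y)$) so that the concatenation yields a syntactically valid element of $\Als(X,Y)$.
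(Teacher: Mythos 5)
Your proof is correct and takes the natural split-and-concatenate approach; note that the paper does not prove this statement itself but imports it as a cited fact from \cite{CKW23}, so there is no competing in-paper argument to compare against. One small bookkeeping remark: under this paper's conventions (see the definition of alignment in \cref{sec:prelim}), an alignment $\cA : X\fragmentco{x}{x'} \onto Y\fragmentco{y}{y'}$ already begins at $(x,y)$ rather than $(0,0)$, so the coordinate shift you apply to $\cA_2$ (and to $\cB_2$ in the concatenation step) is unnecessary -- the sub-sequence $(x_t,y_t)_{t=s}^{m}$ is literally an element of $\Als(X\fragmentco{x}{|X|}, Y\fragmentco{y}{|Y|})$ as is. This does not affect the validity of the argument, and your two fixed indices $\ell, r$ are also superfluous (they are forced to be $0$ and $m$ for any $\cA\in \Als(X,Y)$), but neither issue creates a gap.
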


\paragraph*{The PILLAR Model}
Charalampopoulos, Kociumaka, and Wellnitz~\cite{CKW20} introduced the
\modelname{} model.
The \modelname{} model provides an abstract interface to a set of primitive operations on
strings which can be efficiently implemented in different settings. Thus, an algorithm
developed using the \modelname{} interface does not only yield algorithms in the standard
setting, but also directly yields algorithms in diverse other settings,
for instance, fully compressed, dynamic, etc.

\newcommand{\calX}{\mathcal{X}}
In the~\modelname{} model we are given a family $\calX$ of strings to preprocess. The
elementary objects are fragments $X\fragmentco{\ell}{r}$ of strings $X \in \calX$.
Initially, the model gives access to each
$X \in \calX$ interpreted as $X\fragmentco{0}{|X|}$. Other fragments can be retrieved via
an \extractOpName{} operation:
\begin{itemize}
    \item $\extractOpName(S, \ell, r)$: Given a fragment $S$ and positions $0 \leq \ell \leq r \leq |S|$, extract
    the fragment $S\fragmentco{\ell}{r}$, which is defined as $X\fragmentco{\ell' + \ell}{\ell' + r}$ if $S = X\fragmentco{\ell'}{r'}$ for
    $X \in \calX$.
\end{itemize}
Moreover, the \modelname model provides the following primitive operations for fragments $S$ and $T$~\cite{CKW20}:
\begin{itemize}
    \item $\lceOp{S}{T}$: Compute the length of~the longest common prefix of~$S$ and $T$.
    \item $\lcbOp{S}{T}$: Compute the length of~the longest common suffix of~$S$ and $T$.
    \item $\accOpName(S,i)$: Assuming $i\in \fragmentco{0}{|S|}$, retrieve the character $\accOp{S}{i}$.
    \item $\lenOpName(S)$: Retrieve the length $|S|$ of~the string $S$.
\end{itemize}
Observe that in the original definition~\cite{CKW20}, the \modelname{} model
also includes an \(\ipmOpName\) operation to find all (internal) exact occurrences of one
fragment in another. We do not need the \(\ipmOpName\) operation in this
work.\footnote{We still use the name \modelname, and not {\tt PLLAR}, though.}
\modelname{} operations can be implemented to work in constant time in the static setting \cite[Section 7.1]{CKW20} and in polylogarithmic time in the dynamic setting \cite[Section 8]{KK22}, \cite[Section 7.2]{CKW22}.

We use the following version of the classic Landau--Vishkin algorithm \cite{LV88} in the \modelname{} model.

\begin{fact}[{\cite[Lemma 6.1]{CKW20}}]\label{lm:k2-ed}
    There is a \modelname{} algorithm that, given strings $X, Y \in \Sigma^*$, in time $\Oh(k^2)$ computes $k = \ed(X, Y)$ along with the breakpoint representation of an optimal unweighted alignment of $X$ onto $Y$.\lipicsEnd
\end{fact}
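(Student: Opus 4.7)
The plan is to implement the Landau--Vishkin diagonal algorithm~\cite{LV88} inside the \modelname{} model, replacing every longest-common-prefix computation by an $\lceOpName$ query. For $d\in\fragmentcc{-k}{k}$ and $e\in\fragmentcc{0}{k}$, I would define $L[d,e]$ as the largest $x$ such that the vertex $(x,x+d)$ of $\AG(X,Y)$ is reachable from $(0,0)$ using at most $e$ non-matching edges (insertions, deletions, or substitutions), or $-\infty$ if no such vertex exists within the grid. The base case is $L[0,0]=\lceOp{X}{Y}$, obtained with one \modelname{} query, and $L[d,0]=-\infty$ for $d\neq 0$.

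For the inductive step I would use the standard recurrence
\[
  L[d,e] = \mathrm{extend}\bigl(\max\{L[d-1,e-1],\; L[d,e-1]+1,\; L[d+1,e-1]+1\}\bigr),
\]
whose three branches correspond respectively to an insertion (coming from diagonal $d-1$), a substitution (staying on diagonal $d$), and a deletion (coming from diagonal $d+1$). Here $\mathrm{extend}(x)$ returns $x+\lceOp{X\fragmentco{x}{|X|}}{Y\fragmentco{x+d}{|Y|}}$, which uses one $\extractOpName$ followed by one $\lceOpName$, both $\Oh(1)$ in \modelname; arguments are clipped to legal ranges via $\lenOpName$. The algorithm halts as soon as $L[|Y|-|X|,e]\ge |X|$ and then returns $k=e$; since $k$ is the true edit distance, this occurs exactly when $e=\ed(X,Y)$. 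The filled portion of the table contains $\Oh(k^2)$ entries, each computed with $\Oh(1)$ work and $\Oh(1)$ \modelname{} queries, yielding the claimed $\Oh(k^2)$ bound.

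To recover the breakpoint representation, I would attach to each $L[d,e]$ a backpointer that records which of the three branches was selected and the value $x^-$ at which $\mathrm{extend}$ started, so that $(x^-, x^-+d)$ is the vertex immediately after the non-matching edit and before the subsequent maximal match run. Starting from the terminal cell $(|Y|-|X|, k)$, I would trace these backpointers back to $(0,0)$, emitting the corresponding vertex $(x^-, x^-+d)$ at every step; inserting the endpoints $(0,0)$ and $(|X|,|Y|)$ and reversing the output produces the breakpoint representation. The traceback visits $\Oh(k)$ cells and therefore costs only $\Oh(k)$ additional time.

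The only genuine obstacle is bookkeeping: I need the emitted vertices to form a valid breakpoint representation, meaning that between any two consecutive emitted vertices the alignment must consist solely of matches. This falls out automatically because $\lceOpName$ advances exactly up to the first mismatching position (or the end of one of the strings), so the match run between the $(x^-, x^-+d)$ recorded at level $e$ and the $L[d,e]$-based position at which the next non-matching edit is launched contains only matches; the terminal vertex $(|X|,|Y|)$ is handled by detecting, via $\lenOpName$, that the last $\mathrm{extend}$ call reached the corner, and then stopping the traceback at $(0,0)$.
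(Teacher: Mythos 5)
Your high-level plan is correct and is essentially the standard proof (and the one in~\cite{CKW20}): run Landau--Vishkin, replacing each longest-common-prefix extension by a single $\lceOpName$/$\extractOpName$ call, fill $\Oh(k^2)$ table entries at $\Oh(1)$ PILLAR operations each, and stop when diagonal $|Y|-|X|$ reaches $|X|$. The recurrence, the base case, and the time analysis are all fine.

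The traceback, however, emits the wrong vertices. The paper defines the breakpoint representation as the subsequence of $\cA=(x_t,y_t)_{t=0}^m$ consisting of $t\in\{0,m\}$ together with all $(x_t,y_t)$ from which the next move is \emph{not} a match, i.e.\ the vertex \emph{before} each non-matching edit. You emit $(x^-,x^-+d)$, which you yourself describe as ``the vertex immediately after the non-matching edit''. These differ by exactly one alignment step, and that step is on the \emph{other} diagonal, so the resulting sequence is not the breakpoint representation and the paper's recovery rule (``add $(x-\delta,y-\delta)$ for $\delta\in\fragmentoo{0}{\max(x-x',y-y')}$'') reconstructs a different, more expensive alignment. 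Concretely, take $X=\texttt{ab}$, $Y=\texttt{acb}$: the optimal alignment is $(0,0)\to(1,1)\to(1,2)\to(2,3)$ with cost $1$, and the breakpoint representation is $\{(0,0),(1,1),(2,3)\}$. The table yields $L[0,0]=1$, $L[1,1]=2$; your traceback emits $(1,2)$ at cell $(1,1)$ (since $x^-=1$, $d=1$) and $(0,0)$ at cell $(0,0)$, producing $\{(0,0),(1,2),(2,3)\}$. The recovery rule then reconstructs $(0,0)\to(0,1)\to(1,2)\to(2,3)$, which inserts $Y\position{0}$ and substitutes $X\position{0}\to Y\position{1}$, i.e.\ cost $2$ while you report $k=1$.

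The fix is local: when the traceback visits a cell $(d,e)$ with $e<k$, emit the vertex $(L[d,e],\,L[d,e]+d)$ -- the point at which the maximal match run at that cell ends and the next edit is launched -- and then prepend $(0,0)$ and append $(|X|,|Y|)$. Equivalently, when following the backpointer from $(d,e)$ to its predecessor $(d',e-1)$, record $(L[d',e-1],\,L[d',e-1]+d')$. With that correction the proof is complete.
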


\paragraph*{Monge Matrix Toolbox}\label{sec:alg:sec:alg-periodic:sec:planar-toolbox}

In this paper, we extensively use distance matrices in alignment graphs of pairs of strings. As alignment graphs are planar, such matrices are so-called Monge matrices.

\begin{definition} \label{def:monge}
    A matrix $A$ of size $p \times q$ is called \emph{Monge} if for $i \in [1 \dd p)$ and $j \in [1 \dd q)$, we have $A_{i, j} + A_{i + 1, j + 1} \le A_{i, j + 1} + A_{i + 1, j}$.
    \lipicsEnd
\end{definition}

\begin{fact}[{\cite[Section 2.3]{FR06}}]\label{fct:monge}
    Consider a directed planar graph $G$ with non-negative edge weights. For vertices
    $u_1,\ldots,u_{p},v_{q},\ldots,v_1$ lying (in this cyclic order, potentially with repetitions) on the outer face
    of $G$, define a $p\times q$ matrix $D$ with $D_{i,j}=\dist_G(u_i,v_j)$.
    If all entries of $D$ are finite, $D$ is Monge.
\end{fact}

\begin{proof}[Proof Sketch]
    We need to prove $D_{i, j} + D_{i + 1, j + 1} \le D_{i, j + 1} + D_{i + 1, j}$ for $i \in \fragmentco{1}{p}$ and $j \in \fragmentco{1}{q}$.
    Note that $D_{i, j + 1}=\dist_G(u_i,v_{j+1})$ and $D_{i + 1, j}=\dist_G(u_{i+1},v_j)$.
    Fix a shortest path $P$ from $u_i$ to $v_{j + 1}$ and a shortest path $Q$ from $u_{i + 1}$ to $v_j$; see \cref{fig:monge-planar}.
    As these four vertices lie in the order $u_i, u_{i + 1}, v_{j + 1}, v_j$ on the outer face of $G$ (some of these vertices may coincide with each other), the paths $P$ and $Q$ intersect at a common vertex $w \in V(G)$.
    As $P$ and $Q$ are shortest paths, we have
    \begin{align*}
        D_{i, j + 1}\ + D_{i + 1, j} &= (\dist_G(u_i, w) + \dist_G(w, v_{j + 1})) + (\dist_G(u_{i + 1}, w) + \dist_G(w, v_j))\\
                                     &= (\dist_G(u_i, w) + \dist_G(w, v_{j})) + (\dist_G(u_{i + 1}, w) + \dist_G(w, v_{j + 1}))\\
                                     &\ge D_{i, j} + D_{i + 1, j + 1}.\qedhere
    \end{align*}
\end{proof}

\begin{figure}
    \begin{center}
        \includegraphics[scale=0.5]{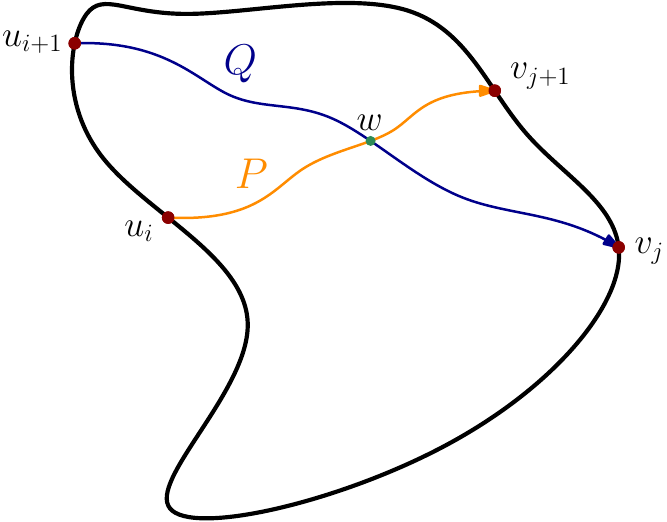}
    \end{center}

    \caption{The sum of the distances from $u_i$ to $v_{j + 1}$ and from $u_{i + 1}$ to $v_{j}$ is not smaller than the sum of the distances from $u_i$ to $v_j$ and from $u_{i + 1}$ to $v_{j + 1}$.}
    \label{fig:monge-planar}
\end{figure}

When talking about distance matrices in a graph, combining paths corresponds to the min-plus product of such matrices.

\begin{definition}
    Given a matrix $A$ of size $p \times q$ and a matrix $B$ of size $q \times r$, define their \emph{min-plus product} $A \otimes B$ as a matrix $C$ of size $p \times r$, where $C_{i, k} = \min_j A_{i, j} + B_{j, k}$ for $i \in [1\dd p], k \in [1\dd r]$.
    \lipicsEnd
\end{definition}

Furthermore, if matrices $A$ and $B$ are Monge, then their min-plus product $A \otimes B$ is also Monge.

\begin{fact}[{\cite[Theorem 2]{Tiskin10}}] \label{fct:monge-product-is-monge}
    Let $A, B$, and $C$ be matrices, such that $A \otimes B = C$.
    If $A$ and $B$ are Monge, then $C$ is also Monge.
\lipicsEnd
\end{fact}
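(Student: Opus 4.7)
The plan is to verify the Monge inequality $C_{i,k}+C_{i+1,k+1}\le C_{i,k+1}+C_{i+1,k}$ directly from the min-plus definition. Let $j_1$ attain the minimum in $C_{i,k+1}=A_{i,j_1}+B_{j_1,k+1}$ and let $j_2$ attain the minimum in $C_{i+1,k}=A_{i+1,j_2}+B_{j_2,k}$. My idea is to upper-bound the left-hand side $C_{i,k}+C_{i+1,k+1}$ by evaluating each of its two minima at a carefully chosen witness, and then to compare the resulting sum termwise against the (exact) right-hand side. The correct witnesses depend on the relative order of $j_1$ and $j_2$, which yields a natural two-case split.

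If $j_1\le j_2$, I would use $C_{i,k}\le A_{i,j_1}+B_{j_1,k}$ and $C_{i+1,k+1}\le A_{i+1,j_2}+B_{j_2,k+1}$. The four $A$-entries then cancel exactly against the $A$-entries on the right-hand side, and the inequality reduces to $B_{j_1,k}+B_{j_2,k+1}\le B_{j_1,k+1}+B_{j_2,k}$. For $j_1<j_2$ this is the Monge property of $B$ on rows $j_1<j_2$ and adjacent columns $k<k+1$; for $j_1=j_2$ it is an equality.

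If $j_1>j_2$, I swap the witnesses: use $C_{i,k}\le A_{i,j_2}+B_{j_2,k}$ and $C_{i+1,k+1}\le A_{i+1,j_1}+B_{j_1,k+1}$. Now the four $B$-entries cancel and we reduce to $A_{i,j_2}+A_{i+1,j_1}\le A_{i,j_1}+A_{i+1,j_2}$, which is the Monge property of $A$ on adjacent rows $i<i+1$ and columns $j_2<j_1$.

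The only delicate step is choosing the witnesses so that exactly one of the $A$- or $B$-side cancels and the remaining inequality is a single invocation of Monge. A minor prerequisite is the ``iterated'' Monge inequality $M_{a,c}+M_{b,d}\le M_{a,d}+M_{b,c}$ for arbitrary indices $a<b$ and $c<d$, which follows from the adjacent-column version in \cref{def:monge} by a straightforward telescoping over the intermediate rows and columns; I would either cite it or derive it in a single line before the main case analysis. I do not anticipate any substantive obstacle beyond this bookkeeping.
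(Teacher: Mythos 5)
Your proof is correct, and the case split on the order of the two minimizers $j_1, j_2$ together with the crosswise choice of witnesses is exactly the standard argument for this fact. The paper itself does not prove this statement; it is cited as Tiskin's Theorem~2 and treated as a black box, so there is no internal proof to compare against. One small remark: after the cancellation, the only inequality you actually invoke has the row (or column) indices $i<i+1$ (resp.\ $k<k+1$) adjacent, and only the other pair of indices ($j_1,j_2$) possibly non-adjacent. So you need the iterated Monge inequality only in the ``one side adjacent'' form $M_{i,c}+M_{i+1,d}\le M_{i,d}+M_{i+1,c}$ for $c<d$, which is a one-dimensional telescoping $\sum_{j=c}^{d-1}\bigl(M_{i,j}+M_{i+1,j+1}-M_{i,j+1}-M_{i+1,j}\bigr)\le 0$; the full two-dimensional telescoping you mention also works but is more than what the argument uses.
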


One of the most celebrated results concerning Monge matrices is the SMAWK algorithm of \cite{SMAWK87}.

\begin{fact}[\cite{SMAWK87}]\label{thm:smawk-row-maxima}
    There is an algorithm that, given random access to the entries of a $p \times q$ Monge matrix, computes its row and column minima and their positions in time $\Oh(p + q)$.
\lipicsEnd
\end{fact}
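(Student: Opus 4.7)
My plan is to invoke the classical SMAWK algorithm essentially as a black box, since Monge matrices are in particular \emph{totally monotone}. The key observation is that total monotonicity --- i.e., if row $i$ attains its minimum at column $j(i)$, then $i < i'$ implies $j(i) \le j(i')$ --- follows from \cref{def:monge} by summing $2\times 2$ inequalities along a staircase path: if the minimum in some row were strictly to the left of the minimum in an earlier row, one of these Monge inequalities would be violated. With total monotonicity in hand, the algorithm proceeds in two alternating phases.

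The first phase is a \textsc{Reduce} step: given a $p\times q$ Monge matrix $A$, scan the columns left-to-right while maintaining a stack of indices of ``live'' columns, using constant-many $A$-queries per step to discard from the top of the stack any column which total monotonicity certifies cannot be the argmin of any remaining row. After processing all $q$ columns, at most $p$ survive, producing a $p\times\min(p,q)$ Monge submatrix that still contains all row minima. A standard amortized analysis (each column is pushed and popped at most once) bounds the cost by $O(p+q)$.

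The second phase is an \textsc{Interpolate} step: on the reduced (essentially square) matrix, recursively compute the minima for the even-indexed rows. For each odd-indexed row $i$, total monotonicity confines its argmin to lie between the argmins already known for rows $i-1$ and $i+1$; scanning these intervals in sequence costs $O(p)$ in total by telescoping. The resulting recurrence $T(p,q) \le T(\lceil p/2\rceil, \lceil p/2\rceil) + O(p+q)$ solves to $T(p,q) = O(p+q)$. Column minima are obtained by applying the same procedure to the transpose $A^{T}$, which is still Monge.

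The main obstacle is the amortized bookkeeping in \textsc{Reduce}: one has to verify that the popping rule is justified by a single $2\times 2$ Monge inequality, and that a single incoming column may trigger many pops without spoiling the $O(p+q)$ budget. Once this amortization and the recursive structure of \textsc{Interpolate} are checked, the claimed $O(p+q)$ runtime falls out immediately.
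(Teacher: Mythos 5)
Your sketch is the standard SMAWK argument: derive total monotonicity from the Monge inequality, use the stack-based \textsc{Reduce} to cut the column set down to at most $p$ surviving candidates in $\Oh(p+q)$ amortized time, recurse on the even-indexed rows, and \textsc{Interpolate} the odd rows in $\Oh(p)$ total by confining each argmin to the interval bracketed by its neighbors; running the same procedure on $A^{T}$ (also Monge) yields the column minima. The paper states this as a black-box \textbf{Fact} citing \cite{SMAWK87} and does not supply its own proof, so there is nothing to diverge from — your proposal is correct and is precisely the canonical proof of SMAWK. (One small clarification worth making explicit if you were to write this out in full: after the first \textsc{Reduce} the matrix is $p\times\min(p,q)$ and the recursive call is on the $\lceil p/2\rceil$ even rows together with \emph{all} surviving columns, which the recursive call itself begins by reducing; the recurrence then bottoms out as you state.)
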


When talking about Monge matrices, notions of matrix core and density matrix are of fundamental importance.

\begin{definition}
    For a matrix $A$ of size $p \times q$, define its \emph{density matrix} $\dens{A}$ of size $(p - 1) \times (q - 1)$, where $\dens{A}_{i, j} = A_{i, j + 1} + A_{i + 1, j} - A_{i, j} - A_{i + 1, j + 1}$ for $i \in [1 \dd p), j \in [1 \dd q)$.

    We define \emph{the core} of a matrix $A$ as $\core{A} \coloneqq \{(i, j, \dens{A}_{i, j}) \mid i \in [1 \dd p), j \in [1 \dd q), \dens{A}_{i, j} \neq 0 \}$. We denote \emph{the size of the core} $\delta(A) \coloneqq |\core{A}|$.
    \lipicsEnd
\end{definition}

Note that for a Monge matrix $A$, all entries of its density matrix are non-negative, and thus $\core{A}$ consists of triples $(i, j, v)$ with some positive values $v$.

For any matrix $A$ of size $p \times q$ we write $A[a \dd b)[c \dd d)$ for any $a, b \in [1 \dd p]$ with $a < b$ and $c, d \in [1 \dd q]$ with $c < d$ to denote a contiguous submatrix of $A$ consisting of all entries on the intersection of rows $[a \dd b)$ and columns $[c \dd d)$ of $A$.

\section{\boldmath Efficient Algorithms for \texorpdfstring{$(\min,+)$}{(min,+)} Multiplication of Core-Sparse Monge Matrices} \label{sec:fast-monge-matrix-multiplication}

As discussed in \cref{sec:alg:sec:alg-periodic:sec:planar-toolbox}, we will require tools to work with Monge matrices and their min-plus products.
SMAWK algorithm \cite{SMAWK87} allows min-plus multiplication of Monge matrices in near-linear time in terms of the sizes of the input matrices.
However, it is still not fast enough for our applications.
The Monge matrices we will be working with are ``sparse'' in the sense that they have small cores.
We would like to use algorithms that operate on Monge matrices, and their time complexities depend on the core sizes, not on matrix sizes.
Such tools were first developed for so-called ``unit-Monge matrices'' by Tiskin in~\cite{Tis08,Tiskin10}.
Later, similar results for the general case were presented by Russo in~\cite{Russo10} and very recently improved by Gawrychowski, Gorbachev, and Kociumaka in~\cite{GGK24}.
In this section, whenever we speak about matrices, we fix some positive integer $N$ and assume that the dimensions of all the matrices are bounded by $N$.

\begin{fact}[{\cite[Lemma 3.8]{GGK24}}] \label{lm:core-bound}
Let $A$ and $B$ denote Monge matrices of sizes $p \times q$ and $q \times r$, respectively.
We have $\delta(A \otimes B) \le 2 \cdot (\delta(A) + \delta(B))$.
\lipicsEnd
\end{fact}

\begin{definition}
    For a matrix $A$, we call the \emph{condensed representation of $A$} the values in the topmost row and the leftmost column of $A$ and the core of $A$.
\lipicsEnd
\end{definition}

\begin{fact}[{\cite[Theorem 4.7]{GGK24}}] \label{lm:ggk-monge-mult}
Let $A$ and $B$ denote Monge matrices of sizes $p \times q$ and $q \times r$, respectively.
 There is an algorithm that, given the condensed representations of $A$ and $B$, in time $\Oh((N + \delta(A) + \delta(B)) \cdot \log N)$ computes the condensed representation of $A \otimes B$.
\lipicsEnd
\end{fact}

However, note that the condensed representation of a Monge matrix does not provide fast access to the entries of this matrix.
To address this issue, we design the following data structure.

\begin{lemma} \label{lm:build_mtx_ds}
    There exists an algorithm that, given the condensed representation of a matrix $A$ of size $p \times q$, in time $\Oh(N + \delta(A) \log N)$ builds a \emph{Core-based Matrix Oracle data structure} $\mds(A)$ that provides the following interface.
    \begin{description}
        \item[Core range query:] given $a, b \in [1 \dd p]$ with $a < b$ and $c, d \in [1 \dd q]$ with $c < d$, in time $\Oh(\log N)$ returns $\ssum(\dens{A}[a\dd b)[c \dd d)) \coloneqq \sum_{i \in [a \dd b)} \sum_{j \in [c \dd d)} \dens{A}_{i, j}$.
        \item[Random access:] given $i \in [1 \dd p]$ and $j \in [1 \dd q]$, in time $\Oh(\log N)$ returns $A_{i, j}$.
        \item[Full core listing:] in time $\Oh(1 + \delta(A))$ returns $\core{A}$.
    \end{description}
\end{lemma}

\begin{proof}
    To store the elements of $\core{A}$, we use the following classical data structure.

    \begin{claim}[Weighted Orthogonal Range Counting {\cite[Theorem 3]{Wil85}}]\label{fct:WORC}
        There exists an algorithm that, given a collection of $n$ points $(x_i, y_i)_{i \in [1 \dd n]}$ with integer weights $(w_i)_{i \in [1 \dd n]}$, in time $\Oh(n \log n)$ builds a data structure that supports the following queries in $\Oh(\log n)$ time: given an orthogonal range $R=[x,x']\times[y,y']$, compute $\sum_{i : (x_i,y_i)\in R}w_i$.
    \lipicsClaimEnd
    \end{claim}

    We implement $\mds(A)$ for some matrix $A$ of size $p \times q$ in the following way.
    We store the condensed representation of $A$ explicitly.
    Furthermore, we store the weighted orthogonal range counting data structure of \cref{fct:WORC} over the elements of $\core{A}$ that in logarithmic time provides query access to the sum of the values of core elements of $A$ (and thus, values of $\dens{A}$) on a rectangle.
    We build it in time $\Oh(\delta(A) \log N)$ using \cref{fct:WORC}.
    Hence, the algorithm for building $\mds$ works in time $\Oh(N + \delta(A) \log N)$.

    We now describe how the queries are handled.
    We start with core range queries.
    Note that elements of $\core{A}$ are exactly all non-zero entries of $\dens{A}$.
    Hence, range queries to the entries of $\core{A}$ using the weighted orthogonal range counting data structure of \cref{fct:WORC} provides $\Oh(\log N)$-time queries for the sum of the elements of $\dens{A}$ in a range.

    For a random access query, say, $A_{i, j}$ is queried.
    Note that $A_{i, j} = A_{1, j} + A_{i, 1} - A_{1, 1} - \ssum(\dens{A}[1\dd i)[1 \dd j))$ by the definition of $\dens{A}$.
    We store $A_{1, j}, A_{i, 1}$, and $A_{1, 1}$ explicitly, and using a single core range query, $\ssum(\dens{A}[1\dd i)[1 \dd j))$ can be computed in time $\Oh(\log N)$.

    Finally, to answer the full core listing query, we just return $\core{A}$ that we store explicitly.
\end{proof}

It is now easy to see that given $\mds(A)$ and $\mds(B)$, one can efficiently compute $\mds(A \otimes B)$.

\begin{lemma} \label{lm:matrix_mult_w}
    There is an algorithm that given $\mds(A)$ and $\mds(B)$ for Monge matrices $A$ of size $p \times q$ and $B$ of size $q \times r$, builds $\mds(A \otimes B)$ in time $\Oh((N + \delta(A) + \delta(B)) \log N)$.
\end{lemma}

\begin{proof}
    Using \cref{lm:ggk-monge-mult}, we can obtain the condensed representation of $C \coloneqq A \otimes B$ in time $\Oh((N + \delta(A) + \delta(B)) \log N)$ as $\mds(A)$ and $\mds(B)$ store the condensed representations of $A$ and $B$ respectively explicitly.
    Applying Lemma \ref{lm:build_mtx_ds}, we get $\mds(C)$ in time $\Oh(N + \delta(C) \log N) = \Oh(N + (\delta(A) + \delta(B)) \log N)$ due to \cref{lm:core-bound}.
    The overall time complexity is $\Oh((N + \delta(A) + \delta(B)) \log N)$.
\end{proof}

Furthermore, we develop the following tool to perform basic operations with matrices using $\mds$:

\begin{lemma} \label{lm:ds_tools}
    The following facts hold.

    \begin{itemize}
        \item There is an algorithm that given $\mds(A)$ for some Monge matrix $A$, computes $\mds(A^T)$ in time $\Oh((N + \delta(A)) \log N)$.
        \item There is an algorithm that given $\mds(A)$ for some Monge matrix $A$, computes $\mds(C)$ in time $\Oh((N + \delta(A)) \log N)$ for any contiguous submatrix $C$ of $A$.
        \item There is an algorithm that given $\mds(A)$ and $\mds(B)$ for some Monge matrices $A$ and $B$ of sizes $p \times q$, computes $\mds(A + B)$ in time $\Oh((N + \delta(A) + \delta(B)) \log N)$.
        \item There is an algorithm that given $\mds(A)$ and $\mds(B)$ for Monge matrices $A$ of size $p \times q$ and $B$ of size $p \times r$ such that $(A \mid B)$ is Monge,\footnote{For matrices $A$ of size $p \times q$ and $B$ of size $p \times r$, we denote by $(A \mid B)$ the matrix $C$ of size $p \times (q + r)$ such that $C_{i, j} = A_{i, j}$ for $j \le q$ and $C_{i, j} = B_{i, j - q}$ for $j > q$. We call this operation ``stitching'' of matrices $A$ and $B$.} computes $\mds((A \mid B))$ in time $\Oh((N + \delta(A) + \delta(B)) \log N)$.
    \end{itemize}
\end{lemma}

\begin{proof}
    Note that $\mds(A)$ provides $\Oh(\log N)$-time random access to the entries of $A^T$ and any contiguous submatrix $C$ of $A$.
    Furthermore, given $\core{A}$, one can compute $\core{A^T}$ and $\core{C}$ in time $\Oh(\delta(A)+1)$.
    Moreover, $A^T$ and $C$ are Monge.
    Thus, we can build $\mds(A^T)$ and $\mds(C)$ in time $\Oh(N \log N + \delta(A) \log N)$ using \cref{lm:build_mtx_ds}, thus proving the first two claims.

    \smallskip
    
    Note that $\mds(A)$ and $\mds(B)$ provide $\Oh(\log N)$-time random access to the entries of $A + B$.
    Furthermore, note that $\dens{(A + B)} = \dens{A} + \dens{B}$. Thus, $\delta(A + B) \le \delta(A) + \delta(B)$, and $\core{A + B}$ can be computed in time $\Oh((\delta(A) + \delta(B)) \log N)$ by merging $\core{A}$ and $\core{B}$.
    Moreover, $A + B$ is Monge.
    Thus, we can build $\mds(A + B)$ in time $\Oh(N \log N + (\delta(A) + \delta(B)) \log N)$ using \cref{lm:build_mtx_ds}, thus proving the third claim.
    
    \smallskip

    Finally, $\mds(A)$ and $\mds(B)$ provide $\Oh(\log N)$-time random access to the entries of $(A \mid B)$.
    Furthermore, $\core{(A \mid B)}$ consists of the core entries of $A$, the core entries of $B$ (shifted), and potentially some core elements on the boundary between $A$ and $B$ that can be found in time $\Oh(N \log N)$ by accessing the entries of the rightmost column of $A$ and the leftmost column of $B$ through $\mds$.
    Thus, $\delta((A \mid B)) \le \delta(A) + \delta(B) + N$, and in time $\Oh(\delta(A) + \delta(B) + N \log N)$ we can compute $\core{(A \mid B)}$.
    Therefore, if $(A \mid B)$ is Monge, we can build $\mds((A\mid B))$ in time $\Oh((N + \delta(A) + \delta(B)) \log N)$ using \cref{lm:build_mtx_ds}, thus proving the fourth claim.
\end{proof}

In the regime of small integer edit weights, we will operate on bounded-difference Monge matrices.

\begin{definition}\label{def:bd}
    Let $\Delta\in \Zp$ be a positive integer and let $A\in \ZZ^{p\times q}$ be an integer matrix.
    We say that $A$ is $\Delta$-bounded-difference if
    \begin{itemize}
        \item $|A_{i,j}-A_{i,j+1}|\le \Delta$ holds for each $i\in [1\dd p]$ and $j\in [1\dd q)$, and
        \item $|A_{i,j}-A_{i+1,j}|\le \Delta$ holds for each $i\in [1\dd p)$ and $j\in [1\dd q]$.\lipicsEnd
    \end{itemize}
\end{definition}

Crucially, integer-valued bounded-difference Monge matrices have small cores.

\begin{lemma}\label{lem:bdtocore}
    If a Monge matrix $A\in \ZZ^{p\times q}$ is $\Delta$-bounded-difference, then $\delta(A)\le 2\Delta(\min\{p,q\}-1)$.
\end{lemma}
\begin{proof}
Observe that 
$A_{1,q}+A_{p,1} - A_{1,1}-A_{p,q} = \sum_{(i,j,v)\in \core{A}} v \ge \sum_{(i,j,v)\in \core{A}} 1 = \delta(A)$.
The bounded-difference property implies that $|A_{1,q}-A_{1,1}|\le \Delta(q-1)$ and $|A_{p,1}-A_{p,q}|\le \Delta(q-1)$.
Hence, $\delta(A) \le A_{1,q}+A_{p,1} - A_{1,1}-A_{p,q}\le 2\Delta(q-1)$. 
Symmetrically, $\delta(A)\le 2\Delta(p-1)$.
\end{proof}

Moreover, the bounded-difference property is preserved under the min-plus product.

\begin{lemma}\label{lem:bdproduct}
    Let $A\in \ZZ^{p\times q}$ and $B\in \ZZ^{q\times r}$ be $\Delta$-bounded-difference matrices.
    Then, $C\coloneqq A\otimes B$ is also a $\Delta$-bounded-difference matrix.
\end{lemma}
\begin{proof}
Consider two entries $C_{i,j}$ and $C_{i',j'}$, and pick $k,k'\in [1\dd q]$ such that $C_{i,j}=A_{i,k}+B_{k,j}$ and $C_{i',j'} = A_{i',k'}+B_{k',j'}$.
Observe that $C_{i,j}\le A_{i,k'}+B_{k',j} \le A_{i',k'}+\Delta|i-i'|+B_{k',j'}+\Delta|j-j'| = C_{i',j'}+\Delta(|i-i'|+|j-j'|)$.
Symmetrically, $C_{i',j'} \le  A_{i',k}+B_{k,j'} \le A_{i,k}+\Delta|i-i'|+B_{k,j}+\Delta|j-j'| = C_{i,j}+\Delta(|i-i'|+|j-j'|)$.
Hence, $|C_{i,j}-C_{i',j'}|\le \Delta(|i-i'|+|j-j'|)$.
In particular, this holds if $(i',j')=(i,j+1)$ and $(i',j')=(i+1,j)$, which means that $C$ is indeed $\Delta$-bounded-difference.
\end{proof}

While the bounded-difference property is helpful in the regime of small integer edit weights, if we do not assume edit weights to be small, matrix cores may be proportional to the sizes of the matrices, thus neglecting the advantage our algorithms have over SMAWK.
To address this issue, we define the notion of $k$-equivalence.

\begin{definition}
    Let $a$, $b$, and $k$ be non-negative real numbers.
    We call $a$ and $b$ \emph{$k$-equivalent} and write $a \meq{k} b$ if $a=b$ or $\min\{a,b\}> k$.
    Furthermore, let $A$ and $B$ be matrices of sizes $p \times q$ with non-negative entries.
    We call these two matrices \emph{$k$-equivalent} and write $A \meq{k} B$ if $A_{i, j} \meq{k} B_{i, j}$ for all $i \in [1 \dd p]$ and $j \in [1 \dd q]$.
\lipicsEnd
\end{definition}

As we compute $\wed_{\le k}$, if some values in distance matrices are larger than $k$, their exact values do not serve any purpose to us.
Hence, instead of considering the exact distance matrix $A$, it will suffice to maintain its ``capped'' version: any matrix $A'$ that is $k$-equivalent to $A$.
As it turns out, if we allow such a relaxation, it becomes possible to find such a matrix $A'$ with small core.

We observe that $k$-equivalence is an equivalence relation, and it is preserved under some basic operations with matrices.

\begin{observation} \label{c-equiv-preserv}
    The following properties of $k$-equivalence hold for any valid matrices $A, B, C, D$.
    \begin{itemize}
    \item $A \meq{k} A$.
    \item If $A \meq{k} B$, then $B \meq{k} A$.
    \item If $A \meq{k} B$ and $B \meq{k} C$, then $A \meq{k} C$.
    \item If $A \meq{k} B$, then $A^T \meq{k} B^T$.
    \item If $A \meq{k} B$, then $A[a \dd b)[c \dd d) \meq{k} B[a \dd b)[c \dd d)$ for any $a < b$ and $c < d$.
    \item If $A \meq{k} B$ and $C \meq{k} D$, then $(A \mid C) \meq{k} (B \mid D)$.
    \item If $A \meq{k} B$ and $C \meq{k} D$, then $A + C \meq{k} B + D$.\footnote{Note that in contrast to the other properties of $k$-equivalence, this one and the next one rely on the fact that the entries of $A$ and $B$ are required to be non-negative.}
    \item If $A \meq{k} B$ and $C \meq{k} D$, then $A \otimes C \meq{k} B \otimes D$.
\lipicsEnd
\end{itemize}
\end{observation}

We extend the list of tools for $\mds$ with ``matrix capping''.
That is, we develop an algorithm that given some Monge matrix $C$ with a large core, computes a matrix $k$-equivalent to $C$ that has a relatively small core.

\begin{restatable}{lemma}{lmreducematrix} \label{lm:reduce_matrix}
    There is an algorithm that, given some positive integer $k$ and the condensed representation of some Monge matrix  $C \in \mathbb{Z}_{\ge 0}^{p \times q}$, in time $\Oh((\delta(C) + N) \log N)$ builds the condensed representation of some Monge matrix $C' \in \mathbb{Z}_{\ge 0}^{p \times q}$ such that $\delta(C') \le \Oh(N \sqrt k)$ and $C' \meq{k} C$.
\lipicsEnd
\end{restatable}

The proof of this fact is rather technical, and thus is deferred to \cref{sec:monge-appendix-cap}.

We apply \cref{lm:reduce_matrix} to compute the capped product of Monge matrices.

\begin{lemma} \label{lm:matrix_mult_k}
    There is an algorithm that, given $\mds(A)$ and $\mds(B)$ for Monge matrices $A \in \mathbb{Z}_{\ge 0}^{p \times q}$ and $B \in \mathbb{Z}_{\ge 0}^{q \times r}$ and some positive integer $k$, in time $\Oh((N + \delta(A) + \delta(B)) \log N)$ builds $\mds(C')$, where $C' \in \mathbb{Z}_{\ge 0}^{p \times r}$ is some Monge matrix such that $\delta(C') \le \Oh(N \sqrt k)$ and $C' \meq{k} A \otimes B$.
\end{lemma}

\begin{proof}
    Applying \cref{lm:matrix_mult_w}, we get $\mds(C)$ for $C \coloneqq A \otimes B$ in time $\Oh((N + \delta(A) + \delta(B)) \log N)$.
    By applying \cref{lm:reduce_matrix}, we obtain $\mds(C')$ in time $\Oh((\delta(C) + N) \log N) = \Oh((N + \delta(A) + \delta(B)) \log N)$ due to \cref{lm:core-bound}.
\end{proof}

Note that all operations in \cref{lm:ds_tools} except for ``stitching'' are applicable for any input Monge matrices.
That is, given two Monge matrices $A$ and $B$, matrix $(A \mid B)$ is not necessarily Monge.
In practice, if we can ensure that $(A \mid B)$ is Monge, \cref{lm:ds_tools} is applicable.
However, if we are given some $A' \meq{k} A$ and $B' \meq{k} B$, it is not guaranteed that $(A' \mid B')$ is Monge even if $(A \mid B)$ is.
To address this issue, we develop the following lemma.
Note that it does not cover all such situations, but only a very specific one that we will encounter.

\newcommand{\maybe}{\lipicsEnd}
\begin{restatable}{lemma}{cappedmatrixstitching} \label{capped-matrix-stitching}
    Let $C \in \mathbb{Z}^{(n_1 + n_2) \times m}_{\ge 0}$ be a Monge matrix, and let $k$ be a positive integer.
    Let $N = n_1 + n_2 + m$.
    Let $A \coloneqq C[1 \dd n_1][1 \dd m]$ and $B \coloneqq C(n_1 \dd n_1 + n_2][1 \dd m]$.
    Let $\bj \in [1 \dd m]$ be such that $C_{i, j} \le C_{i - 1, j}$ for all $(i, j) \in (1 \dd n_1 + 1] \times [\bj \dd m]$ and $C_{i, j} \le C_{i, j + 1}$ for all $(i, j) \in [1 \dd n_1 + 1] \times [\bj \dd m)$.
    Let $B'$ be a Monge matrix that is $k$-equivalent to $B$ such that $B'_{1, j} = B_{1, j}$ for all $j \le \bj$.
    There is an algorithm that given $k$, $\mds(A)$, $\mds(B')$, and $\bj$, in time $\Oh((N + \delta(A) + \delta(B')) \log N)$ builds $\mds(C')$ for some Monge matrix $C' \in \mathbb{Z}^{(n_1 + n_2) \times m}_{\ge 0}$ such that $C' \meq{k} C$, $\delta(C') = \Oh(N + \delta(A) + \delta(B'))$, and $C'_{i, 1} = C_{i, 1}$ for $i \in [1 \dd n_1]$.
\maybe
\end{restatable}
\renewcommand{\maybe}{}

The proof of \cref{capped-matrix-stitching} is deferred to \cref{sec:monge-appendix-capped-stitching}.

\section{Hierarchical Alignment Graph Decompositions} \label{sec:hierarchical}

As discussed in \cref{overview:dynamic-unbounded}, already our static algorithm for bounded edit distance with small integer weights utilizes a dynamic subroutine for unbounded edit distance. 
This auxiliary result extends the algorithm of~\cite{CKM20} in two directions: we allow integer weights and more powerful updates such as copy-pastes.
The algorithm of~\cite{CKM20} maintains the edit distance $\ed(X,Y)$ using a hierarchical decomposition of the alignment graph $\AG(X,Y)$.
In \cref{subsec:hierarchical-composable}, we generalize the underlying notions to weighted edit distance, and we apply the results of \cref{sec:fast-monge-matrix-multiplication} to efficiently combine the information within the hierarchical decomposition.
In \cref{subsec:hierarch-align}, we build upon this procedure to maintain a \emph{hierarchical alignment data structure} $\boxds(X,Y)$, which employs weight-balanced straight-line programs (SLPs) in order to handle copy-pastes and other so-called \emph{mega-edits}.

\subsection{Composable Alignment Graph Representation}\label{subsec:hierarchical-composable}

In this section we develop a tool that, given distance information (boundary distance matrix) for some $X_L$ and $Y$ and some $X_R$ and $Y$, allows to compute distance information for $X \coloneqq X_L \cdot X_R$ and~$Y$.
This distance information provides oracle access to the distances between the vertices of the outer face of the alignment graph $\AGw(X, Y)$.
Note that some pairs of such vertices of $\AGw(X, Y)$ are not reachable from each other.
To circumvent this issue, we define a strongly connected extension of $\AGw(X, Y)$ that we call an \emph{augmented alignment graph}.

\begin{definition}[Augmented Alignment Graph]
    For any two strings $X, Y \in \Sigma^*$ and a weight function $w: \overline{\Sigma}^2 \to [0,W]$, define a directed graph $\overline{\AG}^w(X, Y)$ obtained from the alignment graph $\AG^w(X, Y)$ by adding, for every edge of $\AG^w(X, Y)$, a back edge of weight $W + 1$.
\lipicsEnd
\end{definition}

Note that, in contrast to $\AGw(X, Y)$, any vertex of $\oAGw(X, Y)$ is reachable from any other vertex of $\oAGw(X, Y)$.
We further show that $\oAGw(X, Y)$ preserves distances from $\AGw(X, Y)$ between reachable pairs of vertices, and for all other relevant pairs of vertices, shortest paths between them behave predictably.

\renewcommand{\maybe}{\lipicsEnd}
\begin{restatable}{lemma}{gridgraphextension} \label{grid-graph-extension}
    Consider strings $X, Y \in \Sigma^{*}$ and a normalized weight function $w : \overline{\Sigma}^2 \to [0,W]$.
    Any two vertices $(x, y)$ and $(x', y')$ of the graph $\oAGw(X,Y)$ satisfy the following properties:
\begin{description}
    \item[Monotonicity.] Every shortest path $(x, y) = (x_0, y_0) \to (x_1, y_1) \to \dots \to (x_m, y_m) = (x', y')$ from $(x, y)$ to $(x', y')$ in $\overline{\AG}^w(X, Y)$ is (non-strictly) monotone in both coordinates $(x_i)_{i \in [0 \dd m]}$ and $(y_i)_{i \in [0 \dd m]}$.
    \item[Distance preservation.] If $x \le x'$ and $y \le y'$, then
        \[\dist_{\oAGw(X, Y)}((x, y), (x', y')) = \dist_{\AGw(X, Y)}((x, y), (x', y')).\]
    \item[Path irrelevance.] If ($x \le x'$ and $y \ge y'$) or ($x \ge x'$ and $y \le y'$), every path from $(x, y)$ to $(x', y')$ in $\oAGw(X, Y)$, that is monotone in both coordinates, is a shortest path between these two vertices.
    \maybe
\end{description}     
\end{restatable}
\renewcommand{\maybe}{}

The proof of this lemma is deferred to \cref{subsec:hierarchical-appendix1}.

\smallskip
We will talk of \emph{breakpoint representations} of paths in $\oAGw(X, Y)$ that are monotone in both coordinates.
If such a path starts in $(x, y)$ and ends in $(x', y')$ with $x \le x'$ and $y \le y'$, then the breakpoint representation of such a path is identical to the breakpoint representation of the corresponding alignment.
Otherwise, if $x > x'$ or $y > y'$, we call the breakpoint representation of such a path a sequence of all its vertices.
Note that it is a consistent definition as all edges in such a path have nonzero weights.

\medskip

We now show that similarly to the proof of \cref{lm:baseline-wed}, short paths in $\oAGw(X, Y)$ cannot deviate too much from the diagonal they start on.

\begin{lemma} \label{lm:paths_dont_deviate_too_much}
    Let $X, Y \in \Sigma^*$ be two strings, $k$ be an integer, and $w : \Esigma^2 \to \RR_{\ge 0}$ be a normalized weight function.
    Let $(x, y)$ and $(x', y')$ be two vertices of $\oAGw(X, Y)$.
    Let $P$ be a path from $(x, y)$ to $(x', y')$ in $\oAGw(X, Y)$ of weight at most $k$.
    We have that all nodes $(x^*, y^*) \in P$ satisfy $|(x - y) - (x^* - y^*)| \le k$ and $|(x' - y') - (x^* - y^*)| \le k$.
\end{lemma}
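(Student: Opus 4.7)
The plan is to bound how far the current diagonal $x^* - y^*$ can drift from $x - y$ (and from $x' - y'$) by exploiting the fact that in a normalized weight function every edge of $\oAGw(X,Y)$ that shifts the diagonal costs at least~$1$. So the first step is a short edge-by-edge case analysis: among the six edge types of $\oAGw(X,Y)$, only the forward and backward diagonal edges preserve the quantity $u \coloneqq x - y$, while the forward/backward horizontal and vertical edges each change $u$ by exactly $\pm 1$. For a normalized weight function, forward horizontal edges carry weight $w(X[x],\emptystring)\ge 1$ and forward vertical edges carry weight $w(\emptystring,Y[y])\ge 1$, and every back edge carries weight $W+1\ge 1$ by definition of $\oAGw$. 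Thus every edge that changes $u$ has weight at least $1$, and only forward diagonal edges may have weight zero.

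Given this, I would fix an arbitrary vertex $(x^*,y^*)$ on $P$ and split $P$ into a prefix $P_1$ from $(x,y)$ to $(x^*,y^*)$ and a suffix $P_2$ from $(x^*,y^*)$ to $(x',y')$; each has weight at most $k$. If $P_1$ uses $n_1$ non-diagonal edges (of any orientation), then the weight of $P_1$ is at least $n_1$, hence $n_1\le k$. The net change of $u$ across $P_1$ equals a signed sum of $n_1$ contributions of $\pm 1$, so its absolute value is at most $n_1 \le k$, yielding $|(x^*-y^*)-(x-y)|\le k$. The identical argument applied to $P_2$ gives $|(x'-y')-(x^*-y^*)|\le k$, finishing the proof.

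There is no real obstacle here beyond the case analysis of edge types; the key conceptual point, isolated once and for all, is that the normalization of $w$ together with the choice of back-edge weight $W+1$ guarantees that any movement off the current diagonal is charged at least one unit of path weight.
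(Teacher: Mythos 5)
Your proof is correct and takes essentially the same approach as the paper: both observe that every edge of $\oAGw(X,Y)$ that shifts the diagonal $x-y$ (horizontal, vertical, or any back edge) has weight at least~$1$ under a normalized weight function, so a weight-$k$ subpath contains at most $k$ diagonal-shifting edges, each changing $x-y$ by $\pm 1$. Your write-up is merely a bit more explicit about the edge-by-edge case analysis and the role of the back-edge weight $W+1$, which the paper leaves implicit.
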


\begin{proof}
    Consider the prefix of the path $P$ from $(x, y)$ to $(x^*, y^*)$.
    It has weight at most $k$ and thus contains at most $k$ non-zero edges.
    Note that each edge of weight zero is a diagonal edge.
    Hence, while going from $(x, y)$ to $(x^*, y^*)$ along $P$, the difference $\hx - \hy$ between the coordinates of the current vertex $(\hx, \hy)$ change at most $k$ times.
    Furthermore, one edge changes this difference by at most one.
    Therefore, we obtain $|(x -  y) - (x^* - y^*)| \le k$.
    Analogously, we get $|(x' - y') - (x^* - y^*)| \le k$.
\end{proof}

We now precisely define what the boundary distance matrix of $X$ and $Y$ is.

\begin{definition}[Rectangle Perimeter]
    For a rectangle $\fragmentcc{a}{b} \times \fragmentcc{c}{d}$, we define its \emph{perimeter} as $\perim(\fragmentcc{a}{b} \times \fragmentcc{c}{d}) \coloneqq (b - a) + (d - c) + 1$.

    Furthermore, for two strings $X$ and $Y$, we define the perimeter of their product as $\perim(X \times Y) \coloneqq \perim([0 \dd |X|] \times [0 \dd |Y|]) = |X| + |Y| + 1$.
\lipicsEnd
\end{definition}

\begin{definition}[Input and Output Vertices of an Alignment Graph]
    Let $X, Y \in \Sigma^*$ be two strings and $w : \Esigma^2 \to [0,W]$ be a weight function.
    Furthermore, let $R = [a \dd b] \times [c \dd d]$ for $0 \le a \le b \le |X|$ and $0 \le c \le d \le |Y|$ be some rectangle in $\oAGw(X, Y)$.
    
    Let \emph{the sequence of input vertices} $\inp(R)$ of $R$ be a sequence $(p_i)_i$ of vertices of $\oAGw(X, Y)$ of length $\perim(R)$, where $p = ((a, d), (a, d - 1), \ldots, (a, c), (a + 1, c), \ldots, (b, c))$; see \cref{fig:alignment-graph}.
    Let the \emph{left input vertices} of $R$ be the prefix of vertices of $\inp(R)$ with the first coordinate equal to $a$, and let the \emph{top input vertices} of $R$ be the suffix of vertices of $\inp(R)$ with the second coordinate equal to $c$.
    Note that the left and the top input vertices of $R$ intersect by one vertex $(a, c)$.

    Analogously, let \emph{the sequence of output vertices} $\outp(R)$ of $R$ be a sequence $(q_i)_i$ of vertices of $\oAGw(X, Y)$ of length $\perim(R)$, where $q = ((a, d), (a + 1, d), \ldots, (b, d), (b, d - 1), \ldots, (b, c))$.
    Let the \emph{bottom output vertices} of $R$ be the prefix of vertices of $\outp(R)$ with the second coordinate equal to $d$, and let the \emph{right output vertices} of $R$ be the suffix of vertices of $\outp(R)$ with the first coordinate equal to $b$.
    Note that the bottom and the right output vertices of $R$ intersect by one vertex $(b, d)$.

    Furthermore, we define $\inp(X \times Y) \coloneqq \inp([0 \dd |X|] \times [0 \dd |Y|])$ and $\outp(X \times Y) \coloneqq \outp([0 \dd |X|] \times [0 \dd |Y|])$.
    These are called the input and the output vertices of the augmented alignment graph $\oAGw(X, Y)$.
\lipicsEnd
\end{definition}

\begin{definition}[Boundary Distance Matrix]
    Let $X, Y \in \Sigma^*$ be two strings and $w : \Esigma^2 \to [0,W]$ be a weight function.
    \emph{The boundary distance matrix} $\BMw(X, Y)$ of these two strings is a matrix $M$ of size $\perim(X \times Y) \times \perim(X \times Y)$ defined as follows: $M_{i, j} = \dist_{\oAGw(X, Y)}(p_i, q_j)$, where $p \coloneqq \inp(X \times Y)$ and $q \coloneqq \outp(X \times Y)$.
\lipicsEnd
\end{definition}

Note that $\BMw(X, Y)$ is Monge due to \cref{fct:monge}.
Furthermore, if $X[a \dd b)$ and $Y[c \dd d)$ are two fragments of $X$ and $Y$, then $\BMw(X[a \dd b), Y[c \dd d))$ is a distance matrix from the input vertices of $[a \dd b] \times [c \dd d]$ in $\oAGw(X, Y)$ to the output vertices of $[a \dd b] \times [c \dd d]$ as due to the monotonicity property of \cref{grid-graph-extension}, no such path exits $[a \dd b] \times [c \dd d]$.

Since there are edges (of cost at most $W+1$) between subsequent input and output vertices of $\oAGw(X,Y)$,
the boundary distance matrix satisfies the bounded-difference property (\cref{def:bd}).

\begin{observation}\label{obs:bmw_is_bd}
Let $X, Y \in \Sigma^*$ be two strings and let $w : \overline{\Sigma}^2 \to [0 \dd W]$ be a weight function.
The matrix $\BMw(X, Y)$ is $(W+1)$-bounded-difference, and thus its core is of size $\Oh(W\cdot(|X|+|Y|))$ due to \cref{lem:bdtocore}.
\lipicsEnd
\end{observation}

We are now ready to describe the composition procedure of $\BMw$.
This procedure has two variants: one that computes all distances exactly, and a relaxed version that computes $k$-equivalent values.

\renewcommand{\maybe}{\lipicsEnd}
\begin{restatable}{lemma}{boundarydistancematrixcombination} \label{boundary-distance-matrix-combination}
    There is an algorithm that given strings $X, Y \in \Sigma^*$, oracle access to a weight function $w : \Esigma^2 \to [0 \dd W]$, and $\mds(A)$ and $\mds(B)$, where $A = \BMw(X[a \dd b), Y[c \dd d))$ and $B = \BMw(X[a\dd b), Y[d \dd e))$ for some $0 \le a < b \le |X|$ and $0 \le c < d < e \le |Y|$, in time $\Oh((m + \delta(A) + \delta(B)) \log m)$ for $m = \perim([a \dd b] \times [c \dd e])$ builds $\mds(C)$, where $C = \BMw(X[a\dd b), Y[c\dd e))$.

    Furthermore, there is an algorithm that, given a positive integer $k$, $\mds(A')$ and $\mds(B')$ for $A' \meq{k} \BMw(X[a \dd b), Y[c \dd d))$ and $B' \meq{k} \BMw(X[a \dd b), Y[d \dd e))$, in time $\Oh((m + \delta(A') + \delta(B')) \log m)$ builds $\mds(C')$ for some matrix $C'$ such that $C' \meq{k} \BMw(X[a\dd b), Y[c\dd e))$ and $\delta(C') = \Oh(m \sqrt k)$.

    The same statements hold for two rectangles adjacent horizontally, not vertically.
    \maybe
\end{restatable}
\renewcommand{\maybe}{}

On the high level, to obtain $C$, one needs to slightly extend the matrices $A$ and $B$ and then $(\min, +)$-multiply them.
As the complete proof of this lemma is rather technical, it is deferred to \cref{subsec:hierarchical-appendix2}.

\subsection{Hierarchical Alignment Data Structure: Extension of the Unbounded Dynamic Algorithm to Mega-Edits}\label{subsec:hierarch-align}

While~\cite{CKM20} supports only edits (insertions, deletions, and substitutions) as updates, as discussed in \cref{overview:dynamic-unbounded}, for our purposes, we will require a more powerful dynamic data structure that supports the following operations we call mega-edits.

\begin{definition}[Mega-Edit]\label{def:mega-edit}
    Let $X \in \Sigma^*$ be a string.
    A \emph{mega-edit} is one of the following transformations applied to $X$:
    \begin{description}
        \item[Character Insertion:] Insert an arbitrary character $c \in \Sigma$ at some position $i \in \fragmentcc{0}{|X|}$ in $X$.
        \item[Character Deletion:] Delete some character $X[i]$ for $i \in \fragmentco{0}{|X|}$ from $X$.
        \item[Character Substitution:] Replace $X[i]$ in $X$ for some $i \in \fragmentco{0}{|X|}$ with some other character $c \in \Sigma$.
        \item[Substring Removal:] Remove fragment $X\fragmentco{\ell}{r}$ from $X$ for some $\ell, r \in \fragmentcc{0}{|X|}$ with $\ell < r$.
        \item[Copy-Exponentiate-Paste:] For some $\ell, r, p \in \fragmentcc{0}{|X|}$ with $\ell < r$, and an integer $s \ge 1$, insert $X\fragmentco{\ell}{r}^{\infty}\fragmentco{0}{s}$ at position $p$ of $X$.
    \lipicsEnd
    \end{description}
\end{definition}

Furthermore, the data structure we need should be fully persistent.
To achieve this, we follow a design pattern known from functional programming: each instance of our \emph{hierarchical alignment data structure} is an immutable object, and each update creates a new instance that shares parts of the internal state with the original instance.
We describe two variants of the data structure: one that computes all distances exactly, and a relaxed version that computes $k$-equivalent values.

\renewcommand{\maybe}{\lipicsEnd}
\begin{restatable}{theorem}{prphadstwo}\label{prp:hadstwo}
    Consider non-empty strings  $X, Y \in \Sigma^+$ and a weight function $w : \Esigma^2 \to \fragmentcc{0}{W}$.
    A \emph{hierarchical alignment data structure} $\boxds(X,Y)$ can be constructed in $\Oh(W\cdot n^2 \log n)$ time (where $n = |X| + |Y|$) given $X$, $Y$, $W$, and $\Oh(1)$-time oracle to $w$,
    and supports the following operations:
    \begin{description}
        \item[Matrix Retrieval:] Retrieve an immutable reference to $\mds(\BMw(X, Y))$ in constant time.
        \item[Alignment Retrieval:] Given an input vertex $p$ and an output vertex $q$ of $\oAGw(X, Y)$, construct the breakpoint representation of a shortest path from $p$ to $q$ in $\oAGw(X, Y)$.
            This operation takes $\Oh((d+1)\log^2 n)$ time, where $d=\dist_{\oAGw(X,Y)}(p,q)$ and $n = |X| + |Y|$.
        \item[Mega-Edit in $X$:] Given a mega-edit transforming $X$ into a nonempty string $X'$, construct a hierarchical alignment data structure $\boxds(X', Y)$ in time $\Oh(W n \log^2 n)$, where $n = |X| + |X'| + |Y|$.
        \item[Mega-Edit in $Y$:] Given a mega-edit transforming $Y$ into a nonempty string $Y'$, construct a hierarchical alignment data structure $\boxds(X, Y')$ in time $\Oh(W n \log^2 n)$, where $n = |X| + |Y| + |Y'|$.
    \end{description}
    Additionally, for every positive integer $k\in \Zp$, a \emph{relaxed hierarchical alignment data structure} $\boxdsk(X, Y)$ can be built in time $\Oh(n^2 \sqrt k \log n)$ and supports mega-edits in $\Oh(n \sqrt k \log^2 n)$ time, but matrix retrieval queries only return a reference to $\mds(M)$ for some Monge matrix $M$ such that $M \meq{k} \BMw(X, Y)$ and $\coresiz(M)=\Oh((|X| + |Y|)\sqrt{k})$, and alignment retrieval queries fail (report an error) whenever $d>k$.\maybe
\end{restatable}
\renewcommand{\maybe}{}

As discussed in \cref{sec:overview}, our main innovation compared to the algorithm of \cite{CKM20} is that we employ weight-balanced straight-line programs to keep track of isomorphic parts of the (augmented) alignment graph $\oAGw(X,Y)$. 
Thus, before discussing the implementation of the hierarchical alignment data structure $\boxds(X,Y)$, we introduce the relevant terminology and notation.

\subsubsection{Balanced Straight Line Programs}
For a context-free grammar $\G$, we denote by $\N_\G$ and $\Sigma_\G$ the set of non-terminals and the set of terminals, respectively. The set of \emph{symbols} is $\S_\G:=\Sigma_\G\cup \N_\G$.
A \emph{straight-line program} (SLP) is a context-free grammar $\G$ such that:
\begin{itemize}
\item each non-terminal $A\in \N_\G$ has a unique production $A\to \rhs_\G(A)$, where $\rhs_\G(A)=BC$ consists of two symbols $B,C\in \S_\G$,
\item the set $\S_\G$ admits a partial order $\prec$ such that $B\prec A$ and $C\prec A$ if $\rhs_\G(A)=BC$.
\end{itemize}
See \cref{fig:SLP} for an example.
The \emph{expansion} function $\exp_\G: \S_G \to \Sigma_\G^*$ is defined as follows:
\[\exp_\G(A) = \begin{cases}
  A & \text{if $A\in \Sigma_\G$},\\
  \exp_\G(B)\exp_\G(C) & \text{if $A\in \N_\G$ with $\rhs_\G(A)=BC$.}
\end{cases}\]
Moreover, $\exp_\G$ is lifted to $\exp_\G : \S_\G^*\to \Sigma_\G^*$ with $\exp_\G(A_1\cdots A_a)=\exp_\G(A_1)\cdots \exp_\G(A_a)$ for $A_1\cdots A_a\in \S_\G^*$. When the grammar $\G$ is clear from context, we omit the subscript.

For notational convenience, for symbols $A$ and $B$ of some fixed grammar $\slp$, we write $\oAGw(A, B)$  to denote $\oAGw(\exp(A), \exp(B))$ and $\BMw(A, B)$ to denote $\BMw(\exp(A), \exp(B))$.

\begin{figure}[t]
    \begin{center}
    \begin{tikzpicture}[scale=0.55,y=-1cm]
        \input{SLP.tex}
    \end{tikzpicture}
    \end{center}

    \caption{An example of a straight-line program. Here $\exp(A) = \mathtt{abccdcdabccd}$.}
    \label{fig:SLP}
\end{figure}

The \emph{parse tree} $\Tr(A)$ of a symbol $A\in \S_\G$ is a rooted ordered tree with each node $\nu$ associated to a symbol $\symb(\nu)\in \S_\G$. 
The root of $\Tr(A)$ is a node $\rho$ with $\symb(\rho)=A$.
If $A \in \Sigma_\G$, then $\rho$ has no children. 
If $A\in \N_\G$ and $\rhs(A)=BC$, then $\rho$ has two children, and the subtrees rooted in these children are (copies of) $\Tr(B)$ and $\Tr(C)$, respectively. The root of $\Tr(B)$ is called the left child of the root of $\Tr(A)$, and the root of $\Tr(C)$ is called the right child of the root of $\Tr(A)$. Furthermore, we say that $B$ is the left child of $A$, and $C$ is the right child of $A$.
Furthermore, let $\dep(A) \coloneqq \set{\symb(\nu) \mid \nu \in \Tr(A)}$ be the set of \emph{descendants} of $A$; note that $\dep(A)=\{A\}$ if $A\in \Sigma_\G$ and $\dep(A)=\{A\}\cup \dep(B)\cup \dep(C)$ if $A\in \N_G$ and $\rhs_\G(A)=BC$.

The weight of a symbol $A\in \S_\G$ is defined as $\len(A):= |\exp(A)|$; we assume that all algorithms store this weight along with the symbol $A$.
We say that an SLP $\G$ is \emph{weight-balanced} if every nonterminal $A\in \N_\G$ with $\rhs(A)=BC$ satisfies $\frac{1}{3} \le \len(B) / \len(C) \le 3$.\footnote{Weight-balanced SLPs from \cite{CLLPPSS05} allow for $\alpha / (1 - \alpha) \le \len(B) / \len(C) \le (1 - \alpha) / \alpha$ for any balancedness parameter $\alpha \le 1 - \sqrt2 / 2 \approx 0.29$. We use $\alpha = 0.25$.}

\SetKwFunction{Merge}{Merge}
\SetKwFunction{Split}{Substring}
\SetKwFunction{Power}{Power}

\begin{theorem}[{see \cite[Section VII.E]{CLLPPSS05} and \cite[Lemma 8]{Gaw11}}]\label{thm:slp}
A weight-balanced SLP $\slp$ can be extended (while remaining weight-balanced) using the following operations:
\begin{description}
    \item[$\Merge(B,C)$:] Given symbols $B,C\in \S_\slp$, in $\Oh(1 + |\log (\len(B) / \len(C))|)$ time returns a symbol $A$ such that $\exp(A)=\exp(B)\cdot \exp(C)$.
    \item[$\Split(A,\ell, r)$:] Given a symbol $A\in \S_\slp$ and integers $\ell, r\in\fragmentcc{0}{\len(A)}$ such that $\ell < r$, in $\Oh(\log \len(A))$ time returns a symbol $B$ such that $\exp(B)=\exp(A)\fragmentco{\ell}{r}$.
    \item[$\Power(A,\ell)$:] Given a symbol $A\in \S_\slp$ and an integer $\ell\ge 1$, in $\Oh(1 + \log \len(A)+\log \ell)$ time returns a symbol $B$ such that $\exp(B)=\exp(A)^\infty\fragmentco{0}{\ell}$.
\end{description}
The algorithms implementing these operations represent each symbol along with its weight $\len(\cdot)$ and, for non-terminals, the production $\rhs(\cdot)$, and make no other assumptions about the implementation of $\slp$.
In particular, every symbol present in the parse trees of the output symbols is either newly added to $\slp$ or present in the parse trees of the input symbols. The number of newly added symbols is bounded by the time complexity of each operation.
\end{theorem}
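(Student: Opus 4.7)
The plan is to adapt the classical weight-balanced binary tree machinery of Nievergelt--Reingold to the SLP setting, exploiting the fact that the ratio bound $\tfrac13 \le \len(B)/\len(C) \le 3$ corresponds to a balance parameter $\alpha = \tfrac14$, which lies below the classical threshold $1 - \sqrt{2}/2 \approx 0.293$ at which single/double rotations always suffice to restore balance. All three operations preserve the SLP invariants by only appending new non-terminals to $\slp$ and never modifying existing ones, so that all previously constructed symbols retain their parse trees.

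For $\Merge(B,C)$, I would induct on $\len(B)+\len(C)$. When $\tfrac13 \le \len(B)/\len(C) \le 3$, creating a fresh non-terminal $A$ with $\rhs(A)=BC$ suffices. Otherwise, WLOG assume $\len(B) > 3\len(C)$ and let $\rhs(B) = B_L B_R$; recursively set $D \coloneqq \Merge(B_R, C)$ and then either form the non-terminal $B_LD$ directly or, if the weights of $B_L$ and $D$ fall outside the balance window, restore balance with a single or double rotation that consumes only the top-level production of $D$. The descent walks down the right spine of $B$ until it reaches the first ancestor whose weight lies within a factor $3$ of $\len(C)$; weight-balance forces each step to shrink the spine weight by at least a multiplicative $\tfrac43$, so the depth and the number of new symbols are both $O(1+|\log(\len(B)/\len(C))|)$.

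For $\Split(A,\ell,r)$, I would first implement a primitive single-position split at $p \in \fragmentcc{0}{\len(A)}$ by walking the root-to-leaf path in $\Tr(A)$ to position $p$ and collecting the $O(\log \len(A))$ hanging subtrees on each side, then reassembling each side with $\Merge$: the left side bottom-up (smallest subtree first) and the right side top-down (largest first). Because along a single spine of a weight-balanced parse tree the weights of consecutive hanging subtrees are within a constant factor of one another, the $|\log(\cdot)|$ costs of successive merges telescope to $O(\log \len(A))$ overall. $\Split$ is then realized as two such primitive splits, discarding the outer pieces; the output symbol represents $\exp(A)\fragmentco{\ell}{r}$ in time $O(\log \len(A))$.

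For $\Power(A,\ell)$, I would write $\ell = q\cdot \len(A) + r$ with $0 \le r < \len(A)$, build $B_0 \coloneqq A$ and $B_{i+1} \coloneqq \Merge(B_i,B_i)$ by repeated squaring for $i < \lfloor\log_2 q\rfloor$ (each such merge is between symbols of equal weight and costs $O(1)$), and then combine the selected $B_i$ corresponding to the binary expansion $q = \sum_{i \in S} 2^i$ using $|S|-1$ further merges whose cumulative cost telescopes to $O(\log q)$; finally, if $r>0$, append $\Split(A,0,r)$ with one more $\Merge$, contributing $O(\log\len(A))$. This yields the claimed $O(1 + \log \len(A) + \log \ell)$ bound. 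The main obstacle is the rotation analysis inside $\Merge$: one has to verify by a case split on whether the imbalance in $D$ after the recursive call lies in $D$'s left or right child that a single or double rotation always restores the $[\tfrac14,\tfrac34]$ weight-balance window, which is the technical core of \cite{CLLPPSS05}; a secondary subtle point is justifying the telescoping bound in $\Split$, which hinges on the same $\tfrac43$-shrinkage of spine weights used in the analysis of $\Merge$.
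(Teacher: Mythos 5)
Your proposal is correct. For $\Merge$ and $\Split$ you essentially reconstruct the argument of the cited source \cite[Section VII.E]{CLLPPSS05}: the paper itself does not re-prove these two operations but delegates them to that reference (noting only that the symbol-count analysis there also bounds the running time), so your rotation-based descent along the heavier spine and your split-and-reassemble with telescoping merge costs are filling in exactly the details the paper outsources. The one place where your route genuinely diverges is $\Power(A,\ell)$: you decompose $\ell = q\cdot\len(A)+r$, build the powers $\exp(A)^{2^i}$ by repeated squaring, combine the terms of the binary expansion of $q$ (which requires merging in increasing order of weight so the costs telescope to $\Oh(\log q)$), and finally attach $\Split(A,0,r)$. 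The paper instead squares $A$ only until $\len(B_t)=2^t\len(A)\ge\ell$ and then performs a single $\Split(B_t,0,\ell)$, using that $\exp(B_t)$ is a prefix of $\exp(A)^\infty$; this avoids the binary-expansion bookkeeping and the merge-order/telescoping argument entirely, and it sidesteps the $q=0$ edge case that your version must handle separately. Both yield the claimed $\Oh(1+\log\len(A)+\log\ell)$ bound and the same guarantee that the number of newly created symbols is dominated by the running time.
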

\begin{proof}
    The implementations of $\Merge(B,C)$ and $\Split(A,\ell, r)$ are provided in \cite[Section VII.E]{CLLPPSS05}.
    The original analysis focuses on the number of symbols added to $\G$, but several subsequent works, including \cite{Gaw11}, noted that the running time can be bounded in the same way.
    Our implementation of $\Power(A,\ell)$ is based on the proof of \cite[Lemma 8]{Gaw11}. 
    Since the original description is stated in the context of a particular application (rather than in an abstract form), we repeat the simple algorithm and its analysis below.

    Denote $t \coloneqq \max\{0,\lceil \log_2 \frac{\ell}{\len(A)}\rceil\}$.
    We create a sequence of symbols $B_0,B_1, B_2, \ldots, B_t$ such that $B_0 = A$ and $\rhs(B_i) = B_{i - 1} B_{i - 1}$ for $i\in \fragmentcc{1}{t}$.
    If $\len(B_t) = \ell$, we return $B_t$.
    Otherwise, we retrieve $B=\Split(B_t, 0, \ell)$ and return $B$ as the answer.

    The children of $B_i$ are perfectly balanced, so $\slp$ remains a weight-balanced SLP.
    At the same time $\exp(B_i) = \exp(B_{i - 1})^2 = \exp(A)^{2^i}$.
    The correctness follows from the facts that $\exp(B_t)$ is a prefix of $\exp(A)^{\infty}$ and $\len(B_t) \ge \ell$.
    The time complexity is $\Oh(t+1)$ for the creation of $B_0,\ldots,B_t$ and $\Oh(\log \len(B_t))$ for the final $\Split$ operation.
    Since $\len(B_t)\le 2\max\{\len(A),\ell\}$ and $t\le \log_2 \len(B_t)$, this is $\Oh(1 + \log\len(A)+\log \ell)$.
\end{proof}

\subsubsection{Hierarchical Alignment Data Structure Implementation}

The hierarchical alignment data structure $\boxds(X,Y)$ is arranged recursively.
We build a weight-balanced SLP $\slp$ with symbols $A, B \in \S_{\slp}$ such that $\exp(A) = X$ and $\exp(B) = Y$, and the recursive structure of the SLP guides the recursive structure of $\boxds(X,Y)$.
We formalize this using an auxiliary recursive data structure $\boxdsslp(A,B)$.

\begin{definition}\label{def:boxdsslp}
    Consider a weight-balanced SLP $\slp$ and a weight function $w : \Esigma_\slp^2\to [0\dd W]$. 
    For two symbols $A,B\in \S_{\slp}$, the data structure $\boxdsslp(A, B)$ stores (an immutable reference to) $\mds(\BMw(A, B))$.
    Furthermore, unless $\len(A)=\len(B)=1$, it recursively stores (immutable references to) two smaller instances of $\boxdsslp$:
    If $\len(A) \ge \len(B)$, it stores $\boxdsslp(A_L, B)$ and $\boxdsslp(A_R, B)$, where $\rhs_\slp(A) = A_L A_R$.
    Otherwise, i.e., if $\len(A) < \len(B)$, it shores $\boxdsslp(A, B_L)$ and $\boxdsslp(A, B_R)$, where $\rhs_\slp(B) = B_L B_R$.

    Furthermore, for every $k\in \Zp$, we define the relaxed version $\boxdskslp(A, B)$ of the data structure that, instead of $\mds(\BMw(A, B))$, stores $\mds(M)$ for some matrix $M$ satisfying $M \meq{k} \BMw(A, B)$ and $\coresiz(M) = \Oh((\len(A) + \len(B)) \sqrt k)$; the recursive references are to smaller instances of $\boxdskslp$.
    \lipicsEnd
\end{definition}

Before proving \cref{prp:hadstwo}, we show  two helper lemmas. 
The first of them implements the alignment retrieval operation of $\boxds(X,Y)$.
\begin{restatable}{lemma}{lmboxdsslpquery}\label{lm:boxdsslp-query}
    Let $A$ and $B$ be two symbols of a weight-balanced SLP $\slp$.
    There is an algorithm that, given $\boxdsslp(A, B)$ and an input vertex $p$ and an output vertex $q$ of $\oAGw(A, B)$, computes the breakpoint representation of a shortest path from $p$ to $q$ in $\oAGw(A, B)$ in $\Oh((d + 1) \log^2 n)$ time, where $d=\dist_{\oAGw(A,B)}(p,q)$ and $n = \len(A) + \len(B)$.

    Furthermore, there is an algorithm that, given $\boxdskslp(A, B)$ for some $k \ge 1$ instead of $\boxdsslp(A, B)$, in $\Oh((d + 1) \log^2 n)$ time either computes the breakpoint representation of a shortest path from $p$ to $q$ in $\oAGw(A, B)$ or reports that $d > k$.
\end{restatable}

\begin{proof}[Proof Sketch]
    We compute the breakpoint representation of a shortest path by tracking it down the hierarchical decomposition of $\oAGw(A, B)$ provided by $\boxdsslp(A,B)$.
    If $d=0$, the path simply follows a diagonal and its breakpoint representation is trivial.
    Similarly, it is easy to construct the breakpoint representation if $\len(A)=\len(B)=1$.
    Otherwise, the recursive structure of $\boxdsslp(A,B)$ lets us partition $\oAGw(A,B)$ into two subgraphs, which we interpret as two smaller augmented alignment graphs.
    If $p$ and $q$ are both located in the same subgraph, then we recurse on the instance of $\boxdsslp$ describing this subgraph.
    Otherwise, due to \cref{lm:paths_dont_deviate_too_much}, there are $\Oh(d)$ potential points where the path can cross the vertical or horizontal line separating the subgraphs.
    We find an optimal splitting point using random access to the boundary distance matrices of these subgraphs, and then we recurse on both subgraphs.
    The total time complexity is $\Oh((d+1)\log^2 n)$ because we spend $\Oh(d \log n)$ time per recursive call, the cost of the path is split among the (at most) two recursive calls, and the depth of the recursion is logarithmic.
    See \cref{subsec:hierarchical-appendix3} for the formal proof of \cref{lm:boxdsslp-query}.
\end{proof}

Recall that $\dep(A)$ denotes the set of descendants of a symbol $A$ of an SLP, i.e., the set of symbols that occur in the parse tree of $A$.
The following lemma constitutes the central component of our implementation of the mega-edit operations of $\boxds(X,Y)$.
\begin{restatable}{lemma}{lmboxdsslpupdate}\label{lm:boxdsslp-update}
    Let $A, A', B$, and $B'$ be symbols of a weight-balanced SLP $\slp$.
    Given $\boxdsslp(A, B)$, one can build $\boxdsslp(A', B')$ in $\Oh((u+\log n) W n \log n)$ time, where $u = |\dep(A') \setminus \dep(A)| + |\dep(B') \setminus \dep(B)|$ and $n = \len(A) + \len(B) + \len(A') + \len(B')$.
    Furthermore, given $\boxdskslp(A, B)$ for some $k\in \Zp$, one can build $\boxdskslp(A', B')$ in $\Oh((u + \log n) n \sqrt k \log n)$ time.
\end{restatable}

\begin{proof}[Proof Sketch]
    We employ a natural recursive algorithm to build $\boxdsslp(A', B')$.
    If $\len(A')=\len(B')=1$, then $\BMw(A', B')$ (and thus $\boxdsslp(A',B')$) can be constructed in constant time using oracle access to~$w$.
    Otherwise, we recurse and then build $\mds(\BMw(A', B'))$ using \cref{boundary-distance-matrix-combination}; this final step takes $\Oh(W n \log n)$ time due to \cref{obs:bmw_is_bd}.
    Furthermore, for the current recursive call involving $(C, D)$, if the data structure $\boxdsslp(C, D)$ is a part of the recursive structure of $\boxdsslp(A, B)$, we make sure to just reuse it instead of building it from scratch.
    With careful analysis, we show that the algorithm takes $\Oh(W n \log n)$ time per new SLP symbol and $\Oh((u + \log n) W n \log n)$ time in total.

    For $\boxdskslp$, we replace the first algorithm of \cref{boundary-distance-matrix-combination} with the second algorithm of the same lemma, which automatically gives us a bound on the core sizes of all matrices we obtain.
    See \cref{subsec:hierarchical-appendix3} for the formal proof of \cref{lm:boxdsslp-update}
\end{proof}

We now derive \cref{prp:hadstwo} from \cref{lm:boxdsslp-query,lm:boxdsslp-update,thm:slp}.

\prphadstwo*
\begin{proof}
    The hierarchical alignment data structure $\boxds(X, Y)$ is implemented as $\boxdsslp(A,B)$ for two symbols $A$ and $B$ satisfying $\exp(A) = X$ and $\exp(B) = Y$.

    To build $\boxds(X, Y)$, we use \cref{lm:boxdsslp-update}.
    We construct $A$ and $B$ with $\exp(A) = X$ and $\exp(B) = Y$ by building perfectly balanced parse trees on top of $X$ and $Y$ and using fresh symbols for all internal nodes. This step takes $\Oh(n)$ time and results in a grammar of $\Oh(n)$ different symbols.
    In order to streamline the presentation, consider an arbitrary terminal $\$$.
    We build $\mds(\BMw(\$, \$))$ and thus $\boxdsslp(\$, \$)$ trivially in constant time.
    By using \cref{lm:boxdsslp-update}, we can transform $\boxdsslp(\$, \$)$ into $\boxdsslp(A, B)$ in $\Oh(W n^2 \log n)$ time (or $\Oh(n^2 \sqrt k \log n)$ for the case of $\boxdskslp$).

    The matrix retrieval operation is implemented trivially as we store $\mds(\BMw(A, B))$.
    For the alignment retrieval operation, we use \cref{lm:boxdsslp-query}.

    It remains to describe how to perform a mega-edit in $X$; a mega-edit in $Y$ can be performed analogously.
    Say, $\boxdsslp(A, B)$ underlies the hierarchical alignment data structure for $X$ and $Y$.
    That is, $\exp(A) = X$, $\exp(B) = Y$, and we have access to $\boxdsslp(A, B)$.
    Consider the mega-edit that is applied to $X$ to obtain $X'$.
    Suppose that the mega-edit is a copy-exponentiate-paste operation; the remaining cases are processed analogously.
    Furthermore, assume that the parameters from the definition of a copy-exponentiate-paste operation satisfy $\ell > 0$, $r < |X|$, and $p \in \fragmentoo{0}{|X|}$.
    In this case, a symbol of $\slp$ representing $X'$ can be obtained as
    \[A' \coloneqq \Merge(\Split(A,0,p), \Merge(\Power(\Split(A,\ell,r),s),\Split(A,p,\len(A)))).\]
    In the special cases of $\ell = 0$, $r = |X|$, $p = 0$, and $p = |X|$, the formulas are analogous but simplified.
    We then apply \cref{lm:boxdsslp-update} to build $\boxdsslp(A', B)$ (which will underlie $\boxds(X', Y)$) from $\boxdsslp(A, B)$.

    In all cases, we build $A'$ from $A$ via constantly many applications of $\Split$, $\Merge$, and $\Power$, each of which takes $\Oh(\log n)$ time and creates $\Oh(\log n)$ new symbols that were not present in $\dep(A)$.
    Therefore, we have $|\dep(A') \setminus \dep(A)| = \Oh(\log n)$, and thus the application of \cref{lm:boxdsslp-update} takes $\Oh(W n \log^2 n)$ time (or $\Oh(n \sqrt k \log^2 n)$ for the case of $\boxdskslp$).
\end{proof}

\section{Self-Edit Distance}\label{sec:sed}

As discussed in \cref{sec:overview}, one crucial ingredient of our algorithms is the notions of self-alignments and self-edit distance introduced by Cassis, Kociumaka, and Wellnitz in \cite{CKW23}.
We list some existing facts about self-edit distance and introduce our main combinatorial insights in \cref{lem:intersect} and \cref{cor:intersect}.

\begin{definition}[{\cite{CKW23}}] \label{def:self-ed}
    Say that an alignment $\cA : X \onto X$ is a \emph{self-alignment}
    if $\cA$ does not align any character $X\position{x}$ to itself.
    We define the \emph{self-edit distance} of $X$ as $\sed(X) \coloneqq \min_\cA \ed_{\cA}(X, X)$,
    where the minimization ranges over all self-alignments $\cA : X \onto X$. In words,
    $\sed(X)$ is the minimum (unweighted) cost of a self-alignment.
\lipicsEnd
\end{definition}

We can interpret a self-alignment as a
$(0, 0) \leadsto (|X|, |X|)$ path in the alignment graph $\AG(X, X)$
that does not contain any edges of the main diagonal.

We list some fundamental properties of $\sed$.

\begin{lemma}[Properties of $\sed$, extension of {\cite[Lemma 4.2]{CKW23}}]\label{fct:selfed-properties}
    Let $X \in \Sigma^*$ denote a string. Then, all of the following hold:
    \begin{description}
        \item[Monotonicity.] For any $\ell' \le \ell \le r \le r'\in [0 \dd |X|]$, we have
            \(
                \sed(X\fragmentco{\ell}{r}) \leq \sed(X\fragmentco{\ell'}{r'}).
            \)
        \item[Sub-additivity.] For any $m \in [0 \dd |X|]$, we have
            \(
                \sed(X) \leq \sed(X\fragmentco{0}{m}) + \sed(X\fragmentco{m}{|X|}).
            \)
        \item[Triangle inequality.] For any $Y\in \Sigma^*$, we have
            \(
                \sed(Y) \le \sed(X)+2\cdot\ed(X,Y).
            \)
        \item[Continuity.] For any $i \in \fragmentoo{0}{|X|}$, we have
            \[
                \sed(X\fragmentco{0}{i}) \le \sed(X\fragmentco{0}{i + 1}) \le \sed(X\fragmentco{0}{i}) + 1.
            \]
    \end{description}
\end{lemma}

\begin{proof}
    The first three properties are proven in {\cite[Lemma 4.2]{CKW23}}.
    Hence, it remains to prove continuity.
    $\sed(X\fragmentco{0}{i}) \le \sed(X\fragmentco{0}{i + 1})$ follows from monotonicity of $\sed$.
    We now prove $\sed(X\fragmentco{0}{i + 1}) \le \sed(X\fragmentco{0}{i}) + 1$.

    Consider an optimal self-alignment $(x_i, y_i)_{i=0}^t \eqqcolon \cA : X\fragmentco{0}{i} \onto X\fragmentco{0}{i}$.
    We have $(x_t, y_t) = (i, i)$.
    Note that $(x_{t - 1}, y_{t - 1}) \neq (i - 1, i - 1)$ as $\cA$ is a self-alignment.
    Therefore, $(x_{t - 1}, y_{t - 1}) = (i - 1, i)$ or $(x_{t - 1}, y_{t - 1}) = (i, i - 1)$.
    Without loss of generality, assume that $(x_{t - 1}, y_{t - 1}) = (i, i - 1)$.
    Consider a self-alignment $(x'_i, y'_i)_{i=0}^{t + 1} \eqqcolon \cB : X\fragmentco{0}{i + 1} \onto X\fragmentco{0}{i + 1}$ where $(x'_i, y'_i) = (x_i, y_i)$ for $i \in [0 \dd t - 1]$, $(x'_t, y'_t) = (i + 1, i)$, and $(x'_{t + 1}, y'_{t + 1}) = (i + 1, i + 1)$.
    We obtain
    \begin{align*}
        \sed(X\fragmentco{0}{i + 1}) &\le \ed_{\cB}(X\fragmentco{0}{i + 1}, X\fragmentco{0}{i + 1})\\
                                     &\le \ed_{(x_i, y_i)_{i = 0}^{t - 1}}(X\fragmentco{0}{i}, X\fragmentco{0}{i-1}) + 2\\
                                     &= (\ed_{\cA}(X\fragmentco{0}{i}, X\fragmentco{0}{i}) - 1) + 2\\
                                     &= \sed(X\fragmentco{0}{i}) + 1.
\end{align*}
Note that this property does not hold for $i = 0$ as $\sed(X\fragmentco{0}{0}) = 0$ and $\sed(X\fragmentco{0}{1}) = 2$.
\end{proof}

Similarly to \cref{lm:k2-ed}, there is a \modelname{} algorithm that computes $\sed(X)$ if this value is bounded by $k$, in time $\Oh(k^2)$.

\begin{fact}[{\cite[Lemma 4.5]{CKW23}}] \label{lm:selfed}
    There is an $\Oh(k^2)$-time \modelname{} algorithm that, given a string $X\in \Sigma^*$
    and an integer $k\in \ZZ_{> 0}$ determines whether $\sed(X)\le k$ and, if so,
    retrieves the breakpoint representation of an optimal self-alignment $\cA : X \onto X$.
    \lipicsEnd
\end{fact}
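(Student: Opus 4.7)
}

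The plan is to mimic the Landau--Vishkin algorithm~\cite{LV88} for ordinary edit distance, but applied to the subgraph $G$ of $\AG(X,X)$ obtained by deleting every diagonal edge $(x,x)\to(x+1,x+1)$ on the main diagonal; by \cref{def:self-ed}, a $(0,0)\leadsto (|X|,|X|)$ path in $G$ of length $c$ is exactly a self-alignment of cost $c$, so $\sed(X)$ equals $\dist_G((0,0),(|X|,|X|))$. Concretely, for each $c\in \fragmentcc{0}{k}$ and each diagonal $d\in \fragmentcc{-c}{c}$ (using the convention $d=y-x$), I would maintain the furthest-reaching value
\[R[c][d] \;\coloneqq\; \max\{x\mid (x,x+d)\text{ is reachable from }(0,0)\text{ in }G\text{ by a path of cost }c\},\]
with $R[c][d]=-\infty$ if no such $x$ exists. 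Initially, $R[0][0]=0$ and $R[0][d]=-\infty$ for $d\ne 0$; note that, unlike standard Landau--Vishkin, I do \emph{not} extend $R[0][0]$ by an LCP-snake, because every such extension would match $X[x]$ to itself.

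The recurrence for $c\ge 1$ is the usual one, but with two modifications on the main diagonal. For $d\ne 0$:
\[R[c][d] \;=\; \texttt{extend}\bigl(d,\;\max\{R[c-1][d]+1,\; R[c-1][d-1],\; R[c-1][d+1]+1\}\bigr),\]
where $\texttt{extend}(d,x)\coloneqq x+\lceOp{X\fragmentco{x}{|X|}}{X\fragmentco{x+d}{|X|}}$ uses a single \lceOpName{} call from the \modelname{} interface; since $d\ne 0$, this snake only matches $X[x]$ with $X[x+d]$ for $x\ne x+d$ and is legal. For $d=0$:
\[R[c][0] \;=\; \max\{R[c-1][-1],\; R[c-1][1]+1\},\]
with \emph{no} substitution term (since $X[x]=X[x]$ forces a match, which is forbidden) and \emph{no} LCP extension (since it would match characters to themselves). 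After each update of $R[c][0]$, I check whether $R[c][0]\ge |X|$; the smallest such $c$ is $\sed(X)$, and if no $c\le k$ qualifies, I report $\sed(X)>k$. The running time is $\sum_{c=0}^{k} \Oh(c)=\Oh(k^2)$ since every diagonal update uses $\Oh(1)$ work and one \lceOpName{} query. To retrieve the breakpoint representation of an optimal self-alignment, I use the standard backward pass through the table $R[\cdot][\cdot]$: starting from $(|X|,|X|)$ at level $\sed(X)$, I reverse-engineer at each step whether the previous position came from diagonal $d$, $d-1$, or $d+1$, and insert breakpoints for each edit and for each snake endpoint, for a total of $\Oh(k^2)$ additional time.

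The main subtlety I would have to verify is that the restriction to the two modifications on diagonal $0$ is sufficient. Concretely, I need that any self-alignment path in $G$ decomposes into (i) edits and (ii) maximal snakes lying entirely on a non-zero diagonal; equivalently, whenever such a path visits a point $(x,x)$, the immediately following edge is non-diagonal, so diagonal $0$ is traversed only by a single edit step. This is automatic because the only diagonal edges leaving $(x,x)$ have been deleted from $G$, so a snake of length $\ge 1$ on diagonal $0$ cannot exist; any snake incident to $(x,x)$ either ends there or continues on a diagonal $d\ne 0$ after an insertion/deletion. Consequently, the Landau--Vishkin invariants carry over, and the two local rules above faithfully capture all shortest paths of $G$, yielding the claimed $\Oh(k^2)$-time \modelname{} algorithm.
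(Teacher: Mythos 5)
The paper does not include its own proof of this Fact; it is cited verbatim from \cite[Lemma 4.5]{CKW23}. Your proposal is a correct and self-contained proof, and the approach — a Landau--Vishkin-style furthest-reaching computation on the alignment graph $\AG(X,X)$ with the main-diagonal diagonal edges deleted, specialising the recurrence on diagonal $0$ by dropping both the LCP snake and the substitution term — is the natural one and is, to the best of my knowledge, what the cited source does as well. Your verification of the key invariant (any snake incident to $(x,x)$ terminates at $(x,x)$ because the outgoing diagonal edge has been removed, so diagonal~$0$ is only ever traversed as an isolated pivot reached and left by indels) is exactly the point that needs care, and you handle it correctly. Two routine details worth stating explicitly if you were to polish this: (i) define $R[c][d]$ as the furthest $x$ reachable with cost \emph{at most} $c$ (or argue monotonicity via insert-then-delete) so the standard Landau--Vishkin correctness invariant transfers cleanly, since with the ``cost exactly $c$'' definition monotonicity in $c$ is not immediate; and (ii) clamp $R[c][d]$ so that both coordinates of $(x,x+d)$ lie in $[0,|X|]$, discarding out-of-range values, which is what makes the $\lceOpName$ calls well-defined. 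Neither affects the $\Oh(k^2)$ PILLAR complexity or the $\Oh(k)$-size breakpoint reconstruction.
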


For our purposes, we slightly extend this fact.

\begin{fact}[Extension of {\cite[Lemma 4.5]{CKW23}}] \label{lm:selfed2}
    There is an $\Oh(k^2)$-time \modelname{} algorithm that, given a string $X\in \Sigma^*$
    and an integer $k\in \ZZ_{> 0}$, computes the longest prefix $X\fragmentco{0}{i}$ and the longest suffix $X\fragmentco{j}{|X|}$ of $X$ such that $\sed(X\fragmentco{0}{i}), \sed(X\fragmentco{j}{|X|}) \le k$.
    Furthermore, the algorithm
    retrieves the breakpoint representations of some optimal self-alignments $\cA : X\fragmentco{0}{i} \onto X\fragmentco{0}{i}$ and $\cB : X\fragmentco{j}{|X|} \onto X\fragmentco{j}{|X|}$.
\end{fact}

\begin{proof}
    The original statement of {\cite[Lemma 4.5]{CKW23}} allows determining whether $\sed(X) \le k$, and if so, retrieve the optimal self-alignment of $X$.
    However, similarly to the classical Landau--Vishkin algorithm \cite{LV88}, this algorithm additionally for each $j \in \fragmentcc{1}{k}$ computes $\ell_j$ such that $X\fragmentco{0}{\ell_j}$ is the largest prefix of $X$ with $\sed(X\fragmentco{0}{\ell_j}) \le j$.
    Thus, the requested prefix is $X\fragmentco{0}{\ell_k}$, and we then run the alignment retrieval algorithm for it.

    As $\sed(Y) = \sed(Y^R)$ for any string $Y$, the computation of the longest suffix with self-edit distance at most $k$ is analogous.
\end{proof}

An important property of $\sed$ is that if $\sed(X)$ is large, then any two cheap alignments of $X$ onto $Y$ intersect.

\begin{fact}[{\cite[Lemma 4.3]{CKW23}}]\label{lem:sed}
    Consider strings $X,Y\in \Sigma^*$. If two alignments $\cA,\cB\in X\onto Y$
    intersect only at the endpoints $(0,0)$ and $(|X|,|Y|)$, then $\sed(X) \le \ed_{\cA}(X,Y) + \ed_{\cB}(X,Y)$.
\lipicsEnd
\end{fact}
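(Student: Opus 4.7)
The natural candidate for a self-alignment of $X$ that witnesses the inequality is the composition $\cC \coloneqq \cB^{-1}\circ \cA : X \onto X$ supplied by \cref{fct:triangle}. Since the unweighted weight function satisfies the triangle inequality ($w(a,b)\le w(a,c)+w(c,b)$ is trivially verified by cases), \cref{fct:triangle} gives
\[\ed_\cC(X, X) \;\le\; \ed_\cA(X, Y) + \ed_{\cB^{-1}}(Y, X) \;=\; \ed_\cA(X, Y) + \ed_\cB(X, Y),\]
where the last equality holds because inverting an alignment swaps the roles of insertions and deletions, leaves matches and substitutions intact, and in particular preserves the total number of edits. Consequently, as soon as we know that $\cC$ is a valid self-alignment, the desired inequality follows from the definition of $\sed$.

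What remains is to check that $\cC$ does not align any character $X\position{x}$ to itself. Suppose, for contradiction, that it does. By the characterization of $\circ$ given in \cref{fct:triangle}, there must exist $y$ such that both $\cA$ and $\cB$ align $X\position{x}$ to $Y\position{y}$, which means that both alignments traverse the diagonal edge $(x, y)\to (x+1, y+1)$. Consequently, both endpoints of that edge belong to $\cA \cap \cB$. By hypothesis, this intersection equals $\{(0, 0), (|X|, |Y|)\}$, and since the two points in question differ by $(1, 1)$ in both coordinates, the only possibility is $(x, y) = (0, 0)$ and $(x+1, y+1) = (|X|, |Y|)$. Hence $|X| = |Y| = 1$ and both $\cA, \cB$ coincide with the single diagonal edge, yielding $\cA = \cB$; this contradicts the implicit requirement that the two given alignments be distinct (this non-degeneracy is necessary because otherwise the statement would fail already for $X = Y$ of length $1$, where the identical diagonal alignments ``intersect only at the endpoints'' yet $\sed(X) = 2 > 0$).

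The main obstacle is the second step: ruling out that the composition $\cC$ accidentally aligns some $X\position{x}$ to itself. The crux is the planar-geometric observation that a single shared diagonal edge of $\cA$ and $\cB$ automatically produces two common grid points, so the endpoint-only intersection hypothesis, combined with distinctness, excludes every opportunity for $\cC$ to align a character to itself. Once this combinatorial point is nailed down, the triangle inequality for alignment composition turns the obvious detour $X \onto Y \onto X$ into the required self-alignment.
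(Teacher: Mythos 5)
Your argument is correct, and since the paper cites this as \cite[Lemma~4.3]{CKW23} without reproducing a proof, there is no in-paper argument to compare against. The route via the composition $\cC \coloneqq \cB^{-1}\circ\cA$ is the natural one: the cost bound follows cleanly from the (unweighted) triangle inequality of \cref{fct:triangle}, and the self-alignment property follows from the observation that $\cC$ aligning $X\position{x}$ to itself would force both $\cA$ and $\cB$ to traverse the same diagonal edge, producing two shared grid points and thereby violating the endpoint-only intersection hypothesis (unless $|X|=|Y|=1$).

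Two small remarks. First, you correctly spot that the statement as literally written has a degenerate counterexample (take $|X|=|Y|=1$ with $X\position{0}=Y\position{0}$ and $\cA=\cB$ the single matching step: then $\cA\cap\cB=\{(0,0),(1,1)\}$ consists only of the endpoints, both costs vanish, yet $\sed(X)=2$). Attributing this to an implicit non-degeneracy assumption is fair and exactly the right instinct; it is a flaw in the cited statement rather than a gap in your argument. Second, your sentence ``there must exist $y$ such that both $\cA$ and $\cB$ align $X\position{x}$ to $Y\position{y}$'' deserves a half-line of justification from the itemized characterization of $\circ$ in \cref{fct:triangle}: $\cB^{-1}\circ\cA$ aligning $X\position{x}$ to $X\position{x}$ gives some $y$ with $\cA$ aligning $X\position{x}$ to $Y\position{y}$ and $\cB^{-1}$ aligning $Y\position{y}$ to $X\position{x}$, and the latter is equivalent to $\cB$ aligning $X\position{x}$ to $Y\position{y}$. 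Once stated, the rest of your argument goes through exactly as written.
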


\begin{corollary}\label{cor:sed}
    Consider strings $X,Y\in \Sigma^*$ and alignments $\cA, \cB : X \onto Y$.
    If $(x,y)$ and $(x',y')$ are two subsequent 
    intersection points in $\cA \cap \cB$,
    then $\sed(X\fragmentco{x}{x'})\le \ed_\cA(X,Y)+\ed_\cB(X,Y)$.
\end{corollary}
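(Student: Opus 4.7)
The plan is to reduce to the situation of \cref{lem:sed} by restricting the two alignments to the fragment between the two consecutive intersection points. Since $(x,y)$ and $(x',y')$ are both in $\cA\cap\cB$, the alignments $\cA$ and $\cB$ both pass through these two vertices, and hence induce sub-alignments $\cA', \cB' : X\fragmentco{x}{x'} \onto Y\fragmentco{y}{y'}$ (viewing $(x,y)$ and $(x',y')$ as the endpoints, i.e., as $(0,0)$ and $(x'-x,y'-y)$ in $\AG(X\fragmentco{x}{x'}, Y\fragmentco{y}{y'})$).

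The crucial observation is that, because $(x,y)$ and $(x',y')$ are \emph{subsequent} intersection points in $\cA\cap\cB$, the induced sub-alignments $\cA'$ and $\cB'$ intersect only at their endpoints. This is exactly the hypothesis of \cref{lem:sed} applied to the strings $X\fragmentco{x}{x'}$ and $Y\fragmentco{y}{y'}$. Therefore
\[\sed(X\fragmentco{x}{x'}) \le \ed_{\cA'}(X\fragmentco{x}{x'}, Y\fragmentco{y}{y'}) + \ed_{\cB'}(X\fragmentco{x}{x'}, Y\fragmentco{y}{y'}).\]

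Finally, since $\cA'$ is obtained from $\cA$ by restricting to a contiguous portion of its edges, and all edge costs are non-negative, we have $\ed_{\cA'}(X\fragmentco{x}{x'}, Y\fragmentco{y}{y'}) \le \ed_{\cA}(X, Y)$, and analogously for $\cB'$ and $\cB$. Summing these two bounds yields $\sed(X\fragmentco{x}{x'}) \le \ed_\cA(X,Y) + \ed_\cB(X,Y)$, as required.

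There is no real obstacle: the statement is essentially a local version of \cref{lem:sed}, and the only thing to check carefully is that the restriction to subsequent intersection points produces sub-alignments that touch only at the endpoints — which is immediate from the definition of \emph{subsequent} intersection points.
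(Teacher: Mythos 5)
Your proof is correct and matches the paper's own argument essentially step for step: restrict $\cA$ and $\cB$ to the fragment between the two subsequent intersection points, observe the restrictions meet only at the endpoints, apply \cref{lem:sed}, and bound the restricted costs by the full costs. Nothing to add.
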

\begin{proof}
    Let $X'=X\fragmentco{x}{x'}$ and $Y'=Y\fragmentco{y}{y'}$.
    Since $(x,y),(x',y')\in \cA\cap \cB$, 
    we can restrict the alignments $\cA$ and $\cB$
    to alignments $\cA',\cB': X'\onto Y'$.
    Moreover, since $(x,y),(x',y')$ are two subsequent points in the intersection
    $\cA\cap \cB$, we conclude that $\cA'$ and $\cB'$ intersect only at the endpoints.
    Hence, \cref{lem:sed} yields $\sed(X') \le \ed_{\cA'}(X',Y') + \ed_{\cB'}(X',Y')$,
    that is, $\sed(X\fragmentco{x}{x'})\le \ed_{\cA'}(X\fragmentco{x}{x'},Y\fragmentco{y}{y'})
    + \ed_{\cB'}(X\fragmentco{x}{x'},Y\fragmentco{y}{y'})$.
    The final claim follows from $\ed_{\cA'}(X',Y')\le \ed_{\cA}(X,Y)$ and $\ed_{\cB'}(X',Y')\le \ed_{\cB}(X,Y)$.
\end{proof}

We use this fact to derive our main combinatorial lemmas that describe how to obtain an optimal alignment of $X$ onto $Y$ given optimal alignments of some fragments of $X$ and $Y$ if these fragments intersect by parts of ``sufficiently large'' self-edit distance.

\begin{lemma}\label{lem:intersect}
    Consider a normalized weight function $w: \Esigma^2 \to\Rz$, strings $X,Y\in \Sigma^*$, an alignment $(x_i,y_i)_{i=0}^t\eqqcolon\cA : X\onto Y$, and indices $0\le \ell \le r \le t$ satisfying at least one of the following conditions:
    \[\sed(X\fragmentco{x_\ell}{x_r})>4\cdot\wed_\cA(X,Y), \qquad \ell=0,\quad\text{or}\quad r=t.\]
    If $\cB_L : X\fragmentco{0}{x_r}\onto Y\fragmentco{0}{y_r}$  and $\cB_R : X\fragmentco{x_\ell}{|X|}\onto Y\fragmentco{y_\ell}{|Y|}$ are $w$-optimal alignments, then $\cB_L\cap \cB_R\ne \emptyset$ and every point $(x^*,y^*)\in \cB_L\cap \cB_R$ satisfies
    \[\wed(X,Y)=\wed_{\cB_L}(X\fragmentco{0}{x^*},Y\fragmentco{0}{y^*})+\wed_{\cB_R}(X\fragmentco{x^*}{|X|},Y\fragmentco{y^*}{|Y|}).\]
\end{lemma}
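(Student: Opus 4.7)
I will handle the endpoint cases ($\ell=0$ and $r=t$) and the self-edit case separately; set $K\coloneqq\wed_\cA(X,Y)$.

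\emph{Endpoint cases.} If $\ell=0$, then $\cB_R$ aligns all of $X$ to $Y$ and is $w$-optimal, so $\wed(X,Y)=\wed_{\cB_R}(X,Y)$; both $\cB_L$ and $\cB_R$ share $(0,0)$, hence $\cB_L\cap\cB_R\ne\emptyset$. For any $(x^*,y^*)\in\cB_L\cap\cB_R$, sub-optimality of $\cB_L$ on its prefix subproblem and of $\cB_R$ restricted to the same prefix gives $\wed_{\cB_L}(X\fragmentco{0}{x^*},Y\fragmentco{0}{y^*})=\wed(X\fragmentco{0}{x^*},Y\fragmentco{0}{y^*})=\wed_{\cB_R}(X\fragmentco{0}{x^*},Y\fragmentco{0}{y^*})$, and adding $\wed_{\cB_R}(X\fragmentco{x^*}{|X|},Y\fragmentco{y^*}{|Y|})$ to both sides yields the claim. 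The case $r=t$ is symmetric.

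\emph{Self-edit case, existence.} I will extend both alignments to $X\onto Y$ by appending parts of $\cA$: $\hat\cB_L\coloneqq\cB_L\cdot\cA|_{(x_r,y_r)\to(|X|,|Y|)}$ and $\hat\cB_R\coloneqq\cA|_{(0,0)\to(x_\ell,y_\ell)}\cdot\cB_R$. Optimality of $\cB_L$ against the restriction of $\cA$ to $X\fragmentco{0}{x_r}\onto Y\fragmentco{0}{y_r}$ gives $\wed_{\hat\cB_L}(X,Y)\le\wed_\cA(X,Y)=K$; analogously $\wed_{\hat\cB_R}(X,Y)\le K$. Normalization of $w$ further implies $\ed_{\hat\cB_L}(X,Y),\ed_{\hat\cB_R}(X,Y)\le K$. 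Both paths share $(0,0)$ and $(|X|,|Y|)$; if no intersection had $x$-coordinate in $[x_\ell,x_r]$, then consecutive intersection points $(x_a,y_a),(x_b,y_b)$ would satisfy $x_a<x_\ell$ and $x_b>x_r$, and \cref{cor:sed} together with monotonicity of $\sed$ (\cref{fct:selfed-properties}) would yield $\sed(X\fragmentco{x_\ell}{x_r})\le\sed(X\fragmentco{x_a}{x_b})\le 2K$, contradicting $\sed(X\fragmentco{x_\ell}{x_r})>4K$. So some $(x^*,y^*)\in\hat\cB_L\cap\hat\cB_R$ has $x^*\in[x_\ell,x_r]$, and since $\hat\cB_L=\cB_L$ on $[0,x_r]$ and $\hat\cB_R=\cB_R$ on $[x_\ell,|X|]$, this point lies in $\cB_L\cap\cB_R$.

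\emph{Self-edit case, decomposition.} At any intersection $(x^*,y^*)\in\cB_L\cap\cB_R$, sub-optimality rewrites the RHS as $\wed(X\fragmentco{0}{x^*},Y\fragmentco{0}{y^*})+\wed(X\fragmentco{x^*}{|X|},Y\fragmentco{y^*}{|Y|})\ge\wed(X,Y)$, with equality iff $(x^*,y^*)$ lies on some $w$-optimal alignment of $X$ onto $Y$. The RHS is moreover constant on the totally ordered set $\cB_L\cap\cB_R$: between any two consecutive intersections $(x^*_1,y^*_1)\le(x^*_2,y^*_2)$, the sub-paths of $\cB_L$ and of $\cB_R$ both realize $\wed(X\fragmentco{x^*_1}{x^*_2},Y\fragmentco{y^*_1}{y^*_2})$ by sub-optimality, so shifting the split point moves identical contributions between the two terms. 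Hence it suffices to verify equality for a single well-chosen intersection. The main obstacle is this last step: showing that the specific $(x^*,y^*)$ produced above lies on some $w$-optimal alignment. The plan is to fix any $w$-optimal $\cO:X\onto Y$ and apply \cref{cor:sed} once more to the concatenation $\cB\coloneqq\cB_L|_{(0,0)\to(x^*,y^*)}\cdot\cB_R|_{(x^*,y^*)\to(|X|,|Y|)}$ (whose cost is at most $2K$ by telescoping against $\cA$) and $\cO$ (cost at most $K$); the resulting slack $3K<4K$ still forces an intersection $(x^\dagger,y^\dagger)\in\cB\cap\cO$ with $x^\dagger\in[x_\ell,x_r]$, which lies on the optimal $\cO$ and on either $\cB_L$ (if it precedes $(x^*,y^*)$ on $\cB$) or $\cB_R$ (if it follows). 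A crossover argument in the planar alignment graph, combined with the invariance of the RHS across intersections, then transfers optimality back to $(x^*,y^*)$ itself and completes the proof.
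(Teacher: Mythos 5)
Your endpoint cases and your existence argument are essentially correct and closely mirror the paper's approach: for existence, you form $\hat\cB_L$ and $\hat\cB_R$ exactly as the paper forms its $\cC_L$ and $\cC_R$, bound their costs by $K$, and invoke \cref{cor:sed} together with monotonicity of $\sed$ to obtain an intersection with $x$-coordinate strictly inside $(x_\ell,x_r)$ (you write $[x_\ell,x_r]$, but the contradiction argument in fact yields the open interval, which is what you need for the point to be unambiguously on $\cB_L$ and $\cB_R$ rather than on the grafted $\cA$-pieces). Your observation that the right-hand side is constant over the totally ordered set $\cB_L\cap\cB_R$ is correct and is a pleasant alternative to the paper's sub-additivity WLOG, although it turns out not to be strictly necessary.

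The genuine gap is the last step. You produce $(x^\dagger,y^\dagger)\in\cB\cap\cO$ with $x^\dagger\in(x_\ell,x_r)$ and note that it lies on $\cO$ and on one of $\cB_L,\cB_R$. But then you gesture at ``a crossover argument \ldots{} combined with the invariance of the RHS across intersections'' to ``transfer optimality back to $(x^*,y^*)$.'' Neither piece of this closes the argument. The invariance you established applies only to points of $\cB_L\cap\cB_R$, and $(x^\dagger,y^\dagger)$ need not lie in that set, so you cannot invoke it to move from $(x^\dagger,y^\dagger)$ to $(x^*,y^*)$. And a topological crossover argument does not produce the needed numerical equality. What actually finishes the proof is a short chain of (in)equalities, exactly as in the paper: suppose, say, $x^\dagger\le x^*$, so $(x^\dagger,y^\dagger)\in\cB_L\cap\cO$. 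Split $\wed(X,Y)=\wed_\cO$ at $(x^\dagger,y^\dagger)$, replace the $\cO$-prefix cost by the $\cB_L$-prefix cost (both $w$-optimal for $X\fragmentco{0}{x^\dagger}\onto Y\fragmentco{0}{y^\dagger}$ by \cref{fct:split-alignment}), then bound the suffix cost $\wed(X\fragmentco{x^\dagger}{|X|},Y\fragmentco{y^\dagger}{|Y|})\le\wed(X\fragmentco{x^\dagger}{x^*},Y\fragmentco{y^\dagger}{y^*})+\wed(X\fragmentco{x^*}{|X|},Y\fragmentco{y^*}{|Y|})$ by \cref{fct:split-alignment}, identify the middle term with $\wed_{\cB_L}(X\fragmentco{x^\dagger}{x^*},Y\fragmentco{y^\dagger}{y^*})$ by optimality, and telescope the $\cB_L$-prefix and $\cB_L$-middle pieces into $\wed_{\cB_L}(X\fragmentco{0}{x^*},Y\fragmentco{0}{y^*})=\wed(X\fragmentco{0}{x^*},Y\fragmentco{0}{y^*})$. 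This yields $\wed(X,Y)\le\wed(X\fragmentco{0}{x^*},Y\fragmentco{0}{y^*})+\wed(X\fragmentco{x^*}{|X|},Y\fragmentco{y^*}{|Y|})$, and together with the trivial reverse inequality gives the claim; the case $x^\dagger\ge x^*$ is symmetric via $\cB_R$. The paper does this chain after WLOG-ing $\sed(X\fragmentco{x^*}{x_r})>2k$ and intersecting $\cO$ with $\cC_L$; your variant of intersecting $\cO$ with $\cB$ (cost $\le 2K$, slack $3K<4K$) is a valid alternative that avoids the WLOG by splitting into the ``$x^\dagger\lessgtr x^*$'' cases, but the chain itself — the actual quantitative argument — is the content you have left out.
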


\begin{figure}
    \begin{center}
        \begin{tikzpicture}[y=-1cm, scale=.9]
            \useasboundingbox (-0.5, -0.5) rectangle (8.5, 8.5);
            \scope[transform canvas={scale=.7}]
                \def\W{12}
\def\H{12}
\def\cnt{6}
\def\sizeofdot{5pt}
\def\gp{0.0}

\draw[line width=3pt, violet!30, line cap=round] (0, 0) -- (1, 1) -- (1, 1.5) -- (4, 4.5) -- (5, 4.5) -- (6.5, 6) -- (6.5, 8) -- (9, 10.5) -- (10.5, 10.5) -- (\H, \W);

\draw (\H / 3, \W) node[below] {$x_{\ell}$};
\draw (2 * \H / 3, \W) node[below] {$x_{r}$};
\draw (0, 4.5) node[left] {$y_{\ell}$};
\draw (0, 9.5) node[left] {$y_{r}$};

\draw[latex-latex] (\H / 3, -0.3) to node[midway, above] {$\sed > 4 \cdot \wed_{\cA}(X, Y)$} (2 * \H / 3, -0.3);
\draw[dashed, gray] (\H / 3, 0) -- (\H / 3, \W);
\draw[dashed, gray] (2 * \H / 3, 0) -- (2 * \H / 3, \W);
\draw[dashed, gray] (0, 4.5) -- (\H / 3, 4.5);
\draw[dashed, gray] (0, 9.5) -- (2 * \H / 3, 9.5);

\draw[line width=4.5pt, orange!30, line cap=round] (0, 0) -- (1.5, 0) -- (4, 2.5) -- (4, 3.5) -- (4.5,4) -- (4.5,5) -- (6.5, 5) -- (7.5, 6) -- (7.5, 8) -- (8, 8.5) -- (2 * \H / 3, 9.5);

;
\draw[line width=4.5pt, blue!30, line cap=round] (\H / 3, 4.5) -- (5, 5.5) -- (\H / 2, 5.5) -- (6, 6.5) -- (7.5, 8) -- (8.5, 8) -- (11, 10.5) -- (11, 11) -- (\H, \W);

\node[circle,draw=darkgreen, fill=darkgreen, inner sep=0pt,minimum size=\sizeofdot] (olddot) at (4.5,5) {};
\node[circle,draw=darkgreen, fill=darkgreen, inner sep=0pt,minimum size=\sizeofdot] (olddot) at (7.5,8) {};

\draw[line width=2pt, darkgreen!70, line cap=round](0, 0) -- (1.5, 0) -- (4, 2.5) -- (4, 3.5) -- (4.5, 4) -- (4.5,5) -- (6.5, 5) -- (7.5, 6) -- (7.5, 8) -- (8.5, 8) -- (11, 10.5) -- (11, 11) -- (\H, \W);

\node[circle,draw=blue, fill=blue, inner sep=0pt,minimum size=\sizeofdot] at (\H / 3, 4.5) {};
\node[circle,draw=orange, fill=orange, inner sep=0pt,minimum size=\sizeofdot] at (2 * \H / 3, 9.5) {};

\draw[thick] (0, 0) rectangle (\H, \W);
            \endscope
        \end{tikzpicture}
    \end{center}

    \caption{Illustration of \cref{lem:intersect}. An alignment $\cA: X\onto Y$ is depicted in \textcolor{violet!50}{violet}. Point $(x_{\ell}, y_{\ell})$ is given in \textcolor{blue!50}{blue}, and point $(x_r, y_r)$ is given in \textcolor{orange!70}{orange}. The $w$-optimal alignment $\cB_L : X\fragmentco{0}{x_r}\onto Y\fragmentco{0}{y_r}$ is given in \textcolor{orange!70}{orange}, and the $w$-optimal alignment $\cB_R : X\fragmentco{x_\ell}{|X|}\onto Y\fragmentco{y_\ell}{|Y|}$ is given in \textcolor{blue!50}{blue}. The \textcolor{darkgreen!70}{green} dots represent the intersection points of $\cB_L$ and $\cB_R$. The globally $w$-optimal \textcolor{darkgreen!70}{green} alignment is a combination of $\cB_L$ and $\cB_R$.}
    \label{fig:first-combinatorial-lemma}
\end{figure}

\begin{proof}
    See \cref{fig:first-combinatorial-lemma}.
    Let us first consider the case when $\sed(X\fragmentco{x_\ell}{x_r})>4k$, where $k\coloneqq \wed_{\cA}(X,Y)$. 
    Consider alignments $\cC_L$ obtained by concatenating $\cB_L$ with $\cA_R\coloneqq (x_i,y_i)_{i=r}^{t}$ 
    and $\cC_R$ obtained by concatenating $\cA_L\coloneqq (x_i,y_i)_{i=0}^{\ell}$  with $\cB_R$.
    Observe that \begin{align*}
        \wed_{\cC_L}(X,Y) 
        &= \wed_{\cB_L}(X\fragmentco{0}{x_r}, Y\fragmentco{0}{y_r}) + \wed_{\cA}(X\fragmentco{x_r}{|X|}, Y\fragmentco{y_r}{|Y|})\\
        &\le \wed_{\cA}(X\fragmentco{0}{x_r}, Y\fragmentco{0}{y_r}) + \wed_{\cA}(X\fragmentco{x_r}{|X|}, Y\fragmentco{y_r}{|Y|})\\
        &= \wed_{\cA}(X,Y) = k.
    \end{align*}
    A symmetric argument yields $\wed_{\cC_R}(X,Y)\le k$.

    By \cref{cor:sed}, if $(x,y),(x',y')$ are consecutive points in $\cC_L\cap \cC_R$, then $\sed(X\fragmentco{x}{x'})\le 2k$.
    Due to $\sed(X\fragmentco{x_\ell}{x_r})> 4k$, there exists a point $(x^*,y^*)\in \cC_L\cap \cC_R$ such that $x_\ell < x^* < x_r$ and $y_\ell \le y^* \le y_r$. 
    In particular, $\cB_L \cap \cB_R \ne \emptyset$.

    For the remainder of the proof, let us pick $(x^*,y^*) \in \cB_L\cap \cB_R$.         
    Since $\sed(X\fragmentco{x_\ell}{x_r})> 4k$,
    sub-additivity of $\sed$ implies that $\sed(X\fragmentco{x_\ell}{x^*})>2k$ or $\sed(X\fragmentco{x^*}{x_r})>2k$. 
    By symmetry (up to reversal), it suffices to consider the case of $\sed(X\fragmentco{x^*}{x_r})>2k$.

    Consider a $w$-optimal alignment $\cO : X \onto Y$.
    Since $\wed_{\cO}(X,Y) \le \wed_{\cA}(X,Y) = k$, by \cref{cor:sed}, if $(x,y)$ and $(x',y')$ are two consecutive points in $\cC_L\cap \cO$, then $\sed(X\fragmentco{x}{x'})\le 2k$.
    Due to $\sed(X\fragmentco{x^*}{x_r})>2k$, there exists a point $(\bx,\by)\in \cC_L\cap \cO$ such that $x^* < \bx < x_r$ and $y^* \le \by \le y_r$. 
    In particular, $(\bx,\by)\in \cB_L \cap \cO$.
    Since the alignments $\cO$, $\cB_L$, and $\cB_R$ are $w$-optimal, we conclude that 
    \begin{align*}
        \bm{\wed(X,Y)} &= \wed_{\cO}(X,Y) \\
        &= \wed_{\cO}(X\fragmentco{0}{\bx}, Y\fragmentco{0}{\by}) +  \wed_{\cO}(X\fragmentco{\bx}{|X|}, Y\fragmentco{\by}{|Y|}) \\
        &= \wed_{\cB_L}(X\fragmentco{0}{\bx}, Y\fragmentco{0}{\by}) +  \wed_{\cO}(X\fragmentco{\bx}{|X|}, Y\fragmentco{\by}{|Y|}) \\
        &= \wed_{\cB_L}(X\fragmentco{0}{x^*}, Y\fragmentco{0}{y^*}) +\wed_{\cB_L}(X\fragmentco{x^*}{\bx}, Y\fragmentco{y^*}{\by}) +  \wed_{\cO}(X\fragmentco{\bx}{|X|}, Y\fragmentco{\by}{|Y|}) \\
        &= \wed_{\cB_L}(X\fragmentco{0}{x^*}, Y\fragmentco{0}{y^*}) +\wed(X\fragmentco{x^*}{\bx}, Y\fragmentco{y^*}{\by}) +  \wed(X\fragmentco{\bx}{|X|}, Y\fragmentco{\by}{|Y|}) \\
        &\ge \wed_{\cB_L}(X\fragmentco{0}{x^*}, Y\fragmentco{0}{y^*}) + \wed(X\fragmentco{x^*}{|X|}, Y\fragmentco{y^*}{|Y|}) \\
        &\bm{ = \wed_{\cB_L}(X\fragmentco{0}{x^*}, Y\fragmentco{0}{y^*}) + \wed_{\cB_R}(X\fragmentco{x^*}{|X|}, Y\fragmentco{y^*}{|Y|})} \\
        &= \wed(X\fragmentco{0}{x^*}, Y\fragmentco{0}{y^*}) + \wed(X\fragmentco{x^*}{|X|}, Y\fragmentco{y^*}{|Y|}) \\
        & \bm{\ge \wed(X,Y)}.
    \end{align*}

    It remains to consider the cases of $\ell=0$ and $r=t$. 
    By symmetry (up to reversal), we can solely focus on $\ell=0$.
    Then, $(x_\ell,y_\ell)=(x_0,y_0)=(0,0)\in \cB_L\cap \cB_R$.
    For the remainder of the proof, let us pick $(x^*,y^*) \in \cB_L\cap \cB_R$.     
    Since the alignments $\cB_L:X\fragmentco{0}{x_r}\onto Y\fragmentco{0}{y_r} $ and $\cB_R:X\onto Y$ are $w$-optimal, we conclude that 
    \begin{align*}
        \wed(X,Y) &= \wed_{\cB_R}(X,Y) \\
        &= \wed_{\cB_R}(X\fragmentco{0}{x^*}, Y\fragmentco{0}{y^*}) +\wed_{\cB_R}(X\fragmentco{x^*}{|X|}, Y\fragmentco{y^*}{|Y|})\\
        &= \wed(X\fragmentco{0}{x^*}, Y\fragmentco{0}{y^*}) +\wed_{\cB_R}(X\fragmentco{x^*}{|X|}, Y\fragmentco{y^*}{|Y|})\\
        &= \wed_{\cB_L}(X\fragmentco{0}{x^*}, Y\fragmentco{0}{y^*}) +\wed_{\cB_R}(X\fragmentco{x^*}{|X|}, Y\fragmentco{y^*}{|Y|}).\qedhere
    \end{align*}    
\end{proof}

\begin{proposition}\label{cor:intersect}
    Consider a normalized weight function $w: \Esigma^2 \to\Rz$, strings $X,Y\in \Sigma^*$, an alignment $(x_i,y_i)_{i=0}^t\eqqcolon\cA : X\onto Y$, and indices $0\le \ell \le p \le q \le r \le t$ such that
    \[\sed(X\fragmentco{x_\ell}{x_p})>4\cdot\wed_\cA(X,Y)\;\;\text{or}\;\;\ell=0,\quad\text{and}\quad \sed(X\fragmentco{x_q}{x_r})>4\cdot\wed_\cA(X,Y)\;\;\text{or}\;\; r=t.\]
    If $\cO_L : X\fragmentco{0}{x_p}\onto Y\fragmentco{0}{y_p}$, $\cO_M : X\fragmentco{x_\ell}{x_r}\onto Y\fragmentco{y_\ell}{y_r}$, and $\cO_R : X\fragmentco{x_q}{|X|}\onto Y\fragmentco{y_q}{|Y|}$ are $w$-optimal alignments, then $\cO_L\cap \cO_M\ne \emptyset \ne \cO_R\cap \cO_M$, and all points $(x^*_L,y^*_L)\in \cO_L\cap \cO_M$ and $(x^*_R,y^*_R)\in \cO_R\cap \cO_M$ satisfy
    \[\wed\!(X,Y)\!=\!\wed_{\cO_L}\!(X\fragmentco{0}{x^*_L},\!Y\fragmentco{0}{y^*_L})+\wed_{\cO_M}\!(X\fragmentco{x^*_L}{x^*_R},\! Y\fragmentco{y^*_L}{y^*_R})+\wed_{\cO_R}\!(X\fragmentco{x^*_R}{|X|},\!Y\fragmentco{y^*_R}{|Y|}).\]\maybe
\end{proposition}
\begin{proof}
See \cref{fig:new-divide-and-conquer} for an example situation with $p=q$.
Let us first apply \cref{lem:intersect} for an alignment $(x_i,y_i)_{i=0}^r :X\fragmentco{0}{x_r}\onto Y\fragmentco{0}{y_r}$ and $w$-optimal alignments $\cO_L : X\fragmentco{0}{x_p}\onto Y\fragmentco{0}{y_p}$ and $\cO_M : X\fragmentco{x_\ell}{x_r}\onto Y\fragmentco{y_\ell}{y_r}$. 
The assumptions are satisfied because $\ell=0$ or $\sed(X\fragmentco{x_\ell}{x_p})>4\cdot\wed_\cA(X,Y)\ge 4\cdot\wed_\cA(X\fragmentco{0}{x_r},Y\fragmentco{0}{y_r})$.
Consequently, $\cO_L\cap \cO_M\ne \emptyset$ and every $(x^*_L,y^*_L)\in \cO_L\cap \cO_M$ satisfies
\[\wed(X\fragmentco{0}{x_r},Y\fragmentco{0}{y_r})=\wed_{\cO_L}(X\fragmentco{0}{x^*_L},Y\fragmentco{0}{y^*_L})+\wed_{\cO_M}(X\fragmentco{x^*_L}{x_r},Y\fragmentco{y^*_L}{y_r}).\]
Let $\cB_L : X\fragmentco{0}{x_r}\onto Y\fragmentco{0}{y_r}$ be a $w$-optimal alignment that follows $\cO_L$ from $(0,0)$ to $(x^*_L,y^*_L)$ and then follows $\cO_M$ from $(x^*_L,y^*_L)$ to $(x_r,y_r)$.

Next, we apply \cref{lem:intersect} for the alignment $\cA :X\onto Y$ as well as $w$-optimal alignments $\cB_L : X\fragmentco{0}{x_r}\onto Y\fragmentco{0}{y_r}$ and $\cO_R : X\fragmentco{x_q}{|X|}\onto Y\fragmentco{y_q}{|Y|}$. 
The assumptions are satisfied because $r=t$ or $\sed(X\fragmentco{x_q}{x_r})>4\cdot\wed_\cA(X,Y)$.
The definition of $\cB_L$ implies that $\cB_L \cap \cO_R = \cO_M \cap \cO_R$.
Consequently,  $\cO_M \cap \cO_R = \cB_L\cap \cO_R\ne \emptyset$ and every $(x^*_R,y^*_R)\in \cB_L\cap \cO_R = \cO_M \cap \cO_R$ satisfies
\[\wed(X,Y)=\wed_{\cB_L}(X\fragmentco{0}{x^*_R},Y\fragmentco{0}{y^*_R})+\wed_{\cO_R}(X\fragmentco{x^*_R}{|X|},Y\fragmentco{y^*_R}{|Y|}).\]
Since $x^*_L \le x_p \le x_q \le x^*_R$ and $y^*_L \le y_p \le y_q \le y^*_R$, we have
\[\wed_{\cB_L}(X\fragmentco{0}{x^*_R},Y\fragmentco{0}{y^*_R})=\wed_{\cO_L}(X\fragmentco{0}{x^*_L},Y\fragmentco{0}{y^*_L})+\wed_{\cO_M}(X\fragmentco{x^*_L}{x^*_R},Y\fragmentco{y^*_L}{y^*_R}).\]
Therefore, 
\[\wed\!(X,Y)\!=\! \wed_{\cO_L}\!(X\fragmentco{0}{x^*_L},Y\fragmentco{0}{y^*_L})+\wed_{\cO_M}\!(X\fragmentco{x^*_L}{x^*_R}, Y\fragmentco{y^*_L}{y^*_R})+\wed_{\cO_R}\!(X\fragmentco{x^*_R}{|X|},Y\fragmentco{y^*_R}{|Y|})\]
holds as claimed.
\end{proof}

\section{Static Algorithm} \label{sec:full-algo}

In this section we describe our main static algorithm that, given two strings $X, Y \in \Sigma^{\le n}$ and a weight function $w : \Esigma^2 \to [0 \dd W]$, finds $k = \wed(X, Y)$ and a corresponding $w$-optimal alignment in time $\Oh(k^2 \log^2 n \cdot \min\{W, \sqrt k \log n\})$.
We first design an algorithm that works if $X$ has small self-edit distance, and then use it to design an algorithm for the general case.

\subsection{The Case of Small Self-Edit Distance}\label{sec:small-sed-algo}

If the input strings are highly compressible, in the sense that their self-edit distance is small, we use the following fact from \cite{CKW23} to decompose them into a small number of different pieces.

\begin{lemma}[{\cite[Lemma 4.6 and Claim 4.11]{CKW23}}]\label{lem:decomp2}
There is a \modelname{} algorithm that, given a positive integer $k$ and strings $X,Y \in \Sigma^{*}$ satisfying $\sed(X)\le k \le |X|$ and $\ed(X,Y)\le k$, in $\Oh(k^2)$ time builds a decomposition $X=\bigodot_{i=0}^{m-1} X_i$ and a sequence of fragments $(Y_i)_{i=0}^{m-1}$ of $Y$ such that
\begin{itemize}
    \item For $i\in \fragmentco{0}{m}$, each phrase $X_i=X\fragmentco{x_{i}}{x_{i+1}}$ is of length $x_{i+1}-x_i\in \fragmentco{k}{2k}$ and each fragment $Y_i=Y\fragmentco{y_i}{y'_{i+1}}$ satisfies $y_i=\max\{x_i-k,0\}$ and $y'_{i+1}=\min\{x_{i+1}+3k,|Y|\}$.
    \item There is a set $F \subseteq \fragmentco{0}{m}$ of size $|F|= \Oh(k)$ such that $X[x_i \dd x_{i+1})=X\fragmentco{x_{i-1}}{x_{i}}$ and $Y[y_i \dd y'_{i+1})=Y\fragmentco{y_{i-1}}{y'_{i}}$ holds for each $i\in \fragmentco{0}{m}\setminus F$ (in particular, $0\in F$).
\end{itemize}
The algorithm returns the set $F$ and, for all $i\in F$, the endpoints of $X_i=X\fragmentco{x_{i}}{x_{i+1}}$.\footnote{This determines the whole decomposition because $(x_i)_{i=\ell}^r$ is an arithmetic progression for every $\fragmentoc{\ell}{r}\subseteq \fragmentoc{0}{m}\setminus F$.}
\end{lemma}
\begin{proof}
Since \cite[Claim 4.11]{CKW23} is not a stand-alone statement, we provide some details for completeness.
A black-box application of \cite[Lemma 4.6]{CKW23} yields a decomposition $X=\bigodot_{i=0}^{m-1} X_i$ into phrases of length $|X_i|\in \fragmentco{k}{2k}$ and a set $F \subseteq [0 \dd m)$ of size at most $2k$ such that $X_i=X_{i-1}$ holds for each $i\in [0 \dd m)\setminus F$. 
The goal of \cite[Claim 4.11]{CKW23} is to add to $F$ every index $i\in \fragmentoo{0}{m}$ satisfying $Y_{i-1}\ne Y_i$.
For this, the algorithm constructs the breakpoint representation of an optimal (unweighted) alignment $\cA : X\onto Y$.
As argued in the proof of \cite[Claim 4.11]{CKW23}, the equality $Y_{i-1}=Y_i$ holds for each $i\in \fragmentcc{7}{m-8}$
such that $\fragmentcc{i-6}{i+7}\cap F = \emptyset$ and $\cA$ does not make any edit within $X_{i-6}X_{i-5}\cdots X_{i+7}$.
The number of indices violating at least one of these conditions does not exceed $14(1+|F|+k) \le 14\cdot 4k = 56k$

The application of \cite[Lemma 4.6]{CKW23} and the construction of the breakpoint representation of $\cA$ take $\Oh(k^2)$ time in the \modelname{} model (see \cref{lm:k2-ed}).
The final step of extending $F$ is implemented in $\Oh(k)$ time using a left-to-right scan of $F$ and the breakpoint representation of $\cA$ to determine, for each edit in $\cA$, the index $i$ of the affected phrase $X_i$.
\end{proof}

We now use the decomposition of \cref{lem:decomp2} to build a hierarchical alignment data structures of \cref{prp:hadstwo} for all pairs $(X_i, Y_i)$ of the decomposition.

\begin{lemma}\label{lem:addtoboxds}
Consider a weight function $w : \overline{\Sigma}^2 \to \fragmentcc{0}{W}$, a positive integer $k$, and two strings $X, Y \in \Sigma^{\le n}$ such that $\sed(X) \le k \le |X|$ and $\ed(X,Y)\le k$. 
Moreover, suppose that $X=\bigodot_{i=0}^{m-1}X_i$,  $(Y_i)_{i=0}^{m-1}$, and $F\subseteq\fragmentco{0}{m}$ satisfy the conditions in the statement of \cref{lem:decomp2}.
There is an algorithm that, given oracle access to $w$, the integer $k$, the strings $X,Y$, the set $F$, and the endpoints of $X_i$ for each $i\in F$, takes $\Oh(W\cdot k^2\log^2n)$ time plus $\Oh(k^2)$ \modelname{} operations and constructs for every $i \in F$ a hierarchical alignment data structure $\boxds(X_i, Y_i)$.

A variant of this algorithm that builds relaxed hierarchical alignment data structures $\boxdsk(X_i, Y_i)$ for all $i \in F$ takes $\Oh(k^{2.5}\log^2 n)$ time plus $\Oh(k^2)$ \modelname{} operations.
\end{lemma}
\begin{proof}
We use \cref{lm:selfed} to find the breakpoint representations of optimal self-alignments $\cA : X \onto X$ and $\cB : Y\onto Y$.
We can assume that neither alignment contains points below the main diagonal because the parts below the main diagonal can be mirrored along the main diagonal without changing the cost of the alignment.
Note that $\ed_{\cA}(X,X)\le k$ holds by our assumption and $\ed_{\cB}(Y,Y)\le \sed(X)+2\ed(X,Y)\le 3k$ follows from \cref{fct:selfed-properties}.
This initial phase of the algorithm takes $\Oh(k^2)$ time and \modelname{} operations.

We process indices $i\in F$ from left to right.

We build a hierarchical alignment data structure of \cref{prp:hadstwo} for $X_0$ and $Y_0$.
It takes time $\Oh(W \cdot k^2 \log n)$ (or $\Oh(k^{2.5}\log n)$ for the relaxed version).

For subsequent positions $i\in F\setminus\{0\}$, we assume that we already have computed $\boxds(X_j, Y_j)$ for $j=\max(F\cap \fragmentco{0}{i})$. 
Since $X_j=X_{j+1}=\cdots =X_{i-1}$ and $Y_j=Y_{j+1}=\cdots =Y_{i-1}$, we thus have $\boxds(X_{i-1}, Y_{i-1})$.
Our next goal is to build $\boxds(X_i, Y_i)$.
For this, we first restrict the alignment $\cA$ to $\cA_i : X_i \onto \cA(X_i)$.
We claim that $X_{i-1}$ can be transformed into $X_i$ via at most $2 \cdot \ed_{\cA_i}(X_i, \cA(X_i)) + 2$ mega-edits of \cref{def:mega-edit}.
The breakpoint representation of $\cA_i$ lets us decompose $X_i$ into individual characters that $\cA_i$ deletes or substitutes and fragments that $\cA_i$ matches perfectly.
We process them one by one while implicitly maintaining a string $X\fragmentco{x_{i-1}}{x}$ for some index $x$.
Initially, $x = x_i$ and $X\fragmentco{x_{i-1}}{x} = X_{i-1}$.
If $X[x]$ is a deleted or substituted character, transforming $X\fragmentco{x_{i-1}}{x}$ into $X\fragmentcc{x_{i-1}}{x}$ is a single character insertion.
The more interesting case is when $X\fragmentco{x}{x'}$ is matched perfectly to some $X\fragmentco{x-d}{x'-d}$.
Note that $d>0$ because $\cA$ is a self-alignment that does not contain points below the main diagonal and $x-d \ge x-\sed(X) \ge x_i-\sed(X) \ge x_i-k \ge x_{i-1}$ due to \cref{lm:paths_dont_deviate_too_much}.
As $X\fragmentco{x-d}{x'}$ has period $d$, we have that $X\fragmentco{x}{x'}$ is a prefix of $X\fragmentco{x-d}{x}^{\infty}$, and as $x-d \ge x_{i-1}$, we can obtain $X\fragmentco{x_{i-1}}{x'}$ from $X\fragmentco{x_{i-1}}{x}$ via a single copy-exponentiate-paste operation of \cref{def:mega-edit}.
Having processed the whole $\cA_i$, that is, when $x=x_{i+1}$, our current string is $X\fragmentco{x_{i-1}}{x_i}$.
The string $X_i$ can be obtained from it via a single cut operation of \cref{def:mega-edit}.
In total, we transformed $X_{i-1}$ into $X_i$ via at most $2 \cdot \ed_{\cA_i}(X_i, \cA(X_i)) + 2$ mega-edits as every perfectly matched fragment is followed by either a character edit or the end of $\cA_i$.
By repeatedly applying the mega-edits to $\boxds(X_{i-1}, Y_{i-1})$, we obtain $\boxds(X_i, Y_{i-1})$.

The process of building $\boxds(X_i, Y_i)$ from $\boxds(X_i, Y_{i-1})$ is very similar, with just minor differences stemming from the fact that $Y_i$ overlaps $Y_{i-1}$.
First, we restrict the alignment $\cB$ to $\cB_i : Y'_i\onto \cB(Y'_i)$, where $Y'_i=Y\fragmentco{y'_i}{y'_{i+1}}$.
We claim that $Y_{i-1}$ can be transformed into $Y_i$ via at most $2 \cdot \ed_{\cB_i}(Y'_i, \cB(Y'_i)) + 2$ mega-edits.
The breakpoint representation of $\cB_i$ lets us decompose $Y'_i$ into individual characters that $\cB_i$ deletes or substitutes and fragments that $\cB_i$ matches perfectly.
We process them one by one while implicitly maintaining a string $Y\fragmentco{y_{i-1}}{y}$ for some index $y$.
Initially, $y = y'_i$ and $Y\fragmentco{y_{i-1}}{y} = Y_{i-1}$.
If $Y[y]$ is a deleted or substituted character, transforming $Y\fragmentco{y_{i-1}}{y}$ into $Y\fragmentcc{y_{i-1}}{y}$ is a single character insertion.
If $Y\fragmentco{y}{y'}$ is matched perfectly to $Y\fragmentco{y-d}{y'-d}$, then $d>0$ because $\cB$ is a self-alignment that does not contain points below the main diagonal and $y-d \ge y'_i-\sed(Y) \ge y'_{i-1}-3k \ge y_{i-1}$; here, $y'_{i-1}-3k \ge y_{i-1}$ follows from $y'_{i-1}\ne |Y|$, which we can assume because $y'>y\ge y'_i\ge y'_{i-1}$.
As $Y\fragmentco{y-d}{y'}$ has period $d$, we have that $Y\fragmentco{y}{y'}$ is a prefix of $Y\fragmentco{y-d}{y}^{\infty}$, and as $y-d \ge y_{i-1}$, we can obtain $Y\fragmentco{y_{i-1}}{y'}$ from $Y\fragmentco{y_{i-1}}{y}$ via a single copy-exponentiate-paste operation of \cref{def:mega-edit}.
Having processed the whole $Y_i$, that is, when $y=y'_{i+1}$, our current string is $Y\fragmentco{y_{i-1}}{y'_{i+1}}$.
The string $Y_i$ can be obtained from it via a single cut operation of \cref{def:mega-edit}.
In total, we transformed $Y_{i-1}$ into $Y_i$ via at most $2 \cdot \ed_{\cB_i}(Y'_i, \cB(Y'_i)) + 2$ mega-edits.
By repeatedly applying the mega-edits to $\boxds(X_i, Y_{i-1})$, we obtain $\boxds(X_i, Y_i)$.

\smallskip

The total number of mega-edits is bounded by
\[\sum_{i \in F} 2 \cdot \ed_{\cA_i}(X_i, \cA(X_i)) + 2 + 2 \cdot \ed_{\cB_i}(Y'_i, \cB(Y'_i)) + 2 \le 2 \cdot \ed_{\cA}(X, X) + 2 \cdot |F| + 2 \cdot \ed_{\cB}(Y, Y) + 2 \cdot |F| = \Oh(k)\]
as fragments $X_i$ are disjoint and fragments $Y'_i$ are disjoint.
Thus, the total time complexity of building hierarchical alignment data structures for all pairs $(X_i, Y_i)$ for $i \in F$ is $\Oh(W \cdot k^2 \log^2 k)$ for $\boxds$ and $\Oh(k^{2.5} \log^2 k)$ for $\boxdsk$, and this dominates the total time complexity of the algorithm.
Furthermore, in both cases we pay $\Oh(k^2)$ \modelname{} operations for preprocessing.
\end{proof}

We now use the hierarchical alignment data structure constructed by \cref{lem:addtoboxds} to find a $w$-optimal alignment for the complete strings $X$ and $Y$.

\begin{lemma} \label{lm:small-sed-algo}
    There is an algorithm that, given oracle access to a weight function $w : \overline{\Sigma}^2 \to \fragmentcc{0}{W}$, a positive threshold $k$, and strings $X, Y \in \Sigma^{\le n}$ such that $\sed(X) \le k$ and $\wed(X, Y) \le k$, constructs the breakpoint representation of a $w$-optimal alignment of $X$ onto $Y$ and takes $\Oh(\min\{k^2 W \log^2 n, k^{2.5} \log^2 n\})$ time plus $\Oh(k^2)$ \modelname{} operations.
\end{lemma}
\begin{proof}
    If $|X|< k$, we solve the problem using \cref{lm:baseline-wed}.
    This algorithm takes $\Oh((|X|+1)k)$ time, which is $\Oh(k^2)$ since $|X|< k$.
    Thus, we henceforth assume that $|X|\ge k$ and, in particular, $n \ge k$.

    We first describe an algorithm that works in time $\Oh(k^2 W \log^2 n)$, and later transform it into an algorithm that works in time $\Oh(k^{2.5} \log^2 n)$.
    By running these two algorithms in parallel and returning the answer of whichever one of them terminates first, we get the desired time complexity.

    We follow a setup similar to the proof of {\cite[Lemma 4.9]{CKW23}} in \modelname{} model for $d \coloneqq k$.
    First, we run the algorithm of \cref{lem:decomp2} for $X$, $Y$, and $k$, arriving at a decomposition $X=\bigodot_{i=0}^{m - 1}X_i$
    and a sequence of fragments $(Y_i)_{i=0}^{m-1}$, such that $X_i=X\fragmentco{x_i}{x_{i+1}}$ and $Y_i=Y\fragmentco{y_i}{y'_{i+1}}$ for $y_i=\max\{x_i-k,0\}$ and $y'_{i+1}=\min\{x_{i+1}+3k,|Y|\}$ for all $i\in \fragmentco{0}{m}$.
    The decomposition is represented using a set $F$ of size $\Oh(k)$ such that $X_i=X_{i-1}$ and $Y_i=Y_{i-1}$ holds for each $i\in \fragmentco{0}{m}\setminus F$ and the endpoints of $X_i$ for $i\in F$. 
    To easily handle corner cases, we set $y'_0=0$ and $y_m=|Y|$, and we assume that $\fragmentco{0}{m}\setminus F\subseteq \fragmentcc{2}{m-5}$; if the original set $F$ does not satisfy this condition, we add the missing $\Oh(1)$ elements.

    Having constructed (and possibly extended) $F$, we apply \cref{lem:addtoboxds}.
    It yields $\boxds(X_i, Y_i)$ for every $i \in F$.

    For each $i\in \fragmentco{0}{m}$, consider a subgraph $G_i$ of $\oAGw(X, Y)$ induced by $\fragmentcc{x_i}{x_{i+1}}\times \fragmentcc{y_i}{y'_{i+1}}$ (see \cref{fig:substitutions-only}). Denote by $G$ the union of all subgraphs $G_i$.
    Note that $G$ contains all vertices $(x, y)\in \fragmentcc{0}{|X|}\times \fragmentcc{0}{|Y|}$ with $|x - y| \le k$.
    Therefore, as we know that $\wed(X, Y) \le k$, \cref{lm:paths_dont_deviate_too_much} implies that the $w$-optimal alignment of $X$ onto $Y$ can go only through vertices $(x, y)$ of $\oAGw(X, Y)$ with $|x - y| \le k$, and thus the distance from $(0, 0)$ to $(|X|, |Y|)$ in $\oAGw(X, Y)$ is the same as the distance from $(0, 0)$ to $(|X|, |Y|)$ in $G$.
    For each $i \in [0\dd m]$, denote $V_i \coloneqq \{(x, y) \in V(G) \mid x = x_i, y \in \fragmentcc{y_i}{y'_i}\}$.
    Note that $V_0=\{(0,0)\}$, $V_m=\{(|X|,|Y|)\}$, and $V_i = V(G_i) \cap V(G_{i - 1})$ for $i \in (0 \dd m)$.
Moreover, for $i, j \in [0\dd m]$ with $i \le j$, let $D_{i, j}$ denote the matrix of pairwise distances from $V_i$ to $V_j$ in $G$, where rows of $D_{i, j}$ represent vertices of $V_i$ in the decreasing order of the second coordinate, and columns of $D_{i, j}$ represent vertices of $V_j$ in the decreasing order of the second coordinate.
By \cref{grid-graph-extension} the shortest paths between vertices of $G_i$ stay within $G_i$, so $D_{i, i + 1}$ is a contiguous submatrix of $\BMw(X_i, Y_i)$. 
Hence, for each $i \in F$, we can retrieve $\mds(D_{i, i + 1})$ using \cref{lm:ds_tools} and matrix retrieval operation of $\boxds(X_i, Y_i)$. 

\begin{claim} \label{clm:D-formula}
    Let $0=j_0 < j_1 < \ldots < j_{|F| - 1} < j_{|F|}=m$ be the elements of $F\cup \{m\}$. 
    We have
    \[D_{0, m} = \bigotimes_{i=0}^{|F|-1} D_{j_i, j_i + 1}^{\otimes (j_{i + 1} - j_i)}. \]
\end{claim}
\begin{claimproof}
    Note that, for every $a,i,b \in \fragmentcc{0}{m}$ with $a\le i \le b$, the set $V_i$ is a separator in $G$ between all vertices of $V_{a}$ and $V_b$. Thus, every path from $V_{a}$ to $V_b$ in $G$ crosses $V_i$, so $D_{a, b} = D_{a, i} \otimes D_{i, b}$.
    In general, this implies $D_{0,m}=\bigotimes_{i=0}^{m-1} D_{i,i+1}$.

    It remains to prove that, for every $i\in \fragmentco{0}{|F|}$, we have $D_{j_i, j_i + 1} = D_{j_i + 1, j_i + 2} = \cdots = D_{j_{i + 1} - 1, j_{i + 1}}$. 
    If $j_{i+1}=j_i+1$, this statement is obvious.
    Otherwise, the condition on $F$ stipulated in \cref{lem:decomp2}, implies that the graphs $G_{j_i}, G_{j_i + 1}, \ldots, G_{j_{i + 1} - 1}$ are isomorphic to $\oAGw(X_{j_i},Y_{j_i})$.
    Moreover, since we assumed that $\fragmentco{0}{m}\setminus F \subseteq \fragmentcc{2}{m-5}$, we have $1 \le j_i < j_{i+1}\le m-4$.
    Since each phrase is of length at least $k$ and $|Y|\ge |X|-\wed(X,Y)\ge |X|-k$, for each $p\in \fragmentcc{1}{m-4}$,
    we have $x_{p}\ge x_{p-1}+k \ge k$ and $x_{p}\le x_{p+4}-4k \le |X|-4k \le |Y|-3k$.
    In particular, $\fragmentcc{y_p}{y'_p}=\fragmentcc{x_p-k}{x_p+3k}$.
    Hence, for each $p \in \fragmentoo{j_i}{j_{i+1}}$, not only $G_{p}$ is isomorphic to $G_{j_i}$, but also $V_{p}$ and $V_{p + 1}$ in this isomorphism correspond to $V_{j_i}$ and $V_{j_i + 1}$.
Hence, we indeed have $D_{j_i, j_i + 1} = D_{j_i + 1, j_i + 2} = \cdots = D_{j_{i + 1} - 1, j_{i + 1}}$.
Therefore, $D_{j_i, j_{i + 1}} = D_{j_i, j_i + 1}^{\otimes (j_{i + 1} - j_i)}$.
We thus obtain $D_{0, m} = \bigotimes_{i =0}^{|F|-1} D_{j_i, j_i + 1}^{\otimes (j_{i + 1} - j_i)}$.
\end{claimproof}

Note that for each $i \in \fragmentco{0}{|F|}$, we already computed $\mds(D_{j_i, j_i + 1})$.
Given $\mds(D_{a, b})$ and $\mds(D_{b, c})$ for any $a,b,c\in \fragmentcc{0}{m}$ with $a < b < c$, we can compute $\mds(D_{a, c})$ using \cref{lm:matrix_mult_w} in time $\Oh((k + \delta(D_{a, b}) + \delta(D_{b, c})) \log n) = \Oh(W \cdot k \log n)$, where $\delta(D_{a, b}), \delta(D_{a, c}) = \Oh(W \cdot k)$ due to \cref{lem:bdtocore,lem:bdproduct,obs:bmw_is_bd}.
Therefore, using the idea of binary exponentiation, we can compute $\mds(M_i)$ where $M_i \coloneqq D_{j_i, j_i + 1}^{\otimes (j_{i + 1} - j_i)}$ in time $\Oh(W \cdot k \log^2 n)$.
Doing it for all $i \in \fragmentco{0}{|F|}$ requires time $\Oh(W \cdot k^2 \log^2 n)$.
We then compute $\mds(D_{0, m})$ using the fact that $D_{0, m} = M_0 \otimes M_1 \otimes \cdots \otimes M_{|F| - 1}$ in time $\Oh(|F| \cdot (W \cdot k \log n)) = \Oh(W \cdot k^2 \log n)$, where we compute $M_0 \otimes M_1 \otimes \cdots \otimes M_{|F| - 1}$ in a perfectly balanced binary tree manner.

It remains to reconstruct a $w$-optimal alignment $X\onto Y$; its cost $\wed(X, Y)$ is the only entry of $D_{0, m}$.
For that, we additionally save all the $\mds$ data structures that were computed throughout the time of the algorithm.
We obtained $\mds(D_{0, m})$ by a logarithmic-depth min-plus matrix multiplication sequence from $\mds(D_{j_i, j_i + 1})$s.
When two matrices are multiplied, a specific entry of the answer is equal to the minimum of the sum of $\Oh(k)$ pairs of entries from the input matrices as all matrices we work with have sizes $\Oh(k) \times \Oh(k)$.
Hence, in time $\Oh(k \log n)$ we can reconstruct the entries of the input matrices, from which this value arose by accessing $\mds$ data structures that we saved for all of them.

We use recursive backtracking similar to the one used in the alignment retrieval operation of \cref{prp:hadstwo}.
Whenever the current value we are trying to backtrack is zero, we terminate as there is a single path between any two vertices of $\oAGw(X, Y)$ that has weight zero.
In each internal node of this process, we spend at most $\Oh(k \log n)$ time to backtrack the value to its child nodes.
For each leaf, we apply alignment retrieval of \cref{prp:hadstwo} to reconstruct the $w$-optimal alignment inside $G_i$ at the cost of at most $\Oh(k\log^2 n)$.
As $\wed(X, Y) \le k$ and the process has logarithmic depth, there are $\Oh(k \log n)$ internal nodes and $\Oh(k)$ leaves of this process with non-zero values.
Therefore, the whole procedure takes time $\Oh(k \log n \cdot k \log n + k\cdot k\log^2 n) = \Oh(k^2 \log^2 n)$.

It remains to analyze the time complexity of the complete algorithm.
We spend $\Oh(k^2)$ time and $\Oh(k^2)$ \modelname{} operations for \cref{lem:decomp2}.
The application of \cref{lem:addtoboxds} takes time $\Oh(k^2 \cdot W \log^2 n)$ time and $\Oh(k^2)$ \modelname{} operations.
Computation of $D_{i, i + 1}$ for $i \in F$ takes time $\Oh(|F| \cdot (k \log n + k \cdot W \log n)) = \Oh(k^2 \cdot W \log n)$ due to \cref{obs:bmw_is_bd}.
Computation of $D_{0, m}$ then takes $\Oh(W \cdot k^2 \log^2 n)$ time.
Reconstructing the alignment takes time $\Oh(k^2 \log^2 n)$.
Thus, the whole algorithm works in $\Oh(k^2 \cdot W \log^2  n)$ time plus $\Oh(k^2)$ \modelname{} operations.

We now show how to transform the presented algorithm into the one that takes $\Oh(k^{2.5} \log^2 n)$ time.
The only thing we change is that we now use \cref{lm:matrix_mult_k} instead of \cref{lm:matrix_mult_w} whenever we multiply $\mds$ and use the variant of the algorithm from \cref{lem:addtoboxds} that builds a relaxed hierarchical alignment data structure $\boxdsk$ when we apply it.
As basic operations over matrices preserve $k$-equivalence due to \cref{c-equiv-preserv}, at the end we get some value that is $k$-equivalent to $\wed(X, Y)$.
As we know that $\wed(X, Y) \le k$, we find its exact value.
The alignment reconstruction process works the same way as all values we backtrack to also have values at most $k$, and thus are computed correctly.

The application of \cref{lem:decomp2} still takes $\Oh(k^2)$ time and $\Oh(k^2)$ \modelname{} operations.
\Cref{lem:addtoboxds} now takes $\Oh(k^{2.5} \log^2 n)$ time plus $\Oh(k^2)$ \modelname{} operations.
Finally, we have $\Oh(k \log n)$ min-plus matrix multiplications in \cref{clm:D-formula}, and thus it takes time $\Oh(k^{2.5} \log^2 n)$ to perform all of them using \cref{lm:matrix_mult_k} as all the matrices we work with have cores of size $\Oh(k \sqrt k)$.
At the end, we reconstruct the alignment in time $\Oh(k^2 \log^2 n)$.
Hence, the entire algorithm works in $\Oh(k^{2.5} \log^2 n)$ time plus $\Oh(k^2)$ \modelname{} operations.
\end{proof}

\subsection{The General Case}\label{sec:general}

In the general case, we first design a procedure that ``improves'' an alignment as discussed in \cref{sec:overview}.
That is, given some alignment $\cA : X \onto Y$ (some approximation of a $w$-optimal alignment), finds a $w$-optimal alignment $\cB : X \onto Y$.
For that, we use divide-and-conquer combined with the combinatorial fact of \cref{cor:intersect}: we recursively compute $w$-optimal alignments for the two halves, and then use them and \cref{lm:small-sed-algo} to obtain a $w$-optimal alignment for the whole string.

\begin{lemma} \label{lm:upgrade_alignment}
    There is an algorithm that, given two strings $X, Y \in \Sigma^{\le n}$, oracle access to a weight function $w : \Esigma^2 \to [0 \dd W]$, and the breakpoint representation of an alignment $\cA:X\onto Y$, finds the breakpoint representation of a $w$-optimal alignment $\cB : X\onto Y$. The running time of the algorithm is  $\Oh(\min\{k^2 \cdot W \log^2 n, k^{2.5} \log^2 n\})$ time plus $\Oh(k^2)$ \modelname{} operations, where $k = \wed_{\cA}(X, Y)$.
\end{lemma}

\begin{proof}
    \SetKwFunction{upgr}{ImproveAlignment}

    We develop a recursive procedure \upgr that satisfies the requirements of \cref{lm:upgrade_alignment}; see \cref{alg:upgrade_alignment} and \cref{fig:new-divide-and-conquer}\footnote{Note that \cref{fig:new-divide-and-conquer} depicts a simplified version of the algorithm, in which the right recursive call is run for $X\fragmentco{x_{m}}{|X|}$ and $Y\fragmentco{y_{m}}{|Y|}$.}.
    
    \begin{algorithm}[t]
        \caption{The algorithm from \cref{lm:upgrade_alignment}. Given strings $X$ and $Y$, oracle access to a weight function $w$, and the breakpoint representation of an alignment $\cA$ of $X$ onto $Y$, the algorithm computes the breakpoint representation of a $w$-optimal alignment of $X$ onto $Y$.} \label{alg:upgrade_alignment}
    \upgr{$X, Y, w, \cA$}\Begin{
        $k \gets \wed_{\cA}(X, Y)$\;
        \If{$k = 0$}{
            \Return{$\cA$}\;
        }
        \If{$k > |X|$}{
            \Return{the breakpoint representation of a $w$-optimal alignment computed using \cref{lm:baseline-wed}}\;
        }
        Denote $(x_i, y_i)_{i = 0}^t \coloneqq \cA$\;
        Pick the largest $m\in \fragmentco{0}{t}$ such that $\wed_{\cA}(X\fragmentco{0}{x_m},Y\fragmentco{0}{y_m})\le k/2$\;
        $\ell \gets \min\{i \in \fragmentcc{0}{m} \mid \sed(X\fragmentco{x_i}{x_m}) \le 5k\}$\;
        $r \gets \max\{i \in \fragmentcc{m+1}{t} \mid \sed(X\fragmentco{x_{m+1}}{x_i}) \le 5k\}$\;
        $\cA_{L} \gets (x_i,y_i)_{i=0}^m $\;
        $\cA_{R} \gets (x_i,y_i)_{i={m+1}}^t$\;
        $\cB_{L} \gets \upgr(X\fragmentco{0}{x_m}, Y\fragmentco{0}{y_m}, w, \cA_L)$\;
        $\cB_R \gets \upgr(X\fragmentco{x_{m+1}}{|X|}, Y\fragmentco{y_{m+1}}{|Y|}, w, \cA_R)$\;
        Compute breakpoint representation of a $w$-optimal alignment $\cB_M: X\fragmentco{x_\ell}{x_r}\onto Y\fragmentco{y_\ell}{y_r}$ using \cref{lm:small-sed-algo} for threshold $10k+2$\;
        Compute the leftmost intersection points $(x^*_L,y^*_L)\in \cB_L\cap\cB_M$ and $(x^*_R,y^*_R)\in \cB_R\cap\cB_M$\;
        \Return{the breakpoint representation of an alignment $\cB:X\onto Y$ that follows $\cB_L$ from $(0,0)$ to $(x^*_L,y^*_L)$, follows $\cB_M$ from $(x^*_L,y^*_L)$ to $(x^*_R,y^*_R)$, and follows $\cB_R$ from $(x^*_R,y^*_R)$ to $(|X|,|Y|)$}\;
    }
    \end{algorithm}

    We first find $k \coloneqq \wed_{\cA}(X, Y)$.
    Then, we handle some corner cases: if $k = 0$, we return $\cA$, and if $k > |X|$, we find a $w$-optimal alignment using \cref{lm:baseline-wed}.
    If none of the corner cases are applicable, we proceed as follows.
    We denote $(x_i, y_i)_{i = 0}^t \coloneqq \cA$ and find three indices $\ell,m,r\in \fragmentcc{0}{t}$ such that $m$ is the largest possible such that $\wed_{\cA}(X\fragmentco{0}{x_m},Y\fragmentco{0}{y_m})\le k/2$, $\ell$ is the smallest possible such that $\sed(X\fragmentco{x_\ell}{x_m}) \le 5k$, and $r$ is the largest possible such that $\sed(X\fragmentco{x_{m+1}}{x_r}) \le 5k$.
    We recursively compute the breakpoint representation of a $w$-optimal alignment $\cB_{L} : X\fragmentco{0}{x_m}\onto Y\fragmentco{0}{y_m}$ by improving upon  $\cA_{L} \coloneqq (x_i,y_i)_{i=0}^m$ and recursively compute the breakpoint representation of a $w$-optimal alignment $\cB_R:X\fragmentco{x_{m+1}}{|X|}, Y\fragmentco{y_{m+1}}{|Y|}$ by improving upon $\cA_R \coloneqq (x_i,y_i)_{i=m+1}^t$.
    We then use \cref{lm:small-sed-algo} to find the breakpoint representation of a $w$-optimal alignment $\cB_M: X\fragmentco{x_\ell}{x_r}\onto Y\fragmentco{y_\ell}{y_r}$.
    We have $\sed(X\fragmentco{x_\ell}{x_r})\le \sed(X\fragmentco{x_\ell}{x_m})+\sed(X\fragmentco{x_m}{x_{m+1}})+\sed(X\fragmentco{x_{m+1}}{x_r}) \le 5k+2+5k=10k+2$ and $\wed(X\fragmentco{x_\ell}{x_r},Y\fragmentco{y_\ell}{y_r})\le \wed_{\cA}(X\fragmentco{x_\ell}{x_r},Y\fragmentco{y_\ell}{y_r})\le \wed_\cA(X,Y)\le k$, so it suffices to apply \cref{lm:small-sed-algo} with threshold $10k+2$. 
    We then compute the $w$-optimal alignment $\cB$ by combining $\cB_{L}, \cB_M$, and $\cB_R$.
    We use two pointers to find the leftmost intersection point of $\cB_L$ and $\cB_M$ and the leftmost intersection point of $\cB_M$ and $\cB_R$.
    We then create $\cB$ by first going along $\cB_{L}$ until the first intersection point with $\cB_M$, then switching to $\cB_M$ and going along it until the first intersection point with $\cB_R$, and finally switching to $\cB_R$ and going along it from there.

    Let us prove the correctness of the algorithm. 
    If $k = 0$, then $\cA$ is already a $w$-optimal alignment as no alignment of negative weight can exist.
    If $k > |X|$, we use \cref{lm:baseline-wed}, which returns a $w$-optimal alignment.
    Otherwise, the minimality of $\ell$ implies that $\ell=0$ or $\sed(X\fragmentco{x_\ell}{x_m})=5k>4\wed_{\cA}(X,Y)$, whereas the maximality of $r$ implies that $r=t$ or $\sed(X\fragmentco{x_{m+1}}{x_r})=5k>4\wed_{\cA}(X,Y)$, where equalities are due to the continuity property of $\sed$ (see \cref{fct:selfed-properties}).
    Consequently, we can use \cref{cor:intersect} for the alignments $\cA$, $\cB_L$, $\cB_M$, and $\cB_R$ to conclude that  $\cB_L\cap \cB_M\ne \emptyset \ne \cB_R\cap \cB_M$, and all intersection points $(x^*_L,y^*_L)\in \cB_L\cap \cB_M$ and $(x^*_R,y^*_R)\in \cB_R\cap \cB_M$  satisfy
    \[\wed\!(X,Y)\!=\! \wed_{\cB_L}\!(X\fragmentco{0}{x^*_L},Y\fragmentco{0}{y^*_L})+\wed_{\cB_M}\!(X\fragmentco{x^*_L}{x^*_R}, Y\fragmentco{y^*_L}{y^*_R})+\wed_{\cB_R}\!(X\fragmentco{x^*_R}{|X|},Y\fragmentco{y^*_R}{|Y|}).\]
    Consequently, the constructed alignment $\cB$ is indeed $w$-optimal.

    We now analyze the running time.
    First, focus on a single execution of \cref{alg:upgrade_alignment}, ignoring the time spent in the recursive calls.
    Denote $n \coloneqq |X| + |Y|$ and $k \coloneqq \wed_{\cA}(X, Y)$.
    Retrieving $k$ from the breakpoint representation of $\cA$ takes $\Oh(k + 1)$ time.
    If $k = 0$, we then return $\cA$ in constant time.
    If $k > |X|$, we run \cref{lm:baseline-wed} that works in time $\Oh((|X| + 1)k) = \Oh(k^2)$ since $|X| < k$.
    We henceforth assume that we are not in the corner cases.
    It takes $\Oh(k)$ time to find $m$.
    It then takes $\Oh(k^2)$ time and \modelname{} operations to find $\ell$ and $r$ using \cref{lm:selfed2}.
    Constructing $\cB_M$ using \cref{lm:small-sed-algo} takes $\Oh(\min\{k^2 \cdot W \log^2 n, k^{2.5} \log^2 n\})$ time and $\Oh(k^2)$ \modelname{} operations.
    Combining the three $w$-optimal alignments takes time $\Oh(k)$ since the cost of each of them does not exceed the cost of $\cA$.
    Thus, a single execution of \cref{alg:upgrade_alignment} takes time $\Oh(\min\{k^2 \cdot W \log^2 n, k^{2.5} \log^2 n\})$ plus $\Oh(k^2)$ \modelname{} operations.

    Now, let us consider the overall running time of the algorithm, taking into account the recursive calls.
    We recurse into two subproblems $(X_L,Y_L,w,\cA_L)$ and $(X_R,Y_R,w,\cA_R)$ such that $\wed_{\cA_{L}}(X_L, Y_L)\le \frac12 \wed_{\cA}(X, Y)$ and $\wed_{\cA_R}(X_R, Y_R) < \frac12 \wed_{\cA}(X, Y)$; the latter is true because the maximality of $m$ implies $\wed_{\cA}(X\fragmentco{0}{x_{m+1}},Y\fragmentco{0}{y_{m+1}})>\frac12 \wed_{\cA}(X, Y)$.
    As a single execution of \cref{alg:upgrade_alignment} takes time $\Oh(\min\{k^2 \cdot W \log^2 n, k^{2.5} \log^2 n\})$ plus $\Oh(k^2)$ \modelname{} operations, there exists a constant $\alpha$ such that it takes time at most $\alpha \cdot \min\{k^2 \cdot W \log^2 n, k^{2.5} \log^2 n\}$ plus at most $\alpha \cdot k^2$ \modelname{} operations.
    We prove by induction that the whole algorithm takes time at most $\beta \cdot \min\{k^2 \cdot W \log^2 n, k^{2.5} \log^2 n\}$ for $\beta \coloneqq 2 \alpha$.
    The base case, in which no further recursive calls are made is trivial.
    Otherwise, we spend $\alpha \cdot \min\{k^2 \cdot W \log^2 n, k^{2.5} \log^2 n\}$ time plus two recursive calls, in which the parameter $k$ decreases by at least a factor of two.
    Therefore, using the induction hypothesis, the overall running time is
    \begin{align*}
        &\alpha \cdot \min\{k^2 \cdot W \log^2 n, k^{2.5} \log^2 n\} + 2 \cdot \beta \cdot \min\{(k / 2)^2 \cdot W \log^2 n, (k / 2)^{2.5} \log^2 n\}\\
        &\le (\alpha + \beta / 2) \cdot \min\{k^2 \cdot W \log^2 n, k^{2.5} \log^2 n\}\\
        &= \beta \cdot \min\{k^2 \cdot W \log^2 n, k^{2.5} \log^2 n\}.
    \end{align*}
    Analogously, one can show that the total number of \modelname{} operations is limited by $\beta \cdot k^2$.
\end{proof}

We now use \cref{lm:upgrade_alignment} to design our main static algorithm.
It starts from an optimal unweighted alignment and gradually improves it until we arrive at a $w$-optimal alignment.

\begin{theorem}\label{lm:full_algorithm}
    There is a \modelname{} algorithm that, given two strings $X, Y \in \Sigma^{\le n}$ as well as oracle access to a weight function $w : \Esigma^2 \to [0 \dd W]$, finds $\wed(X, Y)$ as well as the breakpoint representation of a $w$-optimal alignment $\cA : X \onto Y$.
    The running time of the algorithm is $\Oh(\min\{k^2 \cdot W \log^2 n, k^{2.5} \log^3 n\})$ time plus $\Oh(k^2 \log \min \{n, W + 1\})$ \modelname{} operations, where $k = \wed(X, Y)$.
\end{theorem}

\begin{proof}
    We find the breakpoint representation of a $w$-optimal alignment $\cA : X\onto Y$.
    From it, in time $\Oh(k)$, we can compute $\wed(X, Y)$.

    Define $w' : \Esigma^2 \to [0 \dd W']$ as $\myw{w'}{a}{b} \coloneqq \min\{\w{a}{b}, n\}$ for $W' = \min\{W, n\}$.
    Note that $k' \coloneqq \ed^{w'}(X, Y) \le \wed(X, Y) = k$.
    We run the remaining part of the algorithm for the weight function $w'$ instead of $w$.
    At the end, we get some $w'$-optimal alignment $\cA : X \onto Y$ of cost $k'$.
    If $k' \ge n$, we additionally run the algorithm from \cref{lm:baseline-wed} for $X$, $Y$, and $w$ and return the alignment it returns.
    Its running time is $\Oh(k \cdot n) \le \Oh(k^2)$.
    Otherwise, if $k' < n$, we return $\cA$.
    Note that in this case all edges of the alignment $\cA$ have weights at most $k' < n$, and thus $\wed_{\cA}(X, Y) = \ed^{w'}_{\cA}(X, Y) = k' \le k = \wed(X, Y)$.
    Hence, $\cA$ is a $w$-optimal alignment.

    Therefore, it remains to find a $w'$-optimal alignment $\cA$.

    First, we find an optimal unweighted alignment $\cA^* : X \onto Y$ of weight $k^* \le k'$ in time $\Oh(k'^2)$ and $\Oh(k'^2)$ \modelname{} operations using \cref{lm:k2-ed}.
    Let $W^* \le W'$ be the maximum weight $w'$ of the edges in $\cA^*$.

    We now iterate from $t = t_1  \coloneqq \left\lceil \log_2 W^* \right\rceil$ down to $t = 0$, and for each such $t$ compute a $w'_t$-optimal alignment $\cA_t : X \onto Y$, where $\myw{w'_t}{a}{b} \coloneqq \left\lceil \myw{w'}{a}{b} / 2^t \right\rceil$ for any $a, b \in \Esigma$.
    Let $W'_t \coloneqq \left\lceil W' / 2^t \right\rceil$ be an upper bound on the values of $w'_t$.
    Note that $w'_0 = w'$, so at the end we get a $w'$-optimal alignment $\cA = \cA_0$.
    Furthermore, $\myw{w'_t}{a}{b} \le \myw{w'}{a}{b}$ for any $t \in [0 \dd t_1]$ and $a, b \in \Esigma$, and thus $k'_t \coloneqq \ed^{w'_t}(X, Y) \le \ed^{w'}(X, Y) = k' \le k$ for any $t$.

    In the first iteration, we set $\cA_{t_1} \coloneqq \cA^*$.
    Note that all edges in $\cA^*$ have $w'$-weight at most $W^*$ by the definition of $W^*$.
    Hence, they have $w'_{t_1}$-weight at most one as $2^{t_1} \ge W^*$.
    Therefore, $\ed_{\cA^*}^{w'_{t_1}}(X, Y) = \ed_{\cA^*}(X, Y) = \ed(X, Y) \le \ed^{w'_{t_1}}(X, Y)$.
    Thus, $\cA_{t_1} = \cA^*$ is a $w'_{t_1}$-optimal alignment.

    Now, for every other $t$, given a $w'_{t + 1}$-optimal alignment $\cA_{t + 1} : X \onto Y$, we apply \cref{lm:upgrade_alignment} to find a $w'_t$-optimal alignment $\cA_{t} : X \onto Y$.
    It takes $\Oh(\min\{\ell^2 \cdot W'_t \log^2 n, \ell^{2.5} \log^2 n\})$ time plus $\Oh(\ell^2)$ \modelname{} operations, where $\ell \coloneqq \ed_{\cA_{t + 1}}^{w'_t}(X, Y)$.
    Note that $\myw{w'_t}{a}{b} \le 2 \cdot \myw{w'_{t+1}}{a}{b}$ for any $a, b \in \Esigma$, so $\ell \le 2 \ed_{\cA_{t + 1}}^{w'_{t + 1}}(X, Y) = 2 k'_{t + 1} \le 2k'$.
    Hence, the call to \cref{lm:upgrade_alignment} takes time $\Oh(\min\{k'^2 \cdot W'_t \log^2 n, k'^{2.5} \log^2 n\})$ and $\Oh(k'^2)$ \modelname{} operations.

    There are $t_1 + 1 = \Oh(\log (W' + 1)) = \Oh(\log \min\{n, W + 1\})$ iterations of the algorithm, so in total we spend $\Oh(k'^2 \log \min\{n, W + 1\}) \le \Oh(k^2 \log \min \{n, W + 1\})$ \modelname{} operations.
    Furthermore, the overall time complexity is

    \begin{align*}
        \Oh(k'^2 + \sum_{t = 0}^{t_1} \min\{k'^2 \cdot W'_t \log^2 n, k'^{2.5} \log^2 n\}) &\le \Oh(\min\{\sum_{t = 0}^{t_1} k'^2 \cdot W'_t \log^2 n, \sum_{t = 0}^{t_1} k'^{2.5} \log^2 n\}) \\
                                        &= \Oh(\min\{k'^2 \log^2 n \sum_{t = 0}^{t_1} W'_t, (t_1 + 1) \cdot k'^{2.5} \log^2 n\})\\
                                        &\le \Oh(\min\{k'^2 W' \log^2 n, k'^{2.5} \log^3 n\})\\
                                        &\le \Oh(\min\{k^2 W \log^2 n, k^{2.5} \log^3 n\}),
    \end{align*}

    where $\sum_{t = 0}^{t_1} W'_t = \Oh(W' + t_1) = \Oh(W') \le \Oh(W)$ as $W'_t = \left\lceil W' / 2^t \right\rceil$ for all $t \in \fragmentcc{0}{t_1}$.
\end{proof}

As \modelname{} operations in the standard model can be implemented to work in constant time after a linear-time preprocessing \cite[Section 7.1]{CKW20}, this algorithm works in time $\Oh(n + \min\{k^2 \cdot W \log^2 n, k^{2.5} \log^3 n\})$ in the standard model.

\section{Dynamic Algorithm} \label{sec:dynamic}

In this section, we maintain the weighted edit distance of two strings $X$ and $Y$ dynamically under edits.
That is, in one update we insert, delete, or substitute a single character in $X$ or $Y$.

\subsection{Tracking a Sequence of Updates}

Before describing our dynamic algorithms, we first build some tools that we will use to track the updates applied to $X$ and $Y$.
Suppose that the $i$-th update was applied to $X$.
We interpret this update as a pair $(\cB_i, \cC_i)$ of two alignments, where $\cB_i : X \onto X'$ is an alignment of the old version of $X$ onto the new one induced by the edit, and $\cC_i : Y \onto Y'$ where $Y' = Y$ is an identity alignment.
If the $i$-th update is applied to $Y$, we define $(\cB_i, \cC_i)$ symmetrically.
In both situations we have $\ed_{\cB_i}(X, X') + \ed_{\cC_i}(Y, Y') = 1$.
Furthermore, given an update, we can trivially calculate the breakpoint representations of $\cB_i$ and $\cC_i$ in constant time.

To track how a string changes under a sequence of updates, we use the following fact that allows to compose alignments of subsequent versions of the string.

\begin{fact}[{\cite[Fact 2.5]{DGHKS23}}]\label{fct:triangle}
    Consider strings $X,Y,Z\in \Sigma^*$ as well as alignments $\cA : X\onto Y$ and $\cB : Y\onto Z$.
    There exists a \emph{composition alignment} $\cB \circ \cA : X \onto Z$ satisfying the following properties for all $x\in \fragmentco{0}{|X|}$ and $z\in\fragmentco{0}{|Z|}$:
    \begin{itemize}
        \item $\cB\circ \cA$ aligns $X\position{x}$ to $Z\position{z}$ if and only if there exists $y\in \fragmentco{0}{|Y|}$ such that $\cA$ aligns $X\position{x}$ to $Y\position{y}$ and $\cB$ aligns $Y\position{y}$ to $Z\position{z}$.
        \item $\cB\circ \cA$ deletes $X\position{x}$ if and only if $\cA$ deletes $X\position{x}$ or there exists $y\in \fragmentco{0}{|Y|}$ such that $\cA$ aligns $X\position{x}$ to $Y\position{y}$ and $\cB$ deletes $Y\position{y}$.
        \item $\cB\circ \cA$ inserts $Z\position{z}$ if and only if $\cB$ inserts $Z\position{z}$ or there exists $y\in \fragmentco{0}{|Y|}$ such that $\cA$ inserts $Y\position{y}$ and $\cB$ aligns $Y\position{y}$ to $Z\position{z}$.
    \end{itemize}

    If a weight function $w$ satisfies the \emph{triangle inequality}, that is, $\w{a}{b}\le \w{a}{c}+\w{c}{b}$ holds for all $a,b,c\in \Esigma$,  then $\wed_{\cB \circ \cA}(X, Z) \leq \wed_{\cA}(X, Y) + \wed_{\cB}(Y, Z)$.
\lipicsEnd
\end{fact}

The following corollary shows how to efficiently compute the composition alignment.

\begin{corollary}\label{cor:alignment-composition-algorithm}
    There is an algorithm that given strings $X, Y, Z \in \Sigma^*$ as well as the breakpoint representations of alignments $\cA : X \onto Y$ and $\cB : Y \onto Z$, in time $\Oh(\ed_{\cA}(X, Y) + \ed_{\cB}(Y, Z) + 1)$ builds the breakpoint representation of $\cB \circ \cA$.
\end{corollary}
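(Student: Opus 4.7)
The plan is to construct $\cC \coloneqq \cB \circ \cA$ by a merge-like sweep over the $Y$-coordinate, reading the breakpoint representations of $\cA$ and $\cB$ in parallel while maintaining a triple $(x,y,z)$ such that $(x,y)$ lies on $\cA$, $(y,z)$ lies on $\cB$, and $(x,z)$ is the current endpoint of the partial composition. The key observation is that whenever $\cA$ matches $X\position{x_t}$ with $Y\position{y_t}$ and $\cB$ simultaneously matches $Y\position{y_t}$ with $Z\position{z_t}$, we have $X\position{x_t} = Y\position{y_t} = Z\position{z_t}$, so \cref{fct:triangle} guarantees that $\cC$ matches $X\position{x_t}$ with $Z\position{z_t}$ and introduces no new breakpoint. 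Every breakpoint of $\cC$ must therefore be chargeable to a breakpoint of $\cA$ or $\cB$ (noting that $X$-deletions of $\cA$ and $Z$-insertions of $\cB$ are themselves breakpoints of $\cA$ or $\cB$), which already bounds the size of $\cC$'s breakpoint representation by $\Oh(p+q)$, where $p$ and $q$ denote the sizes of the breakpoint representations of $\cA$ and $\cB$.

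To realize this sweep efficiently, I would parse $\cA$ and $\cB$ into \emph{blocks}, each consisting of a single edit step followed by its maximal run of match steps. From two consecutive breakpoints $(u_i, v_i)$ and $(u_{i+1}, v_{i+1})$ of $\cA$, the block's edit type and the length of its match run are recoverable in $\Oh(1)$ time from the differences $u_{i+1}-u_i$ and $v_{i+1}-v_i$ (insertion, substitution, and deletion correspond to $v_{i+1}-v_i$ equalling $u_{i+1}-u_i+1$, $u_{i+1}-u_i$, or $u_{i+1}-u_i-1$, respectively). The main loop repeatedly selects the unprocessed block whose edit is earliest in $Y$-order, advances the common matching prefix in lockstep on $(x,y,z)$ (emitting no breakpoints), and then handles the edit by case analysis: $\cA$'s current step is either a deletion of some $X\position{x}$ or else consumes the current $Y$-position via insertion/substitution/match, and independently $\cB$'s current step is either an insertion of some $Z\position{z}$ or else consumes the current $Y$-position via deletion/substitution/match. \cref{fct:triangle} dictates the corresponding step of $\cC$, and a breakpoint of $\cC$ is emitted iff that step is not a match.

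Each block is processed in $\Oh(1)$ time, so the total runtime is $\Oh(p+q) = \Oh(\ed_\cA(X,Y) + \ed_\cB(Y,Z) + 1)$, matching the claim. The main obstacle is the case analysis at simultaneous events: $\cA$ inserting $Y\position{y}$ while $\cB$ deletes the same $Y\position{y}$ must contribute nothing to $\cC$; two substitutions that happen to compose into an equality $X\position{x} = Z\position{z}$ must \emph{not} emit a breakpoint of $\cC$; and runs of consecutive $\cA$-deletions (or $\cB$-insertions) at a single $Y$-coordinate must each emit their own deletion (insertion) in $\cC$. Storing each block's action type explicitly keeps every case $\Oh(1)$, and correctness follows by induction along the sweep using \cref{fct:triangle}.
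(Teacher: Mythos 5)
Your proposal is correct and matches the paper's approach: both hinge on the observation (via \cref{fct:triangle}) that a simultaneous match in $\cA$ and $\cB$ forces a match in $\cB\circ\cA$, so every breakpoint of the composition is chargeable to a breakpoint of $\cA$ or $\cB$, and both realize the construction as a two-pointer merge of the two breakpoint lists along the shared $Y$-coordinate. Your write-up is considerably more explicit about the block decomposition and the simultaneous-event cases (canceling insertion/deletion pairs, substitutions composing into a match, runs at a fixed $Y$-position) than the paper's one-line invocation of a two-pointer scan, but the underlying algorithm and the $\Oh(\ed_{\cA}(X,Y)+\ed_{\cB}(Y,Z)+1)$ accounting are the same.
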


\begin{proof}
    Consider some character $X\position{x}$.
    If $\cA$ matches $X\position{x}$ to some $Y\position{y}$, and $\cB$ matches $Y\position{y}$ to some $Z\position{z}$, then $X\position{x} = Y\position{y} = Z\position{z}$, and $\cC \coloneqq \cB \circ \cA$ matches $X\position{x}$ to $Z\position{z}$.
    Hence, if some vertex $(x, z)$ of $\AG(X, Z)$ is a part of the breakpoint representation of $\cC$, then either $(x, y)$ for some $y$ is a part of the breakpoint representation of $\cA$, or $\cA$ matches $X\position{x}$ to some $Y\position{y}$, and $(y, z)$ is a part of the breakpoint representation of $\cB$.
    Therefore, by scanning the breakpoint representations of $\cA$ and $\cB$ using two pointers, we can construct the breakpoint representation of $\cC$ in time $\Oh(\ed_{\cA}(X, Y) + \ed_{\cB}(Y, Z) + 1)$.
\end{proof}

Even though composition alignment does not necessarily satisfy triangle inequality, we can still formulate a fact similar to the second part of \cref{fct:triangle} in the general case.

\begin{lemma}\label{lem:triw}
    Consider strings $X,Y,Z\in \Sigma^*$, alignments $\cA : X \onto Y$ and $\cB : Y\onto Z$, and a weight function $w:\Esigma^2 \to \RR_{\ge 0}$ such that $\w{a}{b} \le W$ for all $a, b \in \Esigma$ for some value $W$. 
    The composition alignment $\cC=\cB\circ \cA$ satisfies $\wed_{\cC}(X,Z)\le \wed_{\cA}(X,Y)+W\cdot \ed_{\cB}(Y,Z)$ and $\wed_{\cC}(X,Z)\le W \cdot \ed_{\cA}(X,Y) + \wed_{\cB}(Y,Z)$.
\end{lemma}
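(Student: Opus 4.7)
The plan is to bound $\wed_\cC(X,Z)$ with $\cC = \cB \circ \cA$ by a charging scheme that attributes each edit of $\cC$ to either an edit of $\cA$ or a unit of $\ed_\cB(Y,Z)$ scaled by $W$. First, using the characterization of $\cC$ from \cref{fct:triangle}, I would enumerate the edits of $\cC$: each one is either (i) a deletion inherited directly from $\cA$'s deletion of some $X\position{x}$, (ii) an insertion inherited directly from $\cB$'s insertion of some $Z\position{z}$, or (iii) one of the ``composed'' types obtained by pairing, for each $y \in \fragmentco{0}{|Y|}$, the $\cA$-side operation on $Y\position{y}$ (insertion or alignment) with the $\cB$-side operation on $Y\position{y}$ (deletion or alignment), excluding the pair where $\cA$ inserts $Y\position{y}$ and $\cB$ deletes $Y\position{y}$, which contributes nothing to $\cC$.

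For the first inequality, I would split each contributing case by whether the relevant $\cB$-side operation is a match (so $Y\position{y}=Z\position{z}$, contributing $0$ to $\ed_\cB$) or not. When $\cB$ matches, the resulting $\cC$-edit has the same weight as the corresponding $\cA$-edit because $w(\cdot, Y\position{y}) = w(\cdot, Z\position{z})$; I charge the full cost to that $\cA$-edit's contribution in $\wed_\cA(X,Y)$. When $\cB$ performs a non-match operation (insertion, deletion, or substitution, each contributing one unit to $\ed_\cB$), the cost of the resulting $\cC$-edit is at most $W$; I charge $W$ to that $\ed_\cB$-unit. Category (i) is charged to $\wed_\cA$, and category (ii) to $W\cdot\ed_\cB$. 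Once injectivity of the charging is verified, summation immediately yields the first inequality. The second inequality is proved symmetrically, swapping the roles of $\cA$ and $\cB$: whenever $\cA$ matches $X\position{x}$ to $Y\position{y}$, the identity $w(X\position{x},\cdot)=w(Y\position{y},\cdot)$ lets me charge the $\cC$-edit to a $\cB$-edit in $\wed_\cB(Y,Z)$; otherwise, I charge $W$ to a unit of $\ed_\cA(X,Y)$.

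The main obstacle will be the bookkeeping: verifying that each $\cA$-edit and each non-match $\cB$-edit is charged at most once across the case enumeration, and correctly handling the excluded pair where $\cA$ inserts $Y\position{y}$ and $\cB$ deletes it. No deeper technique is required; the essential leverage is the equality of weights under a match, which bypasses triangle inequality (recall $w$ need not satisfy it), combined with the uniform bound $w \le W$ applied in every remaining case.
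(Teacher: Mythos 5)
Your proof is correct and follows essentially the same approach as the paper: the paper packages your charging scheme as a generalized triangle inequality over three weight functions, instantiating the third as $w_3(a,a)=0$ and $w_3(a,b)=W$ for $a\ne b$ (so $\ed^{w_3}_\cB(Y,Z)=W\cdot\ed_\cB(Y,Z)$), and then sums the pointwise inequality $w(a,b)\le w(a,c)+w_3(c,b)$ over the decomposition of each $\cC$-edit into an $\cA$-edit and a $\cB$-edit. Both arguments rest on the same two facts you identify — the injective decomposition of $\cC$-edits inherited from \cref{fct:triangle}, and the weight identity under a $\cB$-match replacing the missing triangle inequality of $w$ — so your inline case analysis is just an unfolding of the paper's abstract formulation.
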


\begin{proof}
    We first claim that triangle inequality of \cref{fct:triangle} works in a more general case.
    That is, given alignments $\cA : X\onto Y$ and $\cB : Y\onto Z$ and three weight functions $w_1, w_2$, and $w_3$ such that $\wi{1}{a}{b}\le \wi{2}{a}{c}+\wi{3}{c}{b}$ holds for all $a,b,c\in \Esigma$, we have $\ed^{w_1}_{\cB \circ \cA}(X, Z) \leq \ed^{w_2}_{\cA}(X, Y) + \ed^{w_3}_{\cB}(Y, Z)$.
    It follows from the properties of $\circ$ described in \cref{fct:triangle}.
    Any edit $a \mapsto b$ of $\cB \circ \cA$ for $a \in X \cup \{\varepsilon\}$ and $b \in Z \cup \{\varepsilon\}$ can be decomposed into an edit $a \mapsto c$ of $\cA$ and an edit $c \mapsto b$ of $\cB$ for some $c \in Y \cup \{\varepsilon\}$.
    By the triangle inequality, we have $\wi{1}{a}{b}\le \wi{2}{a}{c}+\wi{3}{c}{b}$.
    Furthermore, different edits of $\cB \circ \cA$ correspond to different edits of $\cA$ and $\cB$.
    Therefore, by summing up these inequalities over all edits, we obtain $\ed^{w_1}_{\cB \circ \cA}(X, Z) \leq \ed^{w_2}_{\cA}(X, Y) + \ed^{w_3}_{\cB}(Y, Z)$.

    \medskip

    We now use this fact to prove the lemma.
    Define $w_1 \coloneqq w$, $w_2 \coloneqq w$, and $w_3(a, a) = 0$ for any $a \in \Esigma$ and $w_3(a, b) = W$ for any $a \neq b \in \Esigma$.
    We claim that $\wi{1}{a}{b}\le \wi{2}{a}{c}+\wi{3}{c}{b}$ holds for all $a,b,c\in \Esigma$.
    If $c \neq b$, the claim holds as $\wi{1}{a}{b} = \w{a}{b} \le W = \wi{3}{c}{b} \le \wi{2}{a}{c} + \wi{3}{c}{b}$.
    On the other hand, if $c = b$, the claim holds as $\wi{1}{a}{b} = \w{a}{b} = \wi{2}{a}{b} = \wi{2}{a}{c} = \wi{2}{a}{c}+\wi{3}{c}{b}$.

    By the extension of \cref{fct:triangle} for three different weight functions, we obtain
    $\wed_{\cB \circ \cA}(X, Z) = \ed^{w_1}_{\cB \circ \cA}(X, Z) \leq \ed^{w_2}_{\cA}(X, Y) + \ed^{w_3}_{\cB}(Y, Z) = \wed_{\cA}(X, Y) + W \cdot \ed_{\cB}(Y, Z)$, thus proving the first inequality from the lemma statement.

    The second inequality can be obtained analogously by swapping $w_2$ and $w_3$.
\end{proof}

\begin{corollary}\label{lem:triw-cor}
    Consider strings $X,Y,X',Y' \in \Sigma^*$, alignments $\cA : X \onto Y$, $\cB : X \onto X'$, and $\cC : Y \onto Y'$, and a weight function $w : \Esigma^2 \to \RR_{\ge 0}$ such that $\w{a}{b} \le W$ for all $a, b \in \Esigma$ for some value $W$.
    The composition alignment $\cD = \cC \circ (\cA \circ \cB^{-1})$ satisfies $\wed_{\cD}(X', Y') \le \wed_{\cA}(X, Y) + W \cdot (\ed_{\cB}(X, X') + \ed_{\cC}(Y, Y'))$.
\end{corollary}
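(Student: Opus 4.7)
The plan is to apply Lemma~\ref{lem:triw} twice, peeling off one outer composition at a time.

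First, I would look at the inner composition $\cA \circ \cB^{-1}$. Since $\cB : X \onto X'$, the inverse alignment $\cB^{-1} : X' \onto X$ has the same underlying set of edits, so $\ed_{\cB^{-1}}(X',X) = \ed_{\cB}(X,X')$. Applying the second inequality of Lemma~\ref{lem:triw} to the pair $(\cB^{-1}, \cA)$ (with $\cB^{-1}$ playing the role of the first alignment and $\cA$ the second), we obtain
\[
    \wed_{\cA \circ \cB^{-1}}(X', Y) \;\le\; W \cdot \ed_{\cB^{-1}}(X', X) + \wed_{\cA}(X, Y) \;=\; \wed_{\cA}(X, Y) + W \cdot \ed_{\cB}(X, X').
\]

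Next, I would apply Lemma~\ref{lem:triw} to the outer composition $\cC \circ (\cA \circ \cB^{-1})$, where $\cA \circ \cB^{-1} : X' \onto Y$ and $\cC : Y \onto Y'$. Using the first inequality of Lemma~\ref{lem:triw} this time, we get
\[
    \wed_{\cD}(X', Y') \;\le\; \wed_{\cA \circ \cB^{-1}}(X', Y) + W \cdot \ed_{\cC}(Y, Y').
\]
Substituting the bound from the previous step yields exactly
\[
    \wed_{\cD}(X', Y') \;\le\; \wed_{\cA}(X, Y) + W \cdot \bigl(\ed_{\cB}(X, X') + \ed_{\cC}(Y, Y')\bigr),
\]
which is the desired inequality.

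There is no real obstacle; the only subtlety to double-check is the choice of which of the two inequalities in Lemma~\ref{lem:triw} to invoke at each stage. In the first application, $\cA$ carries the $w$-weight and $\cB^{-1}$ is measured in unweighted edit distance (hence the factor $W$ on $\ed_{\cB^{-1}}$). In the second application the roles are reversed: $\cA \circ \cB^{-1}$ already carries the $w$-weight and $\cC$ is measured in unweighted edit distance. Both choices are available because Lemma~\ref{lem:triw} provides both symmetric forms.
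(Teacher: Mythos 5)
Your proof is correct and follows essentially the same route as the paper: two applications of Lemma~\ref{lem:triw}, first bounding $\wed_{\cA\circ\cB^{-1}}(X',Y)$ via the second inequality (with $\cB^{-1}$ contributing the $W\cdot\ed$ term), then bounding $\wed_{\cD}(X',Y')$ via the first inequality (with $\cC$ contributing the $W\cdot\ed$ term), using $\ed_{\cB^{-1}}(X',X)=\ed_{\cB}(X,X')$ along the way.
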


\begin{proof}
    We first consider the alignment $\cD' \coloneqq \cA \circ \cB^{-1}$.
    \cref{lem:triw} implies that $\wed_{\cD'}(X', Y) \le W \cdot \ed_{\cB^{-1}}(X', X) + \wed_{\cA}(X, Y) = W \cdot \ed_{\cB}(X, X') + \wed_{\cA}(X, Y)$.
    As $\cD = \cC \circ \cD'$, \cref{lem:triw} implies that $\wed_{\cD}(X', Y') \le \wed_{\cD'}(X', Y) + W \cdot \ed_{\cC}(Y, Y') \le W \cdot \ed_{\cB}(X, X') + \wed_{\cA}(X, Y) + W \cdot \ed_{\cC}(Y, Y')$, thus proving the claim.
\end{proof}

\subsection{The Case of Small Self-Edit Distance}\label{sec:dyn-smallsed}

We now present our dynamic algorithm.
We follow a path similar to the one described in \cref{sec:full-algo} for the static case.
First, we interpret \cref{prp:hadstwo} as a dynamic algorithm that works in time $\tOh((|X| + |Y|) \cdot W)$\footnote{Throughout \cref{sec:dynamic} we assume that the value $W$ is given to the algorithm.} per update to obtain a dynamic algorithm for strings of small self-edit distance that works in time $\tOh(Wk)$ per update for a lifetime of $\floor{k / W}$ updates with $\tOh(W k^2)$-time initialization.
Essentially, to design this algorithm, we have to make \cref{lm:small-sed-algo} dynamic.

\begin{lemma}\label{lem:dyn_sed}
    There exists a dynamic algorithm that, given positive integers $n \ge 2$ and $k \in \fragmentcc{1}{n^2}$, oracle access to a weight function $w:\Esigma^2\to \fragmentcc{0}{W}$, and strings $X,Y\in \Sigma^*$ that initially satisfy $|X|, |Y| \le n$, $\sed(X)\le k$, and $\wed(X,Y)\le k$, maintains the breakpoint representation of a $w$-optimal alignment $\cA:X\onto Y$ subject to $\floor{k/W}$ edits in $X$ and $Y$ in the algorithm lifetime.
The algorithm takes $\Oh(W \cdot k^2\log^2 n)$ time plus $\Oh(k^2)$ \modelname{} operations for initialization and $\Oh(W\cdot k\log^2 n)$ time per update.
\end{lemma}
\begin{proof}
Let $\hX$ and $\hY$ be strings given to the algorithm at the initialization phase.
If $|\hX| \le 3k$ or $|\hY| \le 3k$, then $|\hX|+|\hY|\le 7k$ since $\wed(X,Y)\le k$.
In this case, we simply use \cref{prp:hadstwo} as a dynamic algorithm, which takes $\Oh(W\cdot k^2\log n)$ time for initialization and $\Oh(W \cdot k \log^2 n)$ time per update as at any point in the lifetime of the algorithm we have $|X| + |Y| \le 8k$ as at most $\floor{k / W}$ edits were applied.

Thus, we henceforth assume $3k \le |\hX|, |\hY| \le n$.
Note that in such a case at every point in the lifetime of the algorithm, $X$ and $Y$ are non-empty as at most $\floor{k / W}$ edits were applied.
Furthermore, $|X| + |Y| \le 2n + k / W \le 3n$ throughout the lifetime of the algorithm.
We proceed as in the proof of \cref{lm:small-sed-algo} but with the threshold tripled.
We use the algorithm of \cref{lem:decomp2} for strings $\hX$ and $\hY$ and threshold $\hat{k}=3k$, arriving at a decomposition $\hX=\bigodot_{i=0}^{m-1} \hX_i$ into fragments $\hX_i=\hX\fragmentco{\hx_{i}}{\hx_{i+1}}$ of length $|\hX_i|\in \fragmentco{3k}{6k}$
and a sequence of fragments $\hY_i = \hY\fragmentco{\hy_i}{\hy'_{i+1}}$, where $\hy_i \coloneqq \max\{0, \hx_i - 3k\}$ and $\hy'_{i + 1} \coloneqq \min\{|\hY|, \hx_{i+1} + 9k\}$ for $i\in \fragmentco{0}{m}$.
\Cref{lem:decomp2} also yields a set $F \subseteq \fragmentco{0}{m}$ of size $\Oh(k)$ such that $\hX_i=\hX_{i-1}$ and $\hY_i=\hY_{i-1}$ holds for all $i\in \fragmentco{0}{m} \setminus F$.
To easily handle corner cases, we assume that $\fragmentco{0}{m} \setminus F \subseteq \fragmentcc{2}{m - 5}$;
if the original set $F$ does not satisfy this condition, we add the missing $\Oh(1)$ elements.
Finally, an application of \cref{lem:addtoboxds} lets us build the hierarchical alignment data structures $\boxds(\hX_i, \hY_i)$ for all $i \in F$.

We maintain the breakpoint representations of alignments $\cB : \hX\onto X$ and $\cC:\hY\onto Y$ representing the changes made to the strings since the initialization of the algorithm.
For each $i\in  \fragmentco{0}{m}$, we define fragments $X_i=X\fragmentco{x_i}{x_{i+1}}\coloneqq \cB(\hX_i)$ and $Y_i = Y\fragmentco{y_i}{y'_{i+1}}\coloneqq \cC(\hY_i)$ and consider the subgraph $G_i$ of $\oAGw(X, Y)$ induced by $\fragmentcc{x_i}{x_{i+1}} \times \fragmentcc{y_i}{y'_{i + 1}}$ (see \cref{fig:substitutions-only}).
Denote by $G$ the union of all subgraphs $G_i$.

\begin{claim}
    The distance from $(0,0)$ to $(|X|,|Y|)$ in $G$ equals $\wed(X,Y)$.
\end{claim}
\begin{claimproof}
    Since $G$ is a subgraph of $\oAGw(X,Y)$, it suffices to prove that $G$ contains a path corresponding to a $w$-optimal alignment $\cA : X \onto Y$.
    For this, we need to show that, for every $i\in \fragmentco{0}{m}$, the fragment $Y\fragmentco{\by_i}{\by'_{i+1}}\coloneqq \cA(X_i)$ is contained within $Y_i$.

    Denote $d_X = \ed_{\cB}(\hX,X)$, $d_Y=\ed_{\cC}(\hY,Y)$, and $\cD = \cC \circ (\hcA \circ \cB^{-1})$ for a $w$-optimal alignment $\hcA : \hX \onto \hY$. Since $\wed_{\hcA}(\hX,\hY)=\wed(\hX,\hY)\le k$ and $d_X+d_Y\le \floor{k/W}$, \cref{lem:triw-cor} yields $\wed_{\cA}(X,Y)=\wed(X,Y) \le \wed_{\cD}(X, Y) \le \wed_{\hcA}(\hX,\hY) + W \cdot (d_X + d_Y) \le k + W \cdot \floor{k/W} \le 2k$.
    Consequently, $x_i \le \by_i+2k$ and $x_{i+1}\ge \by'_{i+1}-2k$ due to \cref{lm:paths_dont_deviate_too_much}.
    Moreover, $\hx_i\le x_i+d_X\le \by_i+2k+d_X$ and $\hx_{i+1}\ge x_{i+1}-d_X \ge \by'_{i+1}-2k-d_X$.
    The definitions of $\hy_i$ and $\hy'_{i+1}$ yield $\hy_i = \max\{\hx_i-3k,0\}\le \max\{\by_i-k+d_X,0\}$ and $\hy'_{i+1}\ge \min\{\hx_{i+1}+3k,|\hY|\}\ge \min\{\by'_{i+1}+k-d_X,|\hY|\}$.
    If $\hy_i=0$, then $y_i=0\le \by_i$; otherwise, $y_i \le \hy_i+d_Y \le \by_i-k+d_X+d_Y\le \by_i$.
    If $\hy'_{i+1}=|\hY|$, then $y'_{i+1}=|Y|\ge \by'_{i+1}$; otherwise, $y'_{i+1}\ge \hy_{i+1}-d_Y \ge \by'_{i+1}+k-d_X-d_Y \ge \by'_{i+1}$. 
    In all cases, $y_i \le \by_i$ and $y'_{i+1}\ge \by'_{i+1}$, so $Y\fragmentco{\by_i}{\by'_{i+1}}$ is indeed contained within $Y_i=Y\fragmentco{y_i}{y'_{i+1}}$.
\end{claimproof}

For each $i \in \fragmentcc{0}{m}$, denote $V_i \coloneqq \{(x, y) \in V(G) \mid x = x_i, y \in \fragmentcc{y_i}{y'_i}\}$,
where $y_{m}=|Y|$ and $y'_0=0$.
Moreover, for $i, j \in \fragmentcc{0}{m}$ with $i \le j$, let $D_{i, j}$ denote the matrix of pairwise distances from $V_i$ to $V_j$ in $G$, where rows of $D_{i, j}$ represent vertices of $V_i$ in the decreasing order of the second coordinate, and columns of $D_{i, j}$ represent vertices of $V_j$ in the decreasing order of the second coordinate.
Note that $D_{i, j}$ is a Monge matrix due to \cref{fct:monge}.
Since $V_0 = \{(0,0)\}$ and $V_{m}=\{(|X|,|Y|)\}$, the only entry in $D_{0,m}$ stores $\wed(X,Y)$.

Note that, for all $a,i,b \in \fragmentcc{0}{m}$ with $a\le i \le b$, the set $V_i$ is a separator in $G$ between all vertices of $V_{a}$ and $V_b$.
Thus, every path from $V_{a}$ to $V_b$ in $G$ must cross $V_i$, and hence $D_{a, b} = D_{a, i} \otimes D_{i, b}$.
In particular, $D_{0,m}=\bigotimes_{i=0}^{m-1} D_{i,i+1}$.

We maintain a perfectly balanced binary tree $T$ with leaves indexed by $\fragmentco{0}{m}$ from left to right;
we make sure that, for every node $\nu$, the interval $\fragmentco{\ell_\nu}{r_\nu}$ of leaves in the subtree of $\nu$ is dyadic, 
that is, there exists an integer $j\ge 0$, called the \emph{level} of the node $\nu$, such that $\ell_\nu$ is an integer multiple of $2^j$ and $r_\nu=\min\{\ell_\nu+2^j, m\}$.
For every node $\nu$, we maintain $\mds(D_{\ell_\nu,r_\nu})$, where $\fragmentco{\ell_\nu}{r_\nu}$ is the interval of leaves in the subtree of $\nu$, as well as the values $x_{r_\nu}-x_{\ell_\nu}$, $y_{r_\nu}-y_{\ell_\nu}$, and $y'_{r_\nu}-y'_{\ell_\nu}$.
Additionally, each leaf with $\fragmentco{\ell_\nu}{r_\nu}=\{i\}$ stores a pointer to $\boxds(X_i, Y_i)$.
We emphasize that identical subtrees may share the same memory location.

\subparagraph{Initialization.}
Upon initialization, we use \cref{lem:decomp2} to compute the decomposition $\hX= \bigodot_{i=0}^{m-1} \hX_i$
and a set $F\subseteq\fragmentco{0}{m}$ of size $\Oh(k)$ such $\hX_i=\hX_{i-1}$ and $\hY_i=\hY_{i-1}$ for $i\in \fragmentco{0}{m}\setminus F$.
This set is returned along with the endpoints of $\hX_i$ for each $i\in F$; the endpoints of $\hY_i$ can be retrieved by their definition from the endpoints of $\hX_i$. 
We also apply \cref{lem:addtoboxds} to compute $\boxds(X_i,Y_i)$ for all $i\in F$.
Furthermore, we initialize $\cB : \hX \onto \hX$ and $\cC : \hY \onto \hY$ as identity alignments.

It remains to initialize the tree $T$.
We say that node $\nu$ is \emph{primary} if $F\cap \fragmentoo{2\ell_\nu-r_\nu}{r_\nu}\ne \emptyset$; this condition is equivalent to $F\cap \fragmentoo{\ell_\nu-2^j}{\ell_\nu+2^j}\ne \emptyset$, where $j$ is the level of the node.
Observe that if $\nu$ is not primary, then the subtree rooted at $\nu$ is identical to the subtree rooted at the preceding node of the same level, i.e., the node $\mu$ with $\fragmentco{\ell_{\mu}}{r_{\mu}}=\fragmentco{\ell_{\nu}-2^j}{\ell_\nu}$.
It follows from the fact that for two consecutive elements $i, i' \in F \cup \{m\}$, we have $D_{i, i + 1} = D_{i + 1, i + 2} = \cdots = D_{i' - 1, i'}$ analogously to \cref{clm:D-formula}.
We traverse primary nodes of $T$ in the post-order, maintaining a stack consisting of the most recently visited node at each level. 
Upon entering any node $\nu$, we first check if $\nu$ is primary and, if not, simply return a pointer to the most recently visited node at the same level.
If $\nu$ is a primary leaf with label $i$, we must have $i\in F$.
Thus, we create a new node storing pointers to $\boxds(X_i, Y_i)$ and $\mds(D_{i,i+1})$, retrieved using $\boxds(X_i, Y_i)$ and \cref{lm:ds_tools} as $D_{i, i + 1}$ is a contiguous submatrix of $\BMw(X_i, Y_i)$.
If $\nu$ is a primary internal node, we recurse on the two children $\lambda$ and $\rho$. 
Since $D_{\ell_\nu,r_\nu}=D_{\ell_\lambda,r_\lambda}\otimes D_{\ell_{\rho},r_{\rho}}$, we can construct $\mds(D_{\ell_\nu,r_\nu})$ from $\mds(D_{\ell_\lambda,r_\lambda})$ and $\mds(D_{\ell_\rho,r_\rho})$, retrieved from the children of $\nu$, using \cref{lm:matrix_mult_w}.

Overall, first phase, which uses \cref{lem:decomp2} to build $F$ along with the fragments $\hX_i$ and $\hY_i$ for each $i\in F$, takes $\Oh(k^2)$ time in the \modelname{} model.
An application of \cref{lem:addtoboxds} to build $\boxds(X_i, Y_i)$ for all $i\in F$ costs $\Oh(W\cdot k^2\log^2 n)$ time plus $\Oh(k^2)$ \modelname{} operations.
Finally, while building $T$ we create $\Oh(k\log n)$ primary nodes ($\Oh(k)$ per level). 
Initializing each of them costs $\Oh(W\cdot k\log n)$ time, dominated by the application of \cref{lm:matrix_mult_w} for Monge matrices whose cores are of size $\Oh(W\cdot k)$ by \cref{lem:bdtocore,lem:bdproduct,obs:bmw_is_bd}.
The total initialization time is therefore $\Oh(W\cdot k^2 \log^2 n)$ plus $\Oh(k^2)$ \modelname{} operations.

\subparagraph{Updates.}
Given some edit $(\cB_i, \cC_i)$, we use it to update $\cB$ and $\cC$ using \cref{cor:alignment-composition-algorithm} in time $\Oh(k / W)$.
Define a set $\Delta\subseteq \fragmentco{0}{m}$ consisting of all integers $i$ such that $X_i$ or $Y_i$ is affected by this update. That is, either $\cB_i$ does not match $X_i$ perfectly or $\cC_i$ does not match $Y_i$ perfectly.
The fragments $\hX_i$ are disjoint, so each edit in $X$ affects at most one fragment $X_i=\cB(\hX_i)$.
The fragments $\hY_i,\hY_{i+1}$ for $i \ge 3$ start at least $3k$ positions apart and overlap by at most $12k$ characters, so each position is contained in at most eight fragments $Y_i=\cC(\hY_i)$.
Consequently, each edit in $Y$ affects at most eight fragments $Y_i=\cC(\hY_i)$.
We reconstruct all \emph{affected} nodes $\nu$ in $T$ such that $\fragmentco{\ell_{\nu}}{r_\nu}\cap \Delta \ne \emptyset$. 
For this, we traverse the tree $T$ in a post-order fashion skipping subtrees rooted at unaffected nodes.
Keeping track of the relative location of the edit compared to $x_{\ell_\nu}$, $y_{\ell_\nu}$, and $y'_{\ell_\nu}$, we can check this condition based on the stored values $x_{r_\nu}-x_{\ell_\nu}$, $y_{r_\nu}-y_{\ell_\nu}$, and $y'_{r_\nu}-y'_{\ell_\nu}$.
For each affected leaf $i$, we create an updated leaf using the mega-edit operation of \cref{prp:hadstwo} to obtain $\boxds(X_i, Y_i)$ for the updated versions of $X_i$ and $Y_i$ and \cref{prp:hadstwo,lm:ds_tools} to retrieve $\mds(D_{i,i+1})$; this costs $\Oh(W\cdot k\log^2 k) \le \Oh(W \cdot k \log^2 n)$ time per affected leaf due to \cref{obs:bmw_is_bd}.
For each affected internal node $\nu$, after processing the two children $\lambda$ and $\rho$,  we can construct $\mds(D_{\ell_\nu,r_\nu})$ from $\mds(D_{\ell_\lambda,r_\lambda})$ and $\mds(D_{\ell_\rho,r_\rho})$, retrieved from the children of $\nu$, using \cref{lm:matrix_mult_w}.
Processing each internal node costs $\Oh(W\cdot k\log n)$ time since $|X_i|+|Y_i|\le 25k$ and $|V_i|\le 13k$ hold throughout the lifetime of the algorithm, each processed matrix is of size $\Oh(k)\times \Oh(k)$ and, by \cref{lem:bdtocore,lem:bdproduct,obs:bmw_is_bd}, has core of size $\Oh(W\cdot k)$.
Overall, we update $\Oh(1)$ leaves and $\Oh(\log n)$ internal nodes, so the total update time is $\Oh(W\cdot k \log^2 n)$.

\subparagraph{Alignment retrieval.}
It remains to provide a subroutine that constructs the breakpoint representation of a $w$-optimal alignment $\cA: X\onto Y$.
For this, we develop a recursive procedure that, given a node $\nu$ in $T$, a vertex $p\in V_{\ell_\nu}$, and a vertex $q\in V_{r_\nu}$, construct the breakpoint representation of a shortest path from $p$ to $q$ in $G$.

The distance $d=\dist_G(p,q)$ is stored within $D_{\ell_\nu,r_\nu}$, so we can retrieve it in $\Oh(\log n)$ time from $\mds(D_{\ell_\nu,r_\nu})$.
If $d = 0$, there is a trivial shortest path from $p$ to $q$ always taking diagonal edges of weight zero, and we can return the breakpoint representation of such a path in constant time.
We henceforth assume $d > 0$ and consider two further cases.

If $\nu$ is a leaf with $\fragmentco{\ell_\nu}{r_\nu}=\{i\}$, then $p$ can be interpreted as in input vertex of $G_i$ and $q$ can be interpreted as an output vertex of $G_i$. 
Moreover, by \cref{grid-graph-extension}, every shortest path from $p$ to $q$ is completely contained in $G_i$.
Since $G_i$ is isomorphic to $\oAGw(X_i,Y_i)$, it suffices to use the alignment retrieval operation of $\boxds$.

If $\nu$ is an internal node with children $\lambda$ and $\rho$, then $V_{r_{\lambda}}=V_{\ell_\rho}$ separates $V_{\ell_\nu}=V_{\ell_\lambda}$ from $V_{r_\nu}=V_{r_\rho}$.
Hence, every path from $p$ to $q$ contains a vertex $v\in V_{r_{\lambda}}$ and $d=\min_{v\in V_{r_{\lambda}}}\left(\dist_G(p,v)+\dist_G(v,q)\right)$. 
By \cref{lm:paths_dont_deviate_too_much}, we can identify $\Oh(d)$ vertices $v\in V_{r_{\lambda}}$ for which $\dist_G(p,v)\le \dist_{\oAGw(X,Y)} (p,q) \le d$ \emph{may} hold.
For each such vertex, we can compute $\dist_G(p, v)$ and $\dist_G(v, q)$ in $\Oh(\log n)$ time using the random access provided by $\mds(D_{\ell_\lambda,r_\lambda})$ and $\mds(D_{\ell_\rho,r_\rho})$. 
We pick any vertex $v$ such that $\dist_G(p,v)+\dist_G(v,q)=d$, recurse on $(\lambda,p,v)$ and $(\rho,v,q)$, and concatenate the obtained breakpoint representations.

Let us analyze the running time of this algorithm.
For this, for every node $\nu$, denote with $d_\nu$ the cost that the returned path incurs on the way from $V_{\ell_\nu}$ to $V_{r_\nu}$.
If $d_\nu=0$, then the retrieval algorithm spends $\Oh(\log n)$ time (dominated by the random access to $D_{\ell_\nu,r_\nu}$). 
Unless $\nu$ is the root, this can be charged to the parent $\mu$ of $\nu$ that satisfies $d_\mu > 0$.
If $d_\nu > 0$ and $\nu$ is a leaf, then the running time is $\Oh(d_\nu\log^2 n)$, dominated by the alignment retrieval operation of $\boxds$; see \cref{prp:hadstwo}.
If $d_\nu > 0$ and $\nu$ is an internal node, then the running time (excluding recursive calls) is $\Oh(d_\nu \log n)$, dominated by $\Oh(d_\nu)$ random access queries to $D_{\ell_\lambda,r_\lambda}$ and $D_{\ell_\rho,r_\rho}$.
For each level of the tree $T$, the values $d_\nu$ sum up to $d$. 
Hence, the total cost at the leaf level is at most $\Oh(d\log^2 n)$ and the total cost at every other single level is $\Oh(d \log n)$.
There are $\Oh(\log n)$ levels overall, so the global cost is $\Oh(d \log^2 n)$ if $d > 0$ and $\Oh(\log n)$ if $d = 0$.
Since $d=\wed(X,Y)\le 2k$, this is clearly dominated by the update cost of $\Oh(W\cdot k\log^2 n)$.
\end{proof}

\subsection{The General Case}\label{sec:dyn-general}

For the general case, we use a divide-and-conquer procedure similar to the one used in \cref{lm:upgrade_alignment}.
We first describe a dynamic algorithm that will execute the conquer step in the dynamic divide-and-conquer recursive scheme.
It maintains a data structure that given $w$-optimal alignments for the two halves of the strings, combines them into a $w$-optimal alignment for the complete strings.

\begin{lemma}\label{lem:glue}
    There exists a dynamic algorithm that, given positive integers $n \ge 2$ and $k \in \fragmentcc{1}{n^2}$, oracle access to a weight function $w:\Esigma^2\to \fragmentcc{0}{W}$, and strings $X_L,X_M,X_R,Y_L,Y_M,Y_R\in \Sigma^*$ that initially satisfy $|X_L|, |X_M|, |X_R|, |Y_L|, |Y_M|, |Y_R| \le n$, $\wed(X_L,Y_L)+\wed(X_M,Y_M)+\wed(X_R,Y_R) \le k$, and $|X_M|\le k$, maintains, subject to $\floor{k/W}$ edits in $X_L,X_M,X_R,Y_L,Y_M,Y_R$ in the algorithm lifetime, a data structure supporting the following queries: given $w$-optimal alignments $\cO_L:X_L\onto Y_L$ and $\cO_R:X_R\onto Y_R$, construct a $w$-optimal alignment $\cO:X\onto Y$, where $X=X_LX_MX_R$ and $Y=Y_LY_MY_R$.
    The algorithm takes $\Oh(W \cdot k^2\log^2 n)$ time plus $\Oh(k^2\log (W + 1))$ \modelname{} operations for initialization, $\Oh(W\cdot k\log^2 n)$ time per update, and $\Oh(k)$ time per query.
\end{lemma}
\begin{proof}
Let $\hX_L,\hX_M,\hX_R, \hY_L,\hY_M,\hY_R$ be strings given to the algorithm at the initialization phase.
Consider $w$-optimal alignments $\cA_L : \hX_L\onto \hY_L$, $\cA_M : \hX_M\onto \hY_M$, and $\cA_R: \hX_R\onto \hY_R$.
Moreover, let $\hX^*_L$ be the longest suffix of $\hX_L$ satisfying $\sed(\hX^*_L)\le 11k$ and $\hX^*_R$ be the longest prefix of $\hX_R$ satisfying $\sed(\hX^*_R)\le 11k$. 
Furthermore, let $\hY^*_L = \cA_L(\hX^*_L)$ and $\hY^*_R = \cA_R(\hX^*_R)$.

We maintain the breakpoint representations of alignments $\cB_L : \hX_L\onto X_L$, $\cB_M : \hX_M\onto X_M$, $\cB_R:\hX_R\onto X_R$, $\cC_L:\hY_L\onto Y_L$, $\cC_M:\hY_M\onto Y_M$, and $\cC_R:\hY_R\onto Y_R$ representing the changes made throughout the lifetime of the algorithm, and let $X^*_L = \cB_L(\hX^*_L)$, $X^*_R = \cB_R(\hX^*_R)$, $Y^*_L=\cC_L(\hY^*_L)$, and $Y^*_R=\cC_R(\hY^*_R)$.

At the initialization phase, we apply \cref{lm:selfed2} to determine $\hX^*_L$ and $\hX^*_R$.
We also use \cref{lm:full_algorithm} to compute (breakpoint representations of) $\cA_L$ and $\cA_R$; this lets us derive $\hY^*_L$ and $\hY^*_R$, respectively. 
Finally, we initialize an instance of the dynamic algorithm of \cref{lem:dyn_sed} for $(\hX^*_L\hX_M\hX^*_R,\hY^*_L\hY_M\hY^*_R,23k)$.
We prove that such a threshold is sufficient.
Firstly, $\sed(\hX^*_L\hX_M\hX^*_R)\le \sed(\hX^*_L)+|\hX_M|+\sed(\hX^*_R)\le 11k+k+11k=23k$ follows from \cref{fct:selfed-properties}.
Secondly, denote $d_X = \ed_{\cB_L}(\hX_L,X_L)$, $d_Y=\ed_{\cC_L}(\hY_L,Y_L)$, and $\cD_L = \cC_L \circ (\cA_L \circ \cB_L^{-1})$.
Since $\wed_{\cA_L}(\hX_L,\hY_L)=\wed(\hX_L,\hY_L)\le k$ and $d_X+d_Y\le \floor{k/W}$,
\cref{lem:triw-cor} yields
\begin{align*}
    \wed(X^*_L, Y^*_L) &\le \wed_{\cD_L}(X^*_L, Y^*_L)\\
                       &\le \wed_{\cA_L}(\hX^*_L, \hY^*_L) + W \cdot (\ed_{\cB_L}(\hX^*_L, X^*_L) + \ed_{\cC_L}(\hY^*_L, Y^*_L))\\
                       &\le \wed_{\cA_L}(\hX_L, \hY_L) + W \cdot (\ed_{\cB_L}(\hX_L, X_L) + \ed_{\cC_L}(\hY_L, Y_L))\\
                       &\le k + W \cdot \floor{k / W}\\
                       &\le 2k.
\end{align*}
Analogously, $\wed(X_M, Y_M) \le 2k$ and $\wed(X^*_R, Y^*_R) \le 2k$.
Hence, $\wed(X^*_LX_MX^*_R, Y^*_LY_MY^*_R) \le \wed(X^*_L, Y^*_L) + \wed(X_M, Y_M)+ \wed(X^*_R, Y^*_R) \le 2k + 2k + 2k < 23k$ follows from sub-additivity of edit distance.
Therefore, the threshold of $23k$ is indeed sufficient.

Furthermore, we initialize $\cB : \hX \onto \hX$ and $\cC : \hY \onto \hY$ as identity alignments.

The three steps of the initialization algorithm cost $\Oh(k^2)$ \modelname{} operations, $\Oh(W\cdot k^2\log^2 n)$ time plus $\Oh(k^2 \log (W + 1))$ \modelname{} operations, and $\Oh(W\cdot k^2\log^2 n)$ time plus $\Oh(k^2)$ \modelname{} operations, respectively, for a total of $\Oh(W\cdot k^2\log^2 n)$ time plus $\Oh(k^2 \log (W + 1))$ \modelname{} operations.

\smallskip

Given some edit $(\cB_i, \cC_i)$, we use it to update $\cB$ and $\cC$ using \cref{cor:alignment-composition-algorithm} in time $\Oh(k / W)$.
Furthermore, we reflect such an edit to $X_L,X_R,Y_L,Y_R$ in the instance $(X^*_LX^*_R,Y^*_LY^*_R,22k)$ of \cref{lem:dyn_sed}.
It can support at least $\floor{22k/W}\ge \floor{k/W}$ updates and its update time is $\Oh(W \cdot k \log^2 n)$.
After each update, we store the breakpoint representation of a $w$-optimal alignment $\cO_M : X^*_LX_MX^*_R\onto Y^*_LY_MY^*_R$.

\smallskip

The query algorithm constructs a $w$-optimal alignment $\cO : X\onto Y$ based on the following claim:
\begin{claim}\label{clm:glue:equiv}
    Let $\cO_L : X_L\onto Y_L$, $\cO_R:X_R\onto Y_R$, and $\cO_M : X^*_LX_MX^*_R\onto Y^*_LY_MY^*_R$ be $w$-optimal alignments, interpreted as alignments mapping fragments of $X$ to fragments of $Y$.
    Then, $\cO_L\cap \cO_M\ne \emptyset \ne \cO_R\cap \cO_M$, and all points $(x^*_L,y^*_L)\in \cO_L\cap \cO_M$ and $(x^*_R,y^*_R)\in  \cO_R\cap \cO_M$  satisfy
    \[\wed\!(X,Y)\!=\!\wed_{\cO_L}\!(X\fragmentco{0}{x^*_L},\!Y\fragmentco{0}{y^*_L})+\wed_{\cO_M}\!(X\fragmentco{x^*_L}{x^*_R},\! Y\fragmentco{y^*_L}{y^*_R})+\wed_{\cO_R}\!(X\fragmentco{x^*_R}{|X|},\!Y\fragmentco{y^*_R}{|Y|}).\]
\end{claim}
\begin{claimproof}
Consider an alignment $\cA : \hX_L\hX_M\hX_R\onto \hY_L\hY_M\hY_R$ obtained by concatenating $\cA_L$, $\cA_M$, and $\cA_R$,
an alignment $\cB : \hX_L\hX_M\hX_R\onto X_LX_MX_R$ obtained by concatenating $\cB_L$, $\cB_M$, and $\cB_R$,
and an alignment $\cC : \hY_L\hY_M\hY_R \onto Y_LY_MY_R$ obtained by concatenating $\cC_L$, $\cC_M$, and $\cC_R$.
Finally, let $\cD : X\onto Y$ be an alignment obtained as a composition $\cD = \cC \circ (\cA \circ \cB^{-1})$. 
Since $\wed_{\cA}(\hX_L\hX_M\hX_R,\hY_L\hY_M\hY_R) = \wed_{\cA_L}(\hX_L,\hY_L) +\wed_{\cA_M}(\hX_M,\hY_M) + \wed_{\cA_R}(\hX_R,\hY_R) = \wed(\hX_L,\hY_L)+ \wed(\hX_M,\hY_M) + \wed(\hX_R,\hY_R) \le k$ and since $\cB$ and $\cC$ make at most $\floor{k/W}$ edits in total,
\cref{lem:triw-cor} implies $\wed_{\cD}(X,Y)\le k + W\cdot \floor{k/W}\le 2k$.

If we denote $\cD \eqqcolon (x_i,y_i)_{i=0}^t$ then, by the definitions above, there exist integers $0\le \ell \le p \le q \le r \le t$
such that $X\fragmentco{0}{x_p} = X_L$, $Y\fragmentco{0}{y_p} = Y_L$, $X\fragmentco{x_{\ell}}{x_p} = X_L^*$, $Y\fragmentco{y_{\ell}}{y_p} = Y_L^*$, $X\fragmentco{x_p}{x_q} = X_M$, $Y\fragmentco{y_p}{y_q} = Y_M$, $X\fragmentco{x_q}{|X|} = X_R$, $Y\fragmentco{y_q}{|Y|} = Y_R$, $X\fragmentco{x_q}{x_r} = X_R^*$, and $Y\fragmentco{y_q}{y_r} = Y_R^*$.
Thus, we can reinterpret alignments $\cO_L, \cO_M$, and $\cO_R$ in the following way:
$\cO_L : X\fragmentco{0}{x_p}\onto Y\fragmentco{0}{y_p}$, $\cO_M : X\fragmentco{x_\ell}{x_r}\onto Y\fragmentco{y_\ell}{y_r}$, and  $\cO_R : X\fragmentco{x_q}{|X|}\onto Y\fragmentco{y_q}{|Y|}$.
The claim statement is identical to that of \cref{cor:intersect}, so it suffices to check that the assumptions of \cref{cor:intersect} are satisfied.

The maximality of $\hX^*_L$ and continuity of $\sed$ from \cref{fct:selfed-properties} show that $\hX^*_L=\hX_L$ or $\sed(\hX^*_L)= 11k$.
In the former case, we also have $X^*_L=\cB(\hX^*_L)=\cB(\hX_L)=X_L$ and $Y^*_L=\cC(\cA(\hX^*_L))=\cC(\cA(\hX_L))=\cC(\hY_L)=Y_L$, so $\ell=0$.
In the latter case, the triangle inequality property of \cref{fct:selfed-properties} implies $\sed(X^*_L)\ge \sed(\hX^*_L)-2\ed(\hX^*_L,X^*_L) \ge 11k-2\ed_{\cB_L}(\hX^*_L,X^*_L) \ge 11k-2\floor{k/W}\ge 9k > 4\cdot 2k \ge 4\wed_{\cD}(X,Y)$.
A symmetric argument shows that $r=t$ or $\ed(X^*_R)>4\wed_{\cD}(X,Y)$, so the claim indeed follows from \cref{cor:intersect}.
\end{claimproof}

Following \cref{clm:glue:equiv}, it suffices to reinterpret $\cO_L$, $\cO_R$, and $\cO_M$ as alignments mapping fragments of $X$ to fragments of $Y$ (for which, we traverse the breakpoint representations and shift the coordinates accordingly), find intersection points $(x^*_L,y^*_L)\in \cO_L\cap \cO_M$ and $(x^*_R,y^*_R)\in  \cO_R\cap \cO_M$, and construct the resulting alignment $\cO$ by following $\cO_L$ from $(0,0)$ to $(x^*_L,y^*_L)$, following $\cO_M$ from  $(x^*_L,y^*_L)$ to $(x^*_R,y^*_R)$, and following $\cO_R$ from $(x^*_R,y^*_R)$ to $(|X|,|Y|)$.
Each of these steps takes time proportional to $\Oh(1+\wed(X_L,Y_L)+\wed(X_R,Y_R) + \wed(X^*_LX_MX^*_R, Y^*_LY_MY^*_R))=\Oh(k)$ due to \cref{lem:triw-cor}.
\end{proof}

We are now ready to describe the dynamic \modelname algorithm for the general case under a light constraint that the polylogarithmic factors are taken from some universal fixed bound $n$ given to the algorithm at the initialization.
This algorithm is essentially a dynamic version of \cref{lm:upgrade_alignment}.

\begin{lemma}\label{lm:dynamic}
    There exists a dynamic algorithm that, given a positive integer $n \ge 2$ and oracle access to a weight function $w:\Esigma^2\to \fragmentcc{0}{W}$ with $W \le n$ and strings $X,Y\in \Sigma^*$ that throughout the lifetime of the algorithm satisfy $|X| + |Y| \le n$, maintains the breakpoint representation of a $w$-optimal alignment $\cO : X\onto Y$ subject to edits in $X$ and $Y$.
    The algorithm takes $\Oh(W\cdot d^2\log^2 n)$ time plus $\Oh(d^2 \log (W + 1))$ \modelname{} operations for initialization and $\Oh(W^2\cdot d\log^2 n)$ time plus $\Oh(W\cdot d \log (W + 1))$ \modelname{} operations for updates, where $d=\max\{1, \wed(X,Y)\}$ (here distance is taken after the edit).
\end{lemma}
\begin{proof}
    We develop two dynamic algorithms that will use each other recursively.
    The first dynamic algorithm given oracle access to strings $X$ and $Y$ maintains the breakpoint representation of a $w$-optimal alignment $\cO : X \onto Y$ subject to edits in $X$ and $Y$ and is the algorithm required by the lemma statement.
    We call this dynamic algorithm $\Ta$.
    The other algorithm's interface is similar to $\Ta$, but its lifetime is limited to $\floor{k / W}$ updates, where $k$ is $\wed(X, Y)$ at the initialization.
    We call the second dynamic algorithm $\Ra$.
    The first algorithm $\Ta$ maintains two instances of $\Ra$ and switches between them in epochs.
    The second algorithm $\Ra$ initially splits $X$ into $X_L X_M X_R$ and $Y$ into $Y_L Y_M Y_R$, maintains the algorithm of \cref{lem:glue} on top of them, and recursively maintains an instance of $\Ta$ for $X_L$ and $Y_L$ and an instance of $\Ta$ for $X_R$ and $Y_R$.
    We now describe in detail how these two dynamic algorithms work.

    Consider some instance $\nu$ of $\Ta$ that maintains $\wed$ of two strings $X_{\nu}$ and $Y_{\nu}$ subject to edits.
    Let $\hX_{\nu}$ and $\hY_{\nu}$ be these strings at the initialization phase.
    Algorithm $\Ta$ works in \emph{time blocks}.
    The first time block starts at the initialization, and each time block lasts $\ceil{\varepsilon k_{\nu} / W}$ updates, where $\varepsilon \coloneqq 1 / 500$, and $k_{\nu}$ is equal to $\wed(X_{\nu}, Y_{\nu})$ at the start of the current time block.
    If $\varepsilon k_{\nu} / W \le 1$, this time block is a \emph{single update}, and if $\varepsilon k_{\nu} / W > 1$, it is an \emph{epoch} of length $\ceil{\varepsilon k_{\nu} / W} \ge 2$ updates.
    Let $d_{\nu}$ be $\wed(X_{\nu}, Y_{\nu})$ after the current update.
    Instance $\nu$ maintains $k_{\nu}$, $d_{\nu}$, and potentially two instances $\bA$ and $\bB$ of $\Ra$.
    At the initialization stage, we invoke the algorithm of \cref{lm:full_algorithm} to compute $\hk_{\nu} \coloneqq \wed(\hX_{\nu}, \hY_{\nu})$, where $\hX_{\nu}$ and $\hY_{\nu}$ are $X_{\nu}$ and $Y_{\nu}$ at the initialization.
    Note that at the beginning of the first time block we have $k_{\nu} = \hk_{\nu}$.
    If $\varepsilon \hk_{\nu} / W > 1$, we additionally initialize an instance $\bA$ of $\Ra$.

    We now describe how updates are handled.
    Let us first assume that $\varepsilon k_{\nu} / W > 1$ holds for all time blocks.
    In other words, all time blocks are epochs that last $\ceil{\varepsilon k_{\nu} / W} \ge 2$ updates.
    The updates during the epoch are applied immediately to $\bA$, which we use to maintain a $w$-optimal alignment $\cO : X_{\nu} \onto Y_{\nu}$, and also buffered in a queue.
    In the background, during all epochs except for the first one, we initialize $\bB$ with $X'_{\nu}$ and $Y'_{\nu}$, where $X'_{\nu}$ and $Y'_{\nu}$ are strings $X_{\nu}$ and $Y_{\nu}$ at the beginning of the current epoch.
    Furthermore, we replay the updates using the buffer so that $\bB$ is up-to-date by the end of the epoch.
    At the end of each epoch except for the first one, $\bA$ and $\bB$ switch roles.
    Additionally, at the end of each epoch $k_{\nu}$ is assigned the value of $d_{\nu}$.

    It remains to drop the assumption that $\varepsilon k_{\nu} / W > 1$ at the beginning of each time block.
    If $\varepsilon k_{\nu} / W \le 1$, the current time block is a single update.
    In this case, we discard $\bA$ and $\bB$, and as long as $\varepsilon k_{\nu} / W \le 1$, compute a $w$-optimal alignment from scratch using \cref{lm:full_algorithm}.
    As soon as $\varepsilon d_{\nu} / W > 1$, we initialize $\bA$ and start running regular epochs.

    We now describe how the algorithm $\Ra$ works.
    Consider some instance $\bA$ of $\Ra$ that maintains $\wed$ of two strings $X_{\bA}$ and $Y_{\bA}$ subject to at most $\floor{k_{\bA} / W}$ updates, where $k_{\bA}$ is equal to $\wed$ of $X_{\bA}$ and $Y_{\bA}$ at the initialization phase.
    When $\bA$ is initialized, it first invokes \cref{lm:full_algorithm} to find $k_{\bA}$ and the breakpoint representation of an optimal alignment $(x_i, y_i)_{i=0}^t \coloneqq \cO : X_{\bA} \onto Y_{\bA}$.
    If $k_{\bA} = 0$, $\floor{k_{\bA} / W} = 0$ updates are to be handled, and thus we can just terminate.
    Otherwise, we pick $m \in \fragmentcc{0}{t}$ as the largest possible index satisfying $\wed_{\cO}(X\fragmentco{0}{x_m}, Y\fragmentco{0}{y_m}) \le k_{\bA} / 2$.
    We then define $X_L \coloneqq X\fragmentco{0}{x_m}$, $Y_L \coloneqq Y\fragmentco{0}{y_m}$, $X_M \coloneqq X\fragmentco{x_m}{x_{m + 1}}$, $Y_M \coloneqq Y\fragmentco{y_m}{y_{m + 1}}$, $X_R \coloneqq X\fragmentco{x_{m + 1}}{|X|}$, and $Y_R \coloneqq Y\fragmentco{y_{m + 1}}{|Y|}$.
    Note that $|X_M| \le 1$ and $\wed(X_L, Y_L) + \wed(X_M, Y_M) + \wed(X_R, Y_R) = k_{\bA}$.
    We then initialize the dynamic algorithm of \cref{lem:glue} for these strings with parameters $n$ and $k_{\bA}$.
    Furthermore, we initialize an instance $\lambda_{\bA}$ of $\Ta$ for $X_L$ and $Y_L$ and an instance $\rho_{\bA}$ of $\Ta$ for $X_R$ and $Y_R$.
    Each update to $\bA$ is then applied to the instance of \cref{lem:glue}, and to $\lambda_{\bA}$ or $\rho_{\bA}$ if the update affects the corresponding parts of $X$ or $Y$.
    After each update, $\lambda_{\bA}$ returns a $w$-optimal alignment $\cO_L : X_L \onto Y_L$ and $\rho_{\bA}$ returns a $w$-optimal alignment $\cO_R : X_R \onto Y_R$.
    We then use them to find a $w$-optimal alignment $\cO : X \onto Y$ using \cref{lem:glue}.

    \medskip

    We now prove correctness of the presented algorithms inductively.
    For the correctness of $\Ra$ conditioned on the correctness of the underlying instances of $\Ta$, we just need to show that all conditions for \cref{lem:glue} are satisfied when we create $\bA$ and $\bB$.
    The bounds on the lengths of the input strings are guaranteed by the current lemma statement.
    Furthermore, as $W \le n$, throughout the lifetime of our algorithm we have $\wed(X, Y) \le (|X| + |Y|) \cdot W \le n^2$, and thus the same condition holds for all recursive calls.
    Finally, $|X_m| \le 1 \le k$. Therefore, \cref{lem:glue} is indeed applicable.
    We now prove correctness of $\Ta$ conditioned on the correctness of the underlying instances of $\Ra$.
    For that, we need to prove that at no point in time $\bA$ or $\bB$ are expected to handle more updates than their budget.
    Each instance of $\Ra$ is used for at most two epochs of the algorithm.
    Let $k_{\nu}$ be the value, with which the current instance of $\Ra$ was initialized, and $k'_{\nu}$ be the value of $\wed(X_{\nu}, Y_{\nu})$ at the beginning of the next epoch.
    As after one update, $\wed(X_{\nu}, Y_{\nu})$ changes by at most $W$, we have that $k'_{\nu} \le k_{\nu} + W \cdot \ceil{\varepsilon k_{\nu} / W} \le (1 + 2\varepsilon) \cdot k_{\nu}$, where $\ceil{\varepsilon k_{\nu} / W} \le 2 \varepsilon k_{\nu} / W$ as $\varepsilon k_{\nu} / W > 1$.
    Therefore, these two epochs last at most $\ceil{\varepsilon k_{\nu} / W} + \ceil{\varepsilon k'_{\nu} / W} \le \ceil{\varepsilon k_{\nu} / W} + \ceil{\varepsilon \cdot (1 + 2 \varepsilon) \cdot k_{\nu} / W} \le \varepsilon \cdot (4 + 4 \varepsilon) \cdot k_{\nu} / W \le (k_{\nu} - W) / W \le \floor{k_{\nu} / W}$ updates, where we use $\varepsilon = 1 / 500$ and $\varepsilon k_{\nu} / W > 1$.
    Hence, indeed, the instance of $\Ra$ in question does not run out of updates to handle.

    \medskip

    We now analyze the time complexity of the resulting algorithm.
    The initialization of an instance $\nu$ of $\Ta$ ran on a pair of strings $X_{\nu}$ and $Y_{\nu}$ (potentially with a corresponding instance $\bA$ of $\Ra$) ignoring the further recursive calls takes $\Oh(W \cdot \hk_{\nu}^2 \log^2 n)$ time dominated by the calls to \cref{lm:full_algorithm} and \cref{lem:glue}.
    A single update to $\nu$ inside an epoch ignoring the further recursive calls and the maintenance of $\bB$ takes $\Oh(W \cdot k_{\nu} \log^2 n) = \Oh(W \cdot d_{\nu} \log^2 n)$ time dominated by the update of \cref{lem:glue}.
    Here we use the fact that at most $\ceil{\varepsilon k_{\nu} / W} \le 2 \varepsilon k_{\nu} / W$ updates are performed in the current epoch, and each one of them changes $\wed(X_{\nu}, Y_{\nu})$ by at most $W$, and thus $(1 - 2 \varepsilon) k_{\nu} \le k_{\nu} - W  \cdot \ceil{\varepsilon k_{\nu} / W} \le d_{\nu} \le k_{\nu} + W \cdot \ceil{\varepsilon k_{\nu} / W} \le (1 + 2 \varepsilon) k_{\nu}$.
    Choose a sufficiently large constant $\alpha$ that is not smaller than the maximum of the constants hidden under all $\Oh$ notations in the previous three sentences.
    Denote $\beta = 2 \alpha$, $\gamma = 18924 \alpha$, and $\delta = 37250 \alpha$.
    We use induction on the depth of the recursion of $\Ta$ to prove that run on a pair of strings $X_{\nu}$ and $Y_{\nu}$, an instance $\nu$ of $\Ta$ takes at most $\beta \cdot W \cdot \hk_{\nu}^2 \log^2 n$ time for initialization and at most $\delta \cdot d_{\nu} \cdot W^2 \log^2 n$ time for updates.
    Furthermore, as an induction helper fact, we prove that for any $\bk \ge \hk_{\nu}$, the first $\ceil{\varepsilon \cdot \bk / (2 W)}$ updates take at most $(1 + 2 \varepsilon) \cdot \gamma \cdot \bk \cdot W^2 \log^2 n$ time each.
    In particular, the first two conditions prove the lemma time complexity.

    At the initialization phase, all local work takes at most $\alpha \cdot W \cdot \hk_{\nu}^2 \log^2 n$ time.
    Furthermore, $\bA$ potentially creates two recursive calls of $\Ta$ from $(X_L, Y_L)$ and $(X_R, Y_R)$.
    Moreover, note that $\wed(X_L, Y_L) \le \wed_{\cO}(X_L, Y_L) \le \hk_{\nu} / 2$ and $\wed(X_R, Y_R) \le \wed_{\cO}(X_R, Y_R) = \wed_{\cO}(X, Y) - \wed_{\cO}(X\fragmentco{0}{x_{m + 1}}, Y\fragmentco{0}{y_{m + 1}}) \le \hk_{\nu} - \hk_{\nu} / 2 = \hk_{\nu} / 2$ by the definition of $m$.
    By the inductive hypothesis, each of these recursive calls takes at most $\beta \cdot W \cdot (\hk / 2)^2 \log^2 n$ time for initialization.
    In total, the initialization phase takes time at most $\alpha \cdot W \cdot \hk^2 \log^2 n + 2 \cdot \beta \cdot W \cdot (\hk / 2)^2 \log^2 n = (\alpha + \beta / 2) \cdot W \cdot \hk^2 \log^2 n = \beta \cdot W \cdot \hk^2 \log^2 n$, thus proving the claim.

    We now analyze the updates from the time blocks that consist of single updates.
    In other words, if $\varepsilon k_{\nu} / W \le 1$, and thus $k_{\nu} \le W / \varepsilon$.
    We handle such an update by invoking \cref{lm:full_algorithm}.
    Furthermore, $d_{\nu} \le k_{\nu} + W \le W \cdot (1 + 1 / \varepsilon)$.
    It takes time $\alpha \cdot W \cdot d_{\nu}^2 \log^2 n \le \alpha \cdot W \cdot (W \cdot (1 + 1 / \varepsilon)) \cdot d_{\nu} \log^2 n = ((1 + 1 / \varepsilon) \cdot \alpha) \cdot W^2 \cdot d_{\nu} \log^2 n \le (\gamma / 3) \cdot W^2 \cdot d_{\nu} \log^2 n$, where the last inequality can be checked manually.
    Furthermore, if after this update we have $\varepsilon d_{\nu} / W > 1$, we spend additional $\beta \cdot W \cdot d_{\nu}^2 \log^2 n = 2 \alpha \cdot W \cdot d_{\nu}^2 \log^2 n \le (2 \cdot \gamma / 3) \cdot W^2 \cdot d_{\nu} \log^2 n$ time in total to initialize $\bA$ if the next time block is an epoch.
    In any case, the total time spent is at most $\gamma \cdot W^2 \cdot d_{\nu} \log^2 n$.

    We now analyze the updates from an epoch that is either the first time block or is preceded by a time block consisting of a single update.
    Such an epoch consists of maintaining only $\bA$.
    The local cost of an update is at most $\alpha \cdot W \cdot d_{\nu} \log^2 n \le \alpha \cdot W^2 \cdot d_{\nu} \log^2 n$ time.
    Furthermore, a recursive update may be invoked from either $\lambda_{\bA}$ or $\rho_{\bA}$.
    At the initialization of $\bA$, we have $\hk_{\lambda_{\bA}}, \hk_{\rho_{\bA}} \le k_{\nu} / 2$, where $k_{\nu} \le d_{\nu} / (1 - 2\varepsilon) \le (1 + 3 \varepsilon) \cdot d_{\nu}$ as $d_{\nu} \ge (1 - 2 \varepsilon) k_{\nu}$.
    At most $\ceil{\varepsilon k_{\nu} / W}$ updates were applied, and thus $d_{\lambda_{\bA}}, d_{\rho_{\bA}} \le (1 / 2 + 2 \varepsilon) \cdot k_{\nu} \le (1 / 2 + 2 \varepsilon) \cdot (1 + 3 \varepsilon) \cdot d_{\nu} \le (1 / 2 + 4 \varepsilon) \cdot d_{\nu}$.
    By the induction hypothesis, a recursive update takes time at most $\delta \cdot W^2 \cdot d \log^2 n$, where $d$ is either $d_{\lambda_{\bA}}$ or $d_{\rho_{\bA}}$.
    In total, the whole update takes time at most $\alpha \cdot W^2 \cdot d_{\nu} \log^2 n + \delta \cdot W^2 \cdot d \log^2 n \le \alpha \cdot W^2 \cdot d_{\nu} \log^2 n + \delta \cdot W^2 \cdot ((1 / 2 + 4 \varepsilon) \cdot d_{\nu}) \log^2 n = (\alpha + (1 / 2 + 4 \varepsilon) \cdot \delta) \cdot W^2 \cdot d_{\nu} \log^2 n \le \gamma \cdot W^2 \cdot d_{\nu} \log^2 n$, where the last inequality can be checked manually.
    As $\gamma \le \delta$, we prove the desired upper bound on the time complexity of these updates.

    It remains to analyze the updates from an epoch that is preceded by another epoch.
    We have $\varepsilon k_{\nu} / W > 1$ in this case.
    The local cost of an update ignoring the maintenance of $\bB$ is at most $\alpha \cdot W \cdot d_{\nu} \log^2 n \le \alpha \cdot W^2 \cdot d_{\nu} \log^2 n$ time.
    Furthermore, a recursive update may be invoked from either $\lambda_{\bA}$ or $\rho_{\bA}$ of $\bA$.
    Let $k'_{\nu}$ be the value of $d_{\nu}$ at the start of the previous epoch.
    That is, the value, with which $\bA$ was initialized.
    At the initialization of $\bA$, we have $\hk_{\lambda_{\bA}}, \hk_{\rho_{\bA}} \le k'_{\nu} / 2$, where $k'_{\nu} \le k_{\nu} / (1 - 2 \varepsilon) \le d_{\nu} / (1 - 2 \varepsilon)^2 \le (1 + 5 \varepsilon) \cdot d_{\nu}$.
    At most $\ceil{\varepsilon k_{\nu} / W} + \ceil{\varepsilon k'_{\nu} / W} \le \ceil{\varepsilon \cdot (1 + 3 \varepsilon) \cdot d_{\nu} / W} + \ceil{\varepsilon \cdot (1 + 5 \varepsilon) \cdot d_{\nu} / W} \le \varepsilon (4 + 16 \varepsilon) \cdot d_{\nu} / W$ updates were applied, and thus $d_{\lambda_{\bA}}, d_{\rho_{\bA}} \le (1 + 5 \varepsilon) \cdot d_{\nu} / 2 + \varepsilon (4 + 16 \varepsilon) d_{\nu} = (1 / 2 + 13 \varepsilon / 2 + 16 \varepsilon^2) \cdot d_{\nu} \le (1 / 2 + 8 \varepsilon) \cdot d_{\nu}$.
    By the induction hypothesis, such update takes time at most $\delta \cdot W^2 \cdot d \log^2 n \le (1 / 2 + 8 \varepsilon) \cdot \delta \cdot W^2 \cdot d_{\nu} \log^2 n$, where $d$ is either $d_{\lambda_{\bA}}$ or $d_{\rho_{\bA}}$.
    Furthermore, during the whole epoch, the initialization and replaying of the updates using the buffer is performed for $\bB$.

    The initialization of $\bB$ similarly to the initialization of $\bA$ takes at most $\beta \cdot W \cdot k_{\nu}^2 \log^2 n$ time.
    Distributed equally over $\ceil{\varepsilon k_{\nu} / W}$ updates, it is $\beta W k_{\nu}^2 \log^2 n / \ceil{\varepsilon k_{\nu} / W} \le \beta W k_{\nu}^2 \log^2 n / (\varepsilon k_{\nu} / W) = (\beta / \varepsilon) W^2 k_{\nu} \log^2 n \le (\beta / \varepsilon) W^2 (d_{\nu} / (1 - 2\varepsilon)) \log^2 n$ time.
    Furthermore, at most $\ceil{\varepsilon k_{\nu} / W}$ updates are replayed locally and recursively in $\lambda_{\bB}$ and $\rho_{\bB}$.
    Set $\bk \coloneqq \floor{k_{\nu} / 2}$.
    As $\varepsilon k_{\nu} / W > 1$, we have $\varepsilon k_{\nu} > 1$, and thus $\bk \ge (k_{\nu} - 1) / 2 > k_{\nu} \cdot (1 - \varepsilon) / 2$.
    As $\bk = \floor{k_{\nu} / 2} \ge \hk_{\lambda_{\bB}}$ and $\bk \ge \hk_{\rho_{\bB}}$, by the inductive hypothesis for $\lambda_{\bB}$ and $\rho_{\bB}$, we know that the first $\ceil{\varepsilon \bk / (2 W)} \le \ceil{\varepsilon k_{\nu} / W}$ updates replayed recursively take time at most $(1 + 2 \varepsilon) \cdot \gamma \cdot \bk \cdot W^2 \log^2 n$ each.
    Replaying recursively the remaining $\ceil{\varepsilon k_{\nu} / W} - \ceil{\varepsilon \bk / (2 W)}$ updates takes time at most $\delta \cdot d \cdot W^2 \log^2 n$ each, where $d$ is either $d_{\lambda_{\bB}}$ or $d_{\rho_{\bB}}$.
    We have $d_{\lambda_{\bB}}, d_{\rho_{\bB}} \le \bk + \ceil{\varepsilon k_{\nu} / W} \cdot W \le \bk + 2 \varepsilon k_{\nu} \le \bk + 2 \varepsilon \cdot (2 \bk / (1 - \varepsilon)) \le (1 + 5 \varepsilon) \cdot \bk$.
    Therefore, these updates take time at most $(1 + 5 \varepsilon) \cdot \delta \cdot \bk \cdot W^2 \log^2 n$ each.

    Distributed equally over $\ceil{\varepsilon k_{\nu} / W}$ updates\footnote{In reality, we cannot start replaying the updates before we initialize $\bB$, so actually we distribute equally the combined initialization and total update time of $\bB$, which is equivalent.},
    replaying of the updates on $\bB$ takes time
    \begin{align*}
    &(\ceil{\varepsilon \bk / (2 W)} \cdot (1 + 2 \varepsilon) \gamma \bk W^2 \log^2 n + (\ceil{\varepsilon k_{\nu} / W} - \ceil{\varepsilon \bk / (2 W)}) \cdot (1 + 5 \varepsilon) \delta \bk W^2 \log^2 n) / \ceil{\varepsilon k_{\nu} / W}\\
    &= ((1 + 5 \varepsilon) \cdot \delta - ((1 + 5 \varepsilon) \cdot \delta - (1 + 2 \varepsilon) \cdot \gamma) \cdot \ceil{\varepsilon \bk / (2 W)} / \ceil{\varepsilon k_{\nu} / W}) \cdot \bk \cdot W^2 \log^2 n\\
    &\le ((1 + 5 \varepsilon) \cdot \delta - ((1 + 5 \varepsilon) \cdot \delta - (1 + 2 \varepsilon) \cdot \gamma) \cdot \ceil{\varepsilon \bk / (2 W)} / \ceil{\varepsilon k_{\nu} / W}) \cdot (k_{\nu} / 2) \cdot W^2 \log^2 n\\
    &\le ((1 + 5 \varepsilon) \cdot \delta - ((1 + 5 \varepsilon) \cdot \delta - (1 + 2 \varepsilon) \cdot \gamma) \cdot \ceil{\varepsilon \bk / (2 W)} / (5 \ceil{\varepsilon \bk / (2 W)})) \cdot (k_{\nu} / 2) \cdot W^2 \log^2 n\\
    &= ((1 + 5 \varepsilon) \cdot \delta - ((1 + 5 \varepsilon) \cdot \delta - (1 + 2 \varepsilon) \cdot \gamma) / 5) \cdot (k_{\nu} / 2) \cdot W^2 \log^2 n\\
    &= ((2 / 5 + 2 \varepsilon) \cdot \delta + (1 / 10 + \varepsilon / 5) \cdot \gamma) \cdot k_{\nu} \cdot W^2 \log^2 n\\
    &\le ((2 / 5 + 2 \varepsilon) \cdot \delta + (1 / 10 + \varepsilon / 5) \cdot \gamma) \cdot (d_{\nu} / (1 - 2 \varepsilon)) \cdot W^2 \log^2 n,
\end{align*}
where $(1 + 5 \varepsilon) \cdot \delta - (1 + 2 \varepsilon) \cdot \gamma > 0$ holds as $\delta > \gamma$, and $\ceil{\varepsilon k_{\nu} / W} \le 5 \ceil{\varepsilon \cdot \bk / (2 W)}$ holds as
    \begin{align*}
        \ceil{\varepsilon k_{\nu} / W} &= \ceil{4 \cdot (\varepsilon \cdot (k_{\nu} - 1) / (4 W)) + \varepsilon / W} \le \ceil{4 \cdot (\varepsilon \cdot \bk / (2 W)) + \varepsilon / W}\\
                                       &\le 4 \ceil{\varepsilon \cdot \bk / (2 W)} + \ceil{\varepsilon / W} = 4 \ceil{\varepsilon \cdot \bk / (2 W)} + 1 \le 5 \ceil{\varepsilon \cdot \bk / (2 W)}.
    \end{align*}

    The total update time is thus bounded by
    \begin{align*}
        & (\alpha + (1 / 2 + 8 \varepsilon) \cdot \delta + (\beta / \varepsilon + (2 / 5 + 2 \varepsilon) \cdot \delta + (1 / 10 + \varepsilon / 5) \cdot \gamma) / (1 - 2 \varepsilon)) \cdot d_{\nu} \cdot W^2 \log^2 n\\
        & \le \delta \cdot d_{\nu} \cdot W^2 \log^2 n,
    \end{align*}
    where the inequality can be checked manually.

    It remains to prove that for any $\bk \ge \hk_{\nu}$, the first $\ceil{\varepsilon \cdot \bk / (2 W)}$ updates take at most $(1 + 2 \varepsilon) \cdot \gamma \cdot \bk \cdot W^2 \log^2 n$ time each.
    We consider two cases.
    If $\bk \le 2 \hk_{\nu}$, we have $\ceil{\varepsilon \cdot \bk / (2 W)} \le \ceil{\varepsilon \hk_{\nu} / W}$.
    We either have $\varepsilon \hk_{\nu} / W \le 1$, and thus $\ceil{\varepsilon \cdot \bk / (2 W)} = 1$, and this first update is a part of a single update time block, or we have $\varepsilon \hk_{\nu} / W > 1$, and thus the first epoch lasts for $\ceil{\varepsilon \hk_{\nu} / W}$ updates.
    Therefore, as discussed before, the first $\ceil{\varepsilon \cdot \bk / (2 W)}$ updates take time at most $\gamma \cdot d_{\nu} \cdot W^2 \log^2 n \le (1 + 2 \varepsilon) \cdot \gamma \cdot \bk \cdot W^2 \log^2 n$ each, where $d_{\nu} \le \hk_{\nu} + \ceil{\varepsilon \hk_{\nu} / W} \cdot W \le (1 + 2 \varepsilon) \cdot \hk_{\nu} \le (1 + 2 \varepsilon) \cdot \bk$, thus proving the claim.

    We now consider the case $\bk > 2 \hk_{\nu}$.
    In this case, any update takes time at most $\delta \cdot d_{\nu} \cdot W^2 \log^2 n \le \delta \cdot (1 + 2 \varepsilon) \cdot \hk_{\nu} \cdot W^2 \log^2 n < \delta \cdot (1 + 2 \varepsilon) \cdot (\bk / 2) \cdot W^2 \log^2 n = ((1 + 2 \varepsilon) \cdot \delta / 2) \cdot \bk \cdot W^2 \log^2 n \le \gamma \cdot \bk \cdot W^2 \log^2 n$, where the last inequality can be checked manually.

    Therefore, we proved all the time bounds required by the induction.
    It remains to bound the number of \modelname{} operations.
    All \modelname{} operations we have in our algorithm arise from the calls to \cref{lm:full_algorithm} and \cref{lem:glue}.
    In both of these cases, the number of \modelname{} operations is smaller by a fixed factor of $\Theta(W \log^2 n / \log (W + 1))$ than the time complexity.
    Therefore, the resulting number of \modelname{} operations can be bounded as $\Oh(T \cdot \log (W + 1) / (W \cdot \log^2 n))$, where $T$ is the bound on the time complexity.
\end{proof}

Before proceeding to our main theorem of the section, let us show a simple variation of the \modelname{} implementation in the dynamic setting whose time complexity does not degrade over time.

\begin{lemma}\label{lm:dynamic-pillar}
    Let $X_1, \ldots, X_{\ell}$ be a collection of strings.
    There is an algorithm that dynamically maintains \modelname{} over $X_1, \ldots, X_{\ell}$ while the strings undergo edits.
    The algorithm takes $\Oh(n \log^{\Oh(1)}\log n)$ time for the initialization and $\Oh(\log n \log^{\Oh(1)}\log n)$ time for every edit and \modelname{} operation where $n = \max\{2, \sum_i |X_i|\}$.
\end{lemma}

\begin{proof}
    We concatenate all the input strings into a single string $X \coloneqq \bigodot_{i=1}^{\ell} X_i$ of length at most $n$ and maintain \modelname{} over $X$.
We use a deterministic implementation of the \modelname{} over $X$ in the dynamic setting~\cite[Section 8]{KK22}, \cite[Section 7.2]{CKW22} in epochs.
    It takes $\Oh(n \log^{\Oh(1)}\log n)$ time for the initialization and $\Oh(\log (n + m) \log^{\Oh(1)}\log (n + m))$ time for every edit and \modelname{} operation where $m$ is the number of edits applied to $X$.

    We maintain two instances of the described dynamic algorithm, denoted $\bA$ and $\bB$, and partition the lifetime of the algorithm into epochs.
    We denote by $\hX$ the string $X$ at the beginning of the epoch.
    Each epoch lasts for $\floor{\hn / 4}$ edits, where $\hn \coloneqq \max\{2, |\hX|\}$.
    
    The updates during the epoch are applied immediately to the algorithm $\bA$, which we use to answer \modelname{} queries, and also buffered in a queue.
    In the background, we initialize $\bB$ with $\hX$ and replay the updates using the buffer so that $\bB$ is up-to-date by the end of the epoch.
    In the next epoch, the algorithms $\mathbf{A}$ and $\mathbf{B}$ switch roles.

    As far as the initialization phase is concerned, we initialize $\bA$, which takes $\Oh(\hn \log^{\Oh(1)}\log \hn)$ time.
    We now analyze the time complexity of the updates.
    Note that every algorithm we maintain exists for at most two epochs.
    In particular, $m \le \floor{\hn / 4} + \floor{(\hn + \hn / 4) / 4} \le 3 \hn / 4$ where the algorithm is initialized with the string of length $\hn$.
    Furthermore, we have $n = \Theta(\hn)$ throughout the lifetime of this algorithm.
    Hence, edits and queries to $\bA$ take $\Oh(\log n \log^{\Oh(1)}\log n)$ time each.
    The initialization and replaying of the updates on $\bB$ takes $\Oh(\hn \log^{\Oh(1)} \log \hn + \floor{\hn / 4} \cdot \log \hn \log^{\Oh(1)}\log \hn)$ which costs $\Oh(\log n \log^{\Oh(1)}\log n)$ time per update if distributed equally among the updates.
\end{proof}

We now drop the artificial constraints imposed by \cref{lm:dynamic} and show how to implement the \modelname{} operations with little overhead.

\begin{theorem}\label{thm:dynamic}
    There exists a dynamic algorithm that, given oracle access to a weight function $w:\Esigma^2\to \fragmentcc{0}{W}$ and strings $X,Y\in \Sigma^*$, maintains the breakpoint representation of a $w$-optimal alignment $\cO : X\onto Y$ subject to edits in $X$ and $Y$.
    The algorithm takes $\Oh(n \log^{\Oh(1)}\log n + W\cdot d^2\log^2 n)$ time for initialization and $\Oh(W^2\cdot d\log^2 n)$ time for updates, where $n = \max\{2, |X| + |Y|\}$ and $d=\max\{1, \wed(X,Y)\}$ (here distance and string lengths are taken after the edit).
\end{theorem}
\begin{proof}
    We use \cref{lm:dynamic} in epochs, along with a deterministic implementation of the \modelname{} model over $X$ and $Y$ in the dynamic setting from \cref{lm:dynamic-pillar}. 
    The extra overhead of the latter is $\Oh(n \log^{\Oh(1)}\log n)$ for the initialization and $\Oh(\log n \log^{\Oh(1)}\log n)$ for each update and \modelname{} operation.
    As there is a $\Theta(W \cdot \log^2 n / \log (W + 1)) = \Omega(\log^2 n) = \omega(\log n \log^{\Oh(1)}\log n)$ gap in \cref{lm:dynamic} between the time complexity and the number of \modelname{} operations, the implementation of \modelname{} does not increase the time complexity of the updates.

    We maintain two instances of the dynamic algorithm of \cref{lm:dynamic}, denoted $\mathbf{A}$ and $\mathbf{B}$, and partition the lifetime of the algorithm into epochs. 
    We denote by $\hX$ and $\hY$ the strings $X$ and $Y$ at the beginning of the epoch.
    Each epoch lasts for $\floor{\hn / 5}$ edits, where $\hn \coloneqq 2 \cdot (|\hX| + |\hY|)$.
    Let us first assume that $\hn \ge 5$ and $W \le \hn$ hold at the beginning of each epoch.
    In this case, we have $\floor{\hn / 5} \ge \hn / 10$ as $\hn / 5 \ge 1$.

    The updates during the epoch are applied immediately to the algorithm $\mathbf{A}$, which we use to maintain a $w$-optimal alignment $\cO:X\onto Y$, and also buffered in a queue.
    In the background, we initialize $\mathbf{B}$ with a threshold $\hn$ and replay the updates using the buffer so that $\mathbf{B}$ is up-to-date by the end of the epoch. 
    In the next epoch, the algorithms $\mathbf{A}$ and $\mathbf{B}$ switch roles.

    As far as the initialization phase is concerned, we build \modelname{} over $X$ and $Y$ and initialize $\mathbf{A}$, which takes $\Oh(\hn \log^{\Oh(1)}\log \hn + W\cdot d^2\log^2 \hn)$ time. 

    We now analyze the time complexity of the updates.
    Pick sufficiently large constants $\alpha$ and $c$ such that the initialization of \cref{lm:dynamic} takes time at most $\alpha \cdot W\cdot k^2\log^2 \hn$, where $k$ is the value of $d$ at the start of the epoch, and an update in \cref{lm:dynamic} takes time at most $\alpha \cdot W^2\cdot d\log^2 \hn$.
    Then, while initializing $\bB$ and replaying the updates on it, we perform $90 \alpha W^2 \cdot d \log^2 \hn + \alpha W^2 \cdot d \log^2 \hn = \Oh(W^2 \cdot d \log^2 \hn)$ atomic operations from the buffer per update.
    If at some point the buffer is empty, the algorithm $\bB$ is already up-to-date, and we start performing updates on it simultaneously with $\bA$.
    This way, any update during the epoch costs $\Oh(W^2 \cdot d \log^2 \hn)$ time in total.
    On the other hand, we show that at the end of the epoch, $\bB$ is up-to-date.
    If at some point the buffer becomes empty, this is clear.
    Otherwise, we perform $\sum_t (90 \alpha W^2 \cdot d_t \log^2 \hn + \alpha W^2 \cdot d_t \log^2 \hn)$ operations in total, where the summation is over all points in time during the epoch.
    We first show that the values $90 \alpha W^2 \cdot d_t \log^2 \hn$ sum up to at least $\alpha \cdot W \cdot k^2 \log^2 \hn$.
    If $k \le 3 W$, we have $\sum_t 90 \alpha W^2 \cdot d_t \log^2 \hn \ge \sum_t 90 \alpha W^2 \log^2 \hn = \floor{\hn / 5} \cdot 90 \alpha W^2 \log^2 \hn \ge (\hn / 10) \cdot 90 \alpha W^2 \log^2 \hn = 9 \alpha \cdot \hn \cdot W^2 \log^2 \hn \ge 9 \alpha W^3 \log^2 \hn \ge \alpha \cdot W \cdot k^2 \log^2 \hn$.
    Otherwise, if $k > 3W$, we have $\ceil{k / (6 W)} \le k / (6 W) + 1 < k / (6 W) + k / (3 W) = k / (2 W)$.
    Furthermore, $k / (6 W) \le ((|\hX| + |\hY|) \cdot W) / (6 W) = (\hn \cdot W / 2) / (6 W) = \hn / 12 < \hn / 10 \le \floor{\hn / 5}$.
    Hence, the first $\ceil{k / (6 W)}$ updates are a part of the current epoch, and during them, we have $d \ge k - \ceil{k / (6 W)} \cdot W \ge k / 2$.
    Therefore, we have $\sum_t 90 \alpha W^2 \cdot d_t \log^2 \hn \ge (k / (6 W)) \cdot 90 \alpha W^2 \cdot (k / 2) \log^2 \hn \ge \alpha \cdot W \cdot k^2 \log^2 \hn$.
    Finally, the values $\alpha W^2 \cdot d_t \log^2 \hn$ over all updates sum up to the time required to replay the updates on $\bB$.
    Hence, $\bB$ is indeed up-to-date at the end of this epoch.

    We now show that $\hn$, with which $\bA$ is initialized is equal to $\Theta(\max\{2, |X| + |Y|\})$ throughout the lifetime of $\bA$ (and similarly for $\bB$), and thus we can replace $\hn$ with $n$ in the time complexities.
    The algorithm $\bA$ lives for two epochs.
    The first epoch lasts $\floor{\hn / 5} \le \hn / 5$ updates.
    At the end of this epoch the new value of $\hn' = 2 \cdot (|\hX'| + |\hY'|) \le 2 \cdot (\hn / 2 + \hn / 5) = (1 + 2 / 5) \cdot \hn$.
    Therefore, the two epochs last at most $\hn / 5 + (1 + 2 / 5) \cdot \hn / 5 = 12 \hn / 25$ updates.
    Hence, throughout the lifetime of $\bA$, we have $|X| + |Y| \ge \hn / 2 - 12 \hn / 25 = \hn / 50$.
    Furthermore, $|X| + |Y| \le \hn / 2 + 12 \hn / 25 \le \hn$.
    Hence, indeed the application of \cref{lm:dynamic} is valid and gives the desired time complexity.
    
    It remains to drop the assumption that $\hn \ge 5$ and $W \le \hn$ hold at the beginning of each epoch.
    We first deal with the second inequality.
    If $5 \le \hn < W$ holds at the beginning of the current epoch, we do not create an instance of $\bB$ during the current epoch, and in the next epoch instead of using $\bB$ to answer the queries, each time we simply compute a $w$-optimal alignment from scratch using \cref{lm:baseline-wed}.
    As $|X| + |Y| \le \hn$ during the upcoming two epochs as shown above, this algorithm takes at most $\Oh(\hn^2) \le \Oh(W^2) \le \Oh(W^2 \cdot d \log^2 n)$ time.
    Therefore, the updates in the next epoch are answered in the required time complexity.
    Note that during the current epoch, $\bA$ still can answer the updates as analyzed above.

    We finally drop the assumption that $n \ge 5$ holds at the beginning of every epoch.
    Whenever we have $\hn < 5$, we discard the algorithms $\mathbf{A}$ and $\mathbf{B}$ and, as long as $\hn < 5$, we simply compute a $w$-optimal alignment from scratch using \cref{lm:baseline-wed}.
    It takes time $\Oh(n^2) = \Oh(1) \le \Oh(W^2\cdot d\log^2 n)$.
    As soon as $\hn \ge 5$ holds again, we start the usual dynamic algorithm from scratch, which costs $\Oh(\hn \log^{\Oh(1)}\log \hn + W\cdot d^2\log^2 \hn) \le \Oh(W^2 \cdot d \log^2 n)$ time to initialize $\bA$, where we use that $\hn = \Oh(1)$ and $d \le W \cdot (|\hX| + |\hY|) = \Oh(W)$.
\end{proof}

As a corollary, we show the following variation of our dynamic algorithm.

\begin{corollary}\label{lm:dynamic-sleep}
    There exists a dynamic algorithm that, given a positive integer $\kt$ and oracle access to a weight function $w:\Esigma^2\to \fragmentcc{0}{W}$ and strings $X,Y\in \Sigma^*$, maintains $\wed_{\le \kt}(X, Y)$ subject to edits in $X$ and $Y$.
    Furthermore, at every point in time when $\wed(X, Y) \le \kt$, the algorithm also outputs the breakpoint representation of a $w$-optimal alignment $\cO : X\onto Y$.
    The algorithm takes $\Oh(n \log^{\Oh(1)}\log n + W\cdot d^2\log^2 n)$ time for initialization and $\Oh(W^2\cdot d\log^2 n)$ time for updates, where $n = \max\{2, |X| + |Y|\}$ and $d=\max\{1, \min\{\kt, \wed(X,Y)\}\}$ (here distance and string lengths are taken after the edit).
\end{corollary}

\begin{proof}
    On the high level, we maintain an algorithm of \cref{thm:dynamic} while $\wed(X, Y) = \Oh(\kt)$, and when we have $\wed(X, Y) \gg \kt$, we wait until the distance becomes small again to start maintaining the algorithm of \cref{thm:dynamic} from scratch.
    We now elaborate on the details.
    Note that as we are interested in computing $\wed_{\le \kt}$, we may replace the function $w$ with its capped version where we replace $w(a, b)$ with $\min\{w(a, b), \kt + 1\}$ for all $a, b \in \Esigma$.
    Hence, we may assume $W \le \kt + 1$.
    Moreover, note that by forcing the algorithm of \cref{lm:full_algorithm} to terminate after $\Theta(\kt^2 \cdot W \log^2 n)$ steps, we can compute $\wed_{\le 10 \kt}(X, Y)$ in time $\Oh(d^2 \cdot W \log^2 n)$ in the \modelname{} model.
    Furthermore, we may maintain \modelname{} over $X$ and $Y$ using \cref{lm:dynamic-pillar}.
    At the initialization, we compute $\wed_{\le 10 \kt}(X, Y)$, and if $\wed(X, Y) \le 10\kt$, we initialize the algorithm of \cref{thm:dynamic}.
    While $\wed(X, Y) \le 10\kt$, we proceed as in \cref{thm:dynamic}.
    When we have $\wed(X, Y) > 10 \kt$, we drop the algorithm of \cref{thm:dynamic} and start working in time blocks of $\floor{2\kt / W}$ updates called \emph{sleeping epochs}.
    Here $\floor{2\kt / W} \ge 1$ as $W \le \kt + 1$.
    We will later ensure that a sleeping epoch is never started with $\wed(X, Y) \le 4 \kt$.
    During such a sleeping epoch, we use the described above variation of \cref{lm:full_algorithm} to compute $\wed_{\le 10 \kt}(X, Y)$.
    Distributed over $\floor{2\kt / W}$ updates, it takes time $\Oh(\kt \cdot W^2 \log^2 n) = \Oh(d \cdot W^2 \log^2 n)$ per update.
    Here we use that during the sleeping epoch we have $d \ge 4 \kt - \floor{2\kt / W} \cdot W \ge 2 \kt$.
    If at the end of this sleeping epoch we still have $\wed(X, Y) > 6 \kt$, we continue the same way in the next sleeping epoch.
    Otherwise, during the next sleeping epoch, instead of running a variation of \cref{lm:full_algorithm}, we initialize the algorithm of \cref{thm:dynamic}.
    As \modelname{} is already initialized, distributed over $\floor{2\kt / W}$ updates, it takes time $\Oh(\kt \cdot W^2 \log^2 n) = \Oh(d \cdot W^2 \log^2 n)$ per update.
    Using \cref{thm:dynamic}, we can compute $\wed(X, Y)$ at the end of this sleeping epoch.
    If we have $\wed(X, Y) \le 6k$ we proceed as in \cref{thm:dynamic}.
    Otherwise, if $\wed(X, Y) > 6k$, we continue the sleeping epochs described above.
    As $d \ge 2 \kt$ holds during any sleeping epoch, it can never happen that $\wed(X, Y)$ becomes at most $\kt$ during some sleeping epoch.
    Furthermore, while we maintain \cref{thm:dynamic}, we always have $\wed(X, Y) \le \Oh(\kt)$, and thus we get the desired time complexity.
\end{proof}

\bibliography{refs}

\newcommand{\etalchar}[1]{$^{#1}$}
\begin{thebibliography}{AHWW16}

\bibitem[AB18]{AB18}
Amihood Amir and Itai Boneh.
\newblock Locally maximal common factors as a tool for efficient dynamic string
  algorithms.
\newblock In {\em CPM 2018}, volume 105 of {\em LIPIcs}, pages 11:1--11:13.
  Schloss Dagstuhl -- Leibniz-Zentrum f{\"{u}}r Informatik, 2018.
\newblock \href {https://doi.org/10.4230/LIPICS.CPM.2018.11}
  {\path{doi:10.4230/LIPICS.CPM.2018.11}}.

\bibitem[AB20]{AB20}
Amihood Amir and Itai Boneh.
\newblock Update query time trade-off for dynamic suffix arrays.
\newblock In {\em ISAAC 2020}, volume 181 of {\em LIPIcs}, pages 63:1--63:16.
  Schloss Dagstuhl -- Leibniz-Zentrum f{\"{u}}r Informatik, 2020.
\newblock \href {https://doi.org/10.4230/LIPICS.ISAAC.2020.63}
  {\path{doi:10.4230/LIPICS.ISAAC.2020.63}}.

\bibitem[ABCK19]{ABCK19}
Amihood Amir, Itai Boneh, Panagiotis Charalampopoulos, and Eitan Kondratovsky.
\newblock Repetition detection in a dynamic string.
\newblock In {\em ESA 2019}, volume 144 of {\em LIPIcs}, pages 5:1--5:18.
  Schloss Dagstuhl -- Leibniz-Zentrum f{\"{u}}r Informatik, 2019.
\newblock \href {https://doi.org/10.4230/LIPIcs.ESA.2019.5}
  {\path{doi:10.4230/LIPIcs.ESA.2019.5}}.

\bibitem[ABR00]{ABR00}
Stephen Alstrup, Gerth~St{\o}lting Brodal, and Theis Rauhe.
\newblock Pattern matching in dynamic texts.
\newblock In {\em SODA 2000}, pages 819--828. {ACM/SIAM}, 2000.
\newblock URL: \url{http://dl.acm.org/citation.cfm?id=338219.338645}.

\bibitem[ABW15]{ABW15}
Amir Abboud, Arturs Backurs, and Virginia~Vassilevska Williams.
\newblock Tight hardness results for {LCS} and other sequence similarity
  measures.
\newblock In {\em FOCS 2015}, pages 59--78. {IEEE} Computer Society, 2015.
\newblock \href {https://doi.org/10.1109/FOCS.2015.14}
  {\path{doi:10.1109/FOCS.2015.14}}.

\bibitem[AHWW16]{AHWW16}
Amir Abboud, Thomas~Dueholm Hansen, Virginia~Vassilevska Williams, and Ryan
  Williams.
\newblock Simulating branching programs with edit distance and friends: or: a
  polylog shaved is a lower bound made.
\newblock In {\em STOC 2016}, pages 375--388. {ACM}, 2016.
\newblock \href {https://doi.org/10.1145/2897518.2897653}
  {\path{doi:10.1145/2897518.2897653}}.

\bibitem[AJ21]{AJ21}
Shyan Akmal and Ce~Jin.
\newblock Faster algorithms for bounded tree edit distance.
\newblock In {\em ICALP 2021}, volume 198 of {\em LIPIcs}, pages 12:1--12:15.
  Schloss Dagstuhl -- Leibniz-Zentrum f{\"{u}}r Informatik, 2021.
\newblock \href {https://doi.org/10.4230/LIPIcs.ICALP.2021.12}
  {\path{doi:10.4230/LIPIcs.ICALP.2021.12}}.

\bibitem[AKM{\etalchar{+}}87]{SMAWK87}
Alok Aggarwal, Maria~M. Klawe, Shlomo Moran, Peter~W. Shor, and Robert~E.
  Wilber.
\newblock Geometric applications of a matrix-searching algorithm.
\newblock {\em Algorithmica}, 2:195--208, 1987.
\newblock \href {https://doi.org/10.1007/BF01840359}
  {\path{doi:10.1007/BF01840359}}.

\bibitem[BCFK24]{BCFK24}
Karl Bringmann, Alejandro Cassis, Nick Fischer, and Tomasz Kociumaka.
\newblock Faster sublinear-time edit distance.
\newblock In {\em SODA 2024}, volume abs/2312.01759, pages 3274--3301. {SIAM},
  2024.
\newblock \href {https://doi.org/10.1137/1.9781611977912.117}
  {\path{doi:10.1137/1.9781611977912.117}}.

\bibitem[BCFN22a]{BCFN22}
Karl Bringmann, Alejandro Cassis, Nick Fischer, and Vasileios Nakos.
\newblock Almost-optimal sublinear-time edit distance in the low distance
  regime.
\newblock In {\em STOC 2022}, pages 1102--1115. {ACM}, 2022.
\newblock \href {https://doi.org/10.1145/3519935.3519990}
  {\path{doi:10.1145/3519935.3519990}}.

\bibitem[BCFN22b]{BCFN22b}
Karl Bringmann, Alejandro Cassis, Nick Fischer, and Vasileios Nakos.
\newblock Improved sublinear-time edit distance for preprocessed strings.
\newblock In {\em ICALP 2022}, volume 229 of {\em LIPIcs}, pages 32:1--32:20.
  Schloss Dagstuhl -- Leibniz-Zentrum f{\"{u}}r Informatik, 2022.
\newblock \href {https://doi.org/10.4230/LIPIcs.ICALP.2022.32}
  {\path{doi:10.4230/LIPIcs.ICALP.2022.32}}.

\bibitem[BGSV19]{BGSW19}
Karl Bringmann, Fabrizio Grandoni, Barna Saha, and Virginia {Vassilevska
  Williams}.
\newblock Truly subcubic algorithms for language edit distance and {RNA}
  folding via fast bounded-difference min-plus product.
\newblock {\em {SIAM} J. Comput.}, 48(2):481--512, 2019.
\newblock \href {https://doi.org/10.1137/17M112720X}
  {\path{doi:10.1137/17M112720X}}.

\bibitem[BI18]{BI18}
Arturs Backurs and Piotr Indyk.
\newblock Edit distance cannot be computed in strongly subquadratic time
  (unless {SETH} is false).
\newblock {\em {SIAM} J. Comput.}, 47(3):1087--1097, 2018.
\newblock \href {https://doi.org/10.1137/15M1053128}
  {\path{doi:10.1137/15M1053128}}.

\bibitem[BK15]{BK15}
Karl Bringmann and Marvin K{\"{u}}nnemann.
\newblock Quadratic conditional lower bounds for string problems and dynamic
  time warping.
\newblock In {\em FOCS 2015}, pages 79--97. {IEEE} Computer Society, 2015.
\newblock \href {https://doi.org/10.1109/FOCS.2015.15}
  {\path{doi:10.1109/FOCS.2015.15}}.

\bibitem[BK23]{BK23}
Sudatta Bhattacharya and Michal Kouck{\'{y}}.
\newblock Locally consistent decomposition of strings with applications to edit
  distance sketching.
\newblock In {\em STOC 2023}, pages 219--232. {ACM}, 2023.
\newblock \href {https://doi.org/10.1145/3564246.3585239}
  {\path{doi:10.1145/3564246.3585239}}.

\bibitem[BO16]{BO16}
Arturs Backurs and Krzysztof Onak.
\newblock Fast algorithms for parsing sequences of parentheses with few errors.
\newblock In {\em PODS 2016}, pages 477--488. {ACM}, 2016.
\newblock \href {https://doi.org/10.1145/2902251.2902304}
  {\path{doi:10.1145/2902251.2902304}}.

\bibitem[BZ16]{BZ16}
Djamal Belazzougui and Qin Zhang.
\newblock Edit distance: Sketching, streaming, and document exchange.
\newblock In {\em FOCS 2016}, pages 51--60. {IEEE} Computer Society, 2016.
\newblock \href {https://doi.org/10.1109/FOCS.2016.15}
  {\path{doi:10.1109/FOCS.2016.15}}.

\bibitem[CGK{\etalchar{+}}22]{CGKMU22}
Rapha{\"{e}}l Clifford, Paweł Gawrychowski, Tomasz Kociumaka, Daniel~P.
  Martin, and Przemysław Uznański.
\newblock The dynamic k-mismatch problem.
\newblock In {\em CPM 2022}, volume 223 of {\em LIPIcs}, pages 18:1--18:15.
  Schloss Dagstuhl -- Leibniz-Zentrum f{\"{u}}r Informatik, 2022.
\newblock \href {https://doi.org/10.4230/LIPIcs.CPM.2022.18}
  {\path{doi:10.4230/LIPIcs.CPM.2022.18}}.

\bibitem[CGP20]{CGP20}
Panagiotis Charalampopoulos, Paweł Gawrychowski, and Karol Pokorski.
\newblock Dynamic longest common substring in polylogarithmic time.
\newblock In {\em ICALP 2020}, volume 168 of {\em LIPIcs}, pages 27:1--27:19.
  Schloss Dagstuhl -- Leibniz-Zentrum f{\"{u}}r Informatik, 2020.
\newblock \href {https://doi.org/10.4230/LIPIcs.ICALP.2020.27}
  {\path{doi:10.4230/LIPIcs.ICALP.2020.27}}.

\bibitem[CKM20]{CKM20}
Panagiotis Charalampopoulos, Tomasz Kociumaka, and Shay Mozes.
\newblock Dynamic string alignment.
\newblock In {\em CPM 2020}, volume 161 of {\em LIPIcs}, pages 9:1--9:13.
  Schloss Dagstuhl -- Leibniz-Zentrum f{\"u}r Informatik, 2020.
\newblock \href {https://doi.org/10.4230/LIPIcs.CPM.2020.9}
  {\path{doi:10.4230/LIPIcs.CPM.2020.9}}.

\bibitem[CKW20]{CKW20}
Panagiotis Charalampopoulos, Tomasz Kociumaka, and Philip Wellnitz.
\newblock Faster approximate pattern matching: {A} unified approach.
\newblock In {\em FOCS 2020}, pages 978--989. {IEEE}, 2020.
\newblock \href {https://doi.org/10.1109/FOCS46700.2020.00095}
  {\path{doi:10.1109/FOCS46700.2020.00095}}.

\bibitem[CKW22]{CKW22}
Panagiotis Charalampopoulos, Tomasz Kociumaka, and Philip Wellnitz.
\newblock Faster pattern matching under edit distance: {A} reduction to dynamic
  puzzle matching and the seaweed monoid of permutation matrices.
\newblock In {\em FOCS 2022}, pages 698--707. {IEEE}, 2022.
\newblock \href {https://doi.org/10.1109/FOCS54457.2022.00072}
  {\path{doi:10.1109/FOCS54457.2022.00072}}.

\bibitem[CKW23]{CKW23}
Alejandro Cassis, Tomasz Kociumaka, and Philip Wellnitz.
\newblock Optimal algorithms for bounded weighted edit distance.
\newblock In {\em FOCS 2023}, pages 2177--2187. {IEEE}, 2023.
\newblock \href {https://doi.org/10.1109/FOCS57990.2023.00135}
  {\path{doi:10.1109/FOCS57990.2023.00135}}.

\bibitem[CLL{\etalchar{+}}05]{CLLPPSS05}
Moses Charikar, Eric Lehman, Ding Liu, Rina Panigrahy, Manoj Prabhakaran, Amit
  Sahai, and Abhi Shelat.
\newblock The smallest grammar problem.
\newblock {\em IEEE Trans. Inf. Theory}, 51(7):2554--2576, 2005.
\newblock \href {https://doi.org/10.1109/TIT.2005.850116}
  {\path{doi:10.1109/TIT.2005.850116}}.

\bibitem[DGH{\etalchar{+}}22]{DGHKSS22}
Debarati Das, Jacob Gilbert, MohammadTaghi Hajiaghayi, Tomasz Kociumaka, Barna
  Saha, and Hamed Saleh.
\newblock {${\tilde{O}}(n+\mathrm{poly}(k))$}-time algorithm for bounded tree
  edit distance.
\newblock In {\em FOCS 2022}, pages 686--697. {IEEE}, 2022.
\newblock \href {https://doi.org/10.1109/FOCS54457.2022.00071}
  {\path{doi:10.1109/FOCS54457.2022.00071}}.

\bibitem[DGH{\etalchar{+}}23]{DGHKS23}
Debarati Das, Jacob Gilbert, MohammadTaghi Hajiaghayi, Tomasz Kociumaka, and
  Barna Saha.
\newblock Weighted edit distance computation: Strings, trees, and {D}yck.
\newblock In {\em STOC 2023}, STOC 2023, page 377–390, New York, NY, USA,
  2023. Association for Computing Machinery.
\newblock \href {https://doi.org/10.1145/3564246.3585178}
  {\path{doi:10.1145/3564246.3585178}}.

\bibitem[DSO78]{DSO78}
Margaret Dayhoff, R~Schwartz, and B~Orcutt.
\newblock A model of evolutionary change in proteins.
\newblock {\em Atlas of protein sequence and structure}, 5:345--352, 1978.

\bibitem[D{\"{u}}r23]{D23}
Anita D{\"{u}}rr.
\newblock Improved bounds for rectangular monotone min-plus product and
  applications.
\newblock {\em Inf. Process. Lett.}, 181:106358, 2023.
\newblock \href {https://doi.org/10.1016/j.ipl.2023.106358}
  {\path{doi:10.1016/j.ipl.2023.106358}}.

\bibitem[FGK{\etalchar{+}}23]{FGKKPS22}
Dvir Fried, Shay Golan, Tomasz Kociumaka, Tsvi Kopelowitz, Ely Porat, and
  Tatiana Starikovskaya.
\newblock An improved algorithm for the {$k$-Dyck} edit distance problem.
\newblock {\em ACM Trans. Algorithms}, 2023.
\newblock \href {https://doi.org/10.1145/3627539} {\path{doi:10.1145/3627539}}.

\bibitem[FR06]{FR06}
Jittat Fakcharoenphol and Satish Rao.
\newblock Planar graphs, negative weight edges, shortest paths, and near linear
  time.
\newblock {\em J. Comput. Syst. Sci.}, 72(5):868--889, 2006.
\newblock \href {https://doi.org/10.1016/j.jcss.2005.05.007}
  {\path{doi:10.1016/j.jcss.2005.05.007}}.

\bibitem[Gaw11]{Gaw11}
Paweł Gawrychowski.
\newblock Pattern matching in {L}empel-{Z}iv compressed strings: Fast, simple,
  and deterministic.
\newblock In {\em ESA 2011}, volume 6942 of {\em LNCS}, pages 421--432.
  Springer, 2011.
\newblock \href {https://doi.org/10.1007/978-3-642-23719-5_36}
  {\path{doi:10.1007/978-3-642-23719-5_36}}.

\bibitem[GGK24]{GGK24}
Paweł Gawrychowski, Egor Gorbachev, and Tomasz Kociumaka.
\newblock Core-sparse monge matrix multiplication: Improved algorithm and
  applications, 2024.
\newblock \href {http://arxiv.org/abs/2408.04613} {\path{arXiv:2408.04613}},
  \href {https://doi.org/10.48550/ARXIV.2408.04613}
  {\path{doi:10.48550/ARXIV.2408.04613}}.

\bibitem[GJ21]{GJ21}
Paweł Gawrychowski and Wojciech Janczewski.
\newblock Fully dynamic approximation of {LIS} in polylogarithmic time.
\newblock In {\em STOC 2021}, pages 654--667. {ACM}, 2021.
\newblock \href {https://doi.org/10.1145/3406325.3451137}
  {\path{doi:10.1145/3406325.3451137}}.

\bibitem[GJKT24]{GJKT24}
Daniel Gibney, Ce~Jin, Tomasz Kociumaka, and Sharma~V. Thankachan.
\newblock Near-optimal quantum algorithms for bounded edit distance and
  {L}empel-{Z}iv factorization.
\newblock In {\em SODA 2024}, pages 3302--3332. {SIAM}, 2024.
\newblock \href {https://doi.org/10.1137/1.9781611977912.118}
  {\path{doi:10.1137/1.9781611977912.118}}.

\bibitem[GKK{\etalchar{+}}15]{GKKLS15}
Paweł Gawrychowski, Adam Karczmarz, Tomasz Kociumaka, Jakub Łącki, and Piotr
  Sankowski.
\newblock Optimal dynamic strings, 2015.
\newblock \href {http://arxiv.org/abs/1511.02612} {\path{arXiv:1511.02612}}.

\bibitem[GKK{\etalchar{+}}18]{GKKLS18}
Paweł Gawrychowski, Adam Karczmarz, Tomasz Kociumaka, Jakub Łącki, and Piotr
  Sankowski.
\newblock Optimal dynamic strings.
\newblock In {\em SODA 2018}, pages 1509--1528. {SIAM}, 2018.
\newblock \href {https://doi.org/10.1137/1.9781611975031.99}
  {\path{doi:10.1137/1.9781611975031.99}}.

\bibitem[GKKS22]{GKKS22}
Elazar Goldenberg, Tomasz Kociumaka, Robert Krauthgamer, and Barna Saha.
\newblock Gap edit distance via non-adaptive queries: Simple and optimal.
\newblock In {\em FOCS 2022}, pages 674--685. {IEEE}, 2022.
\newblock \href {https://doi.org/10.1109/FOCS54457.2022.00070}
  {\path{doi:10.1109/FOCS54457.2022.00070}}.

\bibitem[GKKS23]{GKKS23}
Elazar Goldenberg, Tomasz Kociumaka, Robert Krauthgamer, and Barna Saha.
\newblock An algorithmic bridge between {Hamming} and {Levenshtein} distances.
\newblock In {\em ITCS 2023}, volume 251 of {\em LIPIcs}, pages 58:1--58:23.
  Schloss Dagstuhl -- Leibniz-Zentrum f{\"{u}}r Informatik, 2023.
\newblock \href {https://doi.org/10.4230/LIPIcs.ITCS.2023.58}
  {\path{doi:10.4230/LIPIcs.ITCS.2023.58}}.

\bibitem[GKLS22]{GKLS22}
Arun Ganesh, Tomasz Kociumaka, Andrea Lincoln, and Barna Saha.
\newblock How compression and approximation affect efficiency in string
  distance measures.
\newblock In {\em SODA 2022}, pages 2867--2919. {SIAM}, 2022.
\newblock \href {https://doi.org/10.1137/1.9781611977073.112}
  {\path{doi:10.1137/1.9781611977073.112}}.

\bibitem[GKS19]{GKS19}
Elazar Goldenberg, Robert Krauthgamer, and Barna Saha.
\newblock Sublinear algorithms for gap edit distance.
\newblock In {\em FOCS 2019}, pages 1101--1120. {IEEE} Computer Society, 2019.
\newblock \href {https://doi.org/10.1109/FOCS.2019.00070}
  {\path{doi:10.1109/FOCS.2019.00070}}.

\bibitem[Gra16]{Gra16}
Szymon Grabowski.
\newblock New tabulation and sparse dynamic programming based techniques for
  sequence similarity problems.
\newblock {\em Discrete Appl. Math.}, 212:96--103, 2016.
\newblock \href {https://doi.org/10.1016/J.DAM.2015.10.040}
  {\path{doi:10.1016/J.DAM.2015.10.040}}.

\bibitem[GRS20]{GRS20}
Elazar Goldenberg, Aviad Rubinstein, and Barna Saha.
\newblock Does preprocessing help in fast sequence comparisons?
\newblock In {\em STOC 2020}, pages 657--670. {ACM}, 2020.
\newblock \href {https://doi.org/10.1145/3357713.3384300}
  {\path{doi:10.1145/3357713.3384300}}.

\bibitem[Gus97]{Gus97}
Dan Gusfield.
\newblock {\em Algorithms on Strings, Trees, and Sequences: Computer Science
  and Computational Biology}.
\newblock Cambridge University Press, 1997.
\newblock \href {https://doi.org/10.1017/CBO9780511574931}
  {\path{doi:10.1017/CBO9780511574931}}.

\bibitem[HH92]{HH92}
Steven Henikoff and Jorja~G. Henikoff.
\newblock Amino acid substitution matrices from protein blocks.
\newblock {\em Proc. Natl. Acad. Sci.}, 89(22):10915--10919, November 1992.
\newblock \href {https://doi.org/10.1073/pnas.89.22.10915}
  {\path{doi:10.1073/pnas.89.22.10915}}.

\bibitem[HHS22]{HHS22}
Kathrin Hanauer, Monika Henzinger, and Christian Schulz.
\newblock Recent advances in fully dynamic graph algorithms --- {A} quick
  reference guide.
\newblock {\em {ACM} J. Exp. Algorithmics}, 27:1.11:1--1.11:45, 2022.
\newblock \href {https://doi.org/10.1145/3555806} {\path{doi:10.1145/3555806}}.

\bibitem[IIST05]{IIST05}
Yusuke Ishida, Shunsuke Inenaga, Ayumi Shinohara, and Masayuki Takeda.
\newblock Fully incremental {LCS} computation.
\newblock In {\em FCT 2005}, volume 3623 of {\em LNCS}, pages 563--574.
  Springer, 2005.
\newblock \href {https://doi.org/10.1007/11537311_49}
  {\path{doi:10.1007/11537311_49}}.

\bibitem[IP01]{IP01}
Russell Impagliazzo and Ramamohan Paturi.
\newblock On the complexity of {$k$-SAT}.
\newblock {\em J. Comput. Syst. Sci.}, 62(2):367--375, 2001.
\newblock \href {https://doi.org/10.1006/jcss.2000.1727}
  {\path{doi:10.1006/jcss.2000.1727}}.

\bibitem[IPZ01]{IPZ01}
Russell Impagliazzo, Ramamohan Paturi, and Francis Zane.
\newblock Which problems have strongly exponential complexity?
\newblock {\em J. Comput. Syst. Sci.}, 63(4):512--530, 2001.
\newblock \href {https://doi.org/10.1006/jcss.2001.1774}
  {\path{doi:10.1006/jcss.2001.1774}}.

\bibitem[JM09]{JM08}
Dan Jurafsky and James~H. Martin.
\newblock {\em Speech and language processing: an introduction to natural
  language processing, computational linguistics, and speech recognition, 2nd
  Edition}.
\newblock Prentice Hall series in artificial intelligence. Prentice Hall,
  Pearson Education International, 2009.
\newblock URL: \url{https://www.worldcat.org/oclc/315913020}.

\bibitem[JNW21]{JNW21}
Ce~Jin, Jelani Nelson, and Kewen Wu.
\newblock An improved sketching algorithm for edit distance.
\newblock In {\em STACS 2021}, volume 187 of {\em LIPIcs}, pages 45:1--45:16.
  Schloss Dagstuhl -- Leibniz-Zentrum f{\"{u}}r Informatik, 2021.
\newblock \href {https://doi.org/10.4230/LIPIcs.STACS.2021.45}
  {\path{doi:10.4230/LIPIcs.STACS.2021.45}}.

\bibitem[KK22]{KK22}
Dominik Kempa and Tomasz Kociumaka.
\newblock Dynamic suffix array with polylogarithmic queries and updates.
\newblock In {\em STOC 2022}, pages 1657--1670. {ACM}, 2022.
\newblock \href {https://doi.org/10.1145/3519935.3520061}
  {\path{doi:10.1145/3519935.3520061}}.

\bibitem[Kle05]{Kle05}
Philip~N. Klein.
\newblock Multiple-source shortest paths in planar graphs.
\newblock In {\em SODA 2005}, SODA '05, page 146–155, USA, 2005. Society for
  Industrial and Applied Mathematics.

\bibitem[KMS23]{KMS23}
Tomasz Kociumaka, Anish Mukherjee, and Barna Saha.
\newblock Approximating edit distance in the fully dynamic model.
\newblock In {\em FOCS 2023}, pages 1628--1638. {IEEE}, 2023.
\newblock \href {https://doi.org/10.1109/FOCS57990.2023.00098}
  {\path{doi:10.1109/FOCS57990.2023.00098}}.

\bibitem[Koc22]{Koc22}
Tomasz Kociumaka.
\newblock Recent advances in dynamic string algorithms.
\newblock A survey talk at the STOC 2022 satellite workshop ``Dynamic
  Algorithms: Recent Advances and Applications'', 2022.
\newblock URL: \url{https://www.youtube.com/watch?v=cvFTTPlZX5U}.

\bibitem[KP04]{KP04}
Sung-Ryul Kim and Kunsoo Park.
\newblock A dynamic edit distance table.
\newblock {\em J. Discrete Algorithms}, 2(2):303--312, 2004.
\newblock Combinatiorial Pattern Matching.
\newblock \href {https://doi.org/https://doi.org/10.1016/S1570-8667(03)00082-0}
  {\path{doi:https://doi.org/10.1016/S1570-8667(03)00082-0}}.

\bibitem[KPS21]{KPS21}
Tomasz Kociumaka, Ely Porat, and Tatiana Starikovskaya.
\newblock Small-space and streaming pattern matching with $k$ edits.
\newblock In {\em FOCS 2021}, pages 885--896. {IEEE}, 2021.
\newblock \href {https://doi.org/10.1109/FOCS52979.2021.00090}
  {\path{doi:10.1109/FOCS52979.2021.00090}}.

\bibitem[KS20]{KS20}
Tomasz Kociumaka and Barna Saha.
\newblock Sublinear-time algorithms for computing {\&} embedding gap edit
  distance.
\newblock In {\em FOCS 2020}, pages 1168--1179. {IEEE}, 2020.
\newblock \href {https://doi.org/10.1109/FOCS46700.2020.00112}
  {\path{doi:10.1109/FOCS46700.2020.00112}}.

\bibitem[KS21]{KS21}
Tomasz Kociumaka and Saeed Seddighin.
\newblock Improved dynamic algorithms for longest increasing subsequence.
\newblock In {\em STOC 2021}, pages 640--653. {ACM}, 2021.
\newblock \href {https://doi.org/10.1145/3406325.3451026}
  {\path{doi:10.1145/3406325.3451026}}.

\bibitem[KS24]{KS24}
Michal Kouck{\'{y}} and Michael~E. Saks.
\newblock Almost linear size edit distance sketch.
\newblock In {\em STOC 2024}, 2024.

\bibitem[Lev65]{Levenshtein66}
Vladimir~I. Levenshtein.
\newblock Binary codes capable of correcting deletions, insertions and
  reversals.
\newblock {\em Dokl. Akad. Nauk SSSR}, 163:845--848, 1965.
\newblock URL: \url{http://mi.mathnet.ru/eng/dan31411}.

\bibitem[LMS98]{LMS98}
Gad~M. Landau, Eugene~W. Myers, and Jeanette~P. Schmidt.
\newblock Incremental string comparison.
\newblock {\em SIAM J. Comput.}, 27(2):557--582, 1998.
\newblock \href {https://doi.org/10.1137/S0097539794264810}
  {\path{doi:10.1137/S0097539794264810}}.

\bibitem[LV88]{LV88}
Gad~M. Landau and Uzi Vishkin.
\newblock Fast string matching with $k$ differences.
\newblock {\em J. Comput. System Sci.}, 37(1):63--78, 1988.
\newblock \href {https://doi.org/https://doi.org/10.1016/0022-0000(88)90045-1}
  {\path{doi:https://doi.org/10.1016/0022-0000(88)90045-1}}.

\bibitem[MBCT15]{MBCT2015}
Veli M{\"{a}}kinen, Djamal Belazzougui, Fabio Cunial, and Alexandru~I. Tomescu.
\newblock {\em Genome-Scale Algorithm Design: Biological Sequence Analysis in
  the Era of High-Throughput Sequencing}.
\newblock Cambridge University Press, 2015.
\newblock \href {https://doi.org/10.1017/CBO9781139940023}
  {\path{doi:10.1017/CBO9781139940023}}.

\bibitem[MP80]{MP80}
William~J. Masek and Mike Paterson.
\newblock A faster algorithm computing string edit distances.
\newblock {\em J. Comput. System Sci.}, 20(1):18--31, 1980.
\newblock \href {https://doi.org/10.1016/0022-0000(80)90002-1}
  {\path{doi:10.1016/0022-0000(80)90002-1}}.

\bibitem[MS20]{MS20}
Michael Mitzenmacher and Saeed Seddighin.
\newblock Dynamic algorithms for {LIS} and distance to monotonicity.
\newblock In {\em STOC 2020}, pages 671--684. {ACM}, 2020.
\newblock \href {https://doi.org/10.1145/3357713.3384240}
  {\path{doi:10.1145/3357713.3384240}}.

\bibitem[MSU97]{MSU97}
Kurt Mehlhorn, Rajamani Sundar, and Christian Uhrig.
\newblock Maintaining dynamic sequences under equality tests in polylogarithmic
  time.
\newblock {\em Algorithmica}, 17:183--198, 1997.
\newblock \href {https://doi.org/10.1007/bf02522825}
  {\path{doi:10.1007/bf02522825}}.

\bibitem[Mye86]{Mye86}
Eugene~W. Myers.
\newblock An {$\Oh(ND)$} difference algorithm and its variations.
\newblock {\em Algorithmica}, 1(2):251--266, 1986.
\newblock \href {https://doi.org/10.1007/BF01840446}
  {\path{doi:10.1007/BF01840446}}.

\bibitem[NII{\etalchar{+}}20]{NIIBT20}
Takaaki Nishimoto, Tomohiro I, Shunsuke Inenaga, Hideo Bannai, and Masayuki
  Takeda.
\newblock Dynamic index and {LZ} factorization in compressed space.
\newblock {\em Discrete Appl. Math.}, 274:116--129, 2020.
\newblock Stringology Algorithms.
\newblock \href {https://doi.org/https://doi.org/10.1016/j.dam.2019.01.014}
  {\path{doi:https://doi.org/10.1016/j.dam.2019.01.014}}.

\bibitem[NW70]{NW70}
Saul~B. Needleman and Christian~D. Wunsch.
\newblock A general method applicable to the search for similarities in the
  amino acid sequence of two proteins.
\newblock {\em J. Mol. Biol.}, 48(3):443--453, 1970.
\newblock \href {https://doi.org/10.1016/b978-0-12-131200-8.50031-9}
  {\path{doi:10.1016/b978-0-12-131200-8.50031-9}}.

\bibitem[Rus12]{Russo10}
Lu{\'{\i}}s M.~S. Russo.
\newblock Monge properties of sequence alignment.
\newblock {\em Theor. Comput. Sci.}, 423:30--49, 2012.
\newblock \href {https://doi.org/10.1016/J.TCS.2011.12.068}
  {\path{doi:10.1016/J.TCS.2011.12.068}}.

\bibitem[Ryt03]{Ryt03}
Wojciech Rytter.
\newblock Application of {L}empel-{Z}iv factorization to the approximation of
  grammar-based compression.
\newblock {\em Theoret. Comput. Sci.}, 302(1-3):211--222, 2003.
\newblock \href {https://doi.org/10.1016/S0304-3975(02)00777-6}
  {\path{doi:10.1016/S0304-3975(02)00777-6}}.

\bibitem[Sel74]{Sel74}
Peter~H. Sellers.
\newblock On the theory and computation of evolutionary distances.
\newblock {\em SIAM J. Appl. Math.}, 26(4):787--793, 1974.
\newblock \href {https://doi.org/10.1137/0126070} {\path{doi:10.1137/0126070}}.

\bibitem[Sel80]{Sel80}
Peter~H. Sellers.
\newblock The theory and computation of evolutionary distances: Pattern
  recognition.
\newblock {\em J. Algorithms}, 1(4):359--373, 1980.
\newblock \href {https://doi.org/10.1016/0196-6774(80)90016-4}
  {\path{doi:10.1016/0196-6774(80)90016-4}}.

\bibitem[Tis08]{Tis08}
Alexander Tiskin.
\newblock Semi-local string comparison: Algorithmic techniques and
  applications.
\newblock {\em Math. Comput. Sci.}, 1(4):571--603, 2008.
\newblock \href {https://doi.org/10.1007/s11786-007-0033-3}
  {\path{doi:10.1007/s11786-007-0033-3}}.

\bibitem[Tis15]{Tiskin10}
Alexander Tiskin.
\newblock Fast distance multiplication of unit-{M}onge matrices.
\newblock {\em Algorithmica}, 71(4):859--888, 2015.
\newblock \href {https://doi.org/10.1007/S00453-013-9830-Z}
  {\path{doi:10.1007/S00453-013-9830-Z}}.

\bibitem[Tou07]{Tou07}
H{\'{e}}l{\`{e}}ne Touzet.
\newblock Comparing similar ordered trees in linear-time.
\newblock {\em J. Discrete Algorithms}, 5(4):696--705, 2007.
\newblock \href {https://doi.org/10.1016/J.JDA.2006.07.002}
  {\path{doi:10.1016/J.JDA.2006.07.002}}.

\bibitem[Ukk85]{Ukk85}
Esko Ukkonen.
\newblock Finding approximate patterns in strings.
\newblock {\em J. Algorithms}, 6(1):132--137, 1985.
\newblock \href {https://doi.org/10.1016/0196-6774(85)90023-9}
  {\path{doi:10.1016/0196-6774(85)90023-9}}.

\bibitem[Vin68]{Vin68}
Taras~K. Vintsyuk.
\newblock Speech discrimination by dynamic programming.
\newblock {\em Cybernetics}, 4(1):52--57, 1968.
\newblock \href {https://doi.org/10.1007/BF01074755}
  {\path{doi:10.1007/BF01074755}}.

\bibitem[Wat95]{Wat95}
Michael~S. Waterman.
\newblock {\em Introduction to Computational Biology: Maps, Sequences and
  Genomes}.
\newblock Chapman \& Hall/CRC Interdisciplinary Statistics. Taylor \& Francis,
  1995.
\newblock \href {https://doi.org/10.1201/9780203750131}
  {\path{doi:10.1201/9780203750131}}.

\bibitem[WF74]{WF74}
Robert~A. Wagner and Michael~J. Fischer.
\newblock The string-to-string correction problem.
\newblock {\em J. ACM}, 21(1):168--173, 1974.
\newblock \href {https://doi.org/10.1145/321796.321811}
  {\path{doi:10.1145/321796.321811}}.

\bibitem[Wil85]{Wil85}
Dan~E. Willard.
\newblock New data structures for orthogonal range queries.
\newblock {\em SIAM J. Comput.}, 14(1):232--253, 1985.
\newblock \href {https://doi.org/10.1137/0214019} {\path{doi:10.1137/0214019}}.

\end{thebibliography}

\appendix

\section{Deferred Proofs From \cref{sec:fast-monge-matrix-multiplication}} \label{sec:monge-appendix}

\subsection{Matrix Capping} \label{sec:monge-appendix-cap}

This section is dedicated to proving the following lemma.

\lmreducematrix*

To achieve it, we first prove the following combinatorial claim.

\begin{lemma} \label{lm:limit-core}
    Consider some positive integers $p, q$, and $s$, and a set $P \subseteq [1 \dd p] \times [1 \dd q]$, such that if $(i, j) \in P$, then there are at most $s$ elements $(k, \ell) \in P$ (excluding $(i, j)$), such that $k \ge i$ and $\ell \le j$. Then, $|P| = \Oh(\sqrt{pqs})$.
\end{lemma}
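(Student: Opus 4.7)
I would recast the hypothesis in poset language and combine a Mirsky-style chain/antichain decomposition with a careful double count.  Define the partial order $\preceq$ on $[1\dd p]\times[1\dd q]$ by $(i_1,j_1)\preceq(i_2,j_2)$ iff $i_1\le i_2$ and $j_1\ge j_2$; the hypothesis then reads: every $p\in P$ has at most $s$ strict $\preceq$-successors in $P$.  In particular, the longest $\preceq$-chain in $P$ has length at most $s+1$, because along a chain $p_1\prec p_2\prec\cdots\prec p_t$ the element $p_1$ has at least the $t-1$ successors $p_2,\dots,p_t$.

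My first step would be to extract the elementary row/column bounds.  Fixing any nonempty row $i$ and considering its rightmost $P$-point, all other $r_i-1$ row-$i$ points lie in its SE quadrant, so $r_i\le s+1$; symmetrically $c_j\le s+1$ for every column $j$.  Hence $|P|\le(s+1)p$ and $|P|\le(s+1)q$, and multiplying gives $|P|\le(s+1)\sqrt{pq}$.  This is already within a $\sqrt{s}$ factor of the claimed bound, so the remaining task is to close that gap.

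Next I would apply Mirsky's theorem to partition $P$ into antichains $L_1,\dots,L_t$ with $t\le s+1$, where $L_k$ collects the points at which the longest chain ends with length exactly $k$.  In our order, an antichain has strictly increasing $i$ \emph{and} strictly increasing $j$ when sorted by $i$, so each $L_k$ uses distinct rows and distinct columns, giving $|L_k|\le\min(p,q)$.  Double counting comparable pairs yields $N=\sum_{p\in P}h(p)\le sn$, where $h(p)$ is the number of strict $\preceq$-successors of $p$; and one may split $N$ according to the Mirsky levels as $N=\sum_{k<k'} N_{k,k'}$, where $N_{k,k'}$ is the number of $\prec$-pairs across levels.

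The heart of the argument, and the step I expect to be the main obstacle, is to turn these ingredients into $|P|=\Oh(\sqrt{pqs})$.  The natural route is Cauchy--Schwarz: $n^2=\bigl(\sum_k|L_k|\bigr)^2\le t\sum_k|L_k|^2$.  The extremal configurations (roughly $\sqrt{s}$ parallel diagonals of width $1$) suggest that the bound we actually need is that in any nearly-extremal instance the number of levels $t$ is only $\Theta(\sqrt{s})$, not $\Theta(s)$; equivalently, one has to trade off the number of levels against their sizes.  I would formalize this by showing that if there is a chain of length $t$ in $P$, then its top element's SE quadrant---together with the $(t-1)$ chain successors already present---constrains the remaining points to lie in a restricted region, so that a secondary chain of length $\Theta(t)$ sharing the SE quadrant would force $\Omega(t^2)$ successors of the topmost chain element.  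This yields a quadratic trade-off $t\cdot\max_k|L_k|=\Oh(\sqrt{spq})$, and combining with $|P|\le t\cdot\max_k|L_k|$ gives the claimed bound.

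The step I anticipate being delicate is making this structural trade-off rigorous: the SE condition has to be used in its full geometric strength (not just via row/column densities) to simultaneously control both the number and the sizes of the Mirsky levels.  Everything else reduces to standard counting and Cauchy--Schwarz.
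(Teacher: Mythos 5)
Your opening observations are sound: recasting the hypothesis as ``every element has at most $s$ strict $\preceq$-successors'' is correct, the longest chain has length at most $s+1$, and the elementary row/column bounds give $|P|\le (s+1)\sqrt{pq}$, which is indeed a $\sqrt{s}$-factor off the target. The problem is the step you yourself flag as delicate: the claimed ``quadratic trade-off'' $t\cdot\max_k|L_k| = \Oh(\sqrt{spq})$ is false, so the aggregation $|P|\le t\cdot\max_k|L_k|$ cannot close the gap.

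Here is a counterexample to the trade-off. Take $p=q=n$ with $n$ much larger than $s$, and let $P$ be the disjoint union of a chain $C=\{(n-s+a,\;n-a): a\in[0\dd s]\}$ and an antichain $A=\{(c,c): c\in[1\dd n-s-1]\}$. Every point of $A$ has first coordinate strictly less than $n-s$, which lies strictly below the smallest first coordinate in $C$, and one checks directly that no point of $C$ dominates a point of $A$ and no point of $A$ dominates a point of $C$; within $C$ each element has at most $s$ successors, and within $A$ none. Thus the hypothesis holds with this $s$. But the Mirsky height is $t=s+1$ (forced by $C$), while $\max_k|L_k|\ge |A| = n-s-1 = \Theta(n)$, so $t\cdot\max_k|L_k| = \Theta(sn)$, whereas $\sqrt{spq}=n\sqrt{s}$. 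The lemma itself is safe here ($|P|=n\le n\sqrt s$), but your intermediate claim is off by a $\sqrt s$ factor, which is exactly the factor you were trying to recover. The structural heuristic---that two long chains in the same SE quadrant force $\Omega(t^2)$ successors at the top---is true, but a long chain and a huge antichain need not interact at all, so the height and the width of the Mirsky decomposition are essentially independent. Any argument that bounds $|P|$ by the product of the number of levels and the size of the largest level is doomed to lose the $\sqrt s$.

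The paper avoids this by never bounding $|P|$ through a height-times-width product. Instead, it introduces the ``dominated staircase'' $D$ (all cells in $P$ or dominated by some cell of $P$), defines for each row $r$ the rectangle $R_r=[r\dd p]\times[1\dd x_r]$ where $x_r$ is the rightmost column of $D$ in row $r$, and observes that each $R_r$ is contained in the SE quadrant of a single cell of $P$, hence $|R_r\cap P|\le s+1$. Double-counting membership of $P$-points in the $R_r$'s shows that the $i$-th lowest point in column $c$ lies in at least $i$ of the rectangles, giving $\sum_c\binom{\#_c+1}{2}\le p(s+1)$, so $\sum_c\#_c^2\le 2p(s+1)$; Cauchy--Schwarz (over columns, not over Mirsky levels) then yields $|P|=\sum_c\#_c\le\sqrt{q\sum_c\#_c^2}=\Oh(\sqrt{pqs})$. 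The crucial point is that the ``heavy'' aggregate being bounded is $\sum_c\#_c^2$ rather than $t\cdot\max_k|L_k|$, and the $R_r$ construction gives exactly enough multiplicity to charge each column quadratically. If you want to stay in the Mirsky framework, you would need an analogous bound on $\sum_k|L_k|^2$ rather than on $\max_k|L_k|$, and it is not clear the poset decomposition supplies one; the row/column double count is the natural vehicle here.
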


\begin{proof}
    We call all elements of $[1 \dd p] \times [1 \dd q]$ cells.
    Say that a cell $(i, j) \in [1 \dd p] \times [1 \dd q]$ dominates another cell $(k, \ell) \in [1 \dd p] \times [1 \dd q]$ if $k \ge i$ and $\ell \le j$. Denote by $D$ the set of all cells that are either in $P$ or dominated by some cell from $P$ (in particular, if some cell $(i, j)$ is in $D$, then all cells dominated by $(i, j)$ are also in $D$). For each row $r$ of $[1 \dd p] \times [1 \dd q]$, denote by $x_r$ the maximum number, such that $(r, x_r)$ is in $D$ ($x_r = 0$ if no cells in the $r$-th row are in $D$). Note that $(x_r)_{r \in [1\dd p]}$ is a non-decreasing sequence. Denote $R_r \coloneqq [r \dd p] \times [1 \dd x_r]$ for each $r \in [1 \dd p]$. Since $(r, x_r)$ is in $D$, this cell is either in $P$ or is dominated by some cell $(i, j)$ from $P$. Thus, such a cell $(i, j)$ from $P$ dominates all cells in $R_r$, so there are at most $s + 1$ cells from $P$ in each $R_r$.
    Hence, $\sum_r |R_r \cap P| \le p \cdot (s + 1)$.

    Denote $\#_c \coloneqq |\{i \in [1 \dd p] \mid (i, c) \in P\}|$ for every $c \in [1 \dd q]$. Say that if we order the elements of $\{(i, j) \in P \mid j = c\}$ in the increasing order of $i$, we get a sequence $(r_1, c)$, $(r_2, c)$, $\ldots$, $(r_{\#_c}, c)$. As $(r_1, c) \in P$, we get that $x_{r_1} \ge c$. As $(x_r)_r$ is a non-decreasing sequence, $x_j \ge c$ for every $j \in [r_1 \dd p]$ follows. Thus, for any $i \in [1 \dd \#_c]$, the cell $(r_i, c)$ lies in at least $r_i - r_1 + 1 \ge i$ rectangles $R_j$ for $j \in [r_1 \dd r_i]$. Consequently, all cells of $P$ from the $c$-th column lie in at least $1 + 2 + \dots + \#_c = {\#_c + 1 \choose 2}$ rectangles $R_j$ in total (if two cells lie in the same rectangle, it is counted twice). Note that $\sum_r |R_r \cap P|$ is equal to the sum over all elements of $P$ of the number of rectangles they lie in. Thus, we get that \[\sum_{c \in [1 \dd q]} {\#_c + 1 \choose 2} \le \sum_r |R_r \cap P| \le p \cdot (s + 1).\]
    As ${\#_c + 1 \choose 2} \ge \#_c^2 / 2$, we get that $\sum_{c \in [1 \dd q]} \#_c^2 \le 2 p \cdot (s + 1)$.
    By the inequality between the arithmetic mean and the quadratic mean, we get $\sum_{c \in [1 \dd q]} \#_c^2 \ge (\sum_{c \in [1 \dd q]} \#_c)^2 / q$. Thus, $|P| = \sum_{c \in [1 \dd q]} \#_c \le \sqrt{q \cdot 2 p \cdot (s + 1)} = \Oh(\sqrt{pqs})$.
\end{proof}

We now use this combinatorial claim to prove \cref{lm:reduce_matrix} for two restricted classes of matrices.

\begin{lemma} \label{lm:reduce_simple_matrix}
    There is an algorithm that, given some positive integer $k$ and the condensed representation of some Monge matrix $C \in \mathbb{Z}_{\ge 0}^{p \times q}$, where all entries in the downmost row and the leftmost column of $C$ are zeros, in time $\Oh(N + \delta(C) \log N)$ builds the condensed representation of some Monge matrix $C' \in \mathbb{Z}_{\ge 0}^{p \times q}$ such that $\delta(C') \le \Oh(N \sqrt k)$ and $C' \meq{k} C$.
\end{lemma}

\begin{proof}
    The matrix $C$ is Monge, thus, for any $i \in [1 \dd p]$ and $j \in [1 \dd q]$, we have $C_{i, j} = C_{i, 1} + C_{p, j} - C_{p, 1} + \ssum(\dens{C}[i \dd p)[1 \dd j)) = 0 + 0 - 0 + \ssum(\dens{C}[i \dd p)[1 \dd j)) = \ssum(\dens{C}[i \dd p)[1 \dd j))$ by the definition of $\dens{C}$.

    Define $C'$ as a Monge matrix, in which values in the downmost row and the leftmost column are zeros, and $\core{C'}$ is a subset of $\core{C}$, such that an element $(i', j', v) \in \core{C}$ is in $\core{C'}$ if and only if $\ssum(\dens{C}[i' \dd p)[1 \dd j']) - v \le k$. Hence, $C'_{i, j} = \ssum(\dens{C'}[i \dd p)[1 \dd j))$ for all $i \in [1 \dd p]$ and $j \in [1 \dd q]$, where $\dens{C'}$ is defined via $\core{C'}$.

    Let us prove that $C' \meq{k} C$. If $C'_{i, j} \neq C_{i, j}$, we must have deleted some core entry $(i',j',v)\in \core{C}$ with $(i',j')\in [i \dd p) \times [1 \dd j)$.
    Consider the leftmost (and the downmost out of leftmost ones) such deleted core entry.
    Since we deleted the entry, it must satisfy $\ssum(\dens{C}[i' \dd p)[1 \dd j'])-v> k$.
    Moreover, no other core entry in $[i'\dd p)\times [1\dd j']$ was deleted, so $\ssum(\dens{C'}[i' \dd p)[1 \dd j'])= \ssum(\dens{C}[i' \dd p)[1 \dd j']) - v > k$. 
    Finally, we note that $C_{i,j}=\ssum(\dens{C}[i \dd p)[1 \dd j)) \ge \ssum(\dens{C}[i' \dd p)[1 \dd j']) > k+v>k$ and
    $C'_{i,j}=\ssum(\dens{C'}[i \dd p)[1 \dd j)) \ge \ssum(\dens{C'}[i' \dd p)[1 \dd j']) > k$.
    Hence, $\min\{C_{i, j},C'_{i, j}\} >k$.
    We conclude that $C'_{i, j} \meq{k} C_{i, j}$, thus proving $C' \meq{k} C$.

    Furthermore, any core entry of $C'$ has at most $k$ other core entries of $C'$ to the left and down of it as otherwise it would be deleted. Hence, the set $\{(i, j) \mid (i, j, v) \in \core{C'}\}$ satisfies the conditions of \cref{lm:limit-core} for $s = k$, which lets us conclude that $\delta(C') = \Oh(N \sqrt k)$.

    In time $\Oh(\delta(C) \log N)$, we build a data structure of \cref{fct:WORC} over the elements of $\core{C}$, after which, for each element $(i, j, v) \in \core{C}$, we can check in time $\Oh(\log N)$ whether it is a part of $\core{C'}$ or not.
    Hence, we calculate $\core{C'}$.
    The values in the leftmost column of $C'$ are zeros.
    As $C'_{i, j} = \ssum(\dens{C'}[i \dd p)[1 \dd j))$ for all $i \in [1 \dd p]$ and $j \in [1 \dd q]$, the values in the topmost row of $C'$ can be computed in $\Oh(\delta(C') + N)$ time by computing $\ssum(\dens{C'}[1 \dd p)[j \dd j])$ for all $j \in [1 \dd q)$ by scanning $\core{C'}$.
    Therefore, we obtain the condensed representation of $C'$ in time $\Oh(N + \delta(C) \log N)$.
\end{proof}

\begin{lemma} \label{lm:reduce_matrix-prime}
    There is an algorithm that, given some positive integer $k$ and the condensed representation of some Monge matrix $C \in \mathbb{Z}_{\ge 0}^{p \times q}$ such that row and column minima in $C$ are equal to zero, in time $\Oh((N + \delta(C)) \log N)$builds the condensed representation of some Monge matrix $C' \in \mathbb{Z}_{\ge 0}^{p \times q}$ such that $\delta(C') \le \Oh(N \sqrt k)$ and $C' \meq{k} C$.
\end{lemma}

\begin{proof}
    Define $z_i$ for $i \in [1 \dd p]$ as the leftmost zero in the $i$-th row of $C$. Define a matrix $F \in \mathbb{Z}_{\ge 0}^{p \times q}$, where $F_{i, j} = C_{i, j}$ if $j > z_i$, and $F_{i, j} = 0$ otherwise. Define $G \in \mathbb{Z}_{\ge 0}^{p \times q}$, where $G_{i, j} = C_{i, j}$ if $j < z_i$, and $G_{i, j} = 0$ otherwise. Note that for any $i \in [1 \dd p]$ and $j \in [1 \dd q]$, $\min\{F_{i, j}, G_{i, j}\} = 0$ and $F_{i, j} + G_{i, j} = \max\{F_{i, j}, G_{i, j}\} = C_{i, j}$.

    \begin{claim}\label{clm:split-monge-mtx}
         All entries in the leftmost column and the downmost row of $F$ are zeros, and all entries in the topmost row and the rightmost column of $G$ are zeros. Furthermore, matrices $F$ and $G$ are Monge.
    \end{claim}

    \begin{claimproof}
        We prove the lemma for matrix $F$. The proof for matrix $G$ is analogous.

        By the definition of matrix $F$, we have $F_{i, j} = C_{i, j}$ if $j > z_i$, and $F_{i, j} = 0$ otherwise. As $z_i \ge 1$ for any $i$, for $F_{i, 1}$, the second defining equality applies. Thus, all entries in the leftmost column of $F$ are zeros.

        Define $z'_i$ for $i \in [1 \dd p]$ as the rightmost zero in the $i$-th row of $C$. We claim that all entries between $z_i$-th and $z'_i$-th in the $i$-th row of $C$ are zeros. Assume for the sake of contradiction that there is some $i \in [1 \dd p]$ and $j \in (z_i \dd z'_i)$ such that $C_{i, j} > 0$. Matrix $C$ is Monge, thus the topmost column minima positions are non-decreasing. $C_{i, z'_i} = 0$, so it is a column minimum for the $z'_i$-th column. Thus, the topmost minimum position in this column is $\le i$, and thus the topmost minimum position $(i', j)$ in the $j$-th column satisfies $i' \le i$ (and $C_{i', j} = 0$ as all column minima are zeros). As $C_{i, j} \neq 0$, we have $i' < i$. Therefore, we arrive at a contradiction as $C_{i', j} + C_{i, z_i} = 0 < C_{i, j} \le C_{i, j} + C_{i', z_i}$ violates the fact that $C$ is Monge. Hence, indeed all entries between $z_i$-th and $z'_i$-th in the $i$-th row of $C$ are zeros.

        Furthermore, we claim that $z_{i + 1} \le z'_i + 1$. Assume for the sake of contradiction that there is some $i \in [1 \dd p)$, such that $z_{i + 1} > z'_i + 1$. We have $C_{i + 1, z_{i + 1}} = 0$, thus the topmost column minimum in the $z_{i + 1}$-st column is in a row with index at most $i+1$. The topmost minima positions are non-decreasing, thus topmost minimum in the $z'_i+1$-st column appears in row $i' \le i+1$ ($C_{i', z'_i+1}=0$ as all column minima are zeros). As both $i$-th and $i+1$-st rows do not contain zero in the $z'_i+1$-st column, we have $i' \le i-1$.
        Therefore, we arrive at a contradiction as $C_{i', z'_i+1} + C_{i, z_{i}} = 0 < C_{i, z'_i+1} \le C_{i, z'_i+1} + C_{i', z_{i}}$ violates the fact that $C$ is Monge. Hence, indeed $z_{i+1} \le z'_i + 1$ for all $i$.

        Column minimum in the rightmost column is zero, thus there is some $i$, for which $z'_i = q$. The sequence $z'$ is non-decreasing as $C$ is Monge.
        Therefore, $z'_p=q$ follows. By the definition of matrix $F$, $F_{i, j} = C_{i, j}$ if $j > z_i$, and $F_{i, j} = 0$ otherwise. As all entries in the $i$-th row between $z_i$-th and $z'_i$-th are zeros, we may write an alternative definition: $F_{i, j} = C_{i, j}$ if $j > z'_i$, and $F_{i, j} = 0$ otherwise. We have $z'_p=q$, so for any $F_{p, j}$ the second defining equality applies. Thus, all entries in the downmost row of $F$ are zeros.

        It remains to show that $F$ is Monge. It is sufficient to check Monge property for all contiguous $2 \times 2$ submatrices of $F$, so $F_{i, j} + F_{i + 1, j + 1} \le F_{i, j + 1} + F_{i + 1, j}$ for all $i \in [1 \dd p)$ and $j \in [1 \dd q)$. If all the summands in the inequality are zeros or all are equal to the entries from $C$, the Monge property clearly holds. Thus, it remains to check it for submatrices, in which some but not all non-zero values from $C$ are preserved in $F$. That is, there is at least one non-zero entry in this submatrix in $C$ that is preserved and at least one that is not preserved. These two entries can not be neighbors in the same row as between preserved and not preserved positive values in a row there should be at least one zero element. Analogously, they can not be neighbors in the same column. So these two elements lie on the same diagonal. Then the remaining two elements should be zeros in $C$, as otherwise they would be either preserved or not preserved, but, in any case, two adjacent elements, where one is preserved and the other is not, would exist, which is impossible, as we have already discussed. If entries $C_{i, j}$ and $C_{i + 1, j + 1}$ are zeros, then in any case Monge property is still valid in $F$. If entries $C_{i, j + 1}$ and $C_{i + 1, j}$ are zeros, then Monge property would be violated in $C$. Thus, all contiguous $2 \times 2$ submatrices of $F$ satisfy Monge property, and so $F$ is Monge.
    \end{claimproof}

    In time $\Oh(N + \delta(C) \log N)$ we build $\mds(C)$ that provides $\Oh(\log N)$-time random access to the values of $C$.
    In time $\Oh(N \log N)$ we find values $z_i$ for all $i \in [1 \dd p]$ using \cref{thm:smawk-row-maxima} and $\mds(C)$.
    Given the condensed representation of $C$ and the values $z_i$, in time $\Oh(N + \delta(C))$ we may build the condensed representations of $F$ and $G$.
    We have $\delta(F), \delta(G) \le \Oh(\delta(C) + N)$.
    We now solve the problem for matrices $F$ (using Lemma \ref{lm:reduce_simple_matrix}) and $G$ (using Lemma \ref{lm:reduce_simple_matrix} for $G^T$ and Lemma \ref{lm:ds_tools} after that) to obtain the condensed representations of $F'$ and $G'$ for some appropriate $F' \meq{k} F$ and $G' \meq{k} G$ in time $\Oh((N + \delta(C)) \log N)$. As $C = F + G$, we get that $C' \coloneqq F' + G' \meq{k} C$ due to \cref{c-equiv-preserv}.
    Furthermore, matrix $C'$ is Monge as the sum of two Monge matrices $F'$ and $G'$.
    The condensed representation of $C'$ can be obtained in time $\Oh(\delta(F') + \delta(G') + N) \le \Oh(\delta(C) + N)$.

    The overall time complexity is $\Oh((\delta(C) + N) \log N)$.
    Furthermore, $\delta(C') \le \Oh(\delta(F') + \delta(G') + N) \le \Oh(N \sqrt k)$.
\end{proof}

We now finally prove the same claim in the general case.

\begin{proof}[Proof of \cref{lm:reduce_matrix}]
    In time $\Oh(N + \delta(C) \log N)$ we build $\mds(C)$ that provides $\Oh(\log N)$-time random access to the entries of $C$.
    We apply \cref{thm:smawk-row-maxima} to compute row-minima $r_i$ of $C$ in time $\Oh(N \log N)$.
    Define a Monge matrix $D$, such that $D_{i, j} \coloneqq C_{i, j} - r_i$ for any $i \in [1 \dd p]$ and $j \in [1 \dd q]$.
    Note that the values $r_i$ and $\mds(C)$ provide $\Oh(\log N)$-time random access to the values of $D$.
    We now use this random access and \cref{thm:smawk-row-maxima} to compute column-minima $c_j$ of $D$ in time $\Oh(N \log N)$.
    Define a Monge matrix $E$, such that $E_{i, j} \coloneqq D_{i, j} - c_j = C_{i, j} - r_i - c_j$.
    Note that $\core{E} = \core{D} = \core{C}$, and thus the condensed representation of $E$ can be obtained in time $\Oh(N + \delta(C))$.
    We use \cref{lm:reduce_matrix-prime} to build the condensed representation of $E'$ for some Monge matrix $E' \in \mathbb{Z}_{\ge 0}^{p \times q}$, such that $\coresiz(E') \le \Oh(N \sqrt k)$ and $E' \meq{k} E$ in time $\Oh((\delta(C) + N) \log N)$.
    Define matrix $C'$, such that $C'_{i, j} \coloneqq E'_{i, j} + r_i + c_j$.
    Note that $\core{C'} = \core{E'}$, and thus $\delta(C') \le \Oh(N \sqrt k)$.
    Furthermore, the condensed representation of $C'$ can be obtained in time $\Oh(N + \delta(E')) \le \Oh(N + \delta(C))$.
    Finally, as $E' \meq{k} E$, we have $C' \meq{k} C$ due to \cref{c-equiv-preserv} ($C = E + M$ and $C' = E' + M$ for matrix $M$ with $M_{i, j} = r_i + c_j$).
    The overall time complexity is $\Oh((\delta(C) + N) \log N)$.
\end{proof}

\subsection{Monge Stitching} \label{sec:monge-appendix-capped-stitching}

In this section we prove \cref{capped-matrix-stitching}.

\cappedmatrixstitching*

\begin{proof}
    Consider $\bC = (A^T \mid B'^T)^T$.
    Note that $\bC \meq{k} C$.
    Furthermore, all entries of $\dens{\bC}$ in all rows except for $n_1$-st correspond to some entries of $\dens{A}$ and $\dens{B'}$, and thus are non-negative.
    Moreover, we can find all entries in the $n_1$-st row of $\dens{\bC}$ in time $\Oh(m \log N)$ by accessing all elements in the downmost row of $A$ and the topmost row of $B'$ through $\mds(A)$ and $\mds(B')$.
    If all these entries are non-negative, we define $C' \coloneqq \bC$ and use \cref{lm:ds_tools} to build $\mds(\bC)$ in time $\Oh((N + \delta(A) + \delta(B')) \log N)$ as $\bC = (A^T \mid B'^T)^T$.
    It is clear that such a matrix $C'$ satisfies all the required conditions.

    Otherwise, let $j^*$ be the smallest index in $[1 \dd m)$ such that $\dens{\bC}_{n_1, j^*} < 0$.
    Note that $j^* \ge \bj$ as $B_{1, j} = B'_{1, j}$ for all $j \le \bj$.
    As we have access to $\core{A}$ and $\core{B'}$ via $\mds(A)$ and $\mds(B')$, we can build $\core{\bC}$ in time $\Oh(\delta(A) + \delta(B') + N)$.
    
    Furthermore, in time $\Oh(\delta(A) + \delta(B') + N)$ we can build
    \[S \coloneqq \{(i, j, v) \in \core{\bC} \mid i \neq n_1 \text{ or } j < j^*\} \cup \{(n_1, j^*, k + 1 + \bC_{n_1 + n_2, 1})\}.\]
    We claim that for any $(i, j, v) \in S$, we have $v \ge 0$.
    Note that it holds for $(n_1, j^*, k + 1 + \bC_{n_1 + n_2, 1})$.
    It remains to show it for all entries of $\core{\bC}$ that are in $S$.
    As all entries of $\dens{\bC}$ in all rows except for $n_1$-st correspond to some entries of $\dens{A}$ and $\dens{B'}$, they are non-negative as $A$ and $B'$ are Monge.
    Furthermore, by the definition of $j^*$, all entries $(n_1, j, v) \in \core{\bC}$ with $j < j^*$ satisfy $v \ge 0$ as $j^*$ is the smallest index, for which $\dens{\bC}_{n_1, j^*}$ is negative.
    Thus, indeed all elements $(i, j, v) \in S$ satisfy $v \ge 0$.

    We now consider a Monge matrix $C' \in \mathbb{Z}^{(n_1 + n_2) \times m}$ such that $C'_{i, 1} = \bC_{i, 1}$ for $i \in [1 \dd n_1 + n_2]$, $C'_{n_1 + n_2, j} = \bC_{n_1 + n_2, j}$ for $j \in [1 \dd m]$, and $\core{C'} = S$.
    That is, we have
    \[C'_{i, j} = \bC_{i, 1} + \bC_{n_1 + n_2, j} - \bC_{n_1 + n_2, 1} + \sum_{(a, b, v) \in S\text{, s.t. } a \ge i, b < j} v\]
    for any $i \in [1 \dd n_1 + n_2]$ and $j \in [1 \dd m]$.
    Such a matrix is Monge as $v \ge 0$ for all $(i, j, v) \in S$.
    Note that we can calculate $C'_{n_1 + n_2, j} = \bC_{n_1 + n_2, j} = B'_{n_2, j}$ for all $j \in [1 \dd m]$ in time $\Oh(m \log N)$ using $\mds(B')$.
    Thus, using the formula from the definition of $C'$, we can calculate $C'_{1, j}$ for all $j \in [1 \dd m]$ in time $\Oh((m + |S|) \log N) = \Oh((\delta(A) + \delta(B') + N) \log N)$.
    Furthermore, we have $\core{C'} = S$ explicitly (in particular, $\delta(C') = \Oh(N + \delta(A) + \delta(B'))$).
    Therefore, we can use \cref{lm:build_mtx_ds} to build $\mds(C')$ in time $\Oh(N + (\delta(A) + \delta(B') + N) \log N)$.
    The whole algorithm takes time $\Oh((\delta(A) + \delta(B') + N) \log N)$.

    \smallskip

    It remains to show that $C'_{i, 1} = C_{i, 1}$ for $i \in [1 \dd n_1]$, all entries of $C'$ are non-negative, and $C' \meq{k} C$.
    The first claim obviously follows as $C'_{i, 1} = \bC_{i, 1} = A_{i, 1} = C_{i, 1}$ for all $i \in [1 \dd n_1]$.
    As the leftmost column and the downmost row of $C'$ coincide with the leftmost column and the downmost row of $\bC$, and as core entries of $C'$ and $\bC$ outside of $[1 \dd n_1] \times [j^* \dd m]$ coincide, we have that $C'_{i, j} = \bC_{i, j} \ge 0$ for all $(i, j) \notin [1 \dd n_1] \times [j^* + 1 \dd m]$.
    Furthermore, we have $C'_{i, j} = \bC_{i, j} \meq{k} C_{i, j}$.

    Therefore, it remains to prove the claim for $C'_{i, j}$ for $(i, j) \in [1 \dd n_1] \times [j^* + 1 \dd m]$.
    Note that in this case $n_1 \ge i$, $j^* < j$, and $(n_1, j^*, k + 1 + \bC_{n_1 + n_2, 1}) \in S$.
    Therefore, by the definition of $C'$, we have
    \begin{align*}
        C'_{i, j} &= \bC_{i, 1} + \bC_{n_1 + n_2, j} - \bC_{n_1 + n_2, 1} + \sum_{(a, b, v) \in S\text{, s.t. } a \ge i, b < j} v\\
                  &\ge \bC_{i, 1} + \bC_{n_1 + n_2, j} - \bC_{n_1 + n_2, 1} + (k + 1 + \bC_{n_1 + n_2, 1})\\
                  &= \bC_{i, 1} + \bC_{n_1 + n_2, j} + k + 1\\
                  &> k.
    \end{align*}

    Furthermore, we claim that $C_{i, j} > k$ holds for all $(i, j) \in [1 \dd n_1] \times [j^* + 1 \dd m]$.
    Note that $\dens{\bC}_{n_1, j^*} < 0$ by the definition of $j^*$.
    Hence, $\dens{\bC}_{n_1, j^*} \neq \dens{C}_{n_1, j^*}$ as $C$ is Monge.
    Therefore, one of the entries of $\bC$ in $[n_1 \dd n_1 + 1] \times [j^* \dd j^* + 1]$ is not equal to the corresponding entry of $C$.
    Let this entry be $(i', j')$.
    As $\bC \meq{k} C$, it follows that $\bC_{i', j'}, C_{i', j'} > k$.
    By the definition of $\bj$, we have $C_{i, j} \le C_{i - 1, j}$ for all $(i, j) \in (1 \dd n_1 + 1] \times [\bj \dd m]$ and $C_{i, j} \le C_{i, j + 1}$ for all $(i, j) \in [1 \dd n_1 + 1] \times [\bj \dd m)$.
    We have $i' \le n_1 + 1$ and $j' \ge j^* \ge \bj$.
    Therefore, we obtain $C_{i, j} \ge C_{i', j'} > k$ for all $(i, j) \in [1 \dd i'] \times [j' \dd m]$.
    In particular, it holds for all $(i, j) \in [1 \dd n_1] \times [j^* + 1 \dd m]$ as $n_1 \le i'$ and $j^* + 1 \ge j'$.

    As $C'_{i, j}> k$ and $C_{i, j} > k$, we obtain $C'_{i, j} \meq{k} C_{i, j}$, thus proving the claim.
\end{proof}

\section{Deferred Proofs From \cref{sec:hierarchical}} \label{sec:hierarchical-appendix}

\subsection{Properties of the Augmented Alignment Graph} \label{subsec:hierarchical-appendix1}

In this section, we prove \cref{grid-graph-extension}.

\gridgraphextension*

\begin{proof}

    \textbf{Monotonicity.}
    Consider any shortest path $(x, y) = (x_0, y_0) \to (x_1, y_1) \to \dots \to (x_m, y_m) = (x', y')$ from $(x, y)$ to $(x', y')$ in $\oAGw(X, Y)$. If its first and second coordinates are monotone, we are done. Otherwise, without loss of generality assume that $(x_i)_{i \in [0 \dd m]}$ is not monotone. Every two adjacent elements in this sequence differ by at most one, thus if such a sequence is not monotone, there exist two indices $i$ and $j$ such that $i < j - 1$, $x_i = x_j$, and $x_k \neq x_i$ for all $k \in (i \dd j)$. Here $j$ is the first index that breaks monotonicity, and $i$ is the last index before $j$ with $x_i = x_j$. We claim that by replacing the part of the path between $(x_i, y_i)$ and $(x_j, y_j)$ with a straight segment between these two points, we strictly decrease the weight of the path, thus the initial path could not be a shortest one. We consider two cases.

    \begin{enumerate}
        \item $y_i \ge y_j$. In this case, the straight segment $(x_i, y_i) \to (x_i, y_i - 1) \to \dots \to (x_i, y_j + 1) \to (x_i, y_j) = (x_j, y_j)$ uses only back edges and has weight $(W + 1) \cdot (y_i - y_j)$. On the other hand, any path from $(x_i, y_i)$ to $(x_j, y_j)$ should decrease the second coordinate of the current vertex at least $y_i - y_j$ times, and decreasing a coordinate is possible only using a back edge, thus any such path has at least $y_i - y_j$ back edges. Consequently, its weight is at least $(W + 1) \cdot (y_i - y_j)$, hence, the straight-line path is not longer. Furthermore, the old path should contain more than $y_i - y_j$ edges, because the only path between $(x_i, y_i)$ and $(x_i, y_j)$ of that length is a straight segment.
            All edges except for forward diagonal ones have strictly positive weights, and thus if there are any more of them, the weight of the old path is strictly larger than $(W + 1) \cdot (y_i - y_j)$, thus proving the claim. On the other hand, if there is a forward diagonal edge, it increases the second coordinate of the current vertex by one, thus there should be at least $y_i - y_j + 1$ back edges of total weight at least $(W + 1) \cdot (y_i - y_j + 1)$, thus proving the claim. Hence, indeed the old path is strictly more expensive than the straight segment path.
        \item $y_i < y_j$.
            The path $(x_i, y_i) \to (x_{i + 1}, y_{i + 1}) \to \dots \to (x_j, y_j)$ goes from a vertex with the second coordinate equal to $y_i$ to a vertex with the second coordinate equal to $y_j$. Thus, for every $\bar y \in [y_i \dd y_j)$, there exists some $k_{\bar y} \in [i \dd j)$ such that $y_{k_{\bar y}} = \bar y$ and $y_{k_{\bar y} + 1} = \bar y + 1$. Furthermore, let $c$ be the number of back edges on the path $(x_i, y_i) \to (x_{i + 1}, y_{i + 1}) \to \dots \to (x_j, y_j)$. The cost of this path is at least $(W + 1) \cdot c + \sum_{\bar y \in [y_i \dd y_j)} \w{(x_{k_{\bar y}}, y_{k_{\bar y}})}{(x_{k_{\bar y} + 1}, y_{k_{\bar y} + 1})}$ because all edges $(x_{k_{\bar y}}, y_{k_{\bar y}}) \to (x_{k_{\bar y} + 1}, y_{k_{\bar y} + 1})$ are distinct and forward. Furthermore, edges of form $(x_{k_{\bar y}}, y_{k_{\bar y}}) \to (x_{k_{\bar y} + 1}, y_{k_{\bar y} + 1})$ are either diagonal if $x_{k_{\bar y} + 1} = x_{k_{\bar y}} + 1$ or vertical if $x_{k_{\bar y} + 1} = x_{k_{\bar y}}$. For a vertical edge, its weight is equal to the weight of the edge $(x_i, y_{k_{\bar y}}) \to (x_i, y_{k_{\bar y}} + 1)$ from the straight segment path. For a diagonal edge, it increases the first coordinate of the current vertex on the path by one, and as $x_i = x_j$, the number of such increases is bounded by the number of back edges $c$ on this path because only they decrease the first coordinate of the current vertex on the path. Thus, there are $\le c$ edges $(x_i, y_{k_{\bar y}}) \to (x_i, y_{k_{\bar y}} + 1)$ of the straight-line path that correspond to diagonal edges in the old path.
    Each such edge has weight $\le W$. From this, it follows that the straight-line path is not longer than the old path.

    Furthermore, $c \ge 1$ because the only path between $(x_i, y_i)$ and $(x_i, y_j)$ without back edges is a straight line segment. Thus, the sum of all diagonal edges and back edges in the old path is at least $c \cdot (W + 1)$, while there are at most $c$ corresponding edges in the new path, each having length $\le W$, hence having total length $\le c \cdot W < c \cdot (W + 1)$. Thus, indeed the old path is strictly more expensive than the straight segment path.
\end{enumerate}

    As no path that is not monotone in both coordinates can be a shortest one, we prove the first claim of the lemma.

    \medskip

    \subparagraph*{Distance preservation.} Distance preservation follows from monotonicity as if $x \le x'$ and $y \le y'$, both coordinates on any shortest path are non-decreasing, and thus there are no back edges on this path. Hence, the same path exists in $\AGw(X, Y)$. Furthermore, clearly any path from $\AGw(X, Y)$ exists in $\oAGw(X, Y)$, thus $\dist_{\AG^w(X, Y)}((x, y), (x', y')) = \dist_{\overline{\AG}^w(X, Y)}((x, y), (x', y'))$.

    \medskip

    \subparagraph*{Path irrelevance.} Without loss of generality assume that $x \le x'$ and $y \ge y'$. Consider some path $(x, y) = (x_0, y_0) \to (x_1, y_1) \to \dots \to (x_m, y_m) = (x', y')$ from $(x, y)$ to $(x', y')$ in $\oAGw(X, Y)$ that is monotone in both coordinates. Such a path has $(x_i)_{i \in [0 \dd m]}$ non-decreasing and $(y_i)_{i \in [0 \dd m]}$ non-increasing. All edges that are monotone in this direction are horizontal forward edges and vertical back edges. Every such path consists of $y - y'$ vertical back edges of weight $W + 1$ each and a single horizontal forward edge of form $(\bar x, \bar y) \to (\bar x + 1, \bar y)$ for each $\bar x \in [x \dd x')$ for some $\bar y$. All horizontal edges of this form have equal weights for a fixed $\bar x$, thus the weight of the whole path is independent of the specific path we choose.
\end{proof}

\subsection{Boundary Distance Matrix Composition} \label{subsec:hierarchical-appendix2}

This section is dedicated to proving \cref{boundary-distance-matrix-combination}.
Before proving it, we show a helper lemma.

\begin{lemma} \label{lm:degenerate-distance-matrix}
    Consider strings $X, Y \in \Sigma^{\le n}$ and a weight function $w : \Esigma^2 \to [0 \dd W]$.
    Furthermore, consider a sequence $(r_i)_{i \in [1 \dd m]} \eqqcolon (x_i, y_i)_{i \in [1 \dd m]}$ of distinct vertices of $\oAGw(X, Y)$, where $(x_i)_{i \in [1 \dd m]}$ is non-decreasing and $(y_i)_{i \in [1 \dd m]}$ is non-increasing.
    Let $(p_i)_{i \in [1 \dd m_p]}$ and $(q_i)_{i \in [1 \dd m_q]}$ be two contiguous subsequences of $(r_i)_{i \in [1 \dd m]}$.
    Let $M$ be a matrix of size $m_p \times m_q$ such that $M_{i, j} = \dist_{\oAGw(X, Y)}(p_i, q_j)$ for all $i \in [1 \dd m_p]$ and $j \in [1 \dd m_q]$.

    Matrix $M$ is Monge with $\delta(M) = \Oh((x_m - x_1) + (y_1 - y_m))$.
    Furthermore, there is an algorithm that in time $\Oh(((x_m - x_1) + (y_1 - y_m) + 1) \log n)$ builds $\mds(M)$.
\end{lemma}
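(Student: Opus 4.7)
The plan is to use the path irrelevance property of \cref{grid-graph-extension} to derive a closed form for $d(r_a, r_b) \coloneqq \dist_{\oAGw(X, Y)}(r_a, r_b)$, from which the Monge property, the core-size bound, and the construction of $\mds(M)$ all follow by elementary counting.  For any $a, b \in [1 \dd m]$, the staircase structure ensures $x_a \le x_b$ and $y_a \ge y_b$ whenever $a \le b$ (and the reverse when $a \ge b$), so \cref{grid-graph-extension} implies that every coordinate-monotone path from $r_a$ to $r_b$ is a shortest path.  Writing $F(x) \coloneqq \sum_{x' = 0}^{x - 1} w(X\position{x'}, \eps)$ and $G(y) \coloneqq \sum_{y' = 0}^{y - 1} w(\eps, Y\position{y'})$, such a path has cost
\[
d(r_a, r_b) =
\begin{cases}
F(x_b) - F(x_a) + (W + 1)(y_a - y_b) & \text{if } a \le b, \\
(W + 1)(x_a - x_b) + G(y_b) - G(y_a) & \text{if } a \ge b,
\end{cases}
\]
with the two branches agreeing (and equalling $0$) when $a = b$.

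Writing $p_i = r_{\alpha + i - 1}$ and $q_j = r_{\beta + j - 1}$, the key observation is that each branch of the closed form has the rank-$1$ shape $d(r_a, r_b) = \psi(b) - \psi(a)$ or $d(r_a, r_b) = \chi(a) - \chi(b)$, where $\psi(a) \coloneqq F(x_a) - (W+1) y_a$ and $\chi(a) \coloneqq (W+1) x_a - G(y_a)$.  Consequently, any $2 \times 2$ subgrid of $M$ contained entirely in one branch has density $0$.  The two branches coincide along the diagonal $\{a = b\}$.  A short calculation---one entry of the $2 \times 2$ subgrid equals $0$ and the remaining three telescope via $\psi$ or $\chi$---shows that density also vanishes in the two boundary cases $a + 1 = b$ and $a = b + 1$.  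The only strictly positive density arises at $a = b$, where
\[
\dens{M}_{i, j} = d(r_a, r_{a+1}) + d(r_{a+1}, r_a) > 0
\]
because $r_a \ne r_{a+1}$.  Hence $M$ is Monge, and $\core{M}$ is supported on the single diagonal $j - i = \alpha - \beta$, giving $\delta(M) \le \min(m_p, m_q) - 1 \le m - 1$.  Distinctness of the $r_i$ together with $(x_i)$ non-decreasing and $(y_i)$ non-increasing forces $m - 1 \le (x_m - x_1) + (y_1 - y_m)$, yielding the claimed core-size bound.

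For the construction of $\mds(M)$, I first compute $M_{1, 1} = d(r_\alpha, r_\beta)$ from the closed form by summing $w$ over one appropriate range of $X$ or $Y$; this takes $\Oh((x_m - x_1) + (y_1 - y_m))$ time.  The remaining first-row entries then follow incrementally: moving from $q_j$ to $q_{j + 1}$ adjusts $d(r_\alpha, \cdot)$ by a quantity depending on whether $\beta + j - 1$ lies above, below, or at $\alpha$, with the single transition handled by one direct reevaluation; the per-step work telescopes to $\Oh((x_{\beta + m_q - 1} - x_\beta) + (y_\beta - y_{\beta + m_q - 1}))$.  The first column is built symmetrically, and each of the at most $\min(m_p, m_q) - 1$ core values is evaluated from $d(r_a, r_{a+1}) + d(r_{a+1}, r_a)$ in $\Oh((x_{a+1} - x_a) + (y_a - y_{a+1}))$ time, which again telescopes to the target budget.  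Feeding these precomputed arrays (with $T(N) = \Oh(1)$ access) into \cref{lm:build_mtx_ds} produces $\mds(M)$ in $\Oh(N + \delta(M) \log N) = \Oh(((x_m - x_1) + (y_1 - y_m) + 1) \log n)$ time, matching the claim.  The main point of care will be bookkeeping the branch transition at the diagonal $a = b$ during the incremental scan, so that the telescoping remains valid on both sides of the crossing.
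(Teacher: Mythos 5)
Your proof is correct and follows essentially the same strategy as the paper: both exploit path irrelevance along the monotone staircase through the $r_i$'s to conclude that the density is supported on a single diagonal (where $p_i$ and $q_j$ coincide as elements of the $r$-sequence), with value $\dist(p_i,p_{i+1})+\dist(p_{i+1},p_i)$, and both build $\mds(M)$ by an incremental scan along that staircase. The paper derives the same conclusions a touch more abstractly by "walking along $P$" and invoking \cref{fct:monge} for the Monge property, whereas you make the rank-one structure of each branch explicit via $\psi,\chi$ — a slightly cleaner bookkeeping device, but not a genuinely different argument.
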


\begin{proof}
    Note that every pair $(p_i, q_j)$ satisfies the path irrelevance condition of \cref{grid-graph-extension}.
    Therefore, all paths between them that are monotone in both dimensions are the shortest ones.
    Fix some such path $P$ from $r_1$ to $r_m$ going through all $r_i$.
    We calculate the distances $\dist_{\oAGw(X, Y)}(p_i, q_j)$ along this path (and its inverse).
    The fact that $M$ is Monge follows directly from \cref{fct:monge} as all nodes $p_i$ and $q_j$ lie in the right order on the outer face of the subgraph of $\oAGw(X, Y)$ induced by $P$.
    
    Consider node $p_1$. Distances from it to all nodes $q_j$ can be found in time $\Oh((x_m - x_1) + (y_1 - y_m) + 1)$ by going along $P$ in both directions from $p_1$.
    Analogously, we can find distances from all nodes $p_i$ to $q_1$.
    Therefore, we find all entries of $M$ in the first row and the first column.

    We now compute the core of $M$.
    Consider the $i$-th row of $M$.
    We find all entries $(i', j', v) \in \core{M}$ with $i' = i$.
    Let $p_i$ be the $k$-th element of the sequence $r$.
    Therefore, $p_{i + 1} = r_{k + 1}$.
    Consider the difference $M_{i + 1, j} - M_{i, j} = \dist_{\oAGw(X, Y)}(p_{i + 1}, q_j) - \dist_{\oAGw(X, Y)}(p_{i}, q_j)$ for some fixed $j \in [1 \dd m_q]$.
    Let $q_j$ be the $\ell_j$-th element of the sequence $r$.
    If $\ell_j \le k$, we have $M_{i + 1, j} - M_{i, j} = \distag(p_{i + 1}, {p_i})$ as the path along $P$ becomes longer by this fragment.
    Otherwise, if $\ell_j \ge k + 1$, we have $M_{i + 1, j} - M_{i, j} = -\distag(p_i, p_{i + 1})$ as the path along $P$ becomes shorter by this fragment.
    Note that these differences do not depend on $j$.
    Therefore, if $\ell_j \ge k + 1$ for all $j \in [1 \dd m_q]$ or $\ell_j \le k$ for all $j \in [1 \dd m_q]$, then $\dens{A}_{i, j} = 0$ for all $j \in [1 \dd m_q]$, and thus there are no entries $(i', j', v) \in \core{M}$ with $i' = i$.
    Otherwise, let $j^*$ be the last index in $[1 \dd m_q)$ such that $\ell_{j^*} \le k$.
    We have that $\dens{M}_{i, j} = 0$ for all $j \neq j^*$ and $\dens{M}_{i, j^*} = \distag(p_{i + 1}, {p_i}) + \distag(p_i, p_{i + 1})$.
    Hence, $(i, j^*, \distag(p_{i + 1}, {p_i}) + \distag(p_i, p_{i + 1}))$ is the only entry $(i', j', v) \in \core{M}$ with $i' = i$.

    By going through rows of $M$ in the increasing order we can maintain $j^*$ on the go using two pointers.
    Hence, we can compute $\core{M}$ in time $\Oh((x_m - x_1) + (y_1 - y_m) + 1)$.
    Furthermore, $\delta(M) \le m_p - 1 \le m - 1 \le (x_m - x_1) + (y_1 - y_m)$.
    Moreover, as we have all entries of the first row and the first column of $M$ and $\core{M}$, we may apply \cref{lm:build_mtx_ds} to build $\mds(M)$ in time $\Oh(((x_m - x_1) + (y_1 - y_m) + 1) \log n)$.
\end{proof}

We are now ready to prove \cref{boundary-distance-matrix-combination}.

\boundarydistancematrixcombination*

\begin{proof}
    We prove the lemma for rectangles adjacent vertically.
    The claim for rectangles adjacent horizontally is analogous.

    We first prove the first statement of the lemma.
    Consider matrix $D$ of distances in $\oAGw(X, Y)$ between the left input vertices of $[a\dd b] \times [d+1\dd e]$ and the output vertices of $[a\dd b] \times [c\dd d]$.
    Due to \cref{lm:degenerate-distance-matrix}, matrix $D$ is Monge, $\delta(D) = \Oh(m)$, and we can build $\mds(D)$ in time $\Oh(m \log m)$.

    By applying \cref{lm:ds_tools}, we can build $\mds(E)$ for matrix $E \coloneqq (D^T \mid A^T)^T$ of distances in $\oAGw(X, Y)$ from the input vertices of $[a\dd b] \times [c\dd e]$ to the output vertices of $[a\dd b] \times [c\dd d]$ in time $\Oh(m \log m + \delta(A) \log m)$, where $(D^T \mid A^T)$ is Monge due to \cref{fct:monge}. The size of the core of $E$ is $\Oh(\delta(A) + m)$.

    Consider matrix $F$ of distances in $\oAGw(X, Y)$ from the input vertices of $[a\dd b] \times [c\dd e]$ to the left input vertices of $[a\dd b] \times [d + 1\dd e]$.
    Due to \cref{lm:degenerate-distance-matrix}, matrix $F$ is Monge, $\delta(F) = \Oh(m)$, and we can build $\mds(F)$ in time $\Oh(m \log m)$.

    By applying \cref{lm:ds_tools} we can build $\mds(G)$ for matrix $G \coloneqq (F \mid E)$ of distances in $\oAGw(X, Y)$ from the input vertices of $[a\dd b] \times [c\dd e]$ to the union of left input vertices of $[a\dd b] \times [d+1\dd e]$ and the output vertices of $[a\dd b] \times [c\dd d]$ in time $\Oh(m \log m + (\delta(A) + m) \log m)$, where $(F \mid E)$ is Monge due to \cref{fct:monge}. The size of the core of $G$ is $\Oh(\delta(A) + m)$.

    By an analogous procedure starting from matrix $B$, in time $\Oh(m \log m + \delta(B) \log m)$ we can build $\mds(H)$ for matrix $H$ of distances in $\oAGw(X, Y)$ from the union of left input vertices of $[a\dd b] \times [d+1\dd e]$ and the output vertices of $[a\dd b] \times [c\dd d]$ to the output vertices of $[a\dd b] \times [c\dd e]$, where $\coresiz(H) = \Oh(\delta(B) + m)$.

    We claim that $G \otimes H = C$. It follows from the fact that vertices of the union of left input vertices of $[a\dd b] \times [d+1\dd e]$ and the output vertices of $[a\dd b] \times [c\dd d]$ are a separator between the input and the output vertices of $[a\dd b] \times [c\dd e]$ in the subgraph of $\oAGw(X, Y)$ induced by $[a\dd b] \times [c\dd e]$, and thus, every path between them inside $[a\dd b] \times [c\dd e]$ goes through this separator. \cref{grid-graph-extension} implies that every shortest path between boundary vertices of $[a\dd b] \times [c\dd e]$ indeed lies inside $[a\dd b] \times [c\dd e]$. As $G$ contains distances from the input vertices of $[a\dd b] \times [c\dd e]$ to this separator in $\oAGw(X, Y)$ and $H$ contains distances from this separator to the output vertices of $[a\dd b] \times [c\dd e]$ in $\oAGw(X, Y)$, we get that $G \otimes H$ is a matrix of distances from the input to the output vertices of $[a\dd b] \times [c\dd e]$.
    Given $\mds(G)$ and $\mds(H)$, we apply \cref{lm:matrix_mult_w} to get $\mds(C)$ in time $\Oh((m + \delta(G) + \delta(H)) \log m) = \Oh((m + \delta(A) + \delta(B)) \log m)$. Thus, the total time complexity is $\Oh((m + \delta(A) + \delta(B)) \log m)$.

    \medskip

    We now prove the second statement of the lemma.
    We proceed the same way as in the first statement but emphasize the differences.
    First, we build $\mds(D)$ using \cref{lm:degenerate-distance-matrix} in time $\Oh(m \log m)$.
    We note that all entries of $E[1 \dd e - d + 1][1 \dd \perim([a \dd b] \times [c \dd d])]$ are monotonically decreasing by rows and monotonically increasing by columns due to the definition of $E$ as all left input vertices of $[a \dd b] \times [d \dd e]$ are to the left and down of all output vertices of $[a \dd b] \times [c \dd d]$.
    Furthermore, $A_{1, 1} = 0$ as it represents the distance in $\oAGw(X, Y)$ from $(a, d)$ to $(a, d)$.
    As $A' \meq{k} A$, we have $A'_{1, 1} = A_{1, 1} = 0$.
    Therefore, as $E = (D^T \mid A^T)^T$, by applying \cref{capped-matrix-stitching} for $k$, $\mds(D)$, $\mds(A')$, and $\bj=1$, in time $\Oh((m + \delta(A')) \log m)$ we can obtain $\mds(E')$ for some $E'$ such that $E' \meq{k} E$, $\delta(E') = \Oh(\delta(A') + m)$, and $E'_{i, 1} = E_{i, 1}$ for all $i \in [1 \dd e - d]$.

    We now build $\mds(F)$ in time $\Oh(m \log m)$ using \cref{lm:degenerate-distance-matrix}.
    We note that all entries of $G[e - d + 1 \dd \perim([a \dd b] \times [c \dd e])][1 \dd e - d + 1]$ are monotonically increasing by rows and monotonically decreasing by columns due to the definition of $G$ as all input vertices of $[a \dd b] \times [c \dd d]$ are to the right and up of all left input vertices of $[a \dd b] \times [d \dd e]$.
    Furthermore, $E'_{i, 1} = E_{i, 1}$ for all $i \in [1 \dd e - d]$.
    Moreover, $E_{e - d + 1, 1} = 0$ as it represents the distance in $\oAGw(X, Y)$ from $(a, d)$ to $(a, d)$.
    As $E' \meq{k} E$, we have $E'_{e - d + 1, 1} = E_{e - d + 1, 1} = 0$.
    Finally, we can obtain $\mds(F^T)$ and $\mds(E'^T)$ using  \cref{lm:ds_tools} in time $\Oh((m + \delta(A')) \log m)$.
    Therefore, as $G = (F \mid E)$, by applying \cref{capped-matrix-stitching} for $k$, $\mds(F^T)$, $\mds(E'^T)$, and $\bj = e - d + 1$, in time $\Oh((m + \delta(A')) \log m)$ we can obtain $\mds(G'')$ for some $G''$ such that $G'' \meq{k} G^T$ and $\delta(G'') = \Oh(\delta(A') + m)$.
    Using \cref{lm:ds_tools}, we can build $\mds(G')$ for $G' \coloneqq G''^T$ in time $\Oh((\delta(A') + m) \log m)$.
    We have $G' \meq{k} G$.
    The whole procedure of building $\mds(G')$ takes $\Oh((m + \delta(A')) \log m)$ time.

    By an analogous procedure starting from matrix $B'$, in time $\Oh((m+ \delta(B')) \log m)$ we can build $\mds(H')$ for some $H'$ such that $H' \meq{k} H$ and $\delta(H') = \Oh(\delta(B') + m)$.
    We now use \cref{lm:matrix_mult_k} to calculate $\mds(C')$ for some $C'$ such that $C' \meq{k} G' \otimes H' \meq{k} G \otimes H = C$ and $\delta(C') = \Oh(m \sqrt k)$ in time $\Oh((m + \delta(A) + \delta(B)) \log m)$.
    The overall time complexity is $\Oh((m + \delta(A) + \delta(B)) \log m)$.
\end{proof}

\subsection{Hierarchical Alignment Data Structure Implementation} \label{subsec:hierarchical-appendix3}

This section is dedicated to proving \cref{lm:boxdsslp-query,lm:boxdsslp-update}, which implement the alignment retrieval and the construction of the recursive data structures $\boxdsslp(A,B)$ and $\boxdskslp(A,B)$ of \cref{def:boxdsslp}.

\lmboxdsslpquery*

\begin{proof}
We focus on the case of $\boxdskslp$. 
The query algorithm for the non-relaxed version $\boxdsslp$ of the data structure can be obtained by setting $k=(W+1)\cdot (\len(A) + \len(B))$ below so that $k\ge d$. 

We have a pointer to $\mds(M)$ for some $M$ such that $M \meq{k} \BMw(A,B)$.
Note that $d = \dist_{\oAGw(A, B)}(p, q)$ is one of the entries of $\BMw(A,B)$. 
We read the corresponding entry of $M$ using the random access functionality of $\mds(M)$; if the value exceeds $k$, we conclude that $d>k$ and report a failure.
Otherwise, the corresponding entry of $M$ contains exactly $d\le k$.
Overall, we can retrieve $d$ in $\Oh(\log n)$ time.

If $d = 0$, there is a trivial shortest path from $p$ to $q$ always taking diagonal edges of weight zero.
We can return the breakpoint representation of such a path in constant time.
We henceforth assume $d > 0$ and, furthermore, $\len(A)\ge \len(B)$; the case $\len(A) < \len(B)$ is analogous.
If $\len(A) = \len(B)=1$, then the path can be constructed in $\Oh(1)$ time using oracle access to $w$.
Otherwise, we have pointers to $\boxdskslp(A_L, B)$ and $\boxdskslp(A_R, B)$, where $\rhs(A)\eqqcolon A_LA_R$.
Let $R_L \coloneqq [0 \dd \len(A_L)] \times [0 \dd \len(B)]$ and $R_R \coloneqq [\len(A_L) \dd \len(A)] \times [0 \dd \len(B)]$ be rectangles in $\oAGw(A, B)$.

If $p,q\in R_L$, then we interpret $p,q$ as vertices of $\oAGw(A_L, B)$ and recursively compute the breakpoint representation of a shortest path in $\oAGw(A_L, B)$ via $\boxdskslp(A_L, B)$; by \cref{grid-graph-extension}, the shortest path stays within $R_L$.
Similarly, if $p,q\in R_R$, then we recursively compute the breakpoint representation of a shortest path in $\oAGw(A_R, B)$.
Otherwise, if $p\in R_R$ and $q\in R_L$, then, by \cref{grid-graph-extension}, any monotone path from $p$ to $q$ is a shortest one. 
Thus, we pick any and construct its breakpoint representation.
This takes $\Oh(d)$ time because the path does not contain any cost-$0$ edges.

In the remaining case when $p\in R_L$ and $q\in R_R$, we consider a set $S \coloneqq \outp(R_L) \cap \inp(R_R)$ of vertices of $\oAGw(A, B)$.
Observe that $S$ is a separator between $R_L$ and $R_R$, so the shortest path from $p$ to $q$ goes through some vertex $v \in S$, and we have $\dist_{\oAGw(A, B)}(p, q) = \dist_{\oAGw(A, B)}(p, v) + \dist_{\oAGw(A, B)}(v, q)$. 
In particular, $\dist_{\oAGw(A, B)}(p, v), \dist_{\oAGw(A, B)}(v, q) \le \dist_{\oAGw(A, B)}(p, q) = d\le k$.

By \cref{lm:paths_dont_deviate_too_much}, we can identify $\Oh(d)$ vertices $v\in S$ for which $\dist_{\oAGw(A, B)}(p,v)\le d$ \emph{may} hold.
For each such vertex, we can compute $\min\{k+1,\dist_{\oAGw(A, B)}(p, v)\}$. 
Due to \cref{grid-graph-extension}, an optimal path from $p$ to $v$ lies completely inside $R_L$, and thus the distance from $p$ to $v$ is an entry of $\BMw(A_L, B)$.
From $\boxdskslp(A_L, B)$, we have a pointer to $\mds(M_L)$ for some $M_L \meq{k} \BMw(A_L, B)$.
Hence, by accessing $M_L$ using $\mds(M_L)$ in time $\Oh(\log n)$, we can find some value that is $k$-equivalent to $\dist_{\oAGw(A, B)}(p, v)$. If such a value is larger than $k$, we cap it with $k+1$, and if such a value is at most $k$, it is equal to $\dist_{\oAGw(A, B)}(p, v)$.
Analogously, in time $\Oh(\log n)$ we can compute $\min\{k+1,\dist_{\oAGw(A, B)}(v, q)\}$.
Overall, in $\Oh(d\log n)$ time we are able to identify a vertex $v\in S$ such that $\dist_{\oAGw(A, B)}(p, q) = \dist_{\oAGw(A, B)}(p, v) + \dist_{\oAGw(A, B)}(v, q)$.

We recursively compute the breakpoint representation of an optimal path from the input vertex of $\oAGw(A_L, B)$ corresponding to $p$ to the output vertex of $\oAGw(A_L, B)$  corresponding to $v$, 
and the breakpoint representation of an optimal path from the input vertex of $\oAGw(A_R, B)$ corresponding to $v$ to the output vertex of $\oAGw(A_R, B)$ corresponding to $q$.
We then reinterpret these breakpoint representations such that they represent a path from $p$ to $v$ and a path from $v$ to $q$ in $\oAGw(A, B)$.
We combine them to obtain the breakpoint representation of an optimal path from $p$ to $q$.

The overall running time can be expressed using recurrence relation $T(n,0)=\Oh(\log n)$ and $T(n,d)=\Oh(d\log n)+T(n_L,d_L)+T(n_R,d_R)$ for $d\ge 1$, where $n_L,n_R\le 0.9n$ (because the SLP is balanced) and $d_L+d_R=d$, so the total time complexity of the recursive procedure is $\Oh((d+1)\log^2 n)$.
\end{proof}

Before we proceed with a proof of \cref{lm:boxdsslp-update}, which constructs $\boxdsslp(A',B')$ based on $\boxdsslp(A,B)$, let us characterize the set of pairs $(C,D)$ for which $\boxdsslp(C,D)$ is a part of the recursive structure of $\boxdsslp(A, B)$.
Formally, this set can be defined as follows:

\begin{definition}
For symbols $A, B \in \S_{\slp}$, define the \emph{closure} $\closure{A, B}$ of $A$ and $B$ recursively:
\begin{itemize}
    \item If $\len(A) = \len(B) = 1$, let $\closure{A, B} = \set{(A, B)}$.
    \item If $\len(A) \ge \len(B)$ and $\len(A) > 1$, let $\closure{A, B} = \set{(A, B)} \cup \closure{A_L, B} \cup \closure{A_R, B}$, where $\rhs(A)\eqqcolon A_LA_R$.
    \item If $\len(A) < \len(B)$, let $\closure{A, B} = \set{(A, B)} \cup \closure{A, B_L} \cup \closure{A, B_R}$, where $\rhs(B)\eqqcolon B_LB_R$.\lipicsEnd
\end{itemize}    
\end{definition}

While constructing $\boxdsslp(A,B)$, we would like to re-use $\boxdsslp(C,D)$ whenever $(C,D)\in \closure{A,B}\cap \closure{A',B'}$, but it would be too expensive to identify such pairs by exploring the entire recursive structure of $\boxdsslp(A,B)$.
Thus, we provide a characterization of $\closure{A,B}$ based on how the weights (expansion lengths) of $C$ and $D$ compare to the weights of their ancestor symbols.

\begin{definition}\label{def:plen}
    Consider an SLP $\slp$ and a symbol $A \in \S_{\slp}$.
    For a node $\nu$ of $\Tr(A)$, define
    \[\plen(\nu) \coloneqq
    \begin{cases}
        \len(\mu) & \!\!\!\text{if $\nu$ has parent $\mu$},\\
        \infty & \!\!\!\text{if $\nu$ is the root}.
    \end{cases}
    \]
    Furthermore, for a symbol $C \in \dep(A)$, let $\plen_A(C) \coloneqq \setmax{\plen(\nu) \mid \nu \in \Tr(A), \symb(\nu) = C}$.
    If $\plen_A(C) < \infty$, let $\pre_A(C) \coloneqq \symb(\mu)$, where $\mu$ is the parent of $\nu^* = \argmax\set{\plen(\nu) \mid \nu \in \Tr(A), \symb(\nu) = C}$.
    \lipicsEnd
\end{definition}

\begin{lemma}\label{lem:lenplen}
Consider an SLP $\slp$.
For every pair of symbols $(A,B)\in \S_{\slp}^2$, the set $\closure{A,B}$ consists precisely of pairs $(C, D) \in \dep(A) \times \dep(B)$ such that $\len(D)\le \plen_A(C)$ and $\len(C)<\plen_B(D)$.
Furthermore, for a pair $(C, D) \in \closure{A, B}$, we have $(\pre_A(C), D) \in \closure{A, B}$ if $\plen_A(C) < \plen_B(D)$, and $(C, \pre_B(D)) \in \closure{A, B}$ if $\plen_A(C) \ge \plen_B(D) \neq \infty$.
\end{lemma}

\begin{proof}
Let us first prove that $(C,D)\in \closure{A,B}$ holds for every pair $(C, D) \in \dep(A) \times \dep(B)$ of symbols satisfying $\len(D)\le \plen_A(C)$ and $\len(C)<\plen_B(D)$, with $(\pre_A(C), D) \in \closure{A, B}$ if $\plen_A(C) < \plen_B(D)$, and $(C, \pre_B(D)) \in \closure{A, B}$ if $\plen_A(C) \ge \plen_B(D) \neq \infty$.
We proceed by induction from larger to smaller values of $\len(C)+\len(D)$:
\begin{enumerate}
    \item If $\plen_A(C)=\plen_B(D)=\infty$, then $(C,D)=(A,B)$ belongs to $\closure{A,B}$ by definition. 
    \item If $\plen_A(C)\ge \plen_B(D) \ne \infty$, then the parent $D'\coloneqq \pre_B(D)$ of $D$ by definition satisfies $\len(D')=\plen_B(D)\le \plen_A(C)$.
    Moreover, $\len(C)<\plen_B(D)=\len(D')<\plen_B(D')$ holds the by assumption on $(C,D)$.
    Thus, the inductive assumption implies $(C,D')\in \closure{A,B}$.
    Since $\len(D')>\len(C)$ and $D$ is a child of $D'$, the definition of closed sets implies $(C,D)\in \closure{A,B}$. 
    \item If $\plen_A(C)<\plen_B(D)$, then the parent $C'\coloneqq \pre_A(C)$ of $C$ by definition satisfies $\len(C')=\plen_A(C)<\plen_B(D)$.
    Moreover, $\len(D)\le \plen_A(C)=\len(C')<\plen_A(C')$. 
    Thus, the inductive assumption implies $(C',D)\in \closure{A,B}$.
    Since $\len(C')\ge\len(D)$ and $C$ is a child of $C'$, the definition of closed sets implies $(C,D)\in \closure{A,B}$.  
\end{enumerate}
Consequently, $(C, D) \in \closure{A, B}$ holds for every pair $(C, D) \in \dep(A) \times \dep(B)$ satisfying $\len(D)\le \plen_A(C)$ and $\len(C)<\plen_B(D)$.

Furthermore, it is clear that $\closure{A, B} \subseteq \dep(A) \times \dep(B)$ by definition.
It remains to prove that every element $(C, D)\in \closure{A,B}$ satisfies $\len(D)\le \plen_A(C)$ and $\len(C)<\plen_B(D)$.
Again, we proceed by induction from larger to smaller values of $\len(C)+\len(D)$ and observe that every element $(C, D)\in \closure{A,B}$ other than $(A,B)$
has a \emph{parent} element $(C',D')\in \closure{A,B}$ such that either $D=D'$ and $C$ is a child of $C'$, or $C=C'$ and $D$ is a child of $D'$.
\begin{enumerate}
    \item If $(C,D)=(A,B)$ then $plen_A(C)=\plen_B(D)=\infty$, so the condition is satisfied trivially.
    \item If $(C,D)$ has a parent $(C',D)\in \closure{A,B}$ such that $C$ is a child of $C'$,  then $\len(D)\le \len(C')$, and thus $\len(D)\le \len(C')\le \plen_A(C)$. The inductive assumption further guarantees that $\len(C)<\len(C')<\plen_B(D)$.
    \item If $(C,D)$ has a parent $(C,D')\in \closure{A,B}$ such that $D$ is a child of $D'$, then $\len(C)<\len(D')$, and thus $\len(C)<\len(D')\le \plen_B(D)$. The inductive assumption further guarantees that $\len(D)<\len(D')\le \plen_A(C)$.\qedhere
\end{enumerate}
\end{proof}

Another application of \cref{lem:lenplen} is to bound the time needed to construct $\mds(\BMw(C,D))$ for all $(C,D)\in \closure{A',B'}\setminus \closure{A,B}$.
For this, we use the following helper lemma.

\begin{lemma}\label{cor:newincl}
    Let $\slp$ be a weight-balanced SLP and let $A,A',B\in \S_\slp$. 
    Write $u \coloneqq |\dep(A') \setminus \dep(A)|$.
    We have
    \[\sum_{(C,D)\in \closure{A',B}\setminus \closure{A,B}}\left(\len(C)+\len(D)\right) = \Oh\left((u+1)(\len(A')+\len(B))+\len(B)\log\left(1+\tfrac{\len(B)}{\len(A')}\right)\right).\]
\end{lemma}

\begin{proof}
    Let $\cU \coloneqq \dep(A') \setminus \dep(A)$.
    Moreover, let $\cV\subseteq \S_\slp$ consist of $A'$ as well as all symbols that occur in $\rhs(U)$ for $U\in \cU\cap \N_\slp$.
    Observe that $|\cV|\le 2u+1$ since $|\rhs(U)|\le 2$ holds for every $U\in \N_\slp$.

    \begin{claim} 
        Every $(C,D)\in \closure{A',B}\setminus \closure{A,B}$ satisfies $C\in \cV$.
    \end{claim}
    \begin{claimproof}
        For a proof by contradiction, suppose that $C\notin \cV$.
        As $A' \in \cV$, we have $C \neq A'$.
        Let $C'\coloneqq \pre_{A'}(C)$.
        The assumption $C\notin \cV$ implies $C'\notin \cU$, so $C' \in \dep(A)$ holds.
        Since $C$ is a child of $C'$, we have $C \in \dep(A)$ and $\plen_{A'}(C)=\len(C') \le \plen_A(C)$.

        By \cref{lem:lenplen}, we have $\len(D)\le \plen_{A'}(C)$ and $\len(C)<\plen_B(D)$.
        Due to $\plen_{A'}(C)\le \plen_A(C)$, we conclude that $\len(D)\le \plen_A(C)$ and $\len(C)<\plen_B(D)$; thus, \cref{lem:lenplen} implies $(C,D)\in \closure{A,B}$, contradicting the choice of $(C,D)$.
    \end{claimproof}

    Next, we bound the contribution of every $C\in \cV$.
    \begin{claim}
        Every $C\in \S_\slp$ satisfies \[\sum_{D : (C,D)\in\closure{A',B}} \left(\len(C)+\len(D)\right)=\begin{cases} \Oh\left(\len(A')+\len(B)\cdot \log\left(1+\tfrac{\len(B)}{\len(A')}\right)\right)& \text{if $C=A'$,}\\
        \Oh(\len(A')+\len(B))& \text{otherwise}.\end{cases}\]
    \end{claim}
    \begin{claimproof}
        Consider an arbitrary $D \in \dep(B)$.
        By \cref{lem:lenplen}, we have $(C,D)\in \closure{A',B}$ if and only if there is a node $\beta$ of $\Tr(B)$ that satisfies $\symb(\beta)=D$, $\len(\beta)\le \plen_{A'}(C)$, and $\len(C)< \plen(\beta)$. 

        Let us first consider nodes $\beta$ that additionally satisfy $\len(\beta)\le \len(C)$.        
        Each root-to-leaf path in $\Tr(B)$ contributes at most one node $\beta$ satisfying $\plen(\beta)>\len(C)\ge \len(\beta)$. Excluding the root of $\Tr(B)$, each such node $\beta$ satisfies $\len(C)+\len(\beta)<\plen(\beta)+\len(\beta)\le 5\cdot \len(\beta)$ because $\slp$ is balanced.
        Consequently, the total contribution of $\len(C)+\len(\beta)$ across all such non-root nodes $\beta$ does not exceed $5\cdot \len(B)$. 
        The contribution of the root of $\Tr(B)$ does not exceed $\len(C)+\len(B)$. 

        It remains to focus on the nodes $\beta$ satisfying $\len(C)<\len(\beta) \le \plen_{A'}(C)$.
        Let us first assume that $C\ne A'$ so $\plen_{A'}(C)\le 4\cdot \len(C)$ holds because $\slp$ is balanced.
        Since the lengths of subsequent nodes on a single root-to-leaf path are smaller by a factor of at least $\frac43$, the number of such nodes $\beta$ on the path does not exceed $\ceil{\log_{4/3}(\plen_{A'}(C)/\len(C))}\le \ceil{\log_{4/3} 4}=5$.
        Each such node $\beta$ satisfies $\len(C)+\len(\beta)<2\cdot \len(\beta)$, so the total contribution of $\len(C)+\len(\beta)$ across all such nodes does not exceed $10\cdot \len(B)$. 
        Overall, if $C\ne A'$, then the total contribution of all nodes $\beta$ is at most $\len(C)+15\cdot\len(B)=\Oh(\len(A')+\len(B))$.

        Next, suppose that $C=A'$. In this case, the condition $\plen_{A'}(C)\ge \len(\beta)> \len(C)$ is equivalent to $\len(B)\ge \len(\beta)>\len(A')$.
        Since the lengths of subsequent nodes on a single root-to-leaf path are smaller by a factor of at least $\frac43$, the number of such nodes $\beta$ on the path does not exceed $\ceil{\log_{4/3}(\len(B)/\len(A'))}$.
        Each such node $\beta$ satisfies $\len(C)+\len(\beta)<2\cdot \len(\beta)$, so the total contribution of $\len(C)+\len(\beta)$ across all such nodes does not exceed $2\cdot \len(B)\cdot \ceil{\log_{4/3}(\len(B)/\len(A'))}$. 
        Overall, if $C=A'$, then the total contribution of all nodes $\beta$ is at most $\Oh(\len(A')+\len(B) \log(1+\len(B)/\len(A')))$.
    \end{claimproof}
    
    Combining the two claims with $|\cV|=\Oh(u+1)$, we conclude that \[\sum_{(C,D)\in \closure{A',B}\setminus \closure{A,B}}\!\!\!\left(\len(C)+\len(D)\right) = \Oh\left((u+1)(\len(A')+\len(B))+\len(B)\log\left(1+\tfrac{\len(B)}{\len(A')}\right)\right)\qedhere.\]
\end{proof}

We are now ready to prove \cref{lm:boxdsslp-update}.

\lmboxdsslpupdate*
\begin{proof}
We show how to obtain $\boxdsslp(A', B)$ from $\boxdsslp(A, B)$.
Then $\boxdsslp(A', B')$ can be obtained from $\boxdsslp(A', B)$ analogously.

We first describe a na\"ive procedure that builds $\boxdsslp(A', B)$.
We assume that $\len(A')\ge \len(B)$; the case $\len(A') < \len(B)$ is analogous.
If $\len(A') = \len(B) = 1$, the $\mds(\BMw(A, B))$ data structure can be built trivially in constant time.
Otherwise, we recursively build $\boxdsslp(A'_L, B)$ and $\boxdsslp(A'_R, B)$, where $\rhs(A')\eqqcolon A'_LA'_R$ and store pointers to them.
It remains to build $\mds(\BMw(A', B))$.
For that, we use the references to $\mds(\BMw(A'_L, B))$ and $\mds(\BMw(A'_R, B))$ and \cref{boundary-distance-matrix-combination}.

The na\"ive procedure for building $\boxdsslp(A', B)$ described above may be inefficient.
To improve it, note that if $\Tr(A')$ and $\Tr(A)$ share a lot of symbols, many of the recursive calls $(C, D)$ we compute are already computed in the recursive structure of $\boxdsslp(A, B)$.
If this is the case, we may just find a reference to $\boxdsslp(C, D)$ and terminate immediately.
To build $\boxdsslp(A', B)$, we have to build $\boxdsslp(C, D)$ for all $(C, D) \in \closure{A', B}$.
However, if $(C, D) \in \closure{A, B}$, we already computed $\boxdsslp(C, D)$ and can reuse it.

Using \cref{lem:lenplen}, we can determine whether a given recursive call $\boxdsslp(C, D)$ was already computed, and if so, find it.
For that, we traverse $\Tr(A)$ and store $\plen_A(C)$ for every $C \in \dep(A)$ in a dictionary implemented as a binary search tree in $\Oh(n \log n)$ time.
Furthermore, for every $C \in \dep(A) \setminus \set{A}$, we store in the dictionary a reference to the dictionary entry corresponding to $\pre_A(C)$.
We build an analogous data structure for $B$.
Now given some $(C, D) \in \S^2_\slp$, we can decide whether $(C, D) \in \closure{A, B}$ in $\Oh(\log n)$ time using the first claim of \cref{lem:lenplen}.
If so, we further have to find a pointer to $\boxdsslp(C, D)$.
For that, we iteratively use the second claim of \cref{lem:lenplen} to compute a sequence $((C_i, D_i))_{i \in \fragmentcc{1}{\ell}}$, where $(C_1, D_1) = (C, D)$, $(C_{\ell}, D_{\ell}) = (A, B)$, $(C_i, D_i) \in \closure{A, B}$ for all $i \in \fragmentcc{1}{\ell}$, and $\boxdsslp(C_{i}, D_{i})$ is a child recursive call of $\boxdsslp(C_{i+1}, D_{i+1})$.\footnote{The last property can be verified by analyzing the second claim of \cref{lem:lenplen} as $\plen_A(C_i) \ge \len(D_i)$ and $\plen_B(D_i) > \len(C_i)$.}
Furthermore, as $\slp$ is weight balanced, we have $\log \len(C_{i+1}) + \log \len(D_{i+1}) \ge \log \len(C_i) + \log \len(D_i) + \log (4 / 3)$, and thus $\ell = \Oh(\log n)$.
Consequently, the sequence $((C_i, D_i))_{i \in \fragmentcc{1}{\ell}}$ can be built in $\Oh(\log n + \ell) = \Oh(\log n)$ time.
After building the sequence, we can start from $\boxdsslp(C_{\ell}, D_{\ell}) = \boxdsslp(A, B)$ and descend down the recursive structure to $\boxdsslp(C_1, D_1) = \boxdsslp(C, D)$ in $\Oh(\ell) = \Oh(\log n)$ time.

We are now ready to alternate our algorithm for building $\boxdsslp(A', B)$.
For every recursive call $(C, D)$, in $\Oh(\log n)$ time we check whether $(C, D) \in \closure{A, B}$, and if so, in $\Oh(\log n)$ time find a pointer to $\boxdsslp(C, D)$.
Otherwise, we recurse and compute $\boxdsslp(C, D)$.

We now analyze the time complexity of the described algorithm.
If $(C, D) \notin \closure{A, B}$, the local work in the recursive call for $(C, D)$ is $\Oh(W (\len(C) + \len(D)) \log n)$ dominated by the application of \cref{boundary-distance-matrix-combination}.
Otherwise, we spend $\Oh(\log n)$ time to find a reference to $\boxdsslp(C, D)$, which can be charged to the parent recursive call if $(C, D) \neq (A', B)$.
In total, the time complexity of building $\boxdsslp(A', B)$ is $\Oh(L W \log n + (K + 1) \log n) = \Oh((L + 1) W \log n)$, where $L \coloneqq \sum_{(C, D) \in \closure{A',B} \setminus \closure{A, B}} (\len(C) + \len(D))$ and $K \coloneqq |\closure{A',B} \setminus \closure{A, B}| \le L$.

Due to \cref{cor:newincl}, the whole recursive procedure of building $\boxdsslp(A', B)$ takes time $\Oh((u + \log n) W n \log n)$.
We build $A'$ in $\Oh(\log n)$ time and the dictionaries storing $\plen_A(C)$ and $\plen_B(D)$ in $\Oh(n \log n)$ time.
Therefore, the total time complexity of building $\boxdsslp(A', B)$ is $\Oh((u + \log n) W n \log n)$.

For the relaxed version of the hierarchical alignment data structure, we replace the applications of the first part of \cref{boundary-distance-matrix-combination} with the applications of the second part of \cref{boundary-distance-matrix-combination}.
Hence, the total time complexity is $\Oh((L+1) \sqrt k \log n) = \Oh((u + \log n) n \sqrt k \log n)$.
Due to \cref{boundary-distance-matrix-combination}, the matrix $M$ for which we store a pointer to $\mds(M)$ satisfies $M \meq{k} \BMw(A', B)$ and $\coresiz(M) = \Oh((\len(A') + \len(B)) \sqrt k)$.
\end{proof}

\end{document}